\documentclass[journal]{IEEEtran}
\usepackage{cite}
\usepackage{amsmath,amssymb,amsfonts}
\usepackage{graphicx}
\graphicspath{{figures/}}

\usepackage{tikz}
\usetikzlibrary{positioning, arrows.meta, calc, decorations.pathreplacing, fit, shapes.geometric, backgrounds}
\usepackage{pgfplots}
\pgfplotsset{compat=1.18}

\usepackage{xcolor}
\usepackage{booktabs}
\usepackage{multirow}
\usepackage{adjustbox}
\usepackage{enumitem}
\usepackage{placeins}
\usepackage{pifont}

\usepackage{algorithm}
\usepackage{algpseudocode}

\usepackage{url}
\usepackage{textcomp}
\usepackage{stfloats}
\usepackage[caption=false,font=normalsize,labelfont=sf,textfont=sf]{subfig}

\usepackage[hidelinks]{hyperref}

\usepackage{amsthm}
\theoremstyle{definition}
\newtheorem{definition}{Definition}
\newtheorem{remark}{Remark}
\newtheorem{proposition}{Proposition}
\newtheorem{theorem}{Theorem}

\newcommand{\cmark}{\ding{51}}
\newcommand{\xmark}{\ding{55}}
\def\BibTeX{{\rm B\kern-.05em{\sc i\kern-.025em b}\kern-.08em
    T\kern-.1667em\lower.7ex\hbox{E}\kern-.125emX}}

\newcommand{\misfit}{\ensuremath{\mathfrak{m}}}

\begin{document}

\title{Learning Selective Merge Policies for Deadline-Constrained Coded Caching via Deep Reinforcement Learning}

\author{Amirhossein~Yousefiramandi}

\markboth{Preprint --- Submitted for peer review}%
{Yousefiramandi: Learning Selective Merge Policies for Deadline-Constrained Coded Caching}

\maketitle

\begin{abstract}
In the coded caching, the server uses the cached information at the users to serve multiple users in parallel with a single coded multi-casting message or packet, that is, a merged packet, and thus mitigates the peak network congestion. In order to deliver the timely messages to the users in the deadline-driven applications like the video streaming, we must determine online the messages to be merged for the delivery, as there is a time limit for each request. It is important to note that while the merging aids the current coded multi-casting packet, it could harm the future deliveries. Our solution employs the deep reinforcement learning to view the coded multi-casting delivery as a masked action-discrete state control problem, and our policy network, trained via the proximal policy optimization, performs better than SACM++. On the uniform-demand benchmark, our policy network reduces the broadcast-packet expiration ratio $\rho$ by $40.9\%$ ($0.208$ vs.\ $0.352$) with respect to the best coded multi-casting baseline (SACM++), while also attaining the best broadcast-efficiency score $\sigma$ across the Track~A battery among the coded multi-casting methods. One noteworthy phenomenon here is that, for the applications with stricter deadlines, the merging becomes selective instead of aggressive, since the policy network selectively merges at approximately $31.8\%$ of the chances, even though the same observation holds across the variations within the same simulator family. The focus of our design is on the efficient pairwise XOR merging, where the higher-order ($K{\ge}3$) coding can be considered as a natural generalization left for future work.
\end{abstract}

\begin{IEEEkeywords}
Coded caching, deadline-constrained scheduling, deep reinforcement learning,
graph attention networks, action masking.
\end{IEEEkeywords}

\section*{Notation}
\label{sec:notation}
\begin{table}[h!]
\centering
\small
\caption{Notation used throughout the paper.}
\label{tab:notation}
\begin{adjustbox}{max width=\columnwidth}
\begin{tabular}{@{}cl@{\qquad}cl@{}}
\toprule
\textbf{Symbol} & \textbf{Description} & \textbf{Symbol} & \textbf{Description} \\
\midrule
\multicolumn{4}{@{}l}{\textit{System Parameters}} \\
$K$ & Number of edge caches (users) & $N$ & Number of files in library \\
$B$ & Subfiles per file & $F{=}NB$ & Total packets \\
$p_c$ & Cache fraction ($M{=}\lfloor p_c F\rfloor$ packets per cache; $p_c F$ is integer in all evaluated regimes) & $\mathcal{C}_k$ & Cached packets at user $k$ \\
$Q$ & Pending-queue depth & $D$ & Maximum deadline (slots) \\
$H$ & Episode horizon (steps) & $P_{\max}$ & Max candidate merge pairs $\bigl({=}\tbinom{Q}{2}\bigr)$ \\
\midrule
\multicolumn{4}{@{}l}{\textit{Control Model \& Actions}} \\
$r{=}(k_r,f_r,d_r,\mathcal{S}_r)$ & Request tuple & $\mathcal{M}_t$ & Feasible merge set at step $t$ \\
$k_r$ & Dest.\ of fresh $r$; for aggregates, the representative $k_{\mathrm{mg}}$ (Eq.~\eqref{eq:mg_dest}) & $f_r$ & Packet set of $r$: singleton at arrival, union $f_{\mathrm{mg}}$ after merges \\
$d_r$ & Remaining deadline of $r$ & $\mathcal{S}_r$ & Side-info providers of $r$ \\
$\kappa$ & Keep-side decision & $\mathbf{m}_t$ & Action mask vector \\
$\gamma$ & Discount factor & & \\
\midrule
\multicolumn{4}{@{}l}{\textit{Evaluation Metrics (Sec.~\ref{sec:exp:metrics}; auxiliary bookkeeping in Sec.~\ref{sec:model:request_accounting})}} \\
$U_t$ & XOR degree at slot $t$ (packet-set count) & $E_t$ & Expired packet-set mass at slot $t$ \\
$\rho$ & Broadcast-packet expiration ratio (Eq.~\eqref{eq:miss_ratio}) & $\delta$ & Distinct file-identity coverage (Eq.~\eqref{eq:unique_demand_sat}) \\
$\sigma$ & Broadcast-efficiency score (Eq.~\eqref{eq:sys_score}) & $\mu$ & Served packet-set / tx (Eq.~\eqref{eq:served_per_tx}) \\
$\varepsilon$ & Expirations per episode (Eq.~\eqref{eq:exp_per_ep}) & $\mathcal{X}_t$ & Expiration-event set at $t$ (Eq.~\eqref{eq:exp_set}) \\
$\mathcal{A}(r)$ & Per-arrival ID annotation (evaluator bookkeeping; Sec.~\ref{sec:model:request_accounting}) & $\phi$ & Packet$\to$file projection (Eq.~\eqref{eq:phi_packet_to_file}) \\
$C_t,M_t$ & Per-step completed / missed request IDs (Eqs.~\eqref{eq:req_Ct}, \eqref{eq:req_Mt}) & & \\
$\eta_{\mathrm{req}}$ & Request timely-throughput (per-step; Eq.~\eqref{eq:req_eta}) & $m_{\mathrm{req}}$ & Request miss rate (per-step; Eq.~\eqref{eq:req_miss_rate}) \\
$\sigma_{\mathrm{req}}$ & Request selection score (per-step; Eq.~\eqref{eq:req_sigma}) & $\lambda$ & Miss-penalty weight in $\sigma$/$\sigma_{\mathrm{req}}$ \\
\multicolumn{4}{@{}p{0.97\textwidth}}{\footnotesize \emph{Note.} $\sigma$ (broadcast-efficiency score) and $\sigma_{\mathrm{req}}$ (request selection score) are \emph{distinct} metrics computed at different aggregation levels (broadcast/packet-set vs.\ per-arrival); the $\mathrm{req}$ subscript is load-bearing.} \\
\bottomrule
\end{tabular}
\end{adjustbox}
\end{table}

\section{Introduction}

\IEEEPARstart{G}{lobal} mobile data traffic continues to grow at a compound annual rate of approximately 20\%~\cite{ericsson2024mobility}, thereby placing severe pressure on the last-mile link between the edge nodes and the users, while the edge caching alleviates this bottleneck~\cite{liu2019mobile}, considering that the conventional schemes exploit only the local cache contents. With \emph{the coded caching}~\cite{maddah2014fundamental,maddah2015decentralized}, the server can use the information the users have cached to serve multiple users at a time by sending a single coded multi-casting message, i.e., the merged message, thereby relieving the peak network loads while satisfying multiple users at a time by a single coded broadcast.

The coded-caching framework of Maddah-Ali and Niesen~\cite{maddah2014fundamental} and its decentralized extension~\cite{maddah2015decentralized} formalize this XOR-multicasting principle, i.e., the formation of the coded multi-casting message from the cached side information, while Section~\ref{sec:related} provides a detailed review.

The classical coded-caching schemes were designed for a \emph{synchronous, batch-delivery} model in which all requests are collected before any transmission begins, an assumption that is incompatible with the \emph{delay-sensitive} streaming applications of the users, like the video on demand, the IPTV, and the live OTT delivery, where each user request carries a hard deadline by which it must be delivered or it becomes worthless; we address the CDN \emph{scheduling-layer} aspect of these workloads, considering that the PHY adaptation and the link-rate heterogeneity are deferred to Sec.~\ref{sec:conclusion}. Niesen and Maddah-Ali quantified the gain--delay tradeoff, showing that the larger coding gains require aggregating more requests over longer time horizons, at the cost of higher per-request delivery delay~\cite{codedcachingdelaysensitive}. In an online delivery setting with hard per-request expiration deadlines, this tradeoff becomes a \emph{sequential decision} at every time step, considering that the server must either merge a feasible pair of pending requests into a coded XOR packet, thereby yielding an immediate bandwidth reduction while consuming the shared side information and shrinking the intersection available for future merges, or fall back to the unicast, i.e., serving the earliest-deadline request to protect against the expiration while forgoing the multicast gain. The problem, however, is that while the merge is helpful for the formation of the coded multi-casting message, it can be harmful for the subsequent ones: the best action depends on the \emph{entire queue state}, i.e., the remaining deadlines of all the pending requests, their file identities, which of the users have cached which of the files, and how merging now will reshape the feasibility of the future merges, and no closed-form policy captures this dependence in general.

The existing algorithms do not address this deadline-aware online setting, considering that the greedy and the size-aware coded multi-casting heuristics (GCM~\cite{maddah2014fundamental}, SACM~\cite{asghari2019approximation}) always merge whenever any feasible pair exists, irrespective of the deadline urgency of the users (see Section~\ref{sec:related} for a detailed review), while the graph-coloring approaches~\cite{ji2015multiple,vettigli2019efficient} are likewise stateless with respect to the deadlines, and even the fixed-threshold policies ($\tau$-Fit), which merge only when both of the remaining deadlines exceed a parameter~$\tau$, require oracle tuning, cannot adapt to the current queue state, and saturate in performance (as we show in Section~\ref{sec:results}). What is needed is a \emph{state-dependent merge-or-defer policy}, i.e., the policy network, that learns when to exploit a coding opportunity and when to fall back to the unicast, thereby balancing throughput and deadline compliance from experience.

The reinforcement learning (RL) is a natural framework for learning such a state-dependent policy from the interaction, considering that the prior RL work on caching has focused on \emph{placement and replacement}, i.e., deciding what content to pre-fetch or evict under time-varying demand~\cite{sadeghi2019adaptive,somuyiwa2018proactive,zhong2018framework}. The most closely related work is Naderializadeh and Asghari~\cite{naderializadeh2019learning}, who train a deep actor--critic agent for coded-caching \emph{delivery} and demonstrate that the learned policies can match the SACM-family heuristics while reducing the online inference cost. However, that work does not model the hard expiration deadlines, uses a flat multilayer-perceptron (MLP) policy that ignores the graph structure of the mergeability relation, omits the invalid-action masking, and evaluates a single configuration without out-of-distribution (OOD) generalization analysis, while our work directly and systematically addresses all four of the gaps.
Effective learning here requires three ingredients:
(i)~\emph{invalid-action masking}~\cite{huang2020invalidmask,sb3maskableppo}
to restrict policy-gradient updates to feasible XOR merges;
(ii)~a \emph{graph attention network}~\cite{velickovic2018gat} that exploits
the variable-cardinality merge-graph structure with parameter-sharing across
pairs and a relational inductive bias, avoiding a flat monolithic
parameterization of all pairwise relations (although the worst-case
$O(P_{\max})\!=\!O(Q^2)$ pair-construction and edge-MLP cost still applies;
see Sec.~\ref{sec:method:arch}, Appendix~\ref{sec:appendix:arch_details}); and
(iii)~\emph{behavior-cloning warm start and Expert
Iteration}~\cite{anthony2017exit,ross2011dagger} to bootstrap from heuristic
demonstrators and then exceed them.

In this paper, we proposed a deep RL framework for \emph{the deadline-constrained coded caching delivery} in a decentralized random-prefetching model, while providing an empirical evaluation across multiple baselines, holdout seeds, and non-ID regimes (curriculum-seen and unseen-parameter; detailed in Section~\ref{sec:experiments}, where ``OOD-'' is used in the narrow within-family sense of unseen parameter values, i.e., the parameter values never seen during the training of the policy network). In the uniform-demand benchmark, our $\sigma$-selected checkpoint, i.e., the policy network picked by the robust-advantage rule on BE-score $\sigma$ at the validation (see Sec.~\ref{sec:method:selection}), achieves the lowest broadcast-packet expiration ratio $\rho$ among all of the coded multi-casting methods, reducing $\rho$ by $40.9\%$ ($0.208$ vs.\ $0.352$) with respect to SACM{++} (the paired-bootstrap effect sizes in Table~\ref{tab:paired_bootstrap}; the uncoded ED-Unicast baseline still attains a lower $\rho$ of $0.134$ by forfeiting all coding gain, so the ``lowest $\rho$'' headline is explicitly scoped to the coded multi-casting methods), while also attaining the best broadcast-efficiency score~$\sigma$ among the coded multi-casting methods across the full Track~A battery (and the best overall $\sigma$ in 7 of 8 Track~A regimes; ED-Unicast leads on $\sigma$ at OOD-delay10), along with a substantially lower per-step request miss rate $m_{\mathrm{req}}$ than the SACM{++} comparator (the only baseline in the main-text request-level table) on Track~A, even though ED-Unicast remains the best deadline-protection baseline on $m_{\mathrm{req}}$ in the paired-bootstrap analysis (Table~\ref{tab:paired_bootstrap}: $\Delta m_{\mathrm{req}} = +0.074$ for PPO minus ED-Unicast at ID-default, i.e., PPO is worse than ED-Unicast on this metric), so the broader ``lower than all coded baselines'' statement is confined to the SACM{++} comparison while the GCM, SACM, SACM+, and $\tau$-Fit extension is left as future work. The cross-track BE-score picture is more nuanced and is reported separately for the Zipf benchmark in Section~\ref{sec:results:zipf}. The policy network also exhibits a \emph{selective merge strategy}, i.e., the method of selective merging, a qualitative behavior not reproducible by any fixed-threshold heuristic.

\paragraph{Modeling abstraction for unicast slot cost.}
We adopted a one-slot-per-record service model, i.e., each transmission consumes exactly one slot, whether it is a coded XOR broadcast over a packet union $f_{\mathrm{mg}}$ or a unicast of a coding-state record, so that a unicast that resolves an aggregate record (built up from multiple merges) costs the same slot budget as a unicast of a singleton, while the XOR pathway uses an equal-length packet model and the one-gap invariant of Appendix~\ref{sec:appendix:state_sufficiency} for the decodability, and the unicast pathway uses the record-level cost abstraction. We document this choice considering that it determines the throughput--deadline trade-off the policy network optimizes, i.e., a packet-level unicast cost scaling with $|f_r|$ would change both the simulator and the reported gains. The abstraction is reflected in the broadcast-level metrics ($U_t = 1$ for any unicast, regardless of $|f_r|$), while the request-level metrics ($\eta_{\mathrm{req}}$, $m_{\mathrm{req}}$, $\sigma_{\mathrm{req}}$) provide a complementary view that credits each original arrival exactly once.

We evaluate the learned policy along four metric families that use \emph{three different notions of demand}:
(i)~\emph{broadcast-level deadline compliance} ($\rho$: the fraction of broadcast-level packet-set mass that expires before delivery, measuring channel-use waste);
(ii)~\emph{distinct file-identity coverage} ($\delta$: distinct file-identity coverage among the resolved (served or expired) identities under the $H{=}50$ in-episode horizon (Eq.~\eqref{eq:unique_demand_sat}), which conflates all packet arrivals for the same file into a single coverage event, not a measure of multi-packet demand completion, and subject to episode-end right-censoring of identities still pending at step $H$);
(iii)~\emph{request-level accounting} ($\eta_{\mathrm{req}}$, $m_{\mathrm{req}}$, $\sigma_{\mathrm{req}}$, computed from per-arrival identifiers stamped at arrival time), which tracks each original arrival through queue aggregation and credits completion or miss exactly once per request identifier; and
(iv)~\emph{broadcast efficiency} ($\sigma$), a per-slot composite that rewards XOR degree and penalizes expirations.
We anchor our central claims on the primary demand-centric metrics ($\rho$ at the broadcast-packet level and $\delta$ at the file-identity level; both are formally defined in Sec.~\ref{sec:exp:metrics}, while ``demand-centric'' is a name-only umbrella, i.e., not a per-arrival semantic claim), considering $\sigma$ as a broadcast-efficiency diagnostic and the request-level family $(\eta_{\mathrm{req}}, m_{\mathrm{req}}, \sigma_{\mathrm{req}})$ as supplementary metrics whose full-baseline coverage we leave as future work.

The main contributions of this paper are:

\begin{enumerate}

\item \textbf{C1: Deadline-sensitive queue-state control with masked
combinatorial actions.}
We proposed a Gymnasium-compatible masked discrete-action queue-state control formulation for the deadline-constrained coded caching, trained as a stationary policy under a discounted truncated continuing-control surrogate (Sec.~\ref{sec:env:reward}), while the formal model is a contextual POMDP (Definition~\ref{def:problem}), i.e., the ``masked-MDP'' language used elsewhere is shorthand for this surrogate.
The state encodes a finite pending queue of $Q$ requests, each
annotated with target-cache identity (one-hot), side-information
provider flags, normalized remaining deadline, packet-set size,
and merge degree.
The discrete action space tracks up to $P_{\max}=\binom{Q}{2}=45$ feasible merge pairs, while each pair admits two keep-side choices, thereby yielding up to $2P_{\max}$ coded actions (where the keep-side decision controls which queue slot is freed for replenishment by a fresh arrival), plus a deterministic unicast fallback that always serves the earliest-deadline request ($2P_{\max}+1=91$ actions total), so that the policy network learns \emph{which} merge pair to execute and \emph{which side to keep}, or \emph{whether to defer} to the earliest-deadline unicast, considering that it does not select individual unicast targets, while the dynamic feasibility masking enforces hard constraints at every step.
The reward combines a system-aligned $U_t - E_t$ base term, a quality
bonus on coded merges (intersection-size reward minus
union-size penalty), and a \emph{merge-potential} shaping
term~\cite{ng1999shaping} based on the fraction of currently feasible
queue pairs $\Phi(s) = |\mathcal{M}(s)|/\binom{Q}{2}$, \emph{inspired
by} potential-based shaping~\cite{ng1999shaping}; the formal
invariance theorem applies to an idealized fully observed
continuing-state MDP, while in our implemented contextual-POMDP
surrogate (truncated at $H{=}50$, hidden episode cache placement,
aliased aggregate observations) the shaping term is best read as
heuristic credit-assignment guidance rather than as a guarantee of
optimal-policy preservation.
This formulation jointly captures the decentralized cache diversity, the multiple concurrent outstanding requests, the hard per-request expiration deadlines, the keep-side control, and the dynamic invalid-action masking, while to the best of our knowledge (after the literature sweep summarized in Section~\ref{sec:related} and Table~\ref{tab:gap_map}), this specific combination has not previously been integrated in a single RL delivery formulation, even though closely related ingredients exist in adjacent settings, i.e., the claim should be read as a positioning statement rather than as a definitive non-existence result.

\item \textbf{C2: Graph-attentive MaskablePPO policy network.}
We proposed a graph-structured policy network, i.e., the policy network, a structured actor--critic architecture that treats the pending request queue as a node-attributed graph and the candidate merges as the edges, while a two-layer, four-head graph self-attention module~\cite{velickovic2018gat} builds contextual node and edge representations, considering that the shared edge MLPs then score all $(\text{pair},\text{keep-side})$ actions jointly.
This design captures the combinatorial interaction among merge candidates
through message passing with shared per-pair parameters, replacing a flat
monolithic parameterization of all pairwise relations by a relational
encoder (without eliminating the worst-case $O(P_{\max})\!=\!O(Q^2)$ pair
processing intrinsic to scoring all candidate merges; see
Sec.~\ref{sec:method:arch}, Appendix~\ref{sec:appendix:arch_details}).
The policy is trained via Maskable Proximal Policy Optimization
(MaskablePPO)~\cite{schulman2017ppo,sb3maskableppo} with Generalized Advantage
Estimation~\cite{schulman2015gae}, and feasibility is enforced at every
policy-gradient step through the action mask.

\item \textbf{C3: Rollout-improved behavior cloning and ExIt-style online
distillation.}
To handle the sparse rewards and the hard exploration in the early training stages, we adopted a three-phase pipeline, i.e., (i)~a behavior-cloning warm start from a rollout-improved heuristic teacher, (ii)~value-network pre-training to calibrate the critic before the policy updates, and (iii)~MaskablePPO fine-tuning with ExIt/DAgger-style online dataset aggregation~\cite{anthony2017exit,ross2011dagger} and a progressive curriculum that increases the task difficulty~\cite{bengio2009curriculum}, while the full training details are provided in Sections~\ref{sec:method} and~\ref{sec:experiments}, considering that this pipeline lets the policy network first replicate, and then \emph{exceed}, its heuristic teachers.

\item \textbf{C4: Comprehensive evaluation and out-of-distribution
generalization.}
We evaluated the policy network against the baselines covering three strategy families, i.e., the conservative uncoded unicast, the aggressive coded multi-casting, and the oracle-tuned threshold policies, under the holdout evaluation (Section~\ref{sec:experiments}).
In the uniform-demand benchmark (Track~A), under the simulator conventions of Sec.~\ref{sec:env} (notably the one-slot-per-record unicast cost abstraction (A2) and the uniform representative-destination convention $k_{\mathrm{mg}}\sim\mathrm{Unif}\{k_i,k_j\}$ of Eq.~\eqref{eq:mg_dest}, both flagged as first-order limitations in Sec.~\ref{sec:conclusion}), our $\sigma$-selected checkpoint reduces the broadcast-packet expiration ratio $\rho$ by $40.9\%$ ($0.208$ vs.\ $0.352$) and expired-record count per episode ($\varepsilon$) by $50.7\%$ relative to SACM{++}, the highest-throughput coded-multicast baseline (and the heuristic teacher used in BC/ExIt training); on Track~A it also achieves the highest $\sigma$ among coded-multicast methods across all 8~Track~A regimes (and the highest overall $\sigma$ in 7 of 8 Track~A regimes; ED-Unicast leads on $\sigma$ at OOD-delay10 at $0.019$ vs.\ PPO's $-0.002$). Full metrics are in Section~\ref{sec:results}; on the ID-default uniform table at $\rho=0.345$, GCM and SACM tie for the lowest broadcast-packet expiration ratio among coded baselines, and both are reported alongside SACM{++} in the per-metric ranking tables. On the Zipf-demand benchmark (Track~B) the cross-method BE-score lead does not transfer: ED-Unicast tops $\sigma$ overall, and SACM, SACM+, GCM also exceed PPO on $\sigma$ at Zipf ID-default; see Section~\ref{sec:results:zipf}.
The policy network learns a qualitatively different policy, i.e., a \emph{selective merge strategy}, that defers to the earliest-deadline unicast to protect the deadlines while reserving the coded transmissions for the merge opportunities where the side-information intersection is genuinely beneficial. The interesting fact we observed is that these gains transfer across the full Track~A non-ID battery (2~curriculum-seen and 5~unseen-parameter regimes covering the unseen cache fractions and the deadline budgets, along with the parameter-invariance sweeps over the file count under fixed per-packet caching probability $p_c$, all within the same simulator family), with the advantages often \emph{growing} under the high-stress conditions, even though all transfer experiments keep $K$, $Q$, the action dimensionality, and the placement family fixed, so these results demonstrate the \emph{within-family parameter generalization} rather than the broad architectural out-of-distribution robustness. In an extension study (Section~\ref{sec:results:zipf}), a separately trained popularity-aware variant remains competitive with the coded multi-casting baselines across 11~additional Zipf-demand regimes.

\end{enumerate}

The remainder of the paper is organized as follows.
Section~\ref{sec:related} reviews related work.
Section~\ref{sec:model} presents the system model.
Sections~\ref{sec:env}--\ref{sec:method} detail the contextual-POMDP
formulation and the training pipeline.
Sections~\ref{sec:experiments}--\ref{sec:results} report the experimental
setup and the uniform-demand results.
Section~\ref{sec:results:zipf} presents the Zipf-demand extension.
Section~\ref{sec:ablation} reports the full ablation analysis.
Section~\ref{sec:conclusion} concludes.

\section{Related Work}
\label{sec:related}

\subsection{Coded Caching: Foundations and Decentralized Placement}
\label{sec:rel:theory}

The coded-caching framework of Maddah-Ali and Niesen~\cite{maddah2014fundamental} showed that the server can achieve a \emph{global caching gain} scaling with the number of users~$K$ by exploiting the cached side information for the XOR multi-casting, while the decentralized extension~\cite{maddah2015decentralized} removed the need for the coordinated prefetching, i.e., each cache independently stores the subfiles i.i.d.\ at rate~$p_c$, and the server uses the resulting random overlaps for the coded transmissions, thereby achieving an order-optimal memory--rate tradeoff. The mechanism is the \emph{XOR multi-casting}, i.e., if user~$1$ requests file~$A$ and user~$2$ requests file~$B$, and each has pre-cached the other's file, the server broadcasts $A \oplus B$, thereby letting each user cancel its cached file to recover the requested one, while this gain generalizes to larger cliques of the users with overlapping side information. The decentralized random-prefetching model is the placement strategy we adopt, while Pedarsani et al.~\cite{pedarsani2014online} extended the framework to the sequential demand arrivals, characterizing the achievable rate regions when the server dynamically decides how to group the requests, i.e., a step toward the streaming delivery setting we consider.

Our work builds directly on this decentralized model, but departs from it in two ways, i.e., (i) we introduce the hard per-request expiration deadlines absent from~\cite{maddah2015decentralized,pedarsani2014online}, and (ii) we made the \emph{keep-side decision} an explicit \emph{learned action dimension} controlling the queue composition across the time slots, rather than the hand-designed endpoint-retention rule that the earlier coded-caching heuristics use, like the degree-aware endpoint retention of SACM+/SACM++ (the ablation in Appendix~\ref{sec:appendix:keepside_ablation}). To our knowledge, the prior coded-caching work treats the keep-side as a fixed heuristic rather than as an action exposed to a deadline-aware RL scheduler with the invalid-action masking, while we read this as a positioning statement rather than as a definitive non-existence result.

\subsection{Coded Delivery Algorithms and Complexity}
\label{sec:rel:delivery}

With arbitrary cache contents, the optimal coded delivery reduces to an NP-hard clique-cover problem on the side-information graph~\cite{asghari2019approximation}, while the polynomial-time approaches include the graph-coloring-based schemes for specific graph structures (GCLC~\cite{ji2015multiple}) and the heterogeneous-link extensions (HgLC~\cite{vettigli2019efficient}), both reporting substantial coding gains in the deadline-free settings.

For the general decentralized case, Asghari et al.~\cite{asghari2019approximation} provided an $O(\log K)$-approximation polynomial-time algorithm under their model assumptions, i.e., the \emph{Size-Aware Coded Multicast} (SACM) family, including the enhanced SACM+ and SACM++ variants with the degree-aware keep-side selection. We adopted SACM++ as the principal heuristic teacher for the BC and ExIt training considering that it achieves the best served-per-transmission throughput in our deadline-constrained simulator, while we do not claim SACM++ is the best per-metric across all evaluated criteria, and indeed SACM has a lower miss ratio than SACM++ on the ID-default uniform table (Sec.~\ref{sec:results}). All of the SACM variants and GCM maintain a \emph{100\% merge rate}, i.e., whenever any feasible merge exists, they execute it.

Despite the strong coding-gain guarantees, all of these algorithms are \emph{stateless} greedy rules, i.e., they do not account for the remaining deadlines, the effect of the current merge on the future queue composition, or the long-horizon consequences of consuming the shared side information prematurely, while the policy network, i.e., a learned policy that internalizes these future-state effects, could in principle select high-quality pairs selectively rather than merging every feasible candidate, and we test this hypothesis in Sections~\ref{sec:results}--\ref{sec:ablation}.

\subsection{Delay-Sensitive and Online Coded Caching}
\label{sec:rel:deadlines}

Niesen and Maddah-Ali~\cite{codedcachingdelaysensitive} characterized the gain--delay tradeoff, i.e., achieving the large coding gains requires aggregating many of the requests, which is in tension with the hard per-request deadlines, while Pedarsani et al.~\cite{pedarsani2014online} extended this analysis to the online settings with the sequential arrivals and the rate-delay tradeoff regions, even though both of the works operate at the information-theoretic level and do not produce concrete scheduling policies for the streaming queues with the hard deadlines.

Several recent works have addressed the asynchronous and the time-varying aspects of the coded caching, considering that Jiang et al.~\cite{jiang2020decentralized} proposed a decentralized asynchronous coded caching scheme for fog radio access networks, providing both synchronous and asynchronous transmission methods for different delay requirements with closed-form fronthaul-load expressions, while Zhang and Tao~\cite{zhang2021deep} applied deep learning to the wireless coded caching under the unknown and time-variant content popularity, thereby optimizing the cache placement via LSTM-based popularity prediction and a supervised deep deterministic policy gradient. Like our method, this work uses neural networks in the coded-caching pipeline, even though it targets the \emph{placement} rather than the \emph{delivery} phase. Amir et al.~\cite{amir2021coded} studied the coded caching with the time-varying file popularities and the asynchronous delivery, designing schemes that exploit the delivery messages to proactively update the user caches, while Yang et al.~\cite{yang2022optimal} formulated the optimal scheduling for the asynchronous coded caching, thereby deriving the rate-optimal delivery schedules when the user requests arrive at different times with prescribed deadlines. These works advance the understanding of the asynchronous and non-stationary coded caching, even though none of them formulates an RL-based \emph{delivery scheduler} that jointly handles the hard per-request expirations, the keep-side control, and the dynamic invalid-action masking, i.e., the specific gap our work addresses.

A remaining gap is the absence of an \emph{operational scheduling policy} that can be deployed online under the hard deadline constraints, considering that to the best of our knowledge, no prior RL delivery formulation models the following combination as a single integrated learned scheduler, i.e., (i)~a finite queue of pending requests each with a different remaining deadline, (ii)~the XOR feasibility that depends on the current cache state, (iii)~a keep-side decision \emph{exposed as a learned action} (rather than a hand-designed endpoint rule as in SACM+/SACM++) that dynamically reshapes the queue composition, and (iv)~the immediate replacement of the expired requests by the fresh arrivals, while our contextual-POMDP formulation in Section~\ref{sec:env} addresses this gap, thereby enabling deep RL training at scale.

Outside the coded-caching literature, the canonical online-scheduling alternative for the deadline-constrained queueing systems is the Lyapunov-drift / drift-plus-penalty optimization (Tassiulas and Ephremides~\cite{tassiulas1992stability}; Neely~\cite{neely2010stochastic}), i.e., the family of max-weight policies that achieve the queue stability and the bounded-suboptimality guarantees while requiring no learning. These do not directly apply to the coded delivery scheduling, considering that the per-slot broadcast utility $U_t$ depends non-linearly on the queue state through the pairwise side-information intersection $\mathcal{S}_i\cap\mathcal{S}_j$ and the keep-side bit $\kappa$ at the merge step, thereby breaking the standard linear-drift formulation; the extension of the drift framework to this combinatorial setting is identified as future work in Section~\ref{sec:conclusion}.

\subsection{Reinforcement Learning for Caching Systems}
\label{sec:rel:rl_caching}

The reinforcement learning has been extensively applied to the cache \emph{placement and eviction}, i.e., deciding what content to store, rather than to the delivery scheduling, while the representative examples include the PPO-based cache replacement under the dynamic pricing~\cite{sadeghi2019adaptive}, the Q-learning for the proactive caching in the wireless networks~\cite{somuyiwa2018proactive}, the deep RL for the joint caching and user scheduling~\cite{zhong2018framework}, and the survey by Liu et al.~\cite{liu2019mobile} covering 30+ RL-based caching approaches.

A shared limitation of these placement-focused works is that they treat the coded delivery as a black box, i.e., the delivery phase is assumed to be simple unicasting, with no scheduling decisions over the structured XOR merge candidates, while our work is entirely in the \emph{delivery} track, taking the decentralized random placement as given, i.e., a complementary contribution to the placement-focused RL literature, considering that the joint placement--delivery optimization is identified as a future direction in Section~\ref{sec:conclusion}.

\subsection{RL for Coded Delivery, Graph-Attentive Scheduling, and Training Techniques}
\label{sec:rel:rl_delivery}

\textbf{RL for coded delivery.} The most directly related prior work is Naderializadeh and Asghari~\cite{naderializadeh2019learning}, who trained a deep RL agent for the coded caching delivery under arbitrary decentralized cache contents, considering that their actor-critic policy (flat MLP) scores the candidate merge pairs at each step and learns to form the coded transmissions that match or slightly outperform SACM while reducing the online inference complexity, i.e., this work shows that the RL is a viable approach for the coded delivery problem.

Our work differs from~\cite{naderializadeh2019learning} in four respects, i.e.,
\begin{enumerate}[label=(\roman*)]
  \item \textbf{Hard per-request deadlines.} \cite{naderializadeh2019learning} uses a soft delay penalty but does not model the hard expiration deadlines, even though in our setting, an expired slot is refilled immediately, thereby producing the non-stationary feasibility dynamics.
  \item \textbf{Graph-attentive policy architecture.} Their flat MLP encodes all of the pair features into a fixed-length vector, thereby losing the relational structure, while our graph-structured policy network, i.e., the policy network, treats each request as a graph node and each feasible merge as an edge, applying two-layer graph self-attention before scoring the actions.
  \item \textbf{Invalid-action masking.} \cite{naderializadeh2019learning} does not employ the action masking, so the policy gradient wastes capacity on the infeasible actions, while our MaskablePPO formulation masks the infeasible actions at both the sampling and the gradient computation.
  \item \textbf{Evaluation scale.} \cite{naderializadeh2019learning} evaluates a single configuration against few of the baselines without OOD analysis~\cite{henderson2018deep,agarwal2021deep}, while we evaluate against 9~Track~A baselines + PPO ($10$ unique methods) across 50 holdout seeds $\times$ 200 episodes and 7~non-ID conditions, plus a main-body Zipf-demand extension study (Section~\ref{sec:results:zipf}) covering 11~additional regimes.
\end{enumerate}

\textbf{Architectural and training building blocks.}
Our method draws on several established techniques, each adapted to the coded-caching setting, considering that the graph attention networks~\cite{velickovic2018gat} provide the inductive bias for the variable-size, relationally structured inputs, i.e., here, the pending-request queue is a node set with the XOR-feasibility edges, while the invalid-action masking~\cite{huang2020invalidmask} is theoretically justified and empirically necessary when up to 90 of 91 actions may be infeasible at a given step (i.e., only the unicast fallback is valid), so our MaskablePPO formulation~\cite{sb3maskableppo,sb3contrib2022maskable} applies dynamic masks at both the sampling and the gradient computation. The behavior cloning and DAgger~\cite{ross2011dagger} address the distributional shift in the imitation learning, while the Expert Iteration~\cite{anthony2017exit} extends this to a self-improvement loop where the policy network's own rollout planner generates improved labels for the online distillation, and the curriculum learning~\cite{bengio2009curriculum} over the progressive task difficulty shapes the policy network toward a deadline-conscious selective merge strategy, i.e., the method of selective merging, rather than the aggressive-merge policy that the direct training converges to (see the ablation in Section~\ref{sec:ablation}). The PPO with GAE~\cite{schulman2017ppo,schulman2015gae} provides the policy-gradient backbone, implemented via Stable-Baselines3~\cite{raffin2021stable}.

Each of these components exists independently in the literature, even though we are not aware of prior work that integrates them into a single coded-caching delivery agent with the hard deadline constraints, considering that beyond stacking, the integration required two non-obvious couplings, i.e., (i) the dynamic action mask must be wired through the graph-attention edge scorer so that the infeasible pairs are filtered \emph{before} the message passing (rather than zeroed out post-hoc), thereby avoiding the gradient-flow pathology of masking after the softmax (Sec.~\ref{sec:method:arch}), and (ii) the BC/ExIt teacher's pair pre-filter must be coupled to the same per-pair edge scorer used at the inference (top-$K_{\mathrm{pair}}$ candidate pruning), so that the imitation distribution sits on the same support the policy network can express (Sec.~\ref{sec:method:bc}).
We summarize what distinguishes our approach:
\begin{itemize}[leftmargin=*, nosep]
  \item \textbf{Hard-deadline contextual-POMDP surrogate.} We jointly model the per-request expiration
        deadlines, the keep-side decisions, and the masked combinatorial actions, i.e., a combination
        not present in the prior RL-based delivery formulations
        (\S\ref{sec:rel:deadlines}).
  \item \textbf{Graph-attentive policy.} Unlike the flat MLP
        of~\cite{naderializadeh2019learning}, our graph-structured policy network, i.e., the policy network,
        applies two-layer self-attention over the merge graph, thereby capturing
        the relational structure among the pending requests (\S\ref{sec:method:arch}).
  \item \textbf{Behavior-cloning warm start and self-improving distillation.}
        We combined the rollout-improved BC with the Expert Iteration online distillation,
        which lets the policy network exceed its heuristic teacher, i.e., a self-improvement
        loop absent from the prior coded-caching RL work
        (\S\ref{sec:method:bc}--\ref{sec:method:exit}).
  \item \textbf{Large-scale OOD evaluation.} We evaluate across 7~non-ID regimes (2~curriculum-seen + 5~OOD) in the main uniform-demand evaluation,
        along with a main-body Zipf-demand extension study with 11~additional regimes
        (\S\ref{sec:results:zipf}),
        50~holdout seeds, and 9~Track~A baselines + PPO ($10$ unique methods) spanning three policy families, while the
        evaluation scale is qualitatively different from the prior work
        (\S\ref{sec:results}).
\end{itemize}
The integration and its effect on policy quality are validated by the ablation
study in Section~\ref{sec:ablation}.

\medskip\noindent\textbf{Summary of gaps and our contributions.}
Table~\ref{tab:gap_map} maps the four gaps identified above to the contributions of this paper.

\begin{table}[!t]
  \caption{Gap--contribution mapping. Each row identifies a gap in the prior literature and the corresponding contribution that addresses it.}
  \label{tab:gap_map}
  \centering
  \begin{adjustbox}{max width=\columnwidth}
  \begin{tabular}{@{}p{0.48\columnwidth}p{0.48\columnwidth}@{}}
    \toprule
    \textbf{Gap in Prior Work} & \textbf{Our Contribution} \\
    \midrule
    The combination of hard deadlines, masked combinatorial actions, and keep-side decisions does not, to our knowledge, appear together in any prior RL delivery formulation (\S\ref{sec:rel:deadlines}). We frame this as a positioning statement based on the literature sweep documented in this section, not as a definitive non-existence claim, and welcome pointers to closely related prior work
      & \textbf{C1}: Deadline-sensitive masked queue-state control (stationary-policy / truncated continuing-control surrogate) with dynamic feasibility masking (\S\ref{sec:env}) \\[4pt]
    Flat MLP policies ignore the graph structure of the mergeability relation (\S\ref{sec:rel:rl_delivery})
      & \textbf{C2}: Graph-attentive MaskablePPO policy network (\S\ref{sec:method:arch}) \\[4pt]
    Pure RL exploration is slow; we are not aware of a coded-caching agent that uses teacher distillation or self-improvement (\S\ref{sec:rel:rl_delivery})
      & \textbf{C3}: Rollout-improved BC warm start and ExIt online distillation (\S\ref{sec:method:bc}--\ref{sec:method:exit}) \\[4pt]
    Prior evaluations use a single configuration with few baselines and no OOD analysis (\S\ref{sec:rel:rl_delivery})
      & \textbf{C4}: 9 Track~A baselines + PPO = 10 unique methods (12 table rows with two literature-compatible aliases), 50 holdout seeds $\times$ 200 episodes, 7 non-ID evaluation conditions (uniform demand); main-body Zipf-demand extension study adds 11 additional regimes with 6 baselines + PPO-Zipf = 7 unique methods (\S\ref{sec:results:zipf}) \\
    \bottomrule
  \end{tabular}
  \end{adjustbox}
\end{table}

\section{Problem Formulation}\label{sec:formulation}
\label{sec:model}  

We consider a content delivery network consisting of one central server connected to $K$ edge caches, i.e., the users, through a shared broadcast bottleneck link, while the notation is summarized in Table~\ref{tab:notation} preceding the Introduction.

\FloatBarrier

\subsection{Network Model and Decentralized Cache Placement}
\label{sec:model:network}

The server holds a library of $N$ files, and following the standard coded-caching practice~\cite{maddah2014fundamental,maddah2015decentralized}, each file is divided into $B$ equal-sized \emph{subfiles}, i.e., the packets, thereby yielding a total library of $F = NB$ packets.
All packets are assumed to be equal-length, while the XOR operations are bitwise over $\mathrm{GF}(2)$, so that a single coded packet $X = p_1 \oplus p_2$ has the same length as each constituent packet.
In our setting we use $N = 100$, $B = 10$, and thus $F = 1{,}000$ packets, while the evaluation also covers $N \in \{60, 120, 150\}$ to test the generalization.

We adopt the \emph{decentralized fixed-size random placement} inspired by~\cite{maddah2015decentralized}, while at the episode reset, each cache~$k$ independently samples exactly $M = \lfloor p_c F \rfloor$ distinct packet IDs uniformly without replacement from the $F$-packet library, without any coordination with the other caches.\footnote{The standard theoretical model uses i.i.d.\
Bernoulli($p_c$) placement~\cite{maddah2015decentralized}, yielding a random
cache size with mean $p_c F$.  Our implementation fixes the cache size at
$M = p_c F$ for reproducibility (each cache holds exactly $M$ packets, so
side-information statistics do not fluctuate across episodes). For
$F{=}1000$ and $p_c{=}0.30$ the two placement models were empirically
similar in pilot tests on the ID-default regime; we did not run a full
sensitivity comparison across all reported regimes and adopt fixed-size
sampling as a reproducibility-friendly substitute rather than as a
proven equivalence.}
Let $\mathcal{C}_k \subseteq \{0, 1, \ldots, F-1\}$ denote the resulting content of cache~$k$, with $|\mathcal{C}_k| = M$.
Any single packet is present in a given cache with the marginal probability $M / F = p_c$; however, because each cache draws a fixed-size subset, the placements of different packets \emph{within} the same cache are not independent, while the placements \emph{across} different caches are mutually independent.
We use $p_c = 0.30$ (30\%) as the default training configuration and evaluate the generalization to $p_c \in \{0.20, 0.40\}$ out-of-distribution.
The cache placement is \emph{independent of the file-request distribution}, i.e., each packet is cached with the marginal probability~$p_c$ regardless of the file popularity.
This separation means that changing the demand model (e.g., from uniform to Zipf) alters the request queue statistics without affecting the cached content, thereby allowing us to isolate the effect of demand skew on the scheduling performance. However, as described in Section~\ref{sec:res:ood_zipf}, the Zipf-demand agent uses an extended observation space with popularity-aware features and is trained separately from the uniform-demand agent.

The server transmits over a \emph{shared broadcast link}, i.e., every packet sent by the server is received by all $K$ caches simultaneously.
This broadcast nature enables the XOR-coded transmissions that can carry useful information for multiple users in a single slot, while this is the property that the coded caching exploits to achieve the multicast gain.
Fig.~\ref{fig:system_model} illustrates the network topology and the one-step transition timeline.

\begin{figure*}[t]
  \centering
  \resizebox{\textwidth}{!}{
\begin{tikzpicture}[
    >=Stealth,
    node distance=0.55cm,
    server/.style={rectangle, draw=black!80, fill=blue!8, minimum width=2.45cm, minimum height=0.72cm, font=\footnotesize\bfseries, rounded corners=2pt},
    cache/.style={rectangle, draw=black!70, fill=green!8, minimum width=0.95cm, minimum height=0.58cm, font=\scriptsize, rounded corners=2pt},
    qslot/.style={rectangle, draw=black!60, fill=yellow!8, minimum width=2.15cm, minimum height=0.36cm, font=\tiny, align=center},
    phase/.style={rectangle, draw=black!70, fill=blue!10, minimum width=2.25cm, minimum height=0.72cm, font=\scriptsize, rounded corners=3pt, align=center},
    action/.style={diamond, draw=black!70, fill=orange!12, minimum width=0.95cm, minimum height=0.7cm, font=\scriptsize, align=center, inner sep=1pt},
    ann/.style={font=\tiny, text=black!70, align=center},
    note/.style={rectangle, draw=black!45, rounded corners=2pt, fill=white, font=\tiny, align=left},
    arr/.style={->, thick, black!70},
    label/.style={font=\footnotesize\bfseries, text=black!85},
  ]

  \node[label, anchor=south] at (2.6, 4.35) {System model};

  \node[server] (server) at (2.6, 3.65) {Central server};
  \node[ann, below=1pt of server] {library: $100$ files $\times$ $10$ packets $= 1000$ packet IDs};

  \foreach \i/\lab/\x in {1/0/0.4,2/1/1.5,3/2/2.6,4/3/3.7,5/4/4.8} {
    \node[cache] (c\i) at (\x, 2.2) {$\mathcal{C}_{\lab}$};
    \draw[->, thin, blue!55] (server.south) -- (c\i.north);
  }
  \node[ann] at (2.6, 1.55) {each cache stores $|\mathcal{C}_k| = p_c F = 0.3\cdot 1000 = 300$ packet IDs\\sampled uniformly without replacement};

  \node[label, anchor=south] at (6.9, 3.25) {Queue state ($Q=10$)};
  \foreach \j/\y in {0/2.95,1/2.56,2/2.17,3/1.78} {
    \node[qslot] (q\j) at (6.9, \y) {$r_{\j}=(k_{\j}, f_{\j}, d_{\j}, \mathcal{S}_{\j})$};
  }
  \node[ann] at (6.9, 1.2) {$f_j$: requested packet-set\quad $\mathcal{S}_j$: caches that store $f_j$};
  \node[ann] at (6.9, 0.92) {$\vdots$ 10 queue slots};

  \node[label, anchor=south] at (2.8, 0.95) {Merge semantics};
  \node[qslot, fill=red!8, minimum width=2.0cm] (ri) at (1.15, 0.35) {$r_i=(k_i, f_i, d_i, \mathcal{S}_i)$};
  \node[qslot, fill=red!8, minimum width=2.0cm] (rj) at (4.45, 0.35) {$r_j=(k_j, f_j, d_j, \mathcal{S}_j)$};
  \node[font=\scriptsize, circle, draw=black!70, fill=orange!15, inner sep=1.5pt] (xor) at (2.8, 0.35) {$\oplus$};
  \draw[->, thin] (ri.east) -- (xor);
  \draw[->, thin] (rj.west) -- (xor);
  \node[note, anchor=north] (mergebox) at (2.8, -0.32) {feasible iff $f_i \subseteq \mathcal{C}_{k_j}$ and $f_j \subseteq \mathcal{C}_{k_i}$\\merged state: $(k_{\mathrm{mg}},\, f_i\cup f_j,\, \min(d_i,d_j),\, \mathcal{S}_i\cap \mathcal{S}_j)$\\$k_{\mathrm{mg}}\!\sim\!\mathrm{Unif}\{k_i,k_j\}$ independently of the keep-side $\kappa$};

  \draw[dashed, black!30, line width=0.5pt] (9.15, 4.55) -- (9.15, -1.6);

  \node[label, anchor=south] at (14.35, 4.35) {One environment step ($s_t \rightarrow s_{t+1}$)};

  \node[phase, fill=green!10] (obs) at (11.2, 3.55) {Observe\\$o_t=\{\mathbf{x}_t,\mathbf{p}_t\}$\\and mask $\mathbf{m}_t$};
  \node[action] (act) at (14.35, 3.55) {$a_t$};
  \node[phase, fill=blue!10, minimum width=2.8cm] (tx) at (17.6, 3.55) {Transmit / queue update};
  \draw[arr] (obs) -- (act);
  \draw[arr] (act) -- (tx);

  \node[note, anchor=north, text width=5.0cm] (actnote) at (12.4, 2.95) {$a_t\in\{2k,2k+1,90\}$\\$2k/2k{+}1$: merge pair $\mathcal{M}_t[k]=(i,j)$;\\\quad keep-side bit $\kappa{=}0$ keeps $i$, $\kappa{=}1$ keeps $j$\\$90{=}2P_{\max}$: unicast earliest deadline};

  \node[note, anchor=north west, text width=5.0cm] (txmerge) at (16.2, 2.52) {merge: kept slot $\leftarrow$ aggregate;\\dropped slot $\leftarrow$ fresh request;\\packet-set XOR degree $U_t = |f_i\cup f_j|$};
  \node[note, anchor=north west, text width=5.0cm] (txsend) at (16.2, 1.20) {unicast: earliest-deadline record served\\in one slot ($U_t = 1$, one-slot-per-record);\\served slot $\leftarrow$ fresh request};

  \node[phase, fill=yellow!10] (tick) at (14.35, 0.95) {Deadline tick\\$d \leftarrow d-1$ for all queue entries};
  \node[phase, fill=red!8] (expire) at (11.2, -0.35) {Expire / refill\\if $d\le 0$};
  \node[phase, fill=green!10, minimum width=2.8cm] (rew) at (17.6, 0.05) {Reward + next state};

  \draw[arr] (tx.south) -- ++(0,-0.42) -| (tick.north east);
  \draw[arr] (tick) -- (expire);
  \draw[arr] (tick) -- (rew);
  \draw[arr] (expire.east) -| (rew.south);

  \node[note, anchor=north] (rbox) at (17.6, -0.72) {$R_t = 1\cdot U_t - 1\cdot E_t \;+\; \mathbb{1}[\text{coded merge}]\bigl(0.75\,|\mathcal{S}_i\cap \mathcal{S}_j| - 0.15\,\max(0,|f_i\cup f_j|-2)\bigr)$\\$\qquad\;\; +\; 0.20\,[\gamma_\Phi\Phi(s_{t+1})-\Phi(s_t)]$,\quad $\Phi(s)=|\mathcal{M}(s)|/\binom{Q}{2}$,\quad $\gamma_\Phi = 0.995$};

  \node[note, anchor=north] (endbox) at (14.35, -1.35) {episode ends when $t=H=50$; metrics are reported in the final $\texttt{info}$ dict};

\end{tikzpicture}}
  \caption{System model and one-step timeline for deadline-constrained coded caching.
    \textbf{Left:} A central server broadcasts over a shared link to $K{=}5$ edge caches,
    each storing exactly $M{=}p_c F{=}300$ library packets sampled uniformly without replacement. A queue of $Q{=}10$
    outstanding requests (each with a per-request deadline) drives the scheduling decisions.
    An XOR-coded merge combines two requests $r_i, r_j$ into a single broadcast $X$
    whenever the feasibility condition is met.
    \textbf{Right:} At each step the agent observes the queue state and feasible merge
    set~$\mathcal{M}_t$, selects either a coded merge (with keep-side choice~$\kappa$)
    or an earliest-deadline unicast, after which all deadlines decrement and expired
    requests are refilled.}
  \label{fig:system_model}
\end{figure*}

\subsection{Online Request Arrival and Queue Dynamics}
\label{sec:model:queue}

The users issue subfile, i.e., the packet, requests continuously over time.
We model the set of outstanding requests as a finite queue of depth $Q$ that is maintained at full occupancy throughout an episode, i.e., whenever a slot is vacated by the delivery or expiration, a new request arrives immediately, thereby modeling the steady-state operation of the delay-sensitive streaming system.

\paragraph{Request Representation}

Each active request $r$ in the queue is a four-tuple
\begin{equation}
  r = \bigl(k_r,\; f_r,\; d_r,\; \mathcal{S}_r\bigr),
  \label{eq:request_tuple}
\end{equation}
where
\begin{itemize}[leftmargin=*, nosep]
  \item $k_r \in \{0,\ldots,K-1\}$ is the \emph{destination cache}
        when $r$ is a fresh singleton request (the user making the
        request). When $r$ is an aggregate coding-state record
        produced by a chained merge, the same field carries the
        \emph{representative destination} $k_{\mathrm{mg}}$ sampled
        independently of $\kappa$ via Eq.~\eqref{eq:mg_dest}, a
        bookkeeping representative rather than residual user demand.
        A single record is in exactly one of these two states at any
        time, so the symbol $k_r$ is unambiguous in context; for
        clarity we write $k_{\mathrm{mg}}$ explicitly whenever the
        record in question is an aggregate (e.g., in unicast
        bookkeeping for aggregates, in the chained-merge proof of
        Appendix~\ref{sec:appendix:state_sufficiency}, and in the
        observation feature tables of
        Appendix~\ref{sec:appendix:features});
  \item $f_r \subseteq \{0,\ldots,F-1\}$ is the \emph{packet set}
        to be delivered (a single subfile at request creation; may grow to
        a union after an XOR merge, as detailed in Section~\ref{sec:model:coded});
  \item $d_r \in \{1,\ldots,D\}$ is the \emph{remaining deadline} in
        integer time slots; and
  \item $\mathcal{S}_r \subseteq \{0,\ldots,K-1\}$ is the
        \emph{side-information set}, i.e., the subset of caches that
        currently hold all packets in $f_r$ and can therefore cancel them
        from any XOR-coded packet.
\end{itemize}

\paragraph{Request Generation}

A fresh request $r_{\mathrm{new}}$ is generated by:
(i)~\emph{rejection-sampling} a candidate file/packet pair
$(n,b) \sim \mathrm{Uniform}\{0,\ldots,N-1\}\times\mathrm{Uniform}\{0,\ldots,B-1\}$
until the resulting packet $f_r = \{nB+b\}$ is \emph{not} held by
every cache simultaneously, i.e., until
$\{k:\, f_r\subseteq\mathcal{C}_k\}\neq\{0,\ldots,K-1\}$;
under independent placement with marginal probability $p_c$, the
per-draw rejection probability is $p_c^{K}$, which is about
$0.30^{5}\!\approx\!0.24\%$ at the default $p_c=0.30$ and about
$0.40^{5}\!\approx\!1.0\%$ at $p_c=0.40$, so the expected number of
resamples is essentially~1;
(ii) sampling the destination cache $k_r$ uniformly from the
admissible set
$\{0,\ldots,K-1\}\setminus\{k:\,f_r\subseteq\mathcal{C}_k\}$, which
guarantees $f_r \not\subseteq \mathcal{C}_{k_r}$ (no unsolicited
delivery);
(iii) computing $\mathcal{S}_r = \{k : f_r \subseteq \mathcal{C}_k\}$ (the
set of caches that can provide side information); and
(iv) drawing the initial deadline uniformly,
$d_r \sim \mathrm{Uniform}\{1,\ldots,D\}$.
A request is generated whenever the destination cache $k_r$ does not already hold $f_r$ (no unsolicited delivery), while we explicitly allow $\mathcal{S}_r=\emptyset$, in which case no XOR partner can cancel the request and the only feasible delivery action is the unicast.
The contextual-POMDP surrogate therefore covers both the codable arrivals ($\mathcal{S}_r\neq\emptyset$) and the unicast-only arrivals ($\mathcal{S}_r=\emptyset$), while the latter must still be served before their deadline, thereby contributing to the deadline misses on equal footing with the codable arrivals.
Each arrival therefore corresponds to a single packet, i.e., the subfile, request, while a full-file retrieval can be modelled as a stream of $B$ individual packet requests, each carrying its own deadline.

\paragraph{Modeling Assumptions}
\label{sec:model:assumptions}

We state five assumptions before introducing the transmission model and the reward, considering that they determine the slot-cost and deadline-bookkeeping semantics of the contextual-POMDP surrogate:
\begin{enumerate}[label=(A\arabic*),leftmargin=2.2em,nosep]
  \item \textbf{Equal-length packets.} All packets have the same
    length, so one XOR broadcast carries one packet's worth of data
    and feasibility is governed by the side-information set
    $\mathcal{S}_r$ rather than packet size.
  \item \textbf{One-slot-per-record service.} Each environment step
    delivers exactly one record: either a coded XOR over a feasible
    pair (one slot, $U_t = |f_{\mathrm{mg}}|$ as a packet-set
    cardinality, see Sec.~\ref{sec:exp:metrics}) or a
    unicast of one coding-state record (one slot, $U_t = 1$,
    \emph{regardless of $|f_r|$}). The unicast pathway is therefore a
    record-level service abstraction, not a packet-level model: a
    physical packet-level model would charge $|f_r|$ slots for the
    unicast of an aggregate. We adopt the record-level abstraction so
    that the merge-or-defer decision faces a uniform slot cost across
    its two branches; we revisit its broadcast-efficiency implications
    in Sec.~\ref{sec:exp:metrics}.
  \item \textbf{Static random cache placement.} Caches are populated
    once at episode start by uniform random sampling without
    replacement (cache fraction $p_c$) and held fixed for the episode;
    placement updates are out of scope.
  \item \textbf{EDF unicast.} The unicast action is bound to the
    earliest-deadline record; per-request unicast targeting is not
    part of the action space, isolating the merge-or-defer decision
    from unicast scheduling.
  \item \textbf{Refill-timing convention.} A queue slot vacated by a
    Phase-1 transmission is refilled \emph{before} the Phase-2 deadline
    decrement, so a fresh request sampled with $d_r \sim
    \mathrm{Uniform}\{1,\ldots,D\}$ enters the next decision step with
    effective remaining deadline in $\{0,\ldots,D-1\}$. A slot vacated
    by a Phase-3 expiration is refilled \emph{after} the decrement and
    therefore retains its sampled deadline in $\{1,\ldots,D\}$. As a
    consequence, a Phase-1 fresh request that draws $d_r=1$ joins the
    same step's expiration set $\mathcal{X}_t$
    (Eq.~\eqref{eq:exp_set}) and is reported under the standard
    expiration counters. This induces a small \emph{policy-independent}
    floor in $E_t$ and the related metrics ($\rho$, $\sigma$,
    $\varepsilon$, $E_t^{\mathrm{uniq}}$, $m_{\mathrm{req}}$): under
    $d_r \sim \mathrm{Uniform}\{1,\ldots,D\}$ the per-Phase-1-arrival
    same-step-expiration probability is exactly $1/D$, which upper-bounds
    the contribution at $1/D \approx 5\%$ for the default $D{=}20$,
    $\le 10\%$ for the tight-deadline regime $D{=}10$, and
    $\le 3.3\%$ for $D{=}30$. All inter-method comparisons in
    Sec.~\ref{sec:results} share the same episode seeds and the same
    initial cache placements, so this floor is shared in expectation
    up to the first-divergent-action RNG drift. The simulator does
    \emph{not} implement a strict common-random-numbers (CRN) scheme
    across methods (RNG draws within an episode are sequential, and
    policy-dependent refill timings cause the realized arrival
    sequences to diverge across methods after the first divergent
    action), so the floor does not cancel pathwise. The paired-bootstrap comparisons of
    Table~\ref{tab:paired_bootstrap} therefore inherit shared-context
    pairing rather than the stronger same-arrivals pairing. Absolute
    interpretations of $\rho, \varepsilon, m_{\mathrm{req}}$, and
    $\delta$ are \emph{benchmark-specific} to this refill convention
    and should be qualified accordingly when compared across
    simulators or papers using a different (e.g., symmetric) refill
    rule.
    We adopt this asymmetric convention deliberately, since it is the
    actual simulator behavior, and a faithful re-implementation
    must place the Phase-1 refill before the Phase-2 decrement and
    the Phase-3 refill after, exactly as in
    Algorithm~\ref{alg:one_step}.
\end{enumerate}

\subsection{Coded Transmission Model}
\label{sec:model:coded}

At each time slot the server performs exactly one of the two transmission types.

\paragraph{Unicast Transmission}

The server transmits the packet set $f_r$ of a single coding-state record~$r$ to its representative destination~$k_r$ in one slot, while in our formulation, the unicast action always selects the record with the shortest remaining deadline, $r^* = \arg\min_r d_r$, so that the unicast is a deterministic earliest-deadline-first (EDF) fallback rather than a free scheduling choice. Because the deadlines are integer-valued in $\{1,\ldots,D\}$ and the queue depth is~$Q$, ties on $d_r$ occur regularly, and we resolve them deterministically in favor of the \emph{smallest queue index}~$r$ among the tied records, where the queue index is the FIFO insertion order maintained by the simulator (the reference implementation realizes this via an insertion-ordered map keyed by queue index, even though the rule itself is language-independent, i.e., the tied records are served in arrival order). The same tie-break rule is applied consistently to the ED-Unicast, the PPO unicast fallback, and every lookahead teacher used in BC and ExIt.
Upon reception, $k_r$ obtains all packets in $f_r$ (which is a singleton at arrival and may be a union after one or more merges).
We adopt this \emph{one-slot-per-record} convention as a transmission-cost abstraction, i.e., the broadcast-level packet-set XOR degree is fixed to $U_t = 1$ for any unicast action, regardless of $|f_r|$.

\smallskip\noindent
\emph{Bookkeeping for aggregate-record unicasts.}
When~$r$ is an aggregate (built up from $n\ge 2$ prior merges), the underlying $n$ original arrivals were already \emph{decoded} at their respective merge broadcasts (Sec.~\ref{sec:model:coded}, ``Broadcast Semantics of Aggregation''), and the surviving $r$ is a residual \emph{coding-state record} that tracks the union $f_{\mathrm{mg}}$ for the future XOR-feasibility checks rather than the unserved demand.
A later unicast on~$r$ therefore physically clears that residual record, even though the request-level newly-completed set $C_t = \mathcal{A}(\mathrm{tx}_t) \setminus (D_{t-1} \cup L_{t-1})$ defined in Sec.~\ref{sec:model:request_accounting} excludes any arrival identifier already credited in $D_{t-1}$, so no original arrival is counted twice. The two readings of an aggregate-record unicast are therefore consistent: at the broadcast level it is a single record-clearing slot ($U_t = 1$, $|\mathcal{X}_t|$ unaffected if the deadline has not expired), while at the request level it adds at most the not-yet-credited subset of $\mathcal{A}(r)$ to $C_t$.
The one-gap invariant proved in
Appendix~\ref{sec:appendix:state_sufficiency} is a \emph{decodability}
property for the XOR pathway and does not constrain the unicast
slot-cost model.

The unicast target selection is fixed to the earliest-deadline-first rather than learned, thereby isolating the merge-or-defer decision and controlling the action-space growth, while learning per-request unicast targets is left as future work.

\paragraph{XOR-Coded Multicast}

Given two distinct requests $r_i$ and $r_j$ in the queue, the server may
broadcast a single XOR-coded packet $X = \bigoplus_{p \in f_i \cup f_j} p$
if and only if the following \emph{XOR feasibility condition} holds:
\begin{equation}
  \boxed{
  f_i \subseteq \mathcal{C}_{k_j}
  \quad \text{and} \quad
  f_j \subseteq \mathcal{C}_{k_i}.
  }
  \label{eq:xor_feasibility}
\end{equation}
Condition~\eqref{eq:xor_feasibility} ensures that the cache $k_j$ already holds all packets in $f_i$ and can therefore cancel them from $X$ to recover $f_j$, and symmetrically for the cache $k_i$.
A single broadcast thus simultaneously serves two users, thereby yielding $U_t = 2$ in the simplest case (or $U_t = |f_i \cup f_j|$ after chained merges).

\paragraph{Post-Merge Queue Update and Keep-Side Decision}

An XOR merge is a \emph{compound action}, i.e., beyond deciding \emph{which pair} $(r_i, r_j)$ to merge, the agent also selects a binary \emph{keep-side bit} $\kappa \in \{0, 1\}$, where $\kappa = 0$ retains slot $i$ and $\kappa = 1$ retains slot $j$. We use $\kappa$ exclusively for this binary indicator throughout the paper, while the semantic ``which slot is kept'' is recovered by the deterministic decode $i$ if $\kappa{=}0$ else $j$, and is never re-introduced as a separate symbol.
The merge produces a new \emph{aggregate request}
\begin{align}
  f_{\mathrm{mg}}   &= f_i \cup f_j,              \label{eq:mg_file} \\
  d_{\mathrm{mg}}   &= \min(d_i,\, d_j),           \label{eq:mg_deadline} \\
  \mathcal{S}_{\mathrm{mg}} &= \mathcal{S}_i \cap \mathcal{S}_j, \label{eq:mg_side} \\
  k_{\mathrm{mg}}   &\sim \mathrm{Unif}\{k_i, k_j\} \text{\ \ independently of $\kappa$}. \label{eq:mg_dest}
\end{align}

\smallskip\noindent
\emph{Status of the representative-destination rule
\eqref{eq:mg_dest}.}
Both original arrivals are already \emph{decoded} at the merge
broadcast (Sec.~``Broadcast Semantics of Aggregation''), so the
residual record's destination $k_{\mathrm{mg}}$ is a bookkeeping
representative rather than a remaining physical demand: a future
broadcast that includes the residual record is feasible iff partner
caches hold $f_{\mathrm{mg}}$, and the representative only enters
through the symmetric one-gap invariant of
Appendix~\ref{sec:appendix:state_sufficiency}.
We separate two distinct claims about \eqref{eq:mg_dest}:
\emph{(a) Correctness requirement.} The chained-merge state
sufficiency proof (Proposition~\ref{prop:one_gap},
Appendix~\ref{sec:appendix:state_sufficiency}) only requires
$k_{\mathrm{mg}} \in \{k_i, k_j\}$; uniformity and independence from
$\kappa$ are \emph{not} used in the proof, so a deterministic
keep-side-tied or earliest-deadline-tied rule would also satisfy the
one-gap invariant.
\emph{(b) Chosen simulator convention.} On top of the correctness requirement, we adopt the uniform sampling \emph{independent of} $\kappa$ as a deliberate design choice, so that the keep-side bit $\kappa$ is free to encode the queue-evolution control without also overloading the residual destination identity, while alternatives such as the keep-side-tied $k_{\mathrm{mg}}=k_{\kappa}$ or higher-degree / earliest-deadline representatives are plausible and would change the transition kernel, even though a focused sensitivity analysis is left for future work considering that every method we report here is evaluated under the \emph{same} convention, so the paired comparisons are not biased by~\eqref{eq:mg_dest}.
The queue slot selected by the keep-side bit $\kappa$ (slot $i$ if $\kappa = 0$; slot $j$ if $\kappa = 1$) is updated to hold the aggregate request, while the other slot is immediately replaced by a fresh request $r_{\mathrm{new}}$.
Considering that $d_{\mathrm{mg}}$ and $\mathcal{S}_{\mathrm{mg}}$ are fully determined by~\eqref{eq:mg_deadline}--\eqref{eq:mg_side}, the keep-side choice~$\kappa$ cannot alter the merged request's deadline, the side-information set, or the destination cache.
What $\kappa$ does control is the \emph{queue evolution}, i.e., it selects which slot is freed and immediately replenished by a fresh arrival~$r_{\mathrm{new}}$.
The identity of~$r_{\mathrm{new}}$, i.e., its file, deadline, and side-information providers, is drawn stochastically, thereby giving the agent indirect influence over the queue's future composition and mergeability.
Appendix~\ref{sec:appendix:keepside_ablation} provides an empirical ablation confirming that the learned keep-side decision contributes measurably to the performance beyond the pair selection alone.

\paragraph{Broadcast Semantics of Aggregation}

Each XOR merge constitutes an \emph{actual broadcast}, i.e., the server transmits $X = \bigoplus_{p\in f_i\cup f_j} p$ in one channel use, while every user whose individual request was folded into the merge decodes its requested packet immediately via its cached side information.
The metric $U_t = |f_{\mathrm{mg}}|$ therefore records the
\emph{packet-set-cardinality XOR degree} of the coded
packet, i.e., the number of \emph{distinct packet identities} carried in
the broadcast at that step (Section~\ref{sec:exp:metrics}). It is
\emph{not} the number of original singleton requests folded into the
aggregate: under the request generator (Sec.~\ref{sec:model:queue})
two pending records can share packet identities without contradicting
XOR feasibility, so $|f_{\mathrm{mg}}|$ can equal the number of
folded original arrivals only when no such overlaps occur (see the
non-gap-overlap remark in
Appendix~\ref{sec:appendix:coded_remarks}). Whenever original-arrival
semantics are intended we use the request-level metrics
$\eta_{\mathrm{req}}, m_{\mathrm{req}}, \sigma_{\mathrm{req}}$ of
Section~\ref{sec:exp:metrics} instead.
The aggregate request $r_{\mathrm{mg}}$ that remains in the queue after
the broadcast is not unserved demand; it is a
\emph{coding-state record} that
(i)~tracks the growing packet union $f_{\mathrm{mg}}$ for future XOR
feasibility checks via~\eqref{eq:xor_feasibility}, and
(ii)~carries the tightened side-information set
$\mathcal{S}_{\mathrm{mg}} = \mathcal{S}_i \cap \mathcal{S}_j$.
A chained merge of $(r_{\mathrm{mg}}, r_z)$ is a \emph{second} channel
use: the server broadcasts $X' = \bigoplus_{p\in f_{\mathrm{mg}}\cup
f_z} p$.
The two \emph{active participants} ($k_{\mathrm{mg}}$ and $k_z$) can
always decode~$X'$
(Proposition~\ref{prop:one_gap},
Theorem~\ref{thm:state_sufficiency};
Appendix~\ref{sec:appendix:state_sufficiency});
previously served users in the chain may or may not be able to decode
$X'$, but need not do so as they were already served at their
respective merge steps
(Remark~\ref{rem:scope_decodability}).
Because each chained broadcast is a separate transmission, a user that participates in multiple links of the chain contributes to $U_t$ at every link, while this reflects the per-broadcast XOR degree, not the redundant delivery.
If the coding-state record expires before a subsequent merge or unicast resolves it, the expiration penalty $E_t = |f_{\mathrm{mg}}|$ captures the wasted time slot and the side-information investment, not the undelivered data (Section~\ref{sec:model:deadlines}).
Considering that $\mathcal{S}_{\mathrm{mg}} = \mathcal{S}_i \cap \mathcal{S}_j$, each successive merge demands that a prospective partner cache holds \emph{all} packets in the growing set $f_{\mathrm{mg}}$, which is a \emph{typically} harder condition, with strict worsening exactly when the next merge contributes a packet identity not already in $f_{\mathrm{mg}}$ (Appendix~\ref{sec:appendix:coded_remarks}).
The method of aggressive merging therefore narrows the pool of future merge partners, thereby producing a throughput--flexibility trade-off that the scheduling agent must learn to manage.
This construction is analogous to incrementally growing a clique in the side-information conflict graph, i.e., each new edge corresponds to one additional coded broadcast exploiting the enlarged cache overlap~\cite{asghari2019approximation}.

Two supplementary remarks on chained-merge scope and per-broadcast
accounting are provided in Appendix~\ref{sec:appendix:coded_remarks}.
A worked three-request merge example appears in
Appendix~\ref{sec:appendix:example}.

\paragraph{Unique-Demand Accounting}
A separate evaluation concern is whether the scheduler covers the distinct file identities at least once during the episode, independent of which cache requested the packet. We define a \emph{unique-demand}, i.e., the file-identity-coverage, accounting at the granularity of the file identities.
Each packet ID $p \in \{0,\ldots,F-1\}$ projects onto a file identity via the deterministic map
\begin{equation}
  \phi(p) \;=\; \bigl\lfloor p / B \bigr\rfloor \;\in\; \{0,\ldots,N-1\},
  \label{eq:phi_packet_to_file}
\end{equation}
so the file identity carried by a packet set $f_r$ is
$\phi(f_r) = \{\phi(p) : p \in f_r\} \subseteq \{0,\ldots,N-1\}$.
Let $\mathcal{F}_{\mathrm{served}}^{(t-1)} \subseteq \{0,\ldots,N-1\}$
be the set of file identities that have been delivered by any
transmission strictly before step~$t$
(initialized to $\varnothing$ at episode start). Define
\begin{align}
  U_t^{\mathrm{uniq}} &\;=\; \bigl| \phi(f_{\mathrm{tx}_t}) \setminus \mathcal{F}_{\mathrm{served}}^{(t-1)} \bigr|, \label{eq:Ut_uniq} \\
  E_t^{\mathrm{uniq}} &\;=\; \biggl| \biggl( \bigcup_{r \in \mathcal{X}_t} \phi(f_r) \biggr) \setminus \mathcal{F}_{\mathrm{served}}^{(H)} \biggr|, \label{eq:Et_uniq}
\end{align}
where $f_{\mathrm{tx}_t}$ is the packet set delivered at step~$t$
(either $f_{\mathrm{mg}}$ for a coded merge or $f_r$ for a unicast on
record~$r$), $\mathcal{X}_t$ is the expiration-event set of
Eq.~\eqref{eq:exp_set}, and $\mathcal{F}_{\mathrm{served}}^{(H)}$ is
the cumulative served-file-identity set at episode end (so an expired
record contributes only those file identities never delivered during
the episode).
Under this accounting, each distinct file identity is credited exactly once per episode, i.e., a coding-state record that expires after its constituent file identities were already delivered earlier in the episode incurs $E_t^{\mathrm{uniq}} = 0$.
The broadcast-efficiency view $(U_t, E_t)$ and the unique-demand view $(U_t^{\mathrm{uniq}}, E_t^{\mathrm{uniq}})$ are complementary, i.e., the former quantifies how effectively each channel use exploits the coded-multicast gain at the granularity of the broadcast-level packet-set XOR degree, while the latter isolates how many distinct file identities, under the projection $\phi$, are delivered before their deadlines.
Section~\ref{sec:exp:metrics} formalizes all three metric families, and all
results tables report all three views
(Sections~\ref{sec:results}--\ref{sec:ablation}).
The distinct file-identity coverage $\delta$ becomes a co-primary evaluation criterion alongside the broadcast-packet expiration ratio $\rho$ in Section~\ref{sec:results} (formal definitions in Section~\ref{sec:exp:metrics}); ``co-primary'' here means jointly headline within the symbol-retention convention of Sec.~\ref{sec:exp:metrics}, not a claim of per-arrival miss-probability or per-user completion semantics.

\paragraph{Request-Level Accounting}
\label{sec:model:request_accounting}
The unique-demand view tracks the \emph{file identities}, i.e., two requests targeting the same file from different caches are conflated as a single demand.
To distinguish the individual arrivals through the queue aggregation, we introduce a finer-grained \emph{request-level} accounting.

\emph{Status of $\mathcal{A}(r)$.} The annotation set $\mathcal{A}(r)$
introduced below is \emph{auxiliary evaluator bookkeeping} maintained
alongside the latent simulator state for the sole purpose of computing
the request-level metrics M8--M10 of Sec.~\ref{sec:exp:metrics}; it is
\emph{not} part of the control-relevant queue record
$r=(k_r,f_r,d_r,\mathcal{S}_r)$ used by the contextual-POMDP transition
kernel of Sec.~\ref{sec:env}, and it is never observed by the policy
$\pi_\theta(a\mid o)$. The control-relevant record remains the
four-tuple stated above; $\mathcal{A}(r)$ is carried through merges
(Eq.~\eqref{eq:req_merge} below) and refills only so that completion
and miss can be credited per original request identifier at episode end.

Each new arrival~$r$ is assigned a unique request identifier~$u$ and carries the singleton annotation set $\mathcal{A}(r) = \{u\}$, while when two queue entries merge (Section~\ref{sec:model:coded}), the resulting record inherits the union of both annotation sets:
\begin{equation}
  \mathcal{A}(r_{\mathrm{mg}}) = \mathcal{A}(r_i) \cup \mathcal{A}(r_j).
  \label{eq:req_merge}
\end{equation}
At each step~$t$, let $\mathcal{A}(\mathrm{tx}_t)$ denote the
annotation set of the transmitted entry (or the union for coded
transmissions involving both endpoints), and let
$\mathcal{A}(\mathrm{exp}_t)$ denote the union of annotation sets
over all entries that expire at step~$t$.
We define the \emph{newly completed} and \emph{newly missed} request
sets as
\begin{align}
  C_t &= \mathcal{A}(\mathrm{tx}_t) \setminus
         \bigl(D_{t-1} \cup L_{t-1}\bigr),
  \label{eq:req_Ct} \\
  M_t &= \mathcal{A}(\mathrm{exp}_t) \setminus
         \bigl(D_t \cup L_{t-1}\bigr),
  \label{eq:req_Mt}
\end{align}
with cumulative sets $D_t = D_{t-1} \cup C_t$ and
$L_t = L_{t-1} \cup M_t$ ($D_0 = L_0 = \varnothing$).
By construction, each request identifier is credited exactly once as completed, missed, or still pending at episode end, while Section~\ref{sec:exp:metrics} defines the derived metrics, considering that the $\lambda$-sensitivity of the composite request-level score is analyzed in Appendix~\ref{sec:appendix:req_lambda_sensitivity}.

\paragraph{Feasible Merge Set}

The complete set of feasible pairs at time $t$ is
\begin{equation}
  \mathcal{M}_t = \bigl\{(i,j) \;\big|\; 0 \le i < j < Q,\;
    f_i \subseteq \mathcal{C}_{k_j},\;
    f_j \subseteq \mathcal{C}_{k_i}\bigr\},
  \label{eq:merge_set}
\end{equation}
with cardinality $|\mathcal{M}_t| \leq P_{\max} = \binom{Q}{2} = 45$.
Setting $P_{\max} = \binom{Q}{2}$ ensures that no feasible pair is ever discarded; the merge set $\mathcal{M}_t$ is therefore exact.
We index $\mathcal{M}_t$ in \emph{lexicographic order on queue indices
with $i<j$}: the same enumeration rule is applied identically by the
environment, the pair-feature tensor (whose rows correspond to the
listed pairs in this order, then zero-padded to $P_{\max}$), and the
action decoder $(i_k,j_k) = \mathcal{M}_t[k]$
(Sec.~\ref{sec:model:problem}).
This rule is the order produced by
\texttt{itertools.combinations(range(Q), 2)} in our reference
implementation and guarantees that policy logits, action masks, and
pair features remain aligned across processes.
This set changes at every step as the deadlines decrement, new requests arrive, and the side-information sets evolve through the merges.
Constructing $\mathcal{M}_t$ from scratch requires $O(Q^2)$ pairwise feasibility checks, each verifying the two subset-inclusion conditions on the cache contents.
In practice, the environment maintains $\mathcal{M}_t$ incrementally, i.e., only the pairs involving a newly arrived or recently merged request are re-evaluated, thereby reducing the per-step cost to $O(Q)$ checks plus $O(|\mathcal{M}_t|)$ bookkeeping.
The feasibility predicate~\eqref{eq:merge_set} has \emph{no size cap} on $|f_i \cup f_j|$, while the constant $U_{\max}=6$ used in the observation feature tables of Appendix~\ref{sec:appendix:features} is purely an observation clip on the size feature, not a merge-feasibility cap.

\paragraph{Impact on Future Mergeability}

A direct consequence of~\eqref{eq:mg_side} is that the merging \emph{shrinks} the side-information set, i.e., $\mathcal{S}_{\mathrm{mg}} \subseteq \mathcal{S}_i$ and $\mathcal{S}_{\mathrm{mg}} \subseteq \mathcal{S}_j$, while the subsequent merges involving $r_{\mathrm{mg}}$ require a cache to hold \emph{all} packets in the now-larger set $f_{\mathrm{mg}}$, which is less likely than holding $f_i$ or $f_j$ individually.
The method of aggressive merging therefore progressively reduces the pool of future merge partners, which is a structural reason why the classical ``always merge'' strategy is suboptimal under deadlines.

\subsection{Deadline Dynamics and Expiration}
\label{sec:model:deadlines}

The time is slotted, and after every transmission, whether unicast or coded, all deadlines in the queue are decremented by one:
\begin{equation}
  d_r \leftarrow d_r - 1, \quad \forall\, r \in \text{queue}.
  \label{eq:deadline_dec}
\end{equation}
A request $r$ \emph{expires} if $d_r \leq 0$ before it is served, while the expiration is a \emph{hard constraint}, i.e., an expired request yields zero throughput benefit to any user.
Upon expiration, the slot is immediately refilled by a fresh request $r_{\mathrm{new}}$ so that the queue remains at depth $Q$.
An aggregate request, formed by merging $r_i$ and $r_j$, that expires wastes the side-information investment of the prior merge, thereby making the method of overly aggressive merging more costly under tight deadlines.

The aggregate-expiration accounting is detailed in the transition dynamics of Section~\ref{sec:env:transition}.

\medskip
The performance metrics for evaluating the scheduling policies are formally defined alongside the experimental protocol in Section~\ref{sec:exp:metrics}, while the \emph{opportunity rate} $o_{\text{rate}} = \bigl|\{t : |\mathcal{M}_t|>0\}\bigr|/H$ measures the fraction of steps at which at least one feasible merge pair exists and is used to contextualize the merge-rate values across the scenarios.

\subsection{Problem Statement}
\label{sec:model:problem}

We state the scheduling problem studied in this paper below.

\begin{definition}[Deadline-Constrained Coded Caching Scheduling]
\label{def:problem}
Given a network with $K$ edge caches, a library of $N$ files ($B$ subfiles
each), decentralized random-prefetching placement with cache probability $p_c$,
a queue of $Q$ outstanding requests each with a per-request hard deadline
$d_r \leq D$, and an episode horizon of $H$ slots, the reference scheduling
problem of interest is to find an online scheduling policy
$\pi\colon \mathcal{S} \to \mathcal{A}$ that maximizes the expected
broadcast-efficiency score:
\begin{equation}
  \pi^{\sigma} = \arg\max_{\pi}\;
  \mathbb{E}_{\pi}\!\left[\sigma\right],
  \label{eq:opt_problem}
\end{equation}
subject to the constraint that any request still unserved after $D$ time slots expires and yields zero throughput.
The action space studied in this paper is the restricted \emph{merge-or-defer with EDF unicast fallback} class formalized in (A4) of Sec.~\ref{sec:model:assumptions} and parameterized in Sec.~\ref{sec:env:action}, i.e., the policy~$\pi$ chooses among the coded-merge actions, with a binary keep-side bit, and a deterministic earliest-deadline-first unicast, rather than over the arbitrary per-request transmission targets. The optimization of~\eqref{eq:opt_problem} is therefore over this restricted policy class, not over the unrestricted online scheduling problem.
We do not optimize~\eqref{eq:opt_problem} directly, i.e., the agent is trained on a shaped surrogate $R_t = R_{\mathrm{base}} + R_{\mathrm{quality}} + R_{\mathrm{shape}}$ (Sec.~\ref{sec:env:reward}) chosen to improve the sample efficiency and the deadline-aware behavior, while the trained policy $\pi_{\theta}^{\mathrm{surr}}$ is then evaluated on the demand-centric and request-level metrics of Sec.~\ref{sec:exp:metrics}. Among the three reward components, the potential-based shaping $R_{\mathrm{shape}}$ is \emph{inspired by} the policy-invariance theorem of Ng et al.~\cite{ng1999shaping}, while that theorem holds exactly only for an idealized, fully observed, continuing-state MDP with terminal $\Phi(s_H)=0$ and time-to-go information. The implemented training environment is the contextual-POMDP surrogate of Definition~\ref{def:problem} (truncated at $H{=}50$, hidden episode cache placement, aliased aggregate observations, no time-to-go feature), so we do \emph{not} claim a formal invariance guarantee for the implemented setup, even though in this paper $R_{\mathrm{shape}}$ should be read as the heuristic credit-assignment shaping that empirically improves the sample efficiency without distorting the empirical headline rankings in our ablation. The merge-quality bonus $R_{\mathrm{quality}}$ does change the optimized objective with respect to $\sigma$, so $\pi^{\sigma}$ in~\eqref{eq:opt_problem} is a reference target rather than the exact attractor of training. We report this distinction here so that the empirical claims in Sec.~\ref{sec:results} are read as those of $\pi_{\theta}^{\mathrm{surr}}$ on the demand-centric and request-level metrics, not as those of an exact maximizer of $\sigma$.

\smallskip\noindent
\textbf{Formal status: contextual POMDP, not a strict MDP.}
We treat the agent as a stationary policy for a discounted continuing-control \emph{POMDP} truncated at the horizon $H = 50$ (with the discount $\gamma = 0.995$ in PPO/GAE), rather than as the optimal non-stationary policy of a strict finite-horizon MDP, while the ``masked-MDP'' language used elsewhere in the paper is shorthand for this contextual-POMDP surrogate, considering that the queue-only observation $o_t$ is \emph{not} a strict Markov state for at least three reasons.
\emph{(i) Hidden episode context.} The cache placement $\{\mathcal{C}_k\}_{k=0}^{K-1}$ is sampled once at the episode start (A3) and remains hidden in the sense that it is not directly fed to the policy, while the future arrival distributions depend on it via the rejection-sampling rule of Sec.~\ref{sec:model:queue}, so two episodes with the same observed queue features can induce different refill distributions even though the latent placements differ.
This makes the formal model a \emph{contextual MDP / POMDP} with the hidden episode context.
\emph{(ii) Aggregate-state partial observability.} For an aggregate record produced by the chained merges, Appendix~\ref{sec:appendix:state_sufficiency} proves the sufficiency of $(k_{\mathrm{mg}}, f_{\mathrm{mg}}, \mathcal{S}_{\mathrm{mg}})$ for the future feasibility and decodability, while the policy observation retains $k_{\mathrm{mg}}$ and $\mathcal{S}_{\mathrm{mg}}$ explicitly even though it exposes $f_{\mathrm{mg}}$ only through its size $\min(|f_{\mathrm{mg}}|,U_{\max})/U_{\max}$, so two observation-equivalent queue states can yield different merge union sizes $|f_i \cup f_j|$ when the records share non-gap packets, thereby resulting in different immediate $U_t$ rewards and next aggregate sizes (Appendix~\ref{sec:appendix:coded_remarks}). The packet identities are deliberately discarded for the scalability of the graph-attention encoder, not because they are theoretically redundant, while reducing the remaining bias is an open question we leave to future work.
\emph{(iii) Time-to-go.} $s_t$ omits a remaining-horizon feature, so the finite-horizon-optimal policy can in principle differ from the stationary fixed point the agent learns, while the chosen $\gamma = 0.995$ has effective discount horizon $1/(1-\gamma) = 200 > H$, thereby making the surrogate closer to a continuing average-return objective than to a finite-horizon objective.
All empirical comparisons in Sec.~\ref{sec:results} are paired across the methods at fixed $H$ on shared seeds and shared cache placements, so (i)--(iii) are absorbed into the within-benchmark comparison, while the paper's claims should be read as conclusions about the learned stationary policy under this contextual-POMDP surrogate, not as the optimality claims for a strict finite-horizon MDP. Exposing the remaining-horizon and packet-identity features, or reframing the agent input to include a sufficient summary of the cache placement, would close (i)--(iii) at the cost of a wider observation, and we flag this as future work in Sec.~\ref{sec:conclusion}.
\end{definition}

\begin{remark}[Training objective vs.\ evaluation metrics]
\label{rem:training_vs_eval}
The optimization in~\eqref{eq:opt_problem} maximizes the expected broadcast-efficiency score~$\sigma$, which matches the base training reward $R_{\mathrm{base}} = U_t - E_t$.
However, because $U_t$ is the packet-set-cardinality XOR degree of the broadcast, and grows with each chained merge, $\sigma$ is best interpreted as a \emph{broadcast-efficiency} metric measuring the per-slot packet utilization rather than the distinct file-identity coverage.
The demand-centric evaluation metrics, i.e., the broadcast-packet expiration ratio~$\rho$ and the distinct file-identity coverage~$\delta$, are computed post-hoc from the same trajectories, even though they are not directly optimized.
We anchor our main empirical claims on these demand-centric metrics (Sections~\ref{sec:results}--\ref{sec:ablation}).
\end{remark}

Several features of Problem~\ref{def:problem} make it challenging:

\begin{itemize}[leftmargin=*, nosep]

  \item \textbf{Combinatorial action space.}
  The feasible merge set $\mathcal{M}_t$ changes at every step, while the agent must select a pair \emph{and} a keep side, thereby inducing up to $2P_{\max} + 1 = 91$ discrete actions of which only a state-dependent subset is valid at any given step.

  \item \textbf{Long-horizon mergeability consequences.}
  Each merge action simultaneously (i)~delivers one coded packet, (ii)~shrinks the side-information set for the aggregate request via $\mathcal{S}_{\mathrm{mg}} = \mathcal{S}_i \cap \mathcal{S}_j$, and (iii)~changes the future feasibility structure of the queue, while evaluating these cascading effects requires look-ahead beyond the next step.

  \item \textbf{Hard deadline and expiration dynamics.}
  The penalty for the expiration is discontinuous, i.e., a request delivers the full value if served one slot before its deadline and zero value if served one slot after, while the policies that optimize only the immediate throughput tend to defer the urgent requests in favor of the coding opportunities, thereby producing large expiration penalties.

  \item \textbf{Stochastic request arrivals.}
  New requests arrive with random file identities, deadlines, and side-information sets, so the future queue state is unknown at the decision time, while a useful policy must generalize across the distribution of the arriving requests.

\end{itemize}

These properties make the problem intractable for the exact dynamic programming at realistic scales and motivate the reinforcement learning formulation developed in the following sections.
Specifically, we model Problem~\ref{def:problem} as a \emph{masked discrete-action Markov Decision Process} (Section~\ref{sec:env}), in which the state encodes the full queue configuration, while the action space is dynamically masked to the feasible set $\mathcal{M}_t \cup \{\text{unicast}\}$, and the reward is shaped to match~\eqref{eq:opt_problem}.

\paragraph{Assumptions and Scope.}
To aid the reproducibility we summarize the modelling boundaries.
\emph{Fixed within each episode:} the network topology (one server, $K$ edge caches, shared broadcast link), the cache contents (drawn once via the uniform-without-replacement placement with fraction~$p_c$), the file library ($N$ files, $B$ equal-length subfiles each), and the file-popularity distribution (uniform for Track~A, while a separately trained agent handles the Zipf demand in Track~B, Section~\ref{sec:res:ood_zipf}).
\emph{Stochastic:} the request arrivals are drawn each time a queue slot is filled or refilled by (i)~sampling the file index from the within-track popularity law (i.e., uniform on Track~A, while Zipf with $\alpha{=}0.8$ or Mandelbrot--Zipf on Track~B), (ii)~sampling the packet index uniformly within the file, (iii)~rejecting any sampled (file, packet) that is held by every cache, (iv)~drawing the destination cache uniformly from the \emph{admissible} subset of caches that do not already store the sampled packet, and (v)~drawing the deadline $d_r \sim \mathrm{Uniform}\{1,\ldots,D\}$, while ``Request Generation'' above provides the full conditional procedure.
\emph{Not modelled:} the physical-layer errors (i.e., the error-free broadcast is assumed), the time-varying link rates (i.e., all packets are equal-length), the queueing or propagation delay beyond the discrete deadline mechanism, the cache-size heterogeneity, and the wireless fading.


\subsection{Contextual-POMDP Formulation}\label{sec:env}

We cast the deadline-constrained coded caching scheduling problem (Definition~\ref{def:problem}) as a \emph{masked discrete-action contextual POMDP} that can be optimized by a deep RL agent, while we use ``masked discrete-action MDP'' interchangeably below as a shorthand for this contextual-POMDP surrogate, considering that the formal-status caveats of Definition~\ref{def:problem}, i.e., the hidden episode cache context, the aggregate-state partial observability, and the no time-to-go feature, apply throughout.
The environment is implemented as a Gymnasium-compatible class~\cite{towers2024gymnasium} and is trained using $32$ parallel environment workers.

\subsubsection{Contextual-POMDP Surrogate}
\label{sec:env:mdp}

The implemented training environment is the
contextual-POMDP surrogate of Definition~\ref{def:problem}, with
latent-state space $\mathcal{S}$, observation space $\mathcal{O}$,
and observation map $o_t \in \mathcal{O}$ defined in
Sec.~\ref{sec:env:state}. We summarize it by the tuple
\begin{equation}
  \mathcal{M} = \bigl(\mathcal{S},\, \mathcal{O},\, \mathcal{A},\, R,\, P,\, \gamma\bigr),
\end{equation}
where $\mathcal{A}$ is the discrete action space,
$R\colon \mathcal{S} \times \mathcal{A} \times \mathcal{S} \to \mathbb{R}$
is the \emph{latent-state} reward kernel
$R(s_t, a_t, s_{t+1})$ used by the simulator (it depends on the
realized expiration-event set $\mathcal{X}_t$ and the merged-record
union $|f_{\mathrm{mg}}|$, which are functions of the latent queue and
not of the aliased observation alone; see
Sec.~\ref{sec:env:reward}), $P$ is the latent-state transition kernel,
and $\gamma = 0.995$ is the discount factor. The agent receives the scalar reward $R_t$ alongside the observation stream during the training, i.e., its policy $\pi_\theta(a \mid o)$ conditions on the observation only, while $R_t$ is computed from the latent state by the environment and delivered as part of the standard Gymnasium step tuple. We do not treat $R$ as a function of the observation, considering that the observation-equivalent aggregate states can yield different immediate $U_t$ rewards and different next aggregate sizes (Definition~\ref{def:problem}, point~(ii); also discussed in Appendix~\ref{sec:appendix:coded_remarks}). Whenever the ``masked-MDP'' shorthand is used elsewhere in the paper, it refers to this contextual-POMDP surrogate, not to a strict fully observed MDP.
Each episode runs for a fixed horizon of $H = 50$ transmission steps, and no early termination is applied.

\subsubsection{State Space and Observation Encoding}
\label{sec:env:state}

The latent state $s_t$ at step $t$ is the complete queue configuration $\{r_0, r_1, \ldots, r_{Q-1}\}$ together with the fixed cache assignment $\{\mathcal{C}_0, \ldots, \mathcal{C}_{K-1}\}$ for the current episode, while the agent's observation $o_t$ encodes only the queue, thereby projecting the cache assignment through the per-request side-information sets $\mathcal{S}_r$ and through the aggregate side-information set $\mathcal{S}_{\mathrm{mg}}$ for any chained-merge record. This projection is \emph{lossy} in two ways that are spelled out in the formal-status block of Definition~\ref{def:problem}, i.e., the placement is treated as hidden episode context (the policy sees only its $\mathcal{S}_r$ summaries, not the cache contents themselves), while the aggregate packet set $f_{\mathrm{mg}}$ is summarized only by its size feature $\min(|f_{\mathrm{mg}}|,U_{\max})/U_{\max}$ rather than by the explicit packet identities. We discard the packet identities deliberately for the scalability of the graph-attention encoder, not because they are theoretically redundant, and consequently the model is best read as a contextual POMDP whose observation is a sufficient statistic for the feasibility but not in general for the immediate $U_t$ reward when the records share non-gap packets.

The observation $o_t$ is a \emph{structured dictionary} with two components
whose dimensions are summarized in Table~\ref{tab:obs_dims}.

\begin{table}[!t]
  \caption{Observation Space Components ($K=5$, $Q=10$, $P_{\max}=45$).
    Track~B adds popularity-aware features (Section~\ref{sec:res:ood_zipf}).}
  \label{tab:obs_dims}
  \centering
  \begin{adjustbox}{max width=\columnwidth}
  \begin{tabular}{@{}lccc@{}}
    \toprule
    \textbf{Component} & & \textbf{Track~A} & \textbf{Track~B} \\
    \midrule
    Per-request feature vector & --- & $(130,)$ & $(140,)$ \\
    Pairwise feature tensor & --- & $(45, 8)$ & $(45, 11)$ \\
    \bottomrule
  \end{tabular}
  \end{adjustbox}
\end{table}
\FloatBarrier

The per-request feature vector $\phi_i \in [0,1]^{d_{\text{req}}}$ encodes the target cache identity (one-hot), the side-information providers, the normalized deadline, the file-set size, and the merge degree ($d_{\text{req}}{=}13$ for Track~A, $14$ for Track~B with an additional popularity-mass feature), while the per-pair feature vector $\psi_{ij}$ encodes the intersection size, the partner degrees, the combined urgency, and the aggregate sizes ($d_{\text{pair}}{=}8$ for Track~A, $11$ for Track~B). All features are bounded in $[0,1]$, while the full feature definitions are provided in Appendix~\ref{sec:appendix:features}.

\subsubsection{Action Space and Keep-Side Parameterization}
\label{sec:env:action}

The action space is $\mathcal{A} = \{0, 1, \ldots, 2P_{\max}\}$, i.e., a discrete set of $2P_{\max} + 1 = 91$ actions, while each action encodes both the \emph{pair index} and the \emph{keep-side decision} via the bijection
\begin{equation}
  a \;\longmapsto\; \bigl(k,\; \kappa\bigr)
  \;=\; \bigl(\lfloor a/2 \rfloor,\; a \bmod 2\bigr),
  \label{eq:action_decode}
\end{equation}
where $k \in \{0,\ldots,P_{\max}{-}1\}$ indexes the candidate pair
$(i_k, j_k) = \mathcal{M}_t[k]$ and $\kappa \in \{0, 1\}$ selects the kept
slot ($\kappa = 0$ retains slot $i_k$; $\kappa = 1$ retains slot $j_k$).
The special action $a = 2P_{\max} = 90$ triggers a unicast transmission.
The three action categories are summarized below:
\begin{itemize}[leftmargin=*, nosep]
  \item $a \in \{0, 2, 4, \ldots, 2P_{\max}{-}2\}$ (even): coded merge,
        keep slot $i_k$.
  \item $a \in \{1, 3, 5, \ldots, 2P_{\max}{-}1\}$ (odd): coded merge,
        keep slot $j_k$.
  \item $a = 2P_{\max}$: unicast, serve earliest-deadline request.
\end{itemize}

The role of the keep-side parameter in governing the queue evolution, exposed here as a learned action dimension rather than as the hand-designed endpoint-retention rules used by SACM+/SACM++, is discussed in Appendix~\ref{sec:appendix:coded_remarks}.

\subsubsection{Transition Dynamics}
\label{sec:env:transition}

Given the state $s_t$ and action $a_t$, the transition to $s_{t+1}$ proceeds in three sequential phases.

\smallskip\noindent
\textbf{One-step timeline.}
At step~$t$ the agent observes the queue and the feasible merge set~$\mathcal{M}_t$, selects an action, i.e., the coded merge or the unicast, while the environment executes the transmission and updates the affected queue slots (Phase~1), decrements all remaining deadlines by one (Phase~2), then expires and refills any request whose deadline has reached zero (Phase~3), thereby yielding the latent state~$s_{t+1}$. The observation $o_t$ does \emph{not} carry a time-to-go feature, and consistent with the truncated continuing-control framing in Definition~\ref{def:problem}, the agent is trained as a stationary policy on the contextual-POMDP observation defined there.

\smallskip\noindent
\textbf{Expiration event set.}
Let $\widetilde{\mathrm{queue}}_t$ denote the queue \emph{after} the Phase-1 transmission update and the Phase-2 deadline decrement, even though \emph{before} any Phase-3 refill, while the \emph{expiration-event set} at step~$t$ is
\begin{equation}
  \mathcal{X}_t \;=\; \bigl\{ r \in \widetilde{\mathrm{queue}}_t :\, d_r \le 0 \bigr\},
  \label{eq:exp_set}
\end{equation}
i.e., the records whose deadline reached or fell below zero during this step. All expiration-derived counters in this paper, i.e., $E_t$, the request-level miss set $M_t$, and the file-identity expiration set $E_t^{\mathrm{uniq}}$, are defined in terms of $\mathcal{X}_t$, never in terms of the post-refill queue~$s_{t+1}$ in which the corresponding slots have already been replaced by fresh arrivals.

\smallskip\noindent
\textbf{Refill timing convention (cross-reference to A5).}
The asymmetric refill ordering (Phase-1 refills before the
Phase-2 decrement, Phase-3 refills after) is recorded as
modeling assumption~(A5) above
(Sec.~\ref{sec:model:assumptions}). The two refill paths occupy
different positions in the one-step timeline, and we adopt this
asymmetry deliberately rather than patching it: a slot vacated by the
Phase-1 transmission is refilled \emph{before} the Phase-2 global
decrement, so the fresh request sampled with $d_r \sim
\mathrm{Uniform}\{1,\ldots,D\}$ is decremented on this step and enters
the next decision step with effective remaining deadline in
$\{0,\ldots,D-1\}$; a slot vacated by Phase-3 expiration is refilled
\emph{after} the Phase-2 decrement and so retains its sampled
remaining deadline in $\{1,\ldots,D\}$. As a consequence, a Phase-1
fresh request that draws $d_r=1$ joins $\mathcal{X}_t$ and expires in
the same step without ever being selectable; we treat this as the
model's representation of an arrival whose initial slack already fits
inside one transmission slot, and report it under the standard
expiration counters. Any reader re-implementing the simulator must
place the Phase-1 refill before
the Phase-2 decrement and the Phase-3 refill after, exactly as in
Algorithm~\ref{alg:one_step}.

The three phases are formalized in Algorithm~\ref{alg:one_step}.

\begin{algorithm}[h]
\caption{One-Step Environment Transition}
\label{alg:one_step}
\begin{algorithmic}[1]
\Require State $s_t = \{r_0,\ldots,r_{Q-1}\}$, action $a_t$, mask $\mathbf{m}_t$
\Statex \textit{// Phase 1 --- Transmission (Phase-1 refills happen \textbf{before} the Phase-2 decrement, so a fresh request sampled with $d_r=1$ here will expire in Phase~3 of the same step; see ``Refill timing convention'' above.)}
\If{$a_t \neq 2P_{\max}$ \textbf{and} $\lfloor a_t/2\rfloor < |\mathcal{M}_t|$}
    \State Decode $(k,\kappa) \leftarrow (\lfloor a_t/2\rfloor,\; a_t \bmod 2)$
    \State $(i_k,j_k) \leftarrow \mathcal{M}_t[k]$;\; compute $(f_\mathrm{mg}, d_\mathrm{mg}, \mathcal{S}_\mathrm{mg})$ via Eqs.~\eqref{eq:mg_file}--\eqref{eq:mg_side}
    \State Sample $k_\mathrm{mg} \sim \mathrm{Unif}\{k_{i_k}, k_{j_k}\}$ \emph{independently of} $\kappa$ \Comment{Eq.~\eqref{eq:mg_dest}; correctness requires only $k_{\mathrm{mg}}\in\{k_{i_k},k_{j_k}\}$ for Prop.~\ref{prop:one_gap}, uniformity is a chosen simulator convention}
    \State Form aggregate queue record $r_\mathrm{mg} \leftarrow (k_\mathrm{mg}, f_\mathrm{mg}, d_\mathrm{mg}, \mathcal{S}_\mathrm{mg})$ \Comment{four-tuple, matches the canonical control-relevant record}
    \State Set $\mathcal{A}(r_\mathrm{mg}) \leftarrow \mathcal{A}(r_{i_k}) \cup \mathcal{A}(r_{j_k})$ \Comment{auxiliary evaluator bookkeeping (Sec.~\ref{sec:model:request_accounting}); not part of the queue tuple}
    \State \textbf{if} $\kappa = 0$ \textbf{then} slot $i_k \leftarrow r_\mathrm{mg}$; refill slot $j_k$ with $r_\mathrm{new}$
    \State \textbf{else} slot $j_k \leftarrow r_\mathrm{mg}$; refill slot $i_k$ with $r_\mathrm{new}$ \Comment{$\kappa$ is the keep-side bit, not a slot index}
\Else \Comment{unicast}
    \State Serve $r^* = \arg\min_{r}\bigl(d_r,\, \mathrm{queue\text{-}index}(r)\bigr)$; refill its slot with $r_\mathrm{new}$ \Comment{ties on $d_r$ broken by smallest queue index, matching the reference simulator}
\EndIf
\Statex \textit{// Phase 2 --- Deadline Decrement}
\State $d_r \leftarrow d_r - 1$ for all $r$ in queue
\Statex \textit{// Phase 3 --- Expiration Handling}
\State $\mathcal{X}_t \leftarrow \{r \in \widetilde{\mathrm{queue}}_t : d_r \le 0\}$ \Comment{event set, Eq.~\eqref{eq:exp_set}}
\State $E_t \leftarrow \sum_{r \in \mathcal{X}_t} |f_r|$ \Comment{expired packet-identity count}
\For{each $r \in \mathcal{X}_t$}
    \State Refill slot with $r_\mathrm{new}$
\EndFor
\State Compute $\mathbf{m}_{t+1}$, observation $o_{t+1}$, reward $R_t$
\State \Return $(o_{t+1}, \mathbf{m}_{t+1}, R_t)$
\end{algorithmic}
\end{algorithm}

\subsubsection{Reward Function}
\label{sec:env:reward}

The per-step reward $R(s_t, a_t)$ decomposes additively into the three components:

\begin{equation}
  R(s_t, a_t) = R_{\mathrm{base}} + R_{\mathrm{quality}}
                + R_{\mathrm{shape}}.
  \label{eq:reward_total}
\end{equation}

\paragraph{R1: System-Aligned Base Reward}

\begin{equation}
  R_{\mathrm{base}} = w_{\mathrm{served}} \cdot U_t
                     - w_{\mathrm{exp}} \cdot E_t,
  \label{eq:reward_base}
\end{equation}
with $w_{\mathrm{served}} = 1.0$ and $w_{\mathrm{exp}} = 1.0$, while $U_t = |f_{\mathrm{mg}}|$ is the XOR degree of a coded merge (or $U_t = 1$ for a unicast transmission), and $E_t = \sum_{r \in \mathcal{X}_t} |f_r|$ is the total expired packet-set mass at step~$t$, computed over the \emph{expiration-event set} $\mathcal{X}_t$ defined in Eq.~\eqref{eq:exp_set} (i.e., the records whose deadline reached zero after the Phase-2 decrement, evaluated before the Phase-3 refill replaces those slots with fresh arrivals).
The per-broadcast packet-set formulation matches the training reward with the broadcast-efficiency score $\sigma = H^{-1}\sum_t(U_t - \lambda E_t)$ (Section~\ref{sec:exp:metrics}), so that the higher reward correlates with the higher broadcast-level packet utilization, while Remark~\ref{rem:training_vs_eval} discusses the distinction between this training-aligned broadcast-level metric and the file-identity / request-level evaluation metrics.

\paragraph{R2: Quality Bonus (Coded Merge Only)}

\begin{equation}
  R_{\mathrm{quality}} =
  \begin{cases}
    \begin{aligned}[t]
    &w_{\mathrm{inter}} \cdot |\mathcal{S}_{\mathrm{mg}}| \\
    &- w_{\mathrm{union}} \cdot \max\!\bigl(0,\, |f_{\mathrm{mg}}| - 2\bigr)
    \end{aligned}
    & \text{coded merge,} \\[8pt]
    0 & \text{unicast,}
  \end{cases}
  \label{eq:reward_quality}
\end{equation}
with $w_{\mathrm{inter}} = 0.75$ and $w_{\mathrm{union}} = 0.15$, while the \emph{intersection bonus} $w_{\mathrm{inter}} \cdot |\mathcal{S}_{\mathrm{mg}}|$ rewards the pairs whose merged side-information set is large, considering that a larger $\mathcal{S}_{\mathrm{mg}}$ means the aggregate request can participate in future XOR merges with more potential partners.
The \emph{union penalty} $w_{\mathrm{union}} \cdot \max(0, |f_{\mathrm{mg}}| - 2)$ discourages repeatedly merging the already-merged aggregates, i.e., as $|f_{\mathrm{mg}}|$ grows, the XOR feasibility condition~\eqref{eq:xor_feasibility} becomes harder to satisfy (a partner must cache the entire union), thereby making the deeply chained aggregates increasingly difficult to serve before their deadline.

\paragraph{R3: Potential-Based Shaping}

\begin{equation}
  R_{\mathrm{shape}} = w_\Phi \cdot
  \bigl(\gamma_\Phi \cdot \Phi(s_{t+1}) - \Phi(s_t)\bigr),
  \label{eq:reward_shape}
\end{equation}
where the potential function is
\begin{equation}
  \Phi(s) = \frac{|\mathcal{M}(s)|}{\binom{Q}{2}},
  \label{eq:potential}
\end{equation}
i.e., the fraction of all $\binom{Q}{2}$ possible queue pairs that are currently the feasible merges, while we use $w_\Phi = 0.20$ and $\gamma_\Phi = 0.995$.

This shaping is \emph{inspired by} the potential-based shaping~\cite{ng1999shaping}, whose formal invariance theorem holds for an idealized fully observed continuing-state MDP, while in the implemented contextual-POMDP surrogate of Definition~\ref{def:problem} (truncated at $H{=}50$, hidden episode cache placement, aliased aggregate observations, no time-to-go feature), the theorem does not apply directly, and we therefore present $R_{\mathrm{shape}}$ as the heuristic credit-assignment guidance and do not claim that it leaves the optimal policy of the surrogate exactly unchanged. The empirical effect of $R_{\mathrm{shape}}$ is examined in the reward ablation in Sec.~\ref{sec:ablation}, while an exact finite-horizon-MDP invariance result would additionally require a terminal $\Phi(s_H)=0$, a time-to-go feature, and an observation exposing the hidden episode context and the packet identities.

\begin{remark}[Reward design and selective merging]
\label{rem:reward_shaping}
The potential-based term $R_{\mathrm{shape}}$ is inspired by~\cite{ng1999shaping}, even though the formal invariance theorem applies only to an idealized fully observed continuing-state MDP, and we therefore treat $R_{\mathrm{shape}}$ as the heuristic shaping for the implemented contextual-POMDP and make no formal optimal-policy preservation claim.
The quality bonus $R_{\mathrm{quality}}$, specifically the intersection bonus and the union penalty, is the heuristic shaping that explicitly incentivizes the high-overlap merges and discourages the deep chaining, thereby giving rise to the selective merge behavior reported in Section~\ref{sec:results}, which is learned under this specific reward design, not in a reward-free setting.
\end{remark}

\begin{remark}[Training reward vs.\ evaluation metrics]
The shaped reward $R(s_t, a_t)$ defined above is used \emph{exclusively during training} to guide the policy optimization, while all performance comparisons in Sections~\ref{sec:results}--\ref{sec:ablation} are based on the communications-system metrics defined in Section~\ref{sec:exp:metrics}, i.e., the broadcast-efficiency score, the broadcast-packet expiration ratio $\rho$, the distinct file-identity coverage $\delta$, the served/tx, the coding gain, and the expirations, and the per-step reward is reported only as a training diagnostic and is never used for the policy ranking.
\end{remark}

\subsubsection{Dynamic Action Masking}
\label{sec:env:masking}

At every step $t$, a boolean mask $\mathbf{m}_t \in \{0, 1\}^{2P_{\max}+1}$ is computed alongside the observation as follows:
\begin{equation}
  m_t(a) =
  \begin{cases}
    1 & \text{if } \lfloor a/2 \rfloor < |\mathcal{M}_t|
        \text{ (pair } \lfloor a/2 \rfloor \text{ exists)}, \\
    1 & \text{if } a = 2P_{\max} \text{ (unicast, always valid)}, \\
    0 & \text{otherwise.}
  \end{cases}
  \label{eq:mask}
\end{equation}
The mask is exposed to the learning algorithm via the Gymnasium \texttt{action\_masks()} interface, while MaskablePPO~\cite{schulman2017ppo,sb3maskableppo} applies it at the two points as follows:

\begin{enumerate}[leftmargin=*, nosep]
  \item \textbf{Sampling:} the infeasible actions are assigned zero probability by replacing their logits with $-\infty$ before the softmax, so they are never selected during the rollout collection.
  \item \textbf{Gradient update:} the infeasible actions are excluded from the policy-gradient loss computation, thereby preventing the optimizer from assigning the probability mass to them.
\end{enumerate}

Huang and Onta\~{n}\'{o}n~\cite{huang2020invalidmask} show that this masking introduces no gradient bias, while the details are provided in Appendix~\ref{sec:appendix:masking_details}.

\subsubsection{Episode Termination}
\label{sec:env:termination}

An episode terminates deterministically after $H = 50$ transmission steps, and no early termination is applied.
At termination, the environment returns the episode-level aggregate counters from which the evaluation metrics of Section~\ref{sec:exp:metrics} are derived, while the full evaluation protocol, i.e., the holdout seeds, the episode seeding, and the confidence intervals, is described in Section~\ref{sec:exp:setup}.

\section{Learning Framework}
\label{sec:method}

This section describes the learning system we developed for the
deadline-constrained coded caching delivery, i.e., the delivery of the
coded multi-casting messages under the strict deadlines of the users.
The system has three components, while each of them addresses a
distinct aspect of the problem: a structured policy
network that uses the graph topology of the pending request queue
(Section~\ref{sec:method:arch}), a three-phase training pipeline that
bootstraps from the heuristics and then refines through the
self-improved online experience
(Sections~\ref{sec:method:bc}--\ref{sec:method:exit}),
and a conservative model-selection criterion that reduces the sensitivity
to the training-seed variance~\cite{henderson2018deep}
(Section~\ref{sec:method:selection}), thereby giving us a more reliable
final policy.

\subsection{Policy Architecture}
\label{sec:method:arch}

The policy must map a variable-cardinality request queue (whose merge
structure changes at every step) to a probability distribution over the
$91$ discrete actions, while at the same time respecting the
feasibility constraints of the merge graph.
A flat multilayer perceptron (MLP) would require padding to a worst-case
size and could not share the computation across the merge candidates,
even though the merge candidates share most of the structural
information. We therefore use a graph-structured policy network, i.e.,
a two-stage architecture that treats the queue as a node-attributed graph
and each candidate merge as an attributed edge.

We chose the graph attention~\cite{velickovic2018gat} over a flat MLP
for two reasons:
(i)~the merge topology $\mathcal{M}_t$ changes at every step, so we need
a variable-structure encoder; and
(ii)~the merge decisions are coupled across the pairs, since the value
of merging $(r_i, r_j)$ depends on its effect on the other pairs'
feasibility. The ablation in Section~\ref{sec:ablation} confirms that
replacing the graph attention with a flat MLP degrades the miss ratio
across all the eight evaluation regimes, i.e., the regimes used in our
evaluation.

\subsubsection*{Stage 1: Graph-Pair Encoder}

The extractor receives the structured observation from
Section~\ref{sec:env:state}, i.e., a per-request feature matrix
$\mathbf{X} \in \mathbb{R}^{Q \times d_{\mathrm{req}}}$
($d_{\mathrm{req}} = 13$ for Track~A; $14$ for Track~B)
and a per-pair feature matrix
$\mathbf{E} \in \mathbb{R}^{P_{\max} \times d_{\mathrm{pair}}}$
($d_{\mathrm{pair}} = 8$ for Track~A; $11$ for Track~B;
zero-padded to $P_{\max}=45$).

The extractor processes the observation in four stages:
(i)~a \emph{NodeMLP} ($d_{\mathrm{req}} \to 256 \to 128$) lifts each per-request vector to a 128-d embedding;
(ii)~two \emph{GraphAttentionBlocks}~\cite{velickovic2018gat} with four heads and $d_{\mathrm{model}}{=}128$ propagate information along feasible merge edges to produce contextual node embeddings;
(iii)~a \emph{ContextNet} mean-pools all node embeddings through an MLP ($128 \to 256 \to 128$) to produce a global queue context vector $\mathbf{c}$; and
(iv)~an \emph{EdgeNet} forms each pair embedding $\mathbf{e}_{ij} \in \mathbb{R}^{64}$ by concatenating the two node embeddings, their element-wise product, the \emph{absolute} pairwise difference $|\mathbf{h}_i - \mathbf{h}_j|$, and the \emph{base} pair features (the first $d_{\mathrm{pair}} - 2$ entries; the last two pair features encode normalized queue indices and are used only to build the merge-graph adjacency, not as EdgeNet inputs), then maps through an MLP ($(4{\times}128 + d_{\mathrm{pair}} - 2) \to 256 \to 64$, i.e., input width $518$ for Track~A and $521$ for Track~B).
Full equations and inference cost analysis are in Appendix~\ref{sec:appendix:arch_details}.

\subsubsection*{Stage 2: Pairwise Action Head}

Given the context vector $\mathbf{c}$ and all pair embeddings
$\{\mathbf{e}_{ij}\}$, the action head scores the 91-dimensional action
space.
For each of the (up to) 45 candidate pairs, two logits are produced by a
shared MLP ($[\mathbf{c} \| \mathbf{e}_{ij}] \to 128 \to 2$), corresponding
to keep-side~$i$ and keep-side~$j$, respectively.
The unicast action receives a separate logit from an MLP
($\mathbf{c} \to 64 \to 1$).
The $91$ raw logits are concatenated, the infeasible entries are set to
$-\infty$ via the dynamic action mask (Section~\ref{sec:env:masking}),
and a softmax produces the action distribution
$\pi_\theta(a \mid s)$ (here and below, $s$ is the shorthand for the
agent observation $o_t$ of Definition~\ref{def:problem}; the underlying
environment is a contextual POMDP and we never condition the policy on
the latent state directly), thereby producing a feasible action
distribution at every step.

The \emph{value head} uses a separate encoder from the actor, i.e., it
runs an independent graph-pair encoder of identical architecture (with
its own weights), \emph{concatenates} the global queue context vector
$\mathbf{c} \in \mathbb{R}^{128}$ with the $P_{\max}=45$ per-pair
embeddings $\mathbf{e}_{ij} \in \mathbb{R}^{64}$ to form the
$128 + 45{\cdot}64 = 3008$-dimensional value latent
$\mathbf{z}_v = [\mathbf{c}\,\|\,\mathbf{e}_1\,\|\,\cdots\,\|\,\mathbf{e}_{45}]$,
and maps it through a critic MLP ($3008 \to 256 \to 128 \to 1$),
exactly matching Table~\ref{tab:model_arch} and
Fig.~\ref{fig:policy_arch}. This separation lets the critic learn its
own task-relevant features, independent of the actor's action-scoring
head, thereby reducing the interference between the actor and the
critic.

Table~\ref{tab:model_arch} and Fig.~\ref{fig:policy_arch} summarize the
parameter count and the dataflow by module. The total is
${\approx}1.73$M parameters for the Track~A (uniform) agent and
${\approx}1.75$M for the Track~B (Zipf) agent (i.e., the larger input
feature dimensions account for the difference), thereby giving enough
representational capacity without overfitting on the
$10^4$-episode evaluation regime.

\begin{table}[t]
\centering
\caption{Policy network architecture. $Q{=}10$ requests, $K{=}5$ caches, $P_{\max}{=}45$ candidate pairs.}
\label{tab:model_arch}
\begin{adjustbox}{max width=\columnwidth}
\begin{tabular}{l l r}
\toprule
\textbf{Component} & \textbf{Architecture} & \textbf{Params} \\
\midrule
Node MLP & $13 \to 256 \to 128$ (ReLU) & 36.5K \\
Graph Self-Attention & $L{=}2$ layers, $h{=}4$ heads, $d{=}128$ & 198K \\
Context MLP & $128 \to 256 \to 128$ (ReLU) & 66K \\
Edge MLP & $518 \to 256 \to 64$ (ReLU) & 149K \\
\midrule
Actor (pairwise action head) & per-pair: $192 \to 128 \to 2$ & 25K \\
Critic graph-pair encoder & (same architecture, separate weights) & 449.5K \\
Critic value MLP & $3008 \to 256 \to 128 \to 1$ & 803K \\
\midrule
\textbf{Total} & & \textbf{$\sim$1.73M} \\
\bottomrule
\end{tabular}
\end{adjustbox}
\end{table}

The Track~B (Zipf) agent uses slightly larger input dimensions
($d_{\text{req}}{=}14$, $d_{\text{pair}}{=}11$), thereby yielding
${\approx}1.75$M parameters, while all the other architectural choices
are identical (Appendix~\ref{sec:appendix:arch_details}).

\begin{figure*}[t]
  \centering
  \resizebox{\textwidth}{!}{
\begin{tikzpicture}[
    >=Stealth,
    node distance=0.4cm,
    block/.style={rectangle, draw=black!70, rounded corners=3pt, minimum height=0.62cm, font=\scriptsize, align=center},
    input/.style={block, fill=gray!12, minimum width=1.9cm},
    stage1/.style={block, fill=blue!10},
    stage2/.style={block, fill=orange!10},
    output/.style={block, fill=green!10},
    note/.style={rectangle, draw=black!45, rounded corners=2pt, fill=white, font=\tiny, align=left},
    arr/.style={->, thick, black!70},
    darr/.style={->, thick, dashed, black!55},
    dim/.style={font=\tiny, text=black!55},
    label/.style={font=\scriptsize\bfseries, text=black!80},
  ]

  \node[input, minimum width=2.1cm] (feat) at (0, 1.5) {Per-request features\\$\mathbf{X}\in\mathbb{R}^{10\times 13}$};
  \node[input, minimum width=2.1cm] (pair) at (0, -0.9) {Pair features\\$\mathbf{P}\in\mathbb{R}^{45\times 8}$};
  \node[label, anchor=south] at (0, 2.45) {Observation dict};

  \node[stage1, minimum width=1.9cm] (nodemlp) at (2.8, 1.5) {Node MLP\\$13\to 256\to 128$};
  \node[dim, above=1pt of nodemlp] {ReLU};
  \draw[arr] (feat) -- (nodemlp);

  \node[stage1, minimum width=2.1cm] (adj) at (2.8, -0.25) {Build adjacency\\from $(i_{\mathrm{norm}},j_{\mathrm{norm}})$\\+ valid-pair mask};
  \draw[arr] (pair) -- (adj);
  \node[dim, below=1pt of adj] {$\mathbf{A}\in\{0,1\}^{10\times 10}$ with self-loops};

  \node[stage1, fill=blue!18, minimum width=2.2cm] (gat) at (5.7, 1.0) {Masked self-attention\\$\times 2$ blocks\\heads$=4$, $d=128$};
  \draw[arr] (nodemlp) -- (gat) node[midway, above, dim] {$\mathbf{h}^{(0)}$};
  \draw[arr] (adj.east) |- ($(gat.south)+(0,-0.1)$);

  \node[stage1, fill=cyan!12, minimum width=1.25cm] (pool) at (8.6, 2.45) {Mean\\pool};
  \node[stage1, fill=cyan!15, minimum width=2.0cm] (ctxmlp) at (10.9, 2.45) {Context MLP\\$128\to 256\to 128$};
  \node[stage1, fill=cyan!20, minimum width=0.9cm] (ctx) at (13.0, 2.45) {$\mathbf{c}$};
  \draw[arr] (gat.north east) |- (pool);
  \draw[arr] (pool) -- (ctxmlp);
  \draw[arr] (ctxmlp) -- (ctx);
  \node[dim, above=1pt of ctx] {$\mathbb{R}^{128}$};

  \node[stage1, fill=blue!12, minimum width=2.5cm] (gather) at (8.7, -0.35) {Gather $(\mathbf{h}_i,\mathbf{h}_j)$ and concat\\$[\mathbf{h}_i,\mathbf{h}_j,\mathbf{h}_i\odot\mathbf{h}_j,$\\$|\mathbf{h}_i{-}\mathbf{h}_j|,\psi_{ij}]$};
  \node[stage1, fill=blue!15, minimum width=2.0cm] (edgemlp) at (11.4, -0.35) {Edge MLP\\$518\to 256\to 64$};
  \node[stage1, fill=blue!20, minimum width=0.95cm] (eij) at (13.4, -0.35) {$\mathbf{e}_{ij}$};
  \draw[arr] (gat.south east) |- (gather.west);
  \draw[arr] (pair.east) -| ($(gather.south west)+(0.15,0)$);
  \draw[arr] (gather) -- (edgemlp);
  \draw[arr] (edgemlp) -- (eij);
  \node[dim, below=1pt of eij] {$\mathbb{R}^{64}$, padded to 45 pairs};

  \node[note, anchor=north west] (pfnote) at (7.2, -1.45) {$\psi_{ij}$ uses only the first 6 pair features:\\inter, deg$_i$, deg$_j$, min-deadline, size$_i$, size$_j$.\\The last 2 entries $(i_{\mathrm{norm}},j_{\mathrm{norm}})$ are used to build $\mathbf{A}$.};

  \node[note, anchor=north west] (latnote) at (7.2, -2.38) {Actor latent: $\mathbf{z}_{\pi}=[\mathbf{c}\|\mathbf{e}_1\|\cdots\|\mathbf{e}_{45}]$\\$\in\mathbb{R}^{128+45\cdot 64}=\mathbb{R}^{3008}$.};

  \begin{scope}[on background layer]
    \node[fit=(nodemlp)(adj)(gat)(pool)(ctxmlp)(ctx)(gather)(edgemlp)(eij)(pfnote)(latnote),
          draw=blue!40, dashed, rounded corners=5pt, fill=blue!3,
          inner xsep=6pt, inner ysep=8pt] (s1box) {};
  \end{scope}
  \node[label, text=blue!70, anchor=south] at (s1box.north) {Stage 1: Graph-pair encoder (actor branch shown)};

  \node[stage2, minimum width=1.55cm] (pconcat) at (15.2, 1.5) {$[\mathbf{c}\|\mathbf{e}_{ij}]$};
  \node[stage2, fill=orange!15, minimum width=1.55cm] (pairmlp) at (17.2, 1.5) {Pair scorer\\$192\to 128\to 2$};
  \draw[arr] (ctx.east) -| ($(pconcat.north)+(-0.25,0)$);
  \draw[arr] (eij.east) -| ($(pconcat.south)+(-0.25,0)$);
  \draw[arr] (pconcat) -- (pairmlp);
  \node[dim, above=1pt of pairmlp] {$\times 45$ candidate pairs};

  \node[stage2, fill=orange!15, minimum width=1.55cm] (sendmlp) at (17.2, -0.25) {Send scorer\\$128\to 64\to 1$};
  \draw[arr] (ctx.east) -- ++(0.35,0) |- (sendmlp.west);

  \node[stage2, fill=orange!20, minimum width=1.15cm] (logits) at (19.3, 0.65) {91 logits};
  \draw[arr] (pairmlp.east) -| (logits.north) node[near start, above, dim] {90};
  \draw[arr] (sendmlp.east) -| (logits.south) node[near start, below, dim] {1};

  \node[output, fill=red!10, minimum width=1.2cm] (mask) at (21.0, 0.65) {Apply action\\mask $\mathbf{m}_t$};
  \node[output, fill=green!15, minimum width=1.0cm] (softmax) at (22.7, 0.65) {Softmax};
  \node[output, fill=green!20, minimum width=1.25cm] (pi) at (24.4, 0.65) {$\pi_\theta(a\mid s)$};
  \draw[arr] (logits) -- (mask);
  \draw[arr] (mask) -- (softmax) node[midway, above, dim] {invalid $\mapsto -\infty$};
  \draw[arr] (softmax) -- (pi);

  \node[note, anchor=north west] (criticnote) at (14.5, -1.65) {Critic: a separate graph-pair encoder with the same architecture\\but independent weights, followed by a value MLP $3008\to 256\to 128\to 1$.};
  \node[stage2, fill=purple!10, minimum width=1.75cm] (vf) at (21.4, -1.65) {Value head\\$3008\to 256\to 128\to 1$};
  \node[output, fill=purple!18, minimum width=0.9cm] (vout) at (24.0, -1.65) {$V_\phi(s)$};
  \draw[darr] (criticnote.east) -- (vf.west);
  \draw[arr] (vf) -- (vout);

  \begin{scope}[on background layer]
    \node[fit=(pconcat)(pairmlp)(sendmlp)(logits)(mask)(softmax)(pi)(criticnote)(vf)(vout),
          draw=orange!40, dashed, rounded corners=5pt, fill=orange!3,
          inner xsep=6pt, inner ysep=8pt] (s2box) {};
  \end{scope}
  \node[label, text=orange!70, anchor=south] at (s2box.north) {Stage 2: Pairwise action head + critic};

\end{tikzpicture}}
  \caption{Architecture of the graph-structured policy network.
    \textbf{Stage~1 (graph-pair encoder):} Per-request features are lifted by
    a node MLP and refined by two graph-attention blocks ($h{=}4$ heads,
    $d{=}128$).  A context network mean-pools all node embeddings to a global
    context vector~$\mathbf{c} \in \mathbb{R}^{128}$.  An edge network combines node embeddings,
    element-wise interactions, and raw pair features into per-pair
    embeddings~$\mathbf{e}_{ij} \in \mathbb{R}^{64}$.
    \textbf{Stage~2 (pairwise action head):} Each pair embedding is concatenated
    with~$\mathbf{c}$ and mapped to two keep-side logits; a separate
    unicast MLP produces one additional logit.  Dynamic action masking
    zeroes infeasible entries before softmax produces~$\pi_\theta(a \mid s)$.
    The value head uses a separate graph-pair encoder (not shown).}
  \label{fig:policy_arch}
\end{figure*}

\FloatBarrier

\subsection{Phase 1: Rollout-Improved Behavior Cloning}
\label{sec:method:bc}

A randomly initialized policy almost never discovers the selective merging
from the exploration alone, considering that the reward signal is sparse
while the consequences of the premature merging (i.e., fewer future
coding opportunities and more expirations) are delayed across many
steps. We therefore start with a supervised warm-up using the
\emph{behavior cloning} (BC) from a rollout-improved
teacher~\cite{ross2011dagger}, thereby giving the policy a useful
initial starting point.

\subsubsection*{Teacher construction}
For each encountered state $s$, the teacher constructs a candidate
action set as follows.
We rank all currently feasible merge pairs $(i,j) \in \mathcal{M}_t$
by the lexicographic key
\begin{align*}
\mathrm{rank}(i,j) \;=\; \bigl(\,
  &|\mathcal{S}_i\cap\mathcal{S}_j|,\;
   -\min(d_i,d_j),\\
  &\deg(r_i)+\deg(r_j),\;
   -k(i,j)\bigr),
\end{align*}
sorted in descending order, where the primary key is the
side-information intersection size, the secondary key prefers more
urgent (smaller minimum-deadline) pairs, the tertiary key prefers
larger combined degree, and the final key $-k(i,j)$ negates the
pair's position under the lexicographic queue-index enumeration
($i<j$) used to build the action mask, so that on full ties the
\emph{earliest} lexicographic position wins (i.e., descending sort
on the negated key implements ascending lexicographic
candidate-enumeration order on ties). We collectively call this
the \emph{raw merge score}; the same key is implemented in the
reference simulator's \texttt{candidate\_actions\_topk} routine.
We retain the top $K_{\mathrm{pair}}=16$ pairs under this ranking;
\emph{both} keep-side variants $\kappa\in\{0,1\}$ are expanded for each
selected pair, yielding up to $2K_{\mathrm{pair}}=32$ coded candidates.
The deterministic EDF unicast action is \emph{always} included as an
additional candidate (so the planner can always choose to defer to
unicast on urgent states), giving a candidate set of size at most
$2K_{\mathrm{pair}}+1=33$, further restricted to mask-feasible
actions.
Each candidate $a$ is then evaluated by a short-horizon simulation:
\begin{enumerate}[label=\roman*.]
  \item \textbf{Clone the environment} in state $s$. Cloning copies
        the queue, the per-episode cache placement, and the RNG
        bit-generator state, so the cloned environment carries the
        same hidden episode context as the original.
  \item Execute candidate action $a$.
  \item Run SACM++ for $d=4$ lookahead steps; the per-step return
        accumulated along the rollout is the \emph{full shaped
        reward} $R_{\mathrm{total}} = R_{\mathrm{base}} +
        R_{\mathrm{quality}} + R_{\mathrm{shape}}$ used in PPO
        training (Sec.~\ref{sec:env:reward}), discounted with
        $\gamma = 0.995$.
  \item Repeat for $M=4$ independent Monte Carlo seeds. For each
        candidate, every seed re-clones the original environment and
        re-seeds its RNG with a fixed per-seed value before applying
        $a$, so all candidates are scored against the \emph{same
        common-random-numbers} stream per seed and the
        candidate-vs-candidate comparison is paired across rollouts.
        The discounted returns are averaged across the $M$ seeds.
\end{enumerate}
The action with the highest average discounted return is selected as
the teacher label $a^* = \arg\max_a \hat{V}(s, a)$, while the ties on
$\hat{V}(s,a)$ are broken by the candidate-enumeration order above
(i.e., lexicographic $(i,j,\kappa)$ for the coded actions, with the
unicast action ranked last). Because the lookahead uses the rollouts
rather than a learned value function, this teacher is model-free and
\emph{approximately rollout-improved} over the SACM++ heuristic on the
retained candidate set, i.e., each retained candidate is scored by its
average $d{=}4$ Monte-Carlo discounted return under the SACM++
continuation, thereby allowing the teacher to outperform the SACM++
action under its own finite-horizon, pruning-restricted estimate
$\hat V(s,a)$. We do \emph{not} claim a strict per-step improvement
guarantee, even though the rollout-improved teacher is empirically
stronger than SACM++, since $\hat V$ is finite-horizon, stochastic,
and sensitive to the top-$K_{\mathrm{pair}}$ pre-filter (see the
pruning-sensitivity caveat below).

\smallskip\noindent
\emph{Pruning-sensitivity caveat.} The candidate-set pre-filter
keeps only the top $K_{\mathrm{pair}}$ feasible pairs, so the
supervisory signal is restricted to the actions that survive a heuristic
pre-filter closely related to the SACM++ ranking, i.e., the same
ranking that we used for the teacher construction. We do not report
the binding rate of this filter or a full-enumeration-vs-pruned
teacher comparison on a small instance in this version; that
sensitivity sweep would isolate how much of the BC/ExIt advantage
comes from the rollout-improved planner with respect to the pre-filter
itself, and we leave it for future work.

\subsubsection*{Dataset and training}
We collected a dataset of $N_{\mathrm{BC}} = 150{,}000$ state--action
pairs $(s, a^*)$ by rolling out the teacher across the randomized
episodes. We then trained the policy network by minimizing the
cross-entropy loss
$\mathcal{L}_{\mathrm{BC}} = -\mathbb{E}_{(s,a^*)}[\log \pi_\theta(a^* \mid s)]$
using Adam with the learning rate $\eta_{\mathrm{BC}} = 3\times10^{-4}$,
for $6$~epochs with minibatch size $2{,}048$ and gradient clipping
at $1.0$.

After the BC, the policy already reproduces the selective merge
behavior, i.e., it unicasts the urgent requests and prefers the pairs
with large cached side-information intersections, thereby shortening
the self-improvement phase that follows.

\subsection{Phase 2: Value Network Warm-Up}
\label{sec:method:warmup}

The policy gradient methods need a well-calibrated value function to
produce useful advantage estimates, considering that the advantage
estimates drive the policy updates. Right after the BC, the critic's
parameters are still random and produce inaccurate baselines, which
gives high-variance gradients that can destabilize the BC-warmed policy.

To address this, we froze the action-scoring layer and ran $50{,}000$
MaskablePPO steps with $32$~parallel environments, while only the
value network (and the shared lower-layer graph-pair encoder of the
critic) receives the gradient updates. The critic thus learns to
predict the returns under the fixed BC policy before the policy
optimization begins, thereby giving the subsequent PPO phase a
calibrated advantage estimator. Once the $50{,}000$ steps are complete,
we unfroze the action-scoring layer and Phase~3 begins with an
already-calibrated advantage estimator.

\subsection{Phase 3: MaskablePPO with Curriculum Learning}
\label{sec:method:ppo}

Phase 3 runs Maskable Proximal Policy
Optimization~\cite{schulman2017ppo,sb3maskableppo} with Generalized
Advantage Estimation (GAE,~$\lambda=0.95$)~\cite{schulman2015gae} for
$6{,}000{,}000$ Phase-3 environment steps per training seed.
Combined with the $50{,}000$-step Phase-2 warm-up
(Sec.~\ref{sec:method:warmup}), this gives $6{,}050{,}000$ total
RL environment steps per seed; the training-curve x-axes in
Fig.~\ref{fig:training_curves} (and its Zipf counterpart in
Fig.~\ref{fig:training_curves_zipf}) include the warm-up offset, so
the curriculum boundaries plotted there are at $T = 50\text{K} +
500\text{K} = 550\text{K}$ and $T = 50\text{K} + 1\text{M} =
1{.}05\text{M}$.
Training is distributed across 32~parallel environment instances.

\subsubsection*{Curriculum}
Training directly on the target domain ($N=100$ files, $p_c=0.30$) is
challenging, considering that with more files the side-information
intersection of any two random requests is small, so the feasible merge
pairs are rare while the reward is sparse. Following the curriculum
learning~\cite{bengio2009curriculum}, we increase the task difficulty
across the three stages:
\begin{enumerate}[label=\Roman*.]
  \item \textbf{Stage~I} ($500{,}000$ steps): $N=60$ files,
        $p_c=0.50$, dense merge opportunities, strong shaping signal.
  \item \textbf{Stage~II} ($500{,}000$ steps): $N=80$ files,
        $p_c=0.40$, intermediate difficulty.
  \item \textbf{Stage~III} ($500{,}000+$ steps): $N=100$ files,
        $p_c=0.30$, target evaluation domain.
\end{enumerate}
Each stage is initialized from the final checkpoint of the previous
stage. The policy thus encounters the monotonically increasing task
difficulty, while the learned selective merge behaviors transfer from
the easier to the harder domains, thereby producing a more
generalization-capable final policy.

\subsubsection*{PPO hyperparameters}
The main hyperparameters are as follows: the learning rate
$5{\times}10^{-4} \to 1{\times}10^{-4}$ (linear decay), the entropy
coefficient $0.010 \to 0.001$, $32$ parallel environments, $6$M
Phase-3 environment steps per seed (plus the $50$K warm-up of
Phase~2, for $6.05$M total RL steps). The training runs in
$250{,}000$-step chunks, thereby letting us interleave the ExIt
distillation between the chunks (Section~\ref{sec:method:exit}). The
full hyperparameter set is in the Appendix
Table~\ref{tab:hparams}.

\subsection{Expert Iteration with Online Distillation}
\label{sec:method:exit}

Even with a BC warm start, pure PPO can plateau if the policy's value
function under-estimates the benefit of the rare but highly rewarding
selective merge decisions, i.e., the decisions that matter most for
the final policy. We address this through the \emph{Expert Iteration}
(ExIt)~\cite{anthony2017exit} with the Dataset Aggregation
(DAgger)~\cite{ross2011dagger}, interleaved with PPO between the
training chunks, thereby periodically re-injecting a planner-based
supervisory signal into the policy. The trigger is implemented as follows: PPO runs in
$250{,}000$-step chunks and a distillation iteration fires at the end
of any chunk for which a $300{,}000$-step interval threshold has been
crossed (the next-fire counter advances by $300{,}000$ each time).
Concretely, the first fire occurs at $T = 500{,}000$ Phase-3 steps
(the first chunk boundary at or after the start threshold of
$300{,}000$), and subsequent fires snap to the chunk boundary at or
after each $+300{,}000$ increment of the threshold. Because a
$+300{,}000$ threshold advance lands between chunk boundaries every
fourth chunk (the chunks at $T \in \{1.75\text{M},\,3.25\text{M},\,
4.75\text{M}\}$ fall \emph{below} the threshold and skip), this
realizes \emph{$20$ distillation iterations} over the $24$ Phase-3
chunks, with an average inter-fire interval of $\approx 300{,}000$
Phase-3 steps that matches the threshold parameter.

\subsubsection*{Distillation procedure}
Each ExIt iteration proceeds as follows:
\begin{enumerate}[label=\roman*.]
  \item \textbf{Roll-in.} Collect 8{,}192 states by executing the
        \emph{current policy} with probability $1-p_{\mathrm{expert}}=0.80$
        and the heuristic expert (SACM++) with probability
        $p_{\mathrm{expert}}=0.20$.
        The mixed roll-in gives sufficient coverage of high-value state
        regions that the current policy may still visit infrequently.
  \item \textbf{Critic-bootstrapped labeling.} For each collected state
        $s$, the candidate set is constructed by the same raw merge
        score key as Section~\ref{sec:method:bc} (intersection size,
        then $-\min(d_i,d_j)$, then combined degree, then queue-index
        position), retaining the top $K_{\mathrm{pair}}=12$ pairs,
        expanding both keep-side variants, and always including the
        EDF unicast action, giving up to $2K_{\mathrm{pair}}+1=25$
        mask-feasible candidates. For each candidate $a$, clone the
        environment (queue, hidden cache placement, and RNG
        bit-generator state) and execute the action; run SACM++ for
        $d=5$ lookahead steps using the full shaped reward
        $R_{\mathrm{total}}$ at $\gamma=0.995$, and bootstrap with
        the critic's value estimate $\hat{V}_\theta(s')$ at the
        terminal state. Average over $M=3$ Monte Carlo seeds with
        common random numbers across candidates per seed, and select
        $\hat{a} = \arg\max_a \hat{Q}(s, a)$, with the same
        candidate-enumeration tie-break as in BC. The
        pruning-sensitivity caveat from BC carries over verbatim.
  \item \textbf{Buffer aggregation.} Append the $(s, \hat{a})$ pairs to a
        DAgger replay buffer capped at 80{,}000 entries (uniform random
        eviction when full).
  \item \textbf{Distillation.} Train the policy on the buffer for 2~epochs
        via cross-entropy loss, Adam, $\eta=1\times10^{-4}$, batch size 2{,}048.
\end{enumerate}

Because the ExIt teacher bootstraps from the agent's own improving
critic $\hat{V}_\theta$, the labels improve as the training progresses,
thereby letting the agent surpass its heuristic teachers
(Appendix~\ref{sec:appendix:training_details}).

Algorithm~\ref{alg:pipeline} summarizes the complete three-phase training pipeline.

\begin{algorithm}[t]
\caption{BC $\to$ PPO $\to$ ExIt Training Pipeline}
\label{alg:pipeline}
\begin{algorithmic}[1]
\Statex \textbf{Input:} Environment $\mathcal{E}$, heuristic teacher SACM++, hyperparams (Table~\ref{tab:hparams})
\Statex \textbf{Output:} Trained policy $\pi_\theta$
\Statex
\Statex \textit{// Phase 1: Rollout-Improved Behavior Cloning (\S\ref{sec:method:bc})}
\State Collect $N_{\mathrm{BC}} = 150\text{K}$ state--action pairs $(s, a^*)$ using rollout teacher
\State Train $\pi_\theta$ via cross-entropy loss for 6 epochs (Adam, $\eta = 3\!\times\!10^{-4}$)
\Statex
\Statex \textit{// Phase 2: Value Network Warm-Up (\S\ref{sec:method:warmup})}
\State Freeze pairwise action head of $\pi_\theta$
\State Run MaskablePPO for $50\text{K}$ steps (critic only receives gradients)
\State Unfreeze pairwise action head
\Statex
\Statex \textit{// Phase 3: MaskablePPO + Curriculum + ExIt (\S\ref{sec:method:ppo}--\ref{sec:method:exit})}
\Statex \textit{// Single global Phase-3 step counter $T$, $250\text{K}$-step chunks; $6\text{M}$ total Phase-3 steps across all stages (24 chunks). Adding the $50\text{K}$ Phase-2 warm-up gives $6.05\text{M}$ total RL env steps per seed.}
\Statex \textit{// Curriculum advance schedule: I $\to$ II at $T = 500\text{K}$ (after 2 chunks); II $\to$ III at $T = 1\text{M}$ (after 4 chunks); III runs the remaining 20 chunks ($5\text{M}$ steps).}
\State $T \leftarrow 0$;\; $T_{\mathrm{next}} \leftarrow 300\text{K}$;\; stage $\leftarrow$ I with $(N{=}60, p_c{=}0.50)$
\While{$T < 6\text{M}$}
    \If{$T = 500\text{K}$} stage $\leftarrow$ II with $(N{=}80, p_c{=}0.40)$ \EndIf
    \If{$T = 1\text{M}$}   stage $\leftarrow$ III with $(N{=}100, p_c{=}0.30)$ \EndIf
    \State Run MaskablePPO with GAE ($\gamma{=}0.995$, $\lambda{=}0.95$) for one $250\text{K}$-step chunk
    \State $T \leftarrow T + 250\text{K}$
    \If{$T \geq T_{\mathrm{next}}$} \Comment{ExIt fires only when $T$ has crossed the next $+300\text{K}$ threshold}
        \State \textit{// ExIt distillation (\S\ref{sec:method:exit})}
        \State Collect $8{,}192$ states via mixed roll-in ($p_{\mathrm{expert}}{=}0.20$)
        \State Label each state: critic-bootstrapped lookahead planner
        \State Append $(s, \hat{a})$ to DAgger buffer (cap $80\text{K}$)
        \State Distill: 2 epochs cross-entropy on buffer
        \While{$T_{\mathrm{next}} \le T$} $T_{\mathrm{next}} \leftarrow T_{\mathrm{next}} + 300\text{K}$ \EndWhile \Comment{advance counter past current $T$}
    \EndIf
\EndWhile
\State \Return $\pi_\theta$ \Comment{schedule realizes $20$ ExIt iterations across the $24$ chunks}
\end{algorithmic}
\end{algorithm}

Fig.~\ref{fig:training_curves} shows the evolution of evaluation reward and
training losses across the three curriculum stages.

\begin{figure}[t]
  \centering
  \includegraphics[width=\columnwidth]{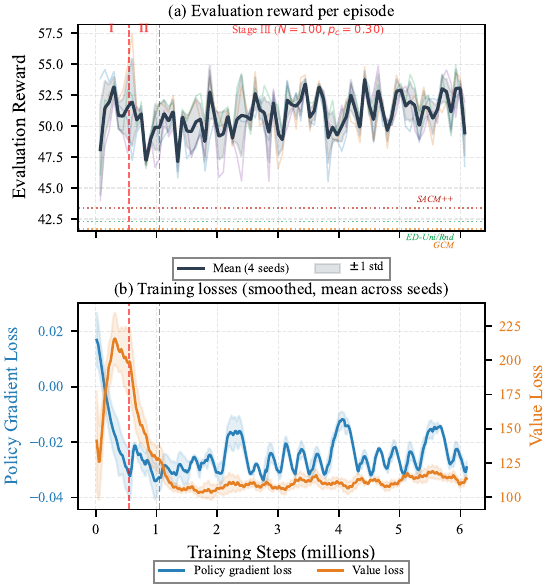}
  \caption{Training dynamics of the Track~A (uniform-demand) agent across 4 seeds.
    The x-axis shows total RL environment steps (6.05M per seed, including the 50K
    Phase-2 warm-up); curriculum stage transitions therefore appear at
    $T = 50\text{K} + 500\text{K} = 550\text{K}$ and
    $T = 50\text{K} + 1.0\text{M} = 1.05\text{M}$ on this axis (corresponding to
    $T_{\mathrm{Phase\,3}} = 500\text{K}$ and $1.0\text{M}$ Phase-3 steps).
    \textbf{(a)}~Evaluation reward (mean $\pm$ std across seeds);
    vertical dashed lines mark curriculum stage transitions
    at $550\text{K}$ and $1.05\text{M}$ total RL steps. Horizontal dotted
    lines show baseline policy rewards (SACM{++}, ED-Unicast/Random,
    GCM) evaluated under the same training reward configuration.
    \textbf{(b)}~Policy-gradient loss (left axis, blue) and value loss
    (right axis, orange), both smoothed and averaged across seeds.
    Both losses stabilize after $\sim$2M total RL steps, with periodic
    perturbations from ExIt distillation events.}
  \label{fig:training_curves}
\end{figure}

\subsection{Model Selection}
\label{sec:method:selection}

We ran four independent training seeds (each with distinct network
initializations and environment seeds), thereby giving us four
candidate models. To control for the variance in the training
outcomes~\cite{henderson2018deep}, we selected among the four
candidates using a \emph{robust advantage criterion} evaluated on
$50$ validation seeds that are entirely separate from the $50$ holdout
seeds used in the final evaluation reported in
Section~\ref{sec:results}.

For each candidate model $m$ and validation seed $v$, the per-seed advantage
over the highest-throughput coded baseline / BC--ExIt teacher (SACM++) is
\begin{equation}
  a_v^{(m)} = \sigma^{(m)}_v - \sigma^{\mathrm{SACM++}}_v,
\end{equation}
where $\sigma_v$ denotes BE-score on seed $v$.
The selection criterion is the \emph{robust advantage}:
\begin{equation}
  \Omega^{(m)} = \bar{a}^{(m)} - 0.5\,\hat{\sigma}^{(m)}_a,
\end{equation}
with $\bar{a}^{(m)}$ the mean per-seed advantage and $\hat{\sigma}^{(m)}_a$
its standard deviation across the 50 validation seeds.
The $-0.5\,\hat{\sigma}$ penalty favors the models with both the high
average advantage \emph{and} the low variance, while it penalizes the
models that perform well only on a subset of the validation seeds.
We pick the model with the highest $\Omega^{(m)}$ and evaluate it once
on the holdout seeds, considering that these holdout seeds are never
observed during the training or the model selection, thereby giving us
an unbiased estimate of the generalization performance.
Appendix~\ref{sec:appendix:seed_stability} reports the holdout
performance of all four seed models, confirming the low inter-seed
variance.

\smallskip\noindent
\emph{Methodological caveat: $\sigma$-based selection vs.\ the primary
demand-centric metrics.} The selection criterion is aligned with the
broadcast-efficiency score $\sigma$, while
Sec.~\ref{sec:exp:metrics} declares the broadcast-packet expiration
ratio $\rho$ and the distinct file-identity coverage $\delta$ as the
\emph{primary} user-facing metrics, i.e., the metrics we report in the
headline tables; the request-level family
$(\eta_{\mathrm{req}}, m_{\mathrm{req}}, \sigma_{\mathrm{req}})$ is
\emph{supplementary} in the current version (full-baseline coverage
left as future work). We chose $\sigma$ as the pre-registered
selection criterion because it is the only metric that matches the
base training reward $R_{\mathrm{base}} = U_t - E_t$
(Sec.~\ref{sec:env:reward}, Remark~\ref{rem:training_vs_eval}),
thereby keeping the validation--training comparison consistent. The extended
seed-stability table at Appendix~\ref{sec:appendix:seed_stability}
reports $\sigma$, $\rho$, served/tx, expirations, $\delta$,
$\eta_{\mathrm{req}}$, $m_{\mathrm{req}}$, and $\sigma_{\mathrm{req}}$
across the four seeds and shows narrow inter-seed spread on every
column. The selected seed is the per-seed best on both of the primary
demand-centric metrics ($\rho{=}0.203$ and $\delta{=}0.824$) as well
as on $\sigma$, while on the supplementary request-level family it is
competitive even though it is not the per-seed leader (i.e., seed~1
reaches $\sigma_{\mathrm{req}}{=}0.853$ with respect to the selected
seed's $0.833$), which reflects a broadcast-efficiency /
request-level deadline-compliance trade-off rather than a weakness on
the selected seed's own primary objective.
We do not report a focused robustness sweep over alternative selection
rules (best-$\rho$, best-$m_{\mathrm{req}}$, or a Pareto multi-metric
rule) in this version; that controlled selection-sensitivity sweep
remains future work.

\section{Experimental Protocol}
\label{sec:experiments}

This section describes the two evaluation tracks, the complete set of
methods (Track~A: $9$ baselines + PPO = $10$ unique methods, $12$ table
rows with the two literature-compatible aliases; Track~B: $6$ baselines
+ PPO-Zipf = $7$ unique methods), the non-ID generalization regimes
(split into the curriculum-seen and the unseen-parameter ones; see
Sec.~\ref{sec:exp:ood} for the narrow within-family sense in which we
use ``OOD-'' throughout), and the statistical reporting protocol, i.e.,
the protocol that we follow for all the comparisons in this paper.
The quantitative results are deferred to Section~\ref{sec:results}.

\subsection{Environment Parameters and Training Configuration}
\label{sec:exp:setup}

All the experiments use the environment described in
Section~\ref{sec:env} with the default parameters listed in
Table~\ref{tab:env_params}. We trained two agents independently, i.e.,
the \textbf{Track~A agent} on the uniform demand (${\approx}1.73$M
parameters) with $N{=}100$, $p_c{=}0.30$, $D{=}20$; and the
\textbf{Track~B agent} on the Zipf demand ($\alpha{=}0.8$,
${\approx}1.75$M parameters, with the additional popularity-aware
features). Within each track, all the evaluation conditions beyond
the training domain are \emph{non-ID}, while we further split the
non-ID into the \emph{curriculum-seen} regimes (i.e., the parameter
values the agent encountered during the curriculum stages of
Sec.~\ref{sec:method:ppo}) and the \emph{out-of-distribution} (OOD)
regimes (i.e., the parameter values never seen during the training).
The exact split is given in Sec.~\ref{sec:exp:ood} (Track~A:
$2$~curriculum-seen, $5$~OOD; Track~B: $2$~curriculum-seen, $9$~OOD).

\begin{table}[!t]
  \renewcommand{\arraystretch}{1.2}
  \caption{Default Environment Parameters (Training Domain)}
  \label{tab:env_params}
  \centering
  \begin{tabular}{clr}
    \toprule
    Symbol & Description & Value \\
    \midrule
    $N$           & File catalog size             & 100   \\
    $B$           & Subfiles per file             & 10    \\
    $K$           & Edge caches (users)           & 5     \\
    $Q$           & Queue capacity                & 10    \\
    $D$           & Max deadline (slots)          & 20    \\
    $H$           & Episode horizon (steps)       & 50    \\
    $p_c$ & Decentralized cache fraction & 0.30 \\
    $P_{\max}$    & Max candidate merge pairs $\bigl({=}\tbinom{Q}{2}\bigr)$     & 45    \\
    \bottomrule
  \end{tabular}
\end{table}

The queue size $Q=10$ and the number of caches $K=5$ are fixed across
all the experiments. Varying $K$ changes the observation dimension
($d_{\mathrm{req}} = 2K+3$) and would require retraining, even though
the rest of the architecture remains the same; extending to
$K \in \{10, 20\}$ is left for the future work
(Section~\ref{sec:conclusion}).

We ran four independent training seeds, each for $6{,}000{,}000$
Phase-3 environment steps (plus a $50{,}000$-step Phase-2 warm-up,
for $6{,}050{,}000$ total RL environment steps per seed) using
MaskablePPO with $32$ parallel environments and the three-phase
curriculum described in Section~\ref{sec:method}. The model selection
uses the robust advantage criterion
$\Omega = \bar{a} - 0.5\hat{\sigma}_a$ on a validation split of $50$
seeds, entirely separate from the holdout set. The selected model is
evaluated on $50$ holdout seeds ($200$ episodes/seed), thereby giving
$\mathbf{10{,}000}$ \textbf{episodes per method per evaluation
condition}. All the methods share the identical seed set under each
condition, thereby giving a fully \emph{paired} design. The seed
formulas, the wall-clock times, and the inference latency are
reported in Appendix~\ref{sec:appendix:reproducibility}.

\subsection{Baseline Implementations}
\label{sec:exp:baselines}

We compare against the $9$ Track~A baselines (ED-Unicast, GCM, SACM,
SACM+, SACM{++}, TauFit-0/1/2/3) and the $6$ Track~B baselines (i.e.,
the five Track~A coded methods reused under the Zipf demand plus the
popularity-aware SACM{++}-Pop). Counting PPO as the trained agent, the
Track~A evaluation reports $9 + 1 = 10$ unique methods, displayed as
the $12$ table rows once the two literature-compatible aliases
(Perfect-Fit $\equiv$ TauFit-0 and First-Fit $\equiv$ TauFit-3) are
shown for the compatibility with the prior literature, while the
Track~B evaluation reports $6 + 1 = 7$ unique methods. The baselines
are organized into the three strategy families.
Table~\ref{tab:baselines} summarizes all the baselines with their
merge rates on the ID-default condition.

\subsubsection*{Family 1 — Conservative Uncoded}

\textbf{ED-Unicast} transmits the earliest-deadline pending request as
a unicast packet at every step, irrespective of the available merge
set. It is the deadline-safe lower bound, i.e., zero expirations from
the excessive merging, even though there is no coding gain (merge
rate $= 0$).

\subsubsection*{Family 2 — Aggressive Coded Multicast}

All four methods in this family always merge whenever
$\mathcal{M}_t \neq \emptyset$, giving merge rate $= 100\%$.

\textbf{GCM (Greedy Coded Merge)} selects the first element of
$\mathcal{M}_t$ without further discrimination.

\textbf{SACM} selects the pair that maximizes the cached side-information
intersection $|\mathcal{S}_i \cap \mathcal{S}_j|$, consistently keeping
endpoint~$i$~\cite{asghari2019approximation}.

\textbf{SACM+} refines the keep-side decision: after selecting the
maximum-intersection pair, it retains the endpoint with the higher merge
degree, improving future connectivity.

\textbf{SACM++} applies a lexicographic priority key
$(|\mathcal{S}_i \cap \mathcal{S}_j|,\, -\min(d_i, d_j))$ that
maximizes the side-information overlap and then prioritizes the more
time-critical pair, while keeping the degree-aware endpoint retention.
SACM++ is the highest-throughput coded-multicast baseline in our
reference simulator (i.e., the highest Served/Tx, $\mu$, on the
ID-default uniform table) and is the heuristic teacher used for the
BC and the ExIt, considering that it maximizes the broadcast-level
packet throughput; even though it is the highest-throughput baseline,
it is not the per-metric best baseline (e.g., GCM and SACM tie SACM++
on $\rho$ at $0.345$ on the ID-default, while SACM has a higher
$\sigma$ than SACM++).

\smallskip\noindent
\emph{Baseline keep-side semantics.} For all baseline methods (SACM, SACM+, SACM++, and the SACM++-Pop variant introduced below), the keep-side bit selects only which queue slot is vacated and refilled (Sec.~\ref{sec:model:coded}); the representative destination $k_{\mathrm{mg}}$ follows Eq.~\eqref{eq:mg_dest} independently of $\kappa$, exactly as it does for the learned policy. ``Endpoint retention'' here is therefore a queue-slot operation, not a destination-identity operation.

\subsubsection*{Family 3 — Oracle-Tuned Threshold Policies ($\tau$-Fit)}

This family uses the \emph{misfit}
metric~\cite{codedcachingdelaysensitive}, while the threshold
$\tau \in \{0,1,2,3\}$ controls the merge aggressiveness, from the
\textbf{Perfect-Fit} ($\tau{=}0$, merge rate $13.3\%$) through the
\textbf{TauFit-1/2/3} ($33.2\%$--$41.6\%$) to the \textbf{First-Fit}
($\tau{=}K{-}2$, identical to TauFit-3). The Perfect-Fit and TauFit-0
are identical ($\tau{=}0$), while the First-Fit and TauFit-3
($\tau{=}K{-}2$) are likewise identical, i.e., the same policy with
two different labels. Both of the alias labels are retained in the
tables for the compatibility with the prior literature, thereby
giving $12$ table rows from the $10$ unique methods (i.e., the $9$
Track~A baselines plus PPO). For each evaluation metric, we select
the oracle-optimal $\tau^*$ by the exhaustive search on the validation
split, considering that this oracle is unavailable at the deployment
and is the ceiling of any fixed-threshold policy. The full misfit
definition and the per-metric oracle selections are in
Appendix~\ref{sec:appendix:exp_details}.

\subsubsection*{Track~B Additional Baseline}

For the Zipf-demand track (Section~\ref{sec:exp:ood}), we add
\textbf{SACM++-Pop}: a popularity-aware extension of SACM++ that uses
the lexicographic priority key
\[
\bigl(\,|\mathcal{S}_i\cap\mathcal{S}_j|,\;
       \hat m(f_i\cup f_j),\;
       -\min(d_i,d_j)\bigr),
\]
sorted in descending order, where $\hat m(\cdot)$ is the
popularity-mass projection of Appendix~\ref{sec:appendix:features}
($\hat m(f) = \sum_{p\in f}\hat p_{\phi(p)}$ under the assumed Zipf
file-popularity law). Ties on the full key are broken by the
lexicographic queue-index order $(i,j)$ with $i<j$. The keep-side
bit is set to $\kappa{=}0$ when
$\deg(r_i) + 2\hat m(f_i) \ge \deg(r_j) + 2\hat m(f_j)$ and to
$\kappa{=}1$ otherwise (popularity-weighted degree-aware endpoint
retention); the representative destination still follows
Eq.~\eqref{eq:mg_dest} as for every method (Sec.~\ref{sec:model:coded}).
This baseline tests whether a heuristic that explicitly exploits the
Zipf skew can close the gap with respect to the learned policy.

\begin{table}[!t]
  \renewcommand{\arraystretch}{1.2}
  \caption{Baseline Summary (Merge Rate on ID-Default, Uniform Demand)}
  \label{tab:baselines}
  \centering
  \begin{adjustbox}{max width=\columnwidth}
  \begin{tabular}{llcc}
    \toprule
    Method & Family & Merge Rate & Key property \\
    \midrule
    ED-Unicast    & Uncoded         & 0\%    & Never codes \\
    GCM           & Coded (greedy)  & 100\%  & First feasible pair \\
    SACM          & Coded           & 100\%  & Max intersection \\
    SACM+         & Coded           & 100\%  & Max inter. + degree keep \\
    SACM++        & Coded           & 100\%  & Lex (inter., $-d$) + degree \\
    Perfect-Fit   & $\tau$-Fit ($\tau$=0) & 13.3\% & Misfit=0 only \\
    TauFit-1      & $\tau$-Fit ($\tau$=1) & 33.2\% & Misfit$\leq$1 \\
    TauFit-2      & $\tau$-Fit ($\tau$=2) & 40.6\% & Misfit$\leq$2 \\
    TauFit-3      & $\tau$-Fit ($\tau$=3) & 41.6\% & Misfit$\leq$3 \\
    First-Fit     & $\tau$-Fit ($\tau$=3) & 41.6\% & $\equiv$ TauFit-3 \\
    \midrule
    SACM++-Pop    & Coded (Track~B) & 100\%  & Pop.-weighted inter. \\
    \midrule
    \textbf{PPO-Agent} & Learned & 31.8\% & Selective (adaptive) \\
    \bottomrule
  \end{tabular}
  \end{adjustbox}
\end{table}
\FloatBarrier

\subsection{Out-of-Distribution Evaluation Regimes}
\label{sec:exp:ood}

Each track's trained agent is evaluated zero-shot across multiple
non-ID conditions, thereby giving $7$~non-ID regimes for the Track~A
($2$~curriculum-seen, $5$~OOD) and $11$~non-ID regimes for the
Track~B ($2$~curriculum-seen, $9$~OOD). Within each track, the same
trained model is applied directly, without the retraining, the
fine-tuning, or the parameter updates. The observation and action
space shapes are invariant \emph{within} each track ($Q=10$, $K=5$,
$P_{\max}=45$), while only the environment dynamics change. The two
tracks use the different observation spaces
(Table~\ref{tab:obs_dims}) and independently trained models, so the
cross-track comparisons reflect the two distinct policies rather than
the zero-shot transfer of a single model.

\paragraph{Curriculum-seen vs.\ unseen-parameter conditions.}
Two of the non-ID evaluation conditions, $N{=}60$ (Stage~I of the
curriculum) and $p_c{=}0.40$ (Stage~II), were encountered during the
training, even though at the earlier curriculum stages rather than
at the convergence. We label these as the \emph{curriculum-seen
transfer} conditions (prefix \textbf{Curr-}) to distinguish them from
the \emph{unseen-parameter} conditions (prefix \textbf{OOD-}) that
were never seen during the training. The \textbf{OOD-} prefix is used
throughout this paper in this narrow sense, i.e., the unseen parameter
values (the catalog size, the cache fraction, the deadline budget,
the Zipf exponent) within the \emph{same} simulator family ($K$, $Q$,
the action dimensionality $2P_{\max}{+}1$, and the placement family
are held fixed across all the evaluations), so all the transfer
claims should be read as the within-family parameter generalization
rather than as a broader distribution shift. The same convention
applies to the Zipf track as well.

\subsubsection*{Track~A --- Uniform Demand (2 Curr + 5 OOD conditions)}

Starting from the training domain (ID-default: $N=100$,
$p_c=0.30$, $D=20$, uniform demand), we vary one
parameter at a time:

\begin{itemize}[leftmargin=*]
  \item \textbf{Catalog size (sanity check):} $N \in \{60, 120, 150\}$
        (Curr-file60, OOD-file120, OOD-file150).
        Under the decentralized placement with fixed $p_c$, the
        per-packet caching probability is preserved across the catalog
        sizes, while this axis therefore verifies that the policy's
        queue-state features are invariant to the library cardinality
        rather than testing a true distribution shift
        (Section~\ref{sec:res:ood_uniform}).
  \item \textbf{Cache density:} $p_c \in \{0.20, 0.40\}$
        (OOD-pcache0.20, Curr-pcache0.40).
        The lower density shrinks the feasible merge set, while the
        higher density makes the aggressive merging likely to cause
        the expirations through the queue saturation.
  \item \textbf{Deadline budget:} $D \in \{10, 30\}$
        (OOD-delay10, OOD-delay30).
        The tighter deadlines stress the expiration avoidance, while
        the looser deadlines reward the throughput maximization.
\end{itemize}

Track~B adds $11$~non-ID regimes spanning the Zipf tail weight
($\alpha \in \{0.6, 1.0, 1.2\}$), an alternative demand law
(Mandelbrot-Zipf), and the same cross-axis shifts as the Track~A. The
full conditions are listed in Table~\ref{tab:ood}, while the detailed
descriptions are in Appendix~\ref{sec:appendix:exp_details}. Each
track uses a separately trained agent, considering that the
cross-track results are not aggregated. The Track~B results are
reported in Section~\ref{sec:results:zipf}.

\begin{table}[!t]
  \renewcommand{\arraystretch}{1.15}
  \caption{Evaluation Conditions (2 ID + 4 Curr + 14 OOD)}
  \label{tab:ood}
  \centering
  \begin{adjustbox}{max width=\columnwidth}
  \begin{tabular}{llccc}
    \toprule
    Regime & Axis & $N$ & $p_c$ & $D$ \\
    \midrule
    \multicolumn{5}{l}{\textit{Track~A --- Uniform demand (8 conditions, 9 baselines + PPO = 10 unique methods)}} \\
    \midrule
    ID-default      & —             & 100 & 0.30 & 20 \\
    Curr-file60      & Catalog       &  60 & 0.30 & 20 \\
    OOD-file120     & Catalog       & 120 & 0.30 & 20 \\
    OOD-file150     & Catalog       & 150 & 0.30 & 20 \\
    OOD-pcache0.20  & Cache density & 100 & 0.20 & 20 \\
    Curr-pcache0.40  & Cache density & 100 & 0.40 & 20 \\
    OOD-delay10     & Deadline      & 100 & 0.30 & 10 \\
    OOD-delay30     & Deadline      & 100 & 0.30 & 30 \\
    \midrule
    \multicolumn{5}{l}{\textit{Track~B --- Zipf demand ($\alpha$=0.8 base, 1 ID + 2 Curr + 9 OOD, 6 baselines + PPO-Zipf = 7 unique methods, separately trained agent)}} \\
    \midrule
    Zipf-ID         & —             & 100 & 0.30 & 20 \\
    OOD-alpha0.6    & Tail weight   & 100 & 0.30 & 20 \\
    OOD-alpha1.0    & Tail weight   & 100 & 0.30 & 20 \\
    OOD-alpha1.2    & Tail weight   & 100 & 0.30 & 20 \\
    OOD-mandelbrot  & Demand law    & 100 & 0.30 & 20 \\
    Curr-file60 (Z)  & Catalog       &  60 & 0.30 & 20 \\
    OOD-file120 (Z) & Catalog       & 120 & 0.30 & 20 \\
    OOD-file150 (Z) & Catalog       & 150 & 0.30 & 20 \\
    OOD-pcache0.20 (Z) & Cache density & 100 & 0.20 & 20 \\
    Curr-pcache0.40 (Z) & Cache density & 100 & 0.40 & 20 \\
    OOD-delay10 (Z)    & Deadline   & 100 & 0.30 & 10 \\
    OOD-delay30 (Z)    & Deadline   & 100 & 0.30 & 30 \\
    \bottomrule
  \end{tabular}
  \end{adjustbox}
\end{table}

\subsection{Performance Metrics}
\label{sec:exp:metrics}\label{sec:model:metrics}

We evaluate the scheduling policies on the thirteen metrics over an
episode of $H$ steps (ten performance, three diagnostic), organized
into the four families (i.e., the broadcast-level, the file-identity,
the request-level, and the diagnostic ones) and enumerated in detail
below.
Throughout this paper we adopt a single descriptor for the
broadcast-level demand quantity: $U_t$ is the
\emph{packet-set-cardinality XOR degree at step $t$}, defined as
$U_t = |f_{\mathrm{mg}}|$ for a coded transmission (the number of
distinct packet identities carried in the aggregate) and $U_t = 1$
for a unicast. Similarly, $E_t = \sum_{r \in \mathcal{X}_t} |f_r|$ is
the broadcast-level packet-identity count of expired records at
step~$t$. We deliberately avoid attaching ``users served'',
``user-equivalents'', or ``number of original singleton requests'' to
$U_t$ in the headline metric language: the request generator does not
enforce uniqueness of packet identities across active records, so
$|f_r|$ in general does not equal the number of original singleton
arrivals folded into~$r$ (see the per-broadcast accounting remark in
Appendix~\ref{sec:appendix:coded_remarks} for an explicit
counterexample). Whenever original-arrival semantics are intended, we
use the request-level metrics
$\eta_{\mathrm{req}}, m_{\mathrm{req}}, \sigma_{\mathrm{req}}$
(M8--M10), which are computed from the per-arrival identifier sets
$\mathcal{A}(r)$ stamped at arrival time and credit each original
arrival exactly once.
Because a packet identity folded into a coding-state record may
participate in the subsequent broadcasts, $U_t$ measures the
\emph{per-slot broadcast utilization} (i.e., the packet throughput)
at the broadcast level, not the unique demand satisfaction.
The three families therefore use \emph{three different notions of
demand}, and we use disjoint terminology to refer to them:
(a) the broadcast-level family ($\rho$, $\sigma$, $\mu$, $g$,
$\varepsilon$) is computed from $(U_t, E_t)$ and counts the
\emph{packet-set XOR degree} per broadcast (i.e., distinct packet
identities carried in the aggregate), where a packet identity folded into a
chain can be counted at every link;
(b) the file-identity family ($\delta$, $\rho^{\mathrm{uniq}}$) is
computed from $(U_t^{\mathrm{uniq}}, E_t^{\mathrm{uniq}})$ and credits
each \emph{distinct file identity} exactly once per episode (this is
a coverage indicator, not a multi-packet completion metric);
(c) the request-level family ($\eta_{\mathrm{req}}$, $m_{\mathrm{req}}$,
$\sigma_{\mathrm{req}}$) is computed from the original arrival
identifiers~$\mathcal{A}(r)$ defined in Sec.~\ref{sec:model:request_accounting}
and credits each \emph{individual user arrival} exactly once.
We therefore report four complementary views:
(i)~\textbf{primary metrics} (M1--M2):
the broadcast-packet expiration ratio~$\rho$ (a packet-set waste
ratio at the broadcast level) and distinct file-identity
coverage~$\delta$ (a coverage indicator at the file-identity level),
which capture two different facets of timely delivery;
(ii)~\textbf{request-level metrics} (M8--M10, \emph{supplementary in the current version}; the displayed request-level tables compare PPO against the SACM{++} and ED-Unicast comparators, with full-baseline coverage over GCM, SACM, SACM+, and the $\tau$-Fit family left as future work, so this family is not a co-primary headline in the present manuscript): based
on per-request ID tracking (Section~\ref{sec:model:request_accounting}),
which follow each original arrival through queue aggregation and credit
completion or miss exactly once per request identifier; this is the
only family in which the numerator objects are arrival-level
identifiers (the equations remain $H$-normalized rather than
total-arrivals-normalized; see M8--M10);
(iii)~\textbf{broadcast-efficiency metrics} (M3--M7, \emph{secondary}):
based on $U_t$ and $E_t$ as packet-set counts, which measure how well
each channel use exploits the coded-multicast gain at the packet-set
level; and
(iv)~\textbf{diagnostic metrics} (M11--M13): merge rate, opportunity rate,
and reward per step, which characterize policy behavior rather than
performance.

\smallskip\noindent
\textbf{Symbol-retention convention.}
We retain the symbol names $\rho$ and $\delta$ for continuity with
prior tables, but their semantics throughout this paper are the
formal definitions of Eqs.~\eqref{eq:miss_ratio}--\eqref{eq:unique_demand_sat},
namely the broadcast-packet expiration ratio (a packet-set waste
ratio at the broadcast level) and the distinct file-identity coverage
(a once-per-file coverage indicator under the projection $\phi$).
``Demand-centric'' is therefore a shorthand for those formal
definitions rather than a claim of per-arrival miss-probability or
per-user completion semantics; the request-level family
$\eta_{\mathrm{req}}, m_{\mathrm{req}}, \sigma_{\mathrm{req}}$
(M8--M10) is computed from per-arrival identifiers but normalizes
by $H$ rather than by the total number of admitted arrivals, so it
provides per-step request-level rates rather than per-arrival
probabilities.
The compact table-header labels ``Miss Ratio'' and ``Coverage''
used in Tables~\ref{tab:nm14_baselines}, \ref{tab:ablation_id},
\ref{tab:zipf_id}, and elsewhere are the same symbol-retention
shorthand: ``Miss Ratio'' is the broadcast-packet expiration
ratio~$\rho$~\eqref{eq:miss_ratio} and ``Coverage'' is the distinct
file-identity coverage~$\delta$~\eqref{eq:unique_demand_sat}; neither
implies per-arrival or per-user satisfaction.

\smallskip\noindent
\textbf{Right-censoring at episode end.}
All the headline metrics ($\rho$, $\delta$, $\sigma$, $\mu$, $g$,
$\varepsilon$, and the request-level
$\eta_{\mathrm{req}}, m_{\mathrm{req}}, \sigma_{\mathrm{req}}$) are
computed from the $H = 50$ in-episode steps only. The records still
pending at step~$H$ are neither served nor expired in the counters,
which is consistent with the truncated continuing-control framing of
Sec.~\ref{sec:model:problem}, even though it does produce a
policy-dependent right-censoring, i.e., a policy that defers the
difficult records past the horizon will look better on $\rho$ and
$m_{\mathrm{req}}$ with respect to a tail-resolved evaluation. The
inter-method comparisons are paired across the shared episode seeds
and the shared initial cache placements, while the simulator does
\emph{not} implement strict common random numbers, i.e., the RNG
draws within an episode are sequential, so the policy-dependent
refill events cause the realized arrival sequences to diverge across
the methods after the first divergent action. The censoring therefore
is shared in expectation up to the first-divergent-action RNG drift,
rather than cancelling pathwise (notably the paired-bootstrap
intervals of Table~\ref{tab:paired_bootstrap}); the absolute values of
$\rho$, $m_{\mathrm{req}}$, and $\delta$ should therefore be read with
this caveat. A sensitivity sweep over $H \in \{50, 100, 200\}$ and a
tail-flush variant are deferred to a follow-up extension.

\begin{enumerate}[label=\textbf{M\arabic*.}, leftmargin=*, nosep]

  \item \textbf{Broadcast-Packet Expiration Ratio ($\downarrow$, primary).}
  The fraction of broadcast-level packet-set mass that expires before
  delivery is
  \begin{equation}
    \rho = \frac{\displaystyle\sum_{t=1}^{H} E_t}
               {\displaystyle\sum_{t=1}^{H} U_t + \sum_{t=1}^{H} E_t}.
    \label{eq:miss_ratio}
  \end{equation}
  Both the numerator and the denominator are computed from the
  packet-set counts $(U_t, E_t)$, not from the original arrival
  identifiers, so $\rho$ is a broadcast-channel waste ratio (i.e., the
  fraction of the packet-set mass on the channel that is lost to the
  expiration) rather than a per-arrival miss probability, while the
  request-level counterpart is $m_{\mathrm{req}}$ (M9), itself a
  per-step rate rather than a per-arrival probability.
  We retain the symbol $\rho$ for continuity with prior tables
  but use the broader name ``broadcast-packet expiration ratio'' to
  avoid implying per-user miss-rate semantics. Lower values indicate
  better timely delivery at the broadcast level.

  \item \textbf{Distinct File-Identity Coverage ($\uparrow$, primary).}
  Fraction of distinct \emph{file identities} that appeared in the
  served set at least once during the episode, under unique-demand
  (file-identity) accounting (Section~\ref{sec:model:coded}). Under
  the packet-level request model (each arrival is a singleton packet
  request), $\delta$ is therefore a \emph{coverage indicator}: a file
  identity is credited once any of its packets is served, and
  subsequent packet requests for the same file no longer contribute
  to $U_t^{\mathrm{uniq}}$. We refer to this as ``coverage'' rather
  than ``demand satisfaction'' or ``demand completion'' because $\delta$
  does not measure completion of a multi-packet user-file request:
  \begin{equation}
    \delta = \frac{\displaystyle\sum_{t=1}^{H} U_t^{\mathrm{uniq}}}
                  {\displaystyle\sum_{t=1}^{H} U_t^{\mathrm{uniq}}
                   + \sum_{t=1}^{H} E_t^{\mathrm{uniq}}}.
    \label{eq:unique_demand_sat}
  \end{equation}
  Here $U_t^{\mathrm{uniq}}$ and $E_t^{\mathrm{uniq}}$ are file-identity
  counts defined formally via the packet-to-file projection
  $\phi(p) = \lfloor p/B \rfloor$ in
  Eqs.~\eqref{eq:Ut_uniq}--\eqref{eq:Et_uniq}: $U_t^{\mathrm{uniq}}$
  counts only file identities not previously delivered in the episode,
  and $E_t^{\mathrm{uniq}}$ counts only file identities in an expired
  record that were never delivered during the episode.
  A coding-state record whose constituent file identities were already
  covered earlier in the episode contributes $E_t^{\mathrm{uniq}} = 0$
  upon expiration.
  Unlike $\sigma$ (M3), each distinct file identity is counted exactly
  once; unlike $\eta_{\mathrm{req}}$/$m_{\mathrm{req}}$ (M8/M9), all
  packet arrivals targeting the same file are conflated into a single
  coverage event.
  This is the primary file-identity coverage metric; for
  request-level completion accounting (computed from per-arrival
  identifiers, $H$-normalized) see the request-level metrics
  M8--M10.

  \item \textbf{Broadcast-Efficiency Score (BE-Score, $\uparrow$, secondary).}
  A composite metric that jointly rewards per-broadcast XOR degree and penalizes expirations:
  \begin{equation}
    \sigma = \frac{1}{H}\sum_{t=1}^{H}\bigl(U_t - \lambda E_t\bigr),
    \quad \lambda = 1,
    \label{eq:sys_score}
  \end{equation}
  with higher values preferred.
  Because $U_t$ is the packet-set-cardinality XOR degree of the
  broadcast at step~$t$ (counting distinct packet identities, not
  original singleton arrivals), $\sigma$ measures
  \emph{broadcast-slot packet utilization} rather than unique demand
  satisfaction.
  A policy that creates the deep merge chains can achieve a
  high~$\sigma$, even though some of the underlying demands are
  served redundantly. For this reason, $\sigma$ is reported as a
  secondary broadcast-efficiency metric, while the primary evaluation
  uses the demand-centric metrics $\rho$~(M1) and $\delta$~(M2). The
  model selection during the training uses~$\sigma$ because it matches
  the training reward (Remark~\ref{rem:training_vs_eval}). We note
  that the selection on $\sigma$ is \emph{not} the same as the
  selection on the primary metric $\rho$, i.e., the $\sigma$-selected
  checkpoint is the model that maximizes the broadcast-efficiency
  composite at the validation, while the headline $\rho$ results in
  Sec.~\ref{sec:results} are then read off that checkpoint rather
  than off a separately $\rho$-selected one. We adopt $\sigma$ because
  it (i) coincides with the shaped training surrogate, thereby
  producing stable validation rankings during the checkpoint sweep,
  and (ii) penalizes the expirations ($-E_t$) directly, so the
  $\sigma$-selection is not blind to the deadline compliance, even
  though it is not formally identical to the $\rho$-selection. A
  formal per-metric selection-criterion sensitivity sweep remains
  the future work.

  The composite score uses $\lambda{=}1$, while
  Appendix~\ref{sec:appendix:lambda_sensitivity} reports a sensitivity
  sweep over $\lambda\in\{0.5,1,2,3\}$. PPO retains the rank~$1$ among
  all the coded-multicast methods for $\lambda \geq 1$, while it ranks
  fourth at $\lambda{=}0.5$, where the low expiration penalty favors
  the aggressive always-merge baselines.

  \item \textbf{Served Packet-Set per Transmission ($\uparrow$).}
  Average broadcast-level packet-set cardinality delivered per
  transmission slot (the ``Served/Tx'' column in our tables):
  \begin{equation}
    \mu = \frac{1}{H}\sum_{t=1}^{H} U_t.
    \label{eq:served_per_tx}
  \end{equation}
  We retain the column header ``Served/Tx'' for table compactness, but
  $\mu$ is a packet-set count per slot, not a count of original
  arrivals served per slot; the request-level counterpart is
  $\eta_{\mathrm{req}}$ (M8), itself a per-step request-level rate.

  \item \textbf{Average XOR Degree ($\uparrow$).}
  Mean packet-set cardinality per \emph{coded} broadcast,
  restricted to time steps where a coded action is taken:
  \begin{equation}
    g = \frac{\displaystyle\sum_{t\,:\,\text{coded}} U_t}
             {\bigl|\{t : \text{coded action at } t\}\bigr|}.
    \label{eq:avg_coding_gain}
  \end{equation}
  We label this ``average XOR degree'' rather than ``coding gain'' to
  match the packet-set semantics of $U_t$; tables retain the legacy
  ``Coding Gain'' column header for continuity with prior literature.

  \item \textbf{Expirations per Episode ($\downarrow$).}
  Total count of distinct queue-entry expiration events per episode
  (the count of records in $\bigcup_t \mathcal{X}_t$, \emph{not} the
  packet-set mass $\sum_t E_t$):
  \begin{equation}
    \varepsilon = \sum_{t=1}^{H}\,|\mathcal{X}_t|.
    \label{eq:exp_per_ep}
  \end{equation}
  In the tables, ``Exp/Episode'' always refers to $\varepsilon$.
  Whenever the reward-decomposition discussion accompanying
  Fig.~\ref{fig:reward_decomp} or the appendix refers to
  ``expired packet-set mass,'' that is the distinct quantity
  $\sum_t E_t$ (the sum of $|f_r|$ over expired records), and not
  $\varepsilon$.

  \item \textbf{Unique Miss Ratio ($\downarrow$).}
  Complement of~$\delta$:
  \begin{equation}
    \rho^{\mathrm{uniq}} = 1 - \delta.
    \label{eq:unique_miss_ratio}
  \end{equation}

  \item \textbf{Request Timely-Throughput ($\uparrow$, supplementary).}
  Average number of \emph{newly completed} original request identifiers
  per step, under request-level accounting
  (Section~\ref{sec:model:request_accounting}; computed from the per-arrival
  identifier sets $\mathcal{A}(r)$ stamped at arrival time):
  \begin{equation}
    \eta_{\mathrm{req}} = \frac{1}{H}\sum_{t=1}^{H} |C_t|,
    \label{eq:req_eta}
  \end{equation}
  where $C_t$ is the set of request IDs completed at step~$t$ that
  were not previously completed or missed.

  \item \textbf{Request Miss Rate ($\downarrow$, supplementary).}
  Average number of newly missed original request identifiers per step,
  computed from the per-arrival identifier sets $\mathcal{A}(r)$:
  \begin{equation}
    m_{\mathrm{req}} = \frac{1}{H}\sum_{t=1}^{H} |M_t|.
    \label{eq:req_miss_rate}
  \end{equation}
  This is the finest-grained demand-centric metric: it measures the
  per-step rate at which \emph{original user arrivals} miss their
  deadlines, tracking each arrival through all merge aggregations
  (and is therefore distinct from both the broadcast-level $\rho$ and
  the file-identity $\rho^{\mathrm{uniq}}$).

  \item \textbf{Request Selection Score ($\uparrow$).}
  Composite request-level metric analogous to~$\sigma$ (M3):
  \begin{equation}
    \sigma_{\mathrm{req}} = \frac{1}{H}\sum_{t=1}^{H}
      \bigl(|C_t| - \lambda\,|M_t|\bigr),
    \quad \lambda = 1.
    \label{eq:req_sigma}
  \end{equation}
  The $\lambda$-sensitivity of $\sigma_{\mathrm{req}}$ is analyzed in
  Appendix~\ref{sec:appendix:req_lambda_sensitivity}.

  \item \textbf{Merge Rate (diagnostic).}
  Fraction of steps with at least one feasible merge pair at which the
  agent elects to merge:
  \begin{equation}
    m_{\text{rate}} = \frac{\bigl|\{t : a_t \in \text{coded},\,|\mathcal{M}_t|>0\}\bigr|}
                          {\bigl|\{t : |\mathcal{M}_t|>0\}\bigr|}.
    \label{eq:merge_rate}
  \end{equation}
  This metric is neither universally better high nor low, considering
  that it characterizes the policy's \emph{selectivity} rather than
  its performance.

  \item \textbf{Opportunity Rate (diagnostic).}
  Fraction of steps at which at least one feasible merge pair exists,
  measured along the queue trajectory \emph{visited by the evaluated
  policy}:
  \begin{equation}
    o_{\text{rate}} = \frac{\bigl|\{t : |\mathcal{M}_t|>0\}\bigr|}{H}.
    \label{eq:opp_rate}
  \end{equation}
  Even though $|\mathcal{M}_t|$ is a function of the queue state, the
  queue trajectory itself is shaped by the policy (i.e., every merge
  clears one slot and refills the other with a fresh arrival), so
  $o_{\text{rate}}$ is a \emph{visited-state} (i.e., policy-induced)
  diagnostic, not an exogenous environment constant. It is reported
  to contextualize the merge-rate values in the same row of each
  results table.

  \item \textbf{Reward per Step (diagnostic, training only).}
  Mean shaped reward per step; reported as a training diagnostic only and never used for method ranking (definition in Appendix~\ref{sec:appendix:exp_details}).

\end{enumerate}

For the $\tau$-Fit family, each metric is optimized independently,
i.e., the oracle threshold $\tau^*$ for a given metric is selected by
the exhaustive search on the validation split, while the resulting
policy is then evaluated on the holdout set. This procedure ensures
that each $\tau$-Fit variant is evaluated at its best for that metric,
thereby giving each $\tau$-Fit variant its strongest possible
appearance in our comparisons.

\textbf{Statistical reporting.}\label{sec:exp:stat}
All the methods share the same holdout seeds and the same initial
cache placements, i.e., a \emph{shared-seed paired design at the
seed/context level} rather than a strict common-random-numbers (CRN)
scheme, considering that within an episode the RNG draws are
sequential, so the policy-dependent refill timing causes the realized
arrival sequences to diverge across the methods after the first
divergent action (see the no-CRN caveat in
Sec.~\ref{sec:exp:metrics}). Every comparison is therefore computed
on the per-seed paired differences, with the understanding that the
pairing holds at the seed/context level rather than pathwise.
We report the mean $\pm$ $95\%$ CI across the seed-level means, while
for the main comparisons we report the $95\%$ percentile bootstrap CI
of the paired difference ($10{,}000$ resamples). The
paired-bootstrap procedure is as follows: for each of the $50$
evaluation seeds we compute the per-seed mean of each metric under
each method (i.e., averaging over the $200$ episodes/seed), form the
per-seed paired difference
$\Delta_s = m^{\text{PPO}}_s - m^{\text{baseline}}_s$, and report the
$95\%$ percentile-bootstrap CI of the across-seed average
$\bar\Delta$ over $10{,}000$ resamples (drawing with replacement from
the $\{\Delta_s\}_{s=1}^{50}$ values). Following the
\cite{henderson2018deep,agarwal2021deep}, we report the effect sizes
with the calibrated uncertainty rather than the binary significance
claims. Additional procedural details are in
Appendix~\ref{sec:appendix:exp_details}.

\subsection{Reproducibility}
\label{sec:exp:repro}

The implementation uses PyTorch and the
Stable-Baselines3~\cite{raffin2021stable} with the
MaskablePPO~\cite{sb3contrib2022maskable}. All the random number
generators are seeded deterministically, while the evaluation uses
the argmax action selection. The full source code will be released
upon acceptance. The complete reproducibility details, the hardware
specifications, and the computational costs are in
Appendix~\ref{sec:appendix:reproducibility}.

\section{Main Results: Uniform-Demand Benchmark}\label{sec:results}

We report results for the uniform-demand benchmark (Track~A). All values are mean $\pm$ 95\% CI across 50 holdout seeds $\times$ 200 episodes per seed (10\,000 episodes per cell); for key comparisons we also report bootstrap CIs of the paired difference (see Section~\ref{sec:exp:stat}). Primary evaluation uses the demand-centric metrics broadcast-packet expiration ratio~$\rho$ (M1) and distinct file-identity coverage~$\delta$ (M2), supplemented by the broadcast-efficiency score~$\sigma$ (M3), served/tx~$\mu$, coding gain~$g$, expirations~$\varepsilon$, and request-level views (M8--M10); the shaped training reward is excluded from comparison tables and is never used for method ranking. Throughout this section, $\rho$ and $\delta$ (the primary demand-centric metrics; $m_{\mathrm{req}}$ is supplementary, see Sec.~\ref{sec:exp:metrics}) are paired within-benchmark quantities at fixed $H{=}50$ with episode-end right-censoring; absolute values are simulator-specific (Sec.~\ref{sec:exp:metrics}). All headline numerical results in this section are reported for the \emph{$\sigma$-selected checkpoint} (Sec.~\ref{sec:method:selection}); a primary-metric-aligned selection rule is left as future work. The reported gains are also conditional on the one-slot-per-record unicast cost abstraction (A2), on the uniform representative-destination convention $k_{\mathrm{mg}}\!\sim\!\mathrm{Unif}\{k_i,k_j\}$ (Eq.~\eqref{eq:mg_dest}), and on the aggregate-size observation clip $\min(|f_r|,U_{\max})/U_{\max}$ at $U_{\max}=6$ (Sec.~\ref{sec:env:state}); see the first-order limitations paragraph in Sec.~\ref{sec:conclusion}.

\paragraph{Summary of findings.}
We summarize the main findings and the mechanisms that drive them
before presenting the detailed tables:
\begin{itemize}[leftmargin=*, nosep]
  \item \textbf{The method of selective merging achieves the lowest broadcast-packet expiration ratio among the coded multi-casting methods.}
    The results of our experiment show that the policy network outperforms the best baseline of the coded multi-casting method, i.e., SACM{++}, by $40.9\%$ with respect to the broadcast-packet expiration ratio
    ($\rho = 0.208$ vs.\ $0.352$; paired-bootstrap $95\%$ CI on the relative reduction:
    $[40.6\%, 41.5\%]$, computed from $\Delta\rho \in [-0.146, -0.143]$ in
    Table~\ref{tab:paired_bootstrap} normalized by $\rho_{\mathrm{SACM{++}}}=0.352$),
    while reducing the expirations by $50.7\%$
    ($14.17$ vs.\ $28.76$ per episode), i.e., the policy network achieves all of this without compromising the satisfaction of the users' demands.
    Among all methods, only ED-Unicast (which never codes) achieves a
    lower broadcast-packet expiration ratio ($\rho = 0.134$).
    The distinct file-identity coverage of the policy network is competitive, i.e., $\delta = 0.824$ for the policy network
    vs.\ $0.866$ for ED-Unicast and $0.797$ for SACM{++}.
    The policy network also achieves the highest
    \emph{composite} broadcast-efficiency score ($\sigma = 0.976$), i.e.,
    the method of selective merging lowers the broadcast-packet
    expiration ratio and improves the composite BE-score even though
    the raw Served/Tx ($\mu$) of the policy network is
    lower than that of the always-merge coded baselines (e.g., the policy network at $\mu=1.323$
    vs.\ SACM{++} at $\mu=1.590$ at ID-default).
    \emph{Why:} by restricting the coded transmissions to
    the high-intersection pairs (avg.\ $0.589$ vs.\ $0.393$ for SACM{++}), the
    policy network preserves the bandwidth headroom for unicasting the deadline-critical
    requests.

  \item \textbf{The policy network outperforms even the oracle-tuned thresholds.}
    The $\tau$-Fit family saturates at $\sigma{=}0.915$ (broadcast-efficiency score) regardless of
    the threshold, while the policy network outperforms this ceiling by $+6.61\%$ to $+11.25\%$
    with respect to the oracle's tuning criterion.  \emph{Why:} a fixed
    threshold commits to a static throughput--reliability tradeoff,
    whereas the policy network conditions on the instantaneous queue configuration and
    adapts the merge decision per step, i.e., it is not bound by the
    saturation curve.

  \item \textbf{The advantages of the policy network grow under stress, while the expirations show the largest relative improvement.}
    The broadcast-efficiency advantage of the policy network over SACM{++} grows from $+34.4\%$ at ID-default to
    $+79.0\%$ at high cache density ($p_c{=}0.40$) and an absolute
    $+0.449$ BE-score units under tight deadlines ($D{=}10$), i.e., the same observation is true for the variations of the cache fraction and deadline budgets as well.
    \emph{Why:} the abundant merge opportunities cause the blind-merge policies to
    flood the channel with low-quality coded packets, thereby producing the deadline
    cascades that the method of selective merging avoids.
    The policy network reduces the expirations by $50.7\%$ with respect to SACM{++}
    ($14.17$ vs.\ $28.76$ per episode), i.e., the single largest per-metric
    gain, which directly follows from the selective-merge mechanism, i.e.,
    fewer low-overlap merges means fewer wasted transmission slots and
    more timely deliveries before the deadlines expire.
\end{itemize}

\subsection{In-Distribution Performance (Track~A)}
\label{sec:res:id}

Table~\ref{tab:nm14_baselines} reports the six core methods on the ID-default
uniform-demand condition, sorted by broadcast-packet expiration ratio (\textbf{M1}) ascending;
the full comparison including $\tau$-Fit threshold rules is deferred
to Appendix Table~\ref{tab:nm14_full}; per-metric rankings of all
methods are provided in Appendix Table~\ref{tab:nm14_rankings}.
The results of our experiment show that the \textbf{PPO-Agent}, i.e., the policy network, achieves the lowest broadcast-packet expiration ratio among all the
coded multi-casting methods ($\rho = 0.208 \pm 0.001$), while reducing the deadline
misses by $40.9\%$ with respect to the best baseline of the coded multi-casting method, i.e., SACM{++} ($\rho = 0.352$).
Only ED-Unicast achieves a lower broadcast-packet expiration ratio ($\rho = 0.134$), while at
the cost of zero coding gain, i.e., the policy network outperforms ED-Unicast by $15.5\%$
on the broadcast-efficiency score ($\sigma = 0.976$ vs.\ $0.845$,
denominator ED-Unicast).
The distinct file-identity coverage of the policy network is $\delta = 0.824$, while below ED-Unicast's
$0.866$ (i.e., a structural consequence of unicast serving exactly one unique
request per slot with no merge-induced expirations) yet above all of the SACM
variants ($0.797$ for SACM{++}).
No baseline of the coded multi-casting method achieves higher~$\delta$ than the policy network.
The policy network also achieves the highest broadcast-efficiency score on
the ID-default Track~A regime ($\sigma = 0.976 \pm 0.002$, i.e., the highest
among all evaluated methods) and the highest average XOR degree
($g = 2.162$). Across the full Track~A non-ID battery, the policy network leads on
$\sigma$ among the coded multi-casting methods in all 8~regimes, and among
\emph{all} evaluated methods, the policy network leads in 7 of 8 Track~A regimes,
while ED-Unicast takes the top $\sigma$ at OOD-delay10
(Appendix Table~\ref{tab:uniform_ood_all_methods_sys}; ED-Unicast
$0.019$ vs.\ the policy network $-0.002$). The Zipf-track BE-score picture is different and
is discussed separately in Section~\ref{sec:results:zipf}.
Table~\ref{tab:paired_bootstrap} reports the paired bootstrap 95\% CIs of the
metric differences between the policy network (PPO-Agent) and two reference baselines, i.e., ED-Unicast (the conservative miss-ratio leader) and SACM{++} (the highest-throughput coded baseline and the BC/ExIt teacher), computed on the 50 shared holdout seeds (10\,000 resamples).
Against the best baseline of the coded multi-casting method, i.e., SACM{++}, the policy network achieves a significantly lower broadcast-packet expiration ratio
($\Delta\rho = -0.144\,[-0.146,\,-0.143]$), a higher distinct file-identity coverage
($\Delta\delta = +0.027\,[+0.026,\,+0.028]$), a $29.8\%$ lower request
miss rate ($\Delta m_{\mathrm{req}} = -0.097\,[-0.099,\,-0.095]$), and
a higher broadcast-efficiency score
($\Delta\sigma = +0.250\,[+0.248,\,+0.253]$).
The tradeoff is a lower request throughput
($\Delta\eta_{\mathrm{req}} = -0.217\,[-0.219,\,-0.215]$), which
follows from the always-merge strategy of SACM{++}.
Against ED-Unicast, the policy network trades a moderate increase in the broadcast-packet expiration ratio
($\Delta\rho = +0.074$) and a small decrease in the file-identity coverage
($\Delta\delta = -0.043$) for a substantially higher broadcast efficiency
($\Delta\sigma = +0.131$), i.e., all intervals exclude zero.

\begin{table}[t]
\centering
\caption{Paired bootstrap 95\% CI of metric differences
  (PPO-Agent minus baseline) on 50 shared holdout seeds
  (10\,000 resamples). Negative $\Delta\rho$,
  $\Delta m_{\mathrm{req}}$ / positive $\Delta\delta$,
  $\Delta\eta_{\mathrm{req}}$, $\Delta\sigma$ favor PPO-Agent.}
\label{tab:paired_bootstrap}
\begin{adjustbox}{max width=\columnwidth}
\begin{tabular}{lccccc}
\toprule
\textbf{Baseline} & $\Delta\rho$ (95\% CI) & $\Delta\delta$ (95\% CI) & $\Delta m_{\mathrm{req}}$ (95\% CI) & $\Delta\eta_{\mathrm{req}}$ (95\% CI) & $\Delta\sigma$ (95\% CI) \\
\midrule
ED-Unicast & $+0.074\,[+0.073,\,+0.075]$ & $-0.043\,[-0.044,\,-0.042]$ & $+0.074\,[+0.073,\,+0.076]$ & $+0.069\,[+0.068,\,+0.069]$ & $+0.131\,[+0.129,\,+0.133]$ \\
SACM{++}   & $-0.144\,[-0.146,\,-0.143]$ & $+0.027\,[+0.026,\,+0.028]$ & $-0.097\,[-0.099,\,-0.095]$ & $-0.217\,[-0.219,\,-0.215]$ & $+0.250\,[+0.248,\,+0.253]$ \\
\bottomrule
\end{tabular}
\end{adjustbox}
\end{table}

\begin{table*}[t]
\centering
\caption{In-distribution (ID-default), uniform demand. 50 holdout seeds $\times$ 200 episodes. Mean $\pm$ 95\% CI. Bold = best per column. Underlining is used \emph{only on the PPO-Agent row} to mark cells where PPO ranks second after an uncoded ED-Unicast best (Miss Ratio, Coverage $\delta$, Exp/Episode); other rows do not carry second-best annotations because the focus of the comparison is PPO's per-metric placement relative to the column leader. $^{\dagger}$Naderializadeh-style (learned): a pure flat-MLP MaskablePPO policy with no BC warm start, no Expert Iteration, and no curriculum, i.e., the closest reproducible analog of Naderializadeh \& Asghari~\cite{naderializadeh2019learning} adapted to our deadline-aware environment, trained for 4 seeds and selected via the same robust-advantage criterion as the headline. The action masking is retained for the numerical stability. The coverage~$\delta$ was not extracted in this evaluation run and is left blank.}
\label{tab:nm14_baselines}
\begin{adjustbox}{max width=\textwidth}
\begin{tabular}{lcccccc}
\toprule
\textbf{Method} & \textbf{Miss Ratio} (↓) & \textbf{Coverage $\delta$} (↑) & \textbf{BE-Score $\sigma$} (↑) & \textbf{Served/Tx} (↑) & \textbf{Coding Gain} (↑) & \textbf{Exp/Episode} (↓) \\
\midrule
ED-Unicast & $\mathbf{0.134 \pm 0.001}$ & $\mathbf{0.866 \pm 0.001}$ & $0.845 \pm 0.001$ & $1.000 \pm 0.000$ & --- & $\mathbf{7.74 \pm 0.06}$ \\
\textbf{PPO-Agent} & $\underline{0.208 \pm 0.001}$ & $\underline{0.824 \pm 0.001}$ & $\mathbf{0.976 \pm 0.002}$ & $1.323 \pm 0.002$ & $\mathbf{2.162 \pm 0.002}$ & $\underline{14.17 \pm 0.07}$ \\
Naderializadeh-style (learned)$^{\dagger}$ & $0.236 \pm 0.004$ & --- & $0.923 \pm 0.006$ & $1.337 \pm 0.008$ & $2.158 \pm 0.008$ & $15.75 \pm 0.34$ \\
GCM & $0.345 \pm 0.001$ & $0.784 \pm 0.001$ & $0.733 \pm 0.003$ & $1.549 \pm 0.002$ & $2.095 \pm 0.001$ & $28.42 \pm 0.15$ \\
SACM & $0.345 \pm 0.001$ & $0.782 \pm 0.001$ & $0.745 \pm 0.003$ & $1.575 \pm 0.002$ & $2.131 \pm 0.001$ & $28.64 \pm 0.13$ \\
SACM+ & $0.348 \pm 0.001$ & $0.787 \pm 0.001$ & $0.731 \pm 0.003$ & $1.572 \pm 0.002$ & $2.124 \pm 0.001$ & $28.71 \pm 0.14$ \\
SACM{++} & $0.352 \pm 0.001$ & $0.797 \pm 0.001$ & $0.726 \pm 0.003$ & $\mathbf{1.590 \pm 0.003}$ & $2.131 \pm 0.001$ & $28.76 \pm 0.14$ \\
\bottomrule
\end{tabular}
\end{adjustbox}
\end{table*}

\FloatBarrier

With respect to SACM{++}, i.e., the highest-throughput heuristic,
the policy network reduces the broadcast-packet expiration ratio by $40.9\%$ ($0.208$ vs.\ $0.352$)
and the expired-record count per episode ($\varepsilon$) by $50.7\%$ ($14.17$ vs.\ $28.76$),
while also improving the average XOR degree by $+1.5\%$
($2.162$ vs.\ $2.131$).
The single metric where the policy network does not lead is the served packet-set count per
transmission ($\mu = 1.323$ vs.\ $1.590$ for SACM{++}), i.e., the policy network transmits fewer coded
packets overall, while those coded packets are of substantially higher quality
and impose far fewer broadcast-packet expirations.

ED-Unicast achieves the highest distinct file-identity coverage ($\delta = 0.866$ vs.\ the policy network's $0.824$).
This advantage is structural-by-design rather than an artifact of the metric definition, i.e., unicast serves at most one new file identity per transmission and never incurs merge-induced expirations, which empirically yields the highest~$\delta$ in our evaluation. We avoid claiming that ED-Unicast maximizes~$\delta$ ``by construction'' (a unicast can contribute zero to $U_t^{\mathrm{uniq}}$ when its target file identity has already been covered earlier in the episode), while the absence of merge-induced expirations and the avoidance of multi-file aggregates explain why this empirical advantage persists across the regimes.
However, this unicast-only strategy forfeits all coding gain, i.e., the policy network outperforms ED-Unicast by $15.5\%$ on the broadcast-efficiency score ($\sigma = 0.976$ vs.\ $0.845$, denominator ED-Unicast).
The method of selective merging of the policy network trades a moderate $\delta$ reduction ($-4.9\%$ with respect to ED-Unicast) for substantially better channel utilization.
No baseline of the coded multi-casting method achieves higher~$\delta$ than the policy network.

A useful diagnostic is the merge rate, i.e., the policy network merges only $31.8\%$ of the
available opportunities, whereas all of the SACM variants merge $100\%$.
No single \emph{coded multi-casting} baseline achieves better values than the policy network across the
joint objective $\{\rho,\, \delta,\, \sigma,\, g\}$, i.e., SACM{++} wins on $\mu$ while
losing on $\rho$, $\delta$, and $\sigma$. The uncoded ED-Unicast baseline wins on $\rho$ and $\delta$ while losing on
$\sigma$ and $g$ (by definition, i.e., the coding gain~$g=0$ for any unicast-only scheme). This multi-objective dominance is visualized in
Fig.~\ref{fig:pareto}, i.e., the policy network (PPO-Agent) occupies the upper-left (most
favorable) region of the $\sigma$--$\rho$ trade-off space, with
a higher broadcast-efficiency score and a lower broadcast-packet expiration ratio than any point on the
$\tau$-Fit threshold sweep.
Fig.~\ref{fig:pareto_demand} shows the complementary demand-centric view, i.e.,
the policy network achieves both a higher~$\delta$ and a lower~$\rho$ than all of the SACM-family baselines.

\begin{figure}[t]
  \centering
  \includegraphics[width=\columnwidth]{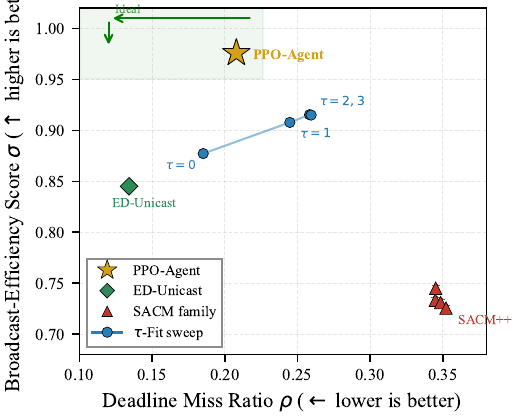}
  \caption{Multi-objective trade-off between broadcast-efficiency score~$\sigma$ ($\uparrow$)
    and broadcast-packet expiration ratio~$\rho$ ($\downarrow$) on the ID-default uniform
    benchmark (50 holdout seeds, 10\,000 episodes).  PPO-Agent (star) achieves
    both higher $\sigma$ and lower $\rho$ than any point on
    the $\tau$-Fit threshold sweep (connected circles), dominating the entire
    efficiency--compliance frontier.  Error bars: 95\% CI.  The SACM family (triangles)
    clusters at high $\rho$ / low $\sigma$; ED-Unicast (diamond) achieves
    low $\rho$ but at substantially lower $\sigma$.}
  \label{fig:pareto}
\end{figure}

\begin{figure}[t]
  \centering
  \includegraphics[width=\columnwidth]{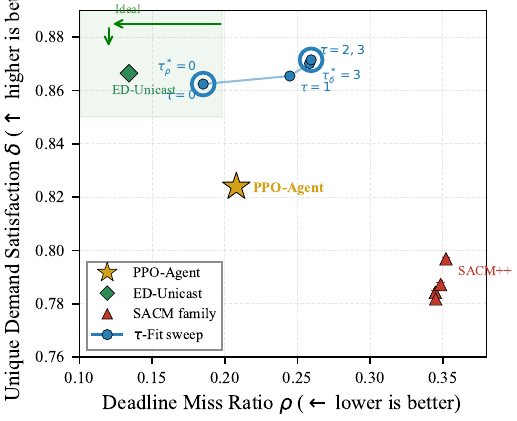}
  \caption{Demand-centric trade-off between distinct file-identity coverage~$\delta$ ($\uparrow$)
    and broadcast-packet expiration ratio~$\rho$ ($\downarrow$) on the ID-default uniform
    benchmark.
    PPO-Agent (star) achieves higher~$\delta$ and lower~$\rho$
    than all SACM-family baselines.  ED-Unicast (diamond) is the
    empirically strongest method on $\rho$ and $\delta$ in this
    benchmark; its structural advantage on $\delta$ comes from serving
    at most one new file identity per transmission and avoiding
    merge-induced expirations (see body discussion of
    Sec.~\ref{sec:res:id}).  Error bars: 95\% CI.}
  \label{fig:pareto_demand}
\end{figure}

\subsection{Oracle-Tuned Threshold Analysis}
\label{sec:res:taufit}

The $\tau$-Fit family saturates at $\sigma{=}0.915$ regardless of the threshold ($\tau \geq 2$), while the policy network outperforms this ceiling by $+6.61\%$ to $+11.25\%$ with respect to the oracle's tuning criterion (full $\tau$-sweep and per-metric oracle analysis in Appendix~\ref{sec:appendix:taufit}).
A fixed threshold commits to a static throughput--reliability tradeoff regardless of the queue state, whereas the policy network conditions on the instantaneous configuration and achieves both a higher $\sigma$ and a lower $\rho$ than any point on the $\tau$-sweep Pareto frontier.

\subsection{Transfer and OOD Generalization}
\label{sec:res:ood_uniform}

Table~\ref{tab:uniform_ood} presents the zero-shot transfer results across
all seven non-ID conditions (2~curriculum-seen, 5~OOD) for the uniform-demand track, i.e., the policy network is
evaluated even outside the region of the training data, without any retraining, hyperparameter tuning, or environmental
adaptation.

\begin{table*}[t]
\centering
\caption{Uniform demand, OOD generalization: PPO-Agent vs.~SACM{++}. Mean $\pm$ 95\% CI.}
\label{tab:uniform_ood}
\begin{adjustbox}{max width=\textwidth}
\begin{tabular}{lcccccccccc}
\toprule
 & \multicolumn{2}{c}{Miss Ratio (↓)} & \multicolumn{2}{c}{BE-Score $\sigma$ (↑)} & \multicolumn{2}{c}{Served/Tx (↑)} & \multicolumn{2}{c}{Coding Gain (↑)} & \multicolumn{2}{c}{Exp/Episode (↓)} \\
\textbf{Regime} & PPO & SACM{++} & PPO & SACM{++} & PPO & SACM{++} & PPO & SACM{++} & PPO & SACM{++} \\
\midrule
ID-default & $\mathbf{0.208 \pm 0.001}$ & $0.352 \pm 0.001$ & $\mathbf{0.976 \pm 0.002}$ & $0.726 \pm 0.003$ & $1.323 \pm 0.002$ & $\mathbf{1.590 \pm 0.003}$ & $\mathbf{2.162 \pm 0.002}$ & $2.131 \pm 0.001$ & $\mathbf{14.17 \pm 0.07}$ & $28.76 \pm 0.14$ \\
Curr-file60 & $\mathbf{0.208 \pm 0.001}$ & $0.352 \pm 0.001$ & $\mathbf{0.976 \pm 0.002}$ & $0.726 \pm 0.002$ & $1.324 \pm 0.002$ & $\mathbf{1.592 \pm 0.003}$ & $\mathbf{2.162 \pm 0.001}$ & $2.132 \pm 0.001$ & $\mathbf{14.22 \pm 0.09}$ & $28.76 \pm 0.11$ \\
OOD-file120 & $\mathbf{0.208 \pm 0.001}$ & $0.353 \pm 0.001$ & $\mathbf{0.976 \pm 0.002}$ & $0.725 \pm 0.003$ & $1.325 \pm 0.002$ & $\mathbf{1.593 \pm 0.002}$ & $\mathbf{2.163 \pm 0.002}$ & $2.132 \pm 0.001$ & $\mathbf{14.24 \pm 0.08}$ & $28.87 \pm 0.13$ \\
OOD-file150 & $\mathbf{0.209 \pm 0.001}$ & $0.353 \pm 0.001$ & $\mathbf{0.975 \pm 0.002}$ & $0.725 \pm 0.003$ & $1.324 \pm 0.001$ & $\mathbf{1.592 \pm 0.002}$ & $\mathbf{2.163 \pm 0.002}$ & $2.131 \pm 0.001$ & $\mathbf{14.28 \pm 0.08}$ & $28.83 \pm 0.12$ \\
OOD-pcache0.20 & $\mathbf{0.176 \pm 0.001}$ & $0.265 \pm 0.001$ & $\mathbf{0.919 \pm 0.002}$ & $0.828 \pm 0.002$ & $1.168 \pm 0.001$ & $\mathbf{1.296 \pm 0.002}$ & $\mathbf{2.070 \pm 0.002}$ & $2.048 \pm 0.001$ & $\mathbf{11.14 \pm 0.08}$ & $18.25 \pm 0.09$ \\
Curr-pcache0.40 & $\mathbf{0.238 \pm 0.001}$ & $0.412 \pm 0.001$ & $\mathbf{1.038 \pm 0.002}$ & $0.580 \pm 0.003$ & $1.511 \pm 0.002$ & $\mathbf{1.948 \pm 0.002}$ & $\mathbf{2.266 \pm 0.002}$ & $2.235 \pm 0.001$ & $\mathbf{17.35 \pm 0.08}$ & $39.72 \pm 0.12$ \\
OOD-delay10 & $\mathbf{0.500 \pm 0.000}$ & $0.555 \pm 0.000$ & $\mathbf{-0.002 \pm 0.002}$ & $-0.451 \pm 0.003$ & $1.465 \pm 0.002$ & $\mathbf{1.833 \pm 0.002}$ & $\mathbf{2.126 \pm 0.001}$ & $2.110 \pm 0.001$ & $\mathbf{60.22 \pm 0.12}$ & $81.48 \pm 0.15$ \\
OOD-delay30 & $\mathbf{0.064 \pm 0.001}$ & $0.186 \pm 0.002$ & $\mathbf{1.204 \pm 0.002}$ & $1.141 \pm 0.002$ & $1.293 \pm 0.002$ & $\mathbf{1.481 \pm 0.002}$ & $\mathbf{2.186 \pm 0.003}$ & $2.135 \pm 0.001$ & $\mathbf{3.68 \pm 0.05}$ & $10.93 \pm 0.11$ \\
\bottomrule
\end{tabular}
\end{adjustbox}
\end{table*}

Complete per-method breakdowns of broadcast-packet expiration ratio and broadcast-efficiency score
across all uniform-demand non-ID regimes are provided in
Appendix~\ref{sec:appendix:uniform_ood_detail}
(Tables~\ref{tab:uniform_ood_all_methods_miss}
and~\ref{tab:uniform_ood_all_methods_sys}).

\begin{table}[t]
\centering
\caption{PPO-Agent advantage over SACM{++} (Uniform demand).
  $\Delta = \text{PPO} - \text{SACM{++}}$, $\pm$ paired bootstrap 95\%
  CI half-width (10\,000 resamples, 50 shared holdout seeds).}
\label{tab:delta_uniform}
\begin{adjustbox}{max width=\columnwidth}
\begin{tabular}{lccccc}
\toprule
\textbf{Regime} & $\Delta$ BE-Score & $\Delta$ Miss Ratio & $\Delta$ Served/Tx & $\Delta$ Coding Gain & $\Delta$ Exp/Episode \\
\midrule
ID-default & $\mathbf{+0.250 \pm 0.003}$ & $\mathbf{-0.144 \pm 0.001}$ & $-0.267 \pm 0.003$ & $\mathbf{+0.031 \pm 0.002}$ & $\mathbf{-14.59 \pm 0.14}$ \\
Curr-file60 & $\mathbf{+0.249 \pm 0.003}$ & $\mathbf{-0.144 \pm 0.001}$ & $-0.268 \pm 0.003$ & $\mathbf{+0.030 \pm 0.002}$ & $\mathbf{-14.54 \pm 0.13}$ \\
OOD-file120 & $\mathbf{+0.251 \pm 0.003}$ & $\mathbf{-0.144 \pm 0.001}$ & $-0.268 \pm 0.002$ & $\mathbf{+0.031 \pm 0.002}$ & $\mathbf{-14.62 \pm 0.12}$ \\
OOD-file150 & $\mathbf{+0.250 \pm 0.003}$ & $\mathbf{-0.144 \pm 0.001}$ & $-0.268 \pm 0.002$ & $\mathbf{+0.032 \pm 0.002}$ & $\mathbf{-14.55 \pm 0.13}$ \\
OOD-pcache0.20 & $\mathbf{+0.091 \pm 0.002}$ & $\mathbf{-0.089 \pm 0.001}$ & $-0.127 \pm 0.002$ & $\mathbf{+0.022 \pm 0.002}$ & $\mathbf{-7.11 \pm 0.09}$ \\
Curr-pcache0.40 & $\mathbf{+0.458 \pm 0.003}$ & $\mathbf{-0.174 \pm 0.001}$ & $-0.436 \pm 0.003$ & $\mathbf{+0.031 \pm 0.003}$ & $\mathbf{-22.37 \pm 0.13}$ \\
OOD-delay10 & $\mathbf{+0.449 \pm 0.003}$ & $\mathbf{-0.054 \pm 0.000}$ & $-0.368 \pm 0.003$ & $\mathbf{+0.016 \pm 0.002}$ & $\mathbf{-21.26 \pm 0.16}$ \\
OOD-delay30 & $\mathbf{+0.062 \pm 0.003}$ & $\mathbf{-0.122 \pm 0.002}$ & $-0.189 \pm 0.002$ & $\mathbf{+0.052 \pm 0.002}$ & $\mathbf{-7.25 \pm 0.11}$ \\
\bottomrule
\end{tabular}
\end{adjustbox}
\end{table}

\begin{table}[t]
\centering
\caption{Distinct file-identity coverage~$\delta$ (uniform demand).
  Each distinct file identity counted at most once; coding-state record
  expirations of already-covered file identities contribute
  $E_t^{\mathrm{uniq}}{=}0$.
  Mean $\pm$ 95\% CI.}
\label{tab:unique_demand_uniform}
\begin{adjustbox}{max width=\columnwidth}
\begin{tabular}{lcc}
\toprule
\textbf{Regime} & \textbf{PPO-Agent $\delta$} (↑) & \textbf{SACM{++} $\delta$} (↑) \\
\midrule
ID-default      & $\mathbf{0.824 \pm 0.001}$ & $0.797 \pm 0.001$ \\
Curr-file60     & $\mathbf{0.823 \pm 0.001}$ & $0.796 \pm 0.001$ \\
OOD-file120     & $\mathbf{0.823 \pm 0.001}$ & $0.796 \pm 0.001$ \\
OOD-file150     & $\mathbf{0.822 \pm 0.001}$ & $0.797 \pm 0.001$ \\
OOD-pcache0.20  & $\mathbf{0.839 \pm 0.001}$ & $0.806 \pm 0.001$ \\
Curr-pcache0.40 & $\mathbf{0.817 \pm 0.001}$ & $0.806 \pm 0.001$ \\
OOD-delay10     & $0.560 \pm 0.001$ & $\mathbf{0.601 \pm 0.001}$ \\
OOD-delay30     & $\mathbf{0.944 \pm 0.001}$ & $0.907 \pm 0.001$ \\
\bottomrule
\end{tabular}
\end{adjustbox}
\end{table}

The policy network (PPO-Agent) achieves a higher~$\delta$ than SACM{++} in 7 of 8~regimes.
The single exception is \textbf{OOD-delay10} ($\delta = 0.560$ vs.\ $0.601$), i.e.,
under the tightest deadline budget ($D{=}10$), all methods suffer deadline
broadcast-packet expiration ratios near $50\%$, while the always-merge strategy of SACM{++} serves more
total request identifiers per step ($\eta_{\mathrm{req}} = 1.604$ vs.\
$1.240$), thereby translating into higher unique demand coverage even though
its miss rate is higher.
The policy network achieves a substantially lower broadcast-packet expiration ratio ($-5.5$~pp) and
fewer expirations ($60.22$ vs.\ $81.48$, i.e., a $26.1\%$ reduction), and a
substantially higher broadcast-efficiency score ($\sigma = -0.002$ vs.\
$-0.451$).
The $\delta$~loss at OOD-delay10 thus reflects a throughput--compliance
tradeoff that the method of selective merging of the policy network resolves in favor of the
deadline protection.

\paragraph{Multi-metric summary.}
The policy network (PPO-Agent) achieves the lowest broadcast-packet expiration ratio among the coded multi-casting
methods in all 8~uniform-demand regimes and the highest unique demand
satisfaction among the coded methods in 7 of 8~regimes (Table~\ref{tab:unique_demand_uniform}).
ED-Unicast (which never codes) achieves the lowest broadcast-packet expiration ratio and the highest
$\delta$ overall in most regimes, while at the cost of zero coding gain.
SACM{++} achieves a higher~$\delta$ than the policy network only at OOD-delay10 and a higher
request throughput in all regimes, i.e., an outcome of its always-merge strategy.
No single baseline dominates the policy network across all the primary demand-centric metrics ($\rho$ and $\delta$; the supplementary request-level $m_{\mathrm{req}}$ is reported separately in Sec.~\ref{sec:res:request_level}), i.e.,
ED-Unicast forfeits all of the broadcast efficiency, while the policy network outperforms SACM{++} by $40.9\%$ with respect to the
broadcast-packet expiration ratio~$\rho$
($\rho_{\mathrm{PPO}}{=}0.208$ vs.\ $\rho_{\mathrm{SACM{++}}}{=}0.352$, i.e.,
a $0.144$ absolute reduction off the SACM{++} baseline).
Under the request-level accounting (Section~\ref{sec:res:request_level}),
the policy network achieves a substantially lower request miss
rate~$m_{\mathrm{req}}$ than the SACM{++} comparator (i.e., the only
baseline in the main-text request-level table) in all 8~regimes, while
the paired-bootstrap analysis of Table~\ref{tab:paired_bootstrap}
shows that ED-Unicast remains the strongest deadline-protection
baseline on $m_{\mathrm{req}}$ ($\Delta m_{\mathrm{req}}=+0.074$ for
the policy network minus ED-Unicast at ID-default, i.e., the policy network is worse on this
metric), so we restrict the headline ``lower than'' phrasing to the
SACM{++} comparison and leave the broader extension to GCM, SACM,
SACM+, and the $\tau$-Fit family as future work. For the request
selection score~$\sigma_{\mathrm{req}}$, the policy network leads the
\emph{coded} baselines (SACM{++}) in the majority of regimes for
$\lambda \geq 2$ and in all eight regimes by $\lambda \geq 5$
(i.e., per-regime crossover thresholds in Table~\ref{tab:req_crossover},
with the worst case at $\lambda^{\star} = 4.92$ for OOD-delay10 and
$\lambda^{\star}=3.60$ for Curr-pcache0.40), while the uncoded ED-Unicast
baseline remains the highest $\sigma_{\mathrm{req}}$ at every
$\lambda$ shown on the ID-default sensitivity table of
Appendix~\ref{sec:appendix:req_lambda_sensitivity} (e.g., at
$\lambda=2,3,5$, ED-Unicast scores $0.690, 0.535, 0.226$ versus
the policy network's $0.610, 0.381, -0.077$), i.e., at the cost of zero coding gain.
The $\sigma_{\mathrm{req}}$ advantage of the policy network is therefore over the coded
baselines, not over the uncoded EDF policy.
The always-merge strategy of SACM{++} yields a higher per-step
throughput~$\eta_{\mathrm{req}}$ and a higher $\sigma_{\mathrm{req}}$ at
$\lambda{=}1$, while this advantage inverts once the deadline misses carry
a moderate penalty.
The policy network occupies a distinct position on the multi-objective
frontier, i.e., it is the only method that achieves a low broadcast-level
broadcast-packet expiration ratio~$\rho$, a competitive distinct file-identity
coverage~$\delta$, a low per-step request miss rate~$m_{\mathrm{req}}$
(i.e., per the per-step definition $m_{\mathrm{req}}{=}H^{-1}\sum_t|M_t|$,
not normalized by total arrivals), and a high broadcast efficiency~$\sigma$.

\subsection{Request-Level Accounting (Supplementary)}
\label{sec:res:request_level}

The request-level metrics provide the finest-grained user-centric
evaluation, tracking each original arrival through all merge aggregations
and crediting completion or miss exactly once per request identifier
(Section~\ref{sec:model:request_accounting}).
Table~\ref{tab:request_level_uniform} reports these supplementary request-level metrics
(M8--M10) for PPO-Agent and SACM{++} across all eight regimes.
Unlike the unique-demand satisfaction~$\delta$ (M2), which is a
\emph{ratio} metric, the request selection score
$\sigma_{\mathrm{req}} = H^{-1}\sum_t(|C_t| - \lambda|M_t|)$ is an
\emph{absolute} per-step quantity whose ranking depends on the penalty
weight~$\lambda$.

At $\lambda{=}1$, SACM{++} achieves a higher~$\sigma_{\mathrm{req}}$ because
its always-merge strategy produces more completed request identifiers per
step, however, this ranking inverts at $\lambda \geq 2.24$, i.e., the policy network's
${\sim}30\%$ lower request miss rate becomes the dominant factor when
the deadline violations are penalized more heavily.
For the operationally relevant penalty weights ($\lambda \ge 2.24$ on the
ID-default sensitivity analysis,
see Appendix~\ref{sec:appendix:req_lambda_sensitivity}), the policy network achieves
the highest $\sigma_{\mathrm{req}}$ \emph{among the coded baselines
shown here}, while the uncoded ED-Unicast baseline remains higher than every
coded method on $\sigma_{\mathrm{req}}$ at every $\lambda$ in the
ID-default table, i.e., the $\sigma_{\mathrm{req}}$ advantage reported
here is over the coded baselines, not over the EDF unicast policy.
This crossover shows that the ``best'' policy depends on the system
operator's valuation of the deadline compliance versus the throughput.

\begin{table}[t]
\centering
\caption{Request-level metrics (uniform demand).
  Each original arrival tracked through aggregation; completion and miss
  credited exactly once per request ID.
  Mean $\pm$ 95\% CI (50 seeds $\times$ 200 episodes).}
\label{tab:request_level_uniform}
\begin{adjustbox}{max width=\columnwidth}
\begin{tabular}{lcccccc}
\toprule
 & \multicolumn{2}{c}{$\eta_{\mathrm{req}}$ (↑)} & \multicolumn{2}{c}{$m_{\mathrm{req}}$ (↓)} & \multicolumn{2}{c}{$\sigma_{\mathrm{req}}\;(\lambda{=}1)$ (↑)} \\
\textbf{Regime} & PPO & SACM{++} & PPO & SACM{++} & PPO & SACM{++} \\
\midrule
ID-default      & $1.069 \pm 0.001$ & $\mathbf{1.285 \pm 0.002}$ & $\mathbf{0.229 \pm 0.001}$ & $0.326 \pm 0.001$ & $0.839 \pm 0.001$ & $\mathbf{0.959 \pm 0.002}$ \\
Curr-file60     & $1.069 \pm 0.001$ & $\mathbf{1.285 \pm 0.002}$ & $\mathbf{0.230 \pm 0.001}$ & $0.327 \pm 0.001$ & $0.839 \pm 0.001$ & $\mathbf{0.958 \pm 0.002}$ \\
OOD-file120     & $1.069 \pm 0.001$ & $\mathbf{1.285 \pm 0.002}$ & $\mathbf{0.231 \pm 0.001}$ & $0.329 \pm 0.002$ & $0.838 \pm 0.001$ & $\mathbf{0.957 \pm 0.002}$ \\
OOD-file150     & $1.068 \pm 0.001$ & $\mathbf{1.286 \pm 0.002}$ & $\mathbf{0.231 \pm 0.002}$ & $0.328 \pm 0.002$ & $0.837 \pm 0.002$ & $\mathbf{0.958 \pm 0.002}$ \\
OOD-pcache0.20  & $1.033 \pm 0.001$ & $\mathbf{1.117 \pm 0.001}$ & $\mathbf{0.198 \pm 0.001}$ & $0.269 \pm 0.001$ & $0.835 \pm 0.001$ & $\mathbf{0.848 \pm 0.001}$ \\
Curr-pcache0.40 & $1.115 \pm 0.001$ & $\mathbf{1.489 \pm 0.002}$ & $\mathbf{0.251 \pm 0.001}$ & $0.354 \pm 0.002$ & $0.865 \pm 0.002$ & $\mathbf{1.135 \pm 0.002}$ \\
OOD-delay10     & $1.240 \pm 0.001$ & $\mathbf{1.604 \pm 0.002}$ & $\mathbf{0.974 \pm 0.002}$ & $1.048 \pm 0.002$ & $0.266 \pm 0.003$ & $\mathbf{0.556 \pm 0.003}$ \\
OOD-delay30     & $1.030 \pm 0.000$ & $\mathbf{1.145 \pm 0.001}$ & $\mathbf{0.061 \pm 0.001}$ & $0.117 \pm 0.001$ & $0.969 \pm 0.001$ & $\mathbf{1.027 \pm 0.001}$ \\
\bottomrule
\end{tabular}
\end{adjustbox}
\end{table}

Two patterns are evident.
First, the policy network (PPO-Agent) achieves a \textbf{substantially lower request miss
rate than the SACM{++} comparator (i.e., the only baseline in the
main-text request-level table)} in every tested regime, with
reductions of $29.8\%$ at ID-default ($m_{\mathrm{req}} = 0.229$
vs.\ $0.326$ for SACM{++}) and ranging from $7.1\%$ (OOD-delay10) to
$48.0\%$ (OOD-delay30), while the paired-bootstrap analysis at
Table~\ref{tab:paired_bootstrap} shows that ED-Unicast remains the
strongest deadline-protection baseline on $m_{\mathrm{req}}$ at
ID-default ($\Delta m_{\mathrm{req}} = +0.074$ for the policy network minus
ED-Unicast), so the headline ``lower than'' statement is restricted
to the SACM{++} comparison, i.e., the broader extension to GCM, SACM,
SACM+, and the $\tau$-Fit family is left as future work and is not
directly supported by the displayed evidence.
Second, SACM{++} achieves a higher request timely-throughput~$\eta_{\mathrm{req}}$
in every regime, because its always-merge strategy transmits more coded
packets per step.
This crossover mirrors the broadcast-efficiency analysis
(Appendix~\ref{sec:appendix:lambda_sensitivity}), where the policy network retains
rank~1 among the coded methods for $\lambda \geq 1$.
A full $\lambda$-sensitivity sweep for $\sigma_{\mathrm{req}}$
is presented in Appendix~\ref{sec:appendix:req_lambda_sensitivity}.
Fig.~\ref{fig:pareto_request} visualizes this throughput--compliance
trade-off, i.e., the policy network (PPO-Agent) occupies the intermediate position with the
lowest request miss rate among the displayed comparators (the policy network and SACM{++} as the coded comparators, with ED-Unicast as the uncoded EDF reference point).

\begin{figure}[t]
  \centering
  \includegraphics[width=\columnwidth]{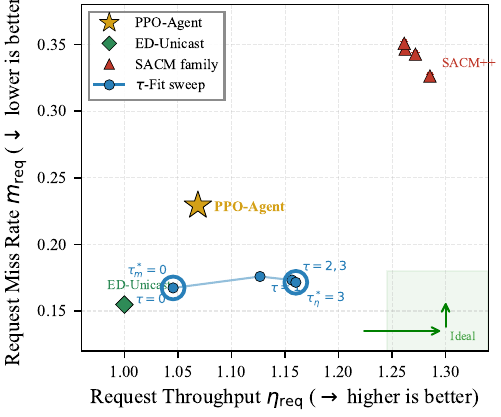}
  \caption{Request-level trade-off between request throughput~$\eta_{\mathrm{req}}$ ($\rightarrow$)
    and request miss rate~$m_{\mathrm{req}}$ ($\downarrow$) on the ID-default uniform benchmark.
    SACM{++} achieves the highest throughput at the cost of higher miss rates.
    PPO-Agent occupies the intermediate position with the lowest miss rate
    among coded-multicast methods.  Error bars: 95\% CI.}
  \label{fig:pareto_request}
\end{figure}

\FloatBarrier

Fig.~\ref{fig:ood_advantage} visualizes these advantages, including
the larger gains under stress conditions (Curr-pcache0.40, OOD-delay10).

\begin{figure}[t]
  \centering
  \includegraphics[width=\columnwidth]{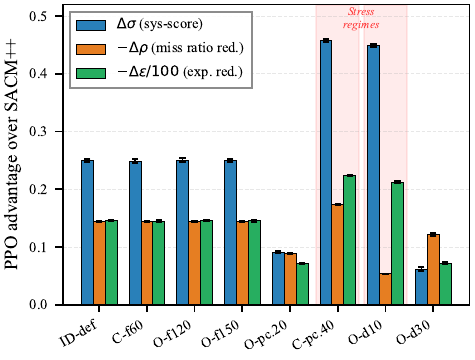}
  \caption{PPO-Agent advantage over SACM{++} across all evaluation regimes
    (uniform demand).  Bars show $\Delta\sigma$ (blue, BE-score improvement),
    $-\Delta\rho$ (orange, miss-ratio reduction), and
    $-\Delta\varepsilon/100$ (green, expiration reduction, scaled for
    readability).  All bars are positive, indicating PPO dominance.
    Shaded columns mark the stress regimes where advantages grow largest.
    Error bars: paired bootstrap 95\% CI.}
  \label{fig:ood_advantage}
\end{figure}

\FloatBarrier

\textbf{Catalog size (Curr-file60 / OOD-file120 / OOD-file150), sanity check.}
Varying the library from 60 to 150 files has a negligible effect on any
method's performance ($\sigma \in [0.975, 0.976]$ for the policy network and
$\sigma \approx 0.725$ for SACM{++}), i.e., the same observation is true for the variations of the number of files as well.
This near-invariance is expected, i.e., under decentralized placement with
fixed~$p_c$, the per-packet caching probability is preserved regardless
of~$N$, so the marginal side-information statistics that determine
the queue-state feature distributions do not change with the catalog size.
We therefore treat this axis as a \emph{sanity check} confirming
the feature invariance rather than a stress test.
A genuinely structural catalog-size shift would require altering the
placement model itself (e.g., correlated or capacity-constrained
caching) and is deferred to future work.

\textbf{Cache density (OOD-pcache0.20 and Curr-pcache0.40).}
These two regimes are the most informative. At \emph{low cache density}
(OOD-pcache0.20), the advantage of the policy network narrows, i.e., the policy network $\sigma {=} 0.919$ vs.\
SACM{++} $\sigma {=} 0.828$ ($+11.0\%$), i.e., fewer cached subfiles means fewer
feasible merge pairs at any step, while the blind-merge strategy of SACM{++} causes
fewer expirations ($18.25$ vs.\ $28.76$ at ID-default), thereby shrinking the benefit
of the method of selective merging. At \emph{high cache density} (Curr-pcache0.40),
the effect reverses, i.e., the policy network $\sigma {=} 1.038$ vs.\
SACM{++} $\sigma {=} 0.580$ (\textbf{$+79.0\%$}), i.e., with the abundant merge
opportunities, SACM{++} merges every step and incurs $39.72$
expirations/episode, while the policy network maintains its selective strategy at only
$17.35$ expirations/episode. The BE-score exceeds $1.0$ because the
loosened expiration rate allows the served packet-set mass ($\sum_t U_t$)
to accumulate faster than the episode horizon penalizes.

\textbf{Deadline budget (OOD-delay10 and OOD-delay30).}
Under the tightest deadline ($D{=}10$), \emph{all} methods suffer
large expirations, while the policy network's $\sigma {=} {-}0.002$
outperforms SACM{++}'s $\sigma {=} {-}0.451$, i.e., an absolute advantage of
$0.449$ BE-score units. SACM{++} incurs $81.48$ expirations/episode
vs.\ the policy network's $60.22$, i.e., a $26.1\%$ reduction even under extreme stress.
Under the loosest deadline ($D{=}30$), the advantage narrows to $+5.5\%$
($\sigma = 1.204$ vs.\ $1.141$), i.e., longer time budgets reduce the
penalty for aggressive merging, thereby partially recovering the performance of SACM{++}.

The interesting fact we observed is that the advantages of the policy network grow under
the conditions that \emph{penalize the blind merging} (i.e., high cache density,
tight deadlines) and shrink when those conditions relax, i.e., this is
the regime where the adaptive, state-dependent decision making
provides the most value.

\subsection{Emergent Selective Merge Strategy}
\label{sec:res:behavior}

To characterize the qualitative behavior of the learned policy, we
examine three diagnostic metrics alongside performance: \emph{merge rate}
(fraction of steps where a coded merge is selected, given at least one
feasible pair), \emph{opportunity rate} (fraction of steps where at least
one feasible pair exists), and \emph{average intersection per merge}
(mean $|\mathcal{S}_i \cap \mathcal{S}_j|$ of the selected pair, normalized
by $K-2$, computed only over merging steps; under the no-self-cache
generation rule and the one-gap invariant of
Appendix~\ref{sec:appendix:state_sufficiency}, the maximum admissible
overlap is $K-2$ because neither destination cache can sit inside
$\mathcal{S}_i \cap \mathcal{S}_j$).
Table~\ref{tab:diagnostics} reports these for the ID-default uniform
condition.

\begin{table}[t]
\centering
\caption{Policy behavior diagnostics, uniform demand, ID-default.
  Avg.\ intersection computed over merging steps only; higher values
  indicate selection of higher-quality pairs.}
\label{tab:diagnostics}
\begin{adjustbox}{max width=\columnwidth}
\begin{tabular}{lcccc}
\toprule
\textbf{Method} & \textbf{Merge Rate} & \textbf{Opp Rate} & \textbf{Avg Inter/Merge} \\
\midrule
\textbf{PPO-Agent} & $0.318 \pm 0.001$ & $0.876 \pm 0.002$ & $0.589 \pm 0.004$ \\
TauFit-2           & $0.406 \pm 0.001$ & $0.925 \pm 0.001$ & $0.359 \pm 0.003$ \\
Perfect-Fit        & $0.133 \pm 0.001$ & $0.959 \pm 0.001$ & $0.819 \pm 0.007$ \\
ED-Unicast         & $0.000$             & $0.965 \pm 0.001$ & --- \\
GCM                & $1.000$             & $0.501 \pm 0.002$ & $0.274 \pm 0.002$ \\
SACM{++}           & $1.000$             & $0.522 \pm 0.002$ & $0.393 \pm 0.002$ \\
\bottomrule
\end{tabular}
\end{adjustbox}
\end{table}

We make three observations.

\textbf{(1) The method of selective merging on the high-quality pairs.}
The policy network merges only $31.8\%$ of the opportunities, while achieving an average
intersection of $0.589$ per merge, i.e., well above SACM{++}'s $0.393$ and
TauFit-2's $0.359$, i.e., the policy network merges less often, and on substantially better
pairs. By restricting the coded transmissions
to the pairs with large mutual side-information overlap, the policy network preserves more
bandwidth headroom for unicasting the urgent requests before their deadlines
expire.
Fig.~\ref{fig:coding_gain_merge} visualizes this relationship, i.e., the policy network
achieves the highest per-merge coding gain at a merge rate
substantially below the always-merge coded baselines (GCM and
SACM{++} at $1.000$), while among the coded family Perfect-Fit merges even
less aggressively yet at the cost of a far higher
broadcast-level expiration ratio.

\begin{figure}[t]
  \centering
  \includegraphics[width=\columnwidth]{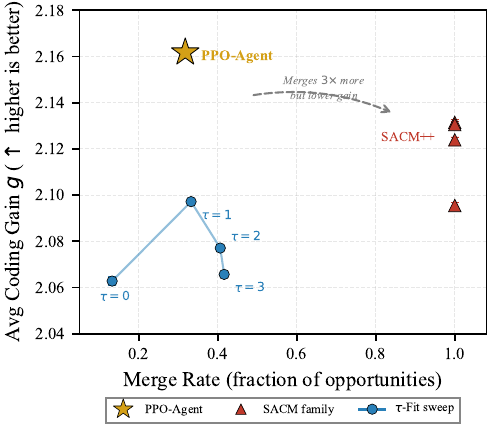}
  \caption{Coding gain vs.\ merge rate (ID-default, uniform demand).
    PPO-Agent merges only $31.8\%$ of opportunities yet achieves the
    highest average coding gain ($g{=}2.162$).  The SACM family merges
    every opportunity ($100\%$) but with lower per-merge quality.  The
    $\tau$-Fit sweep (connected circles) shows that relaxing the
    threshold increases merge rate but \emph{decreases} coding gain.}
  \label{fig:coding_gain_merge}
\end{figure}

Fig.~\ref{fig:reward_decomp} decomposes the per-episode \emph{training}
reward into its constituent terms (i.e., served packet-set mass,
expiration penalty, intersection bonus, union penalty). The policy network (PPO-Agent)
achieves the highest total reward ($54.6$) among all five policies,
thereby surpassing SACM{++} ($43.4$), ED-Unicast ($42.3$), and GCM ($41.7$).
The advantage stems from two sources, i.e., the policy network earns $50\%$ higher
intersection bonus \emph{per merge} ($0.44$ vs.\ $0.29$) and incurs
$59\%$ less expired packet-set mass per episode
($\sum_t E_t \approx 17.5$ vs.\ $43.2$, i.e., this is distinct from the
``Exp/Episode'' count $\varepsilon = 14.17$ vs.\ $28.76$ reported in
Table~\ref{tab:nm14_baselines}, which counts the expired
\emph{records} rather than the expired packet-set mass).
This indicates that the selective strategy of the policy network maximizes the reward through
the merge \emph{quality} and the deadline management rather than the merge volume.

\begin{figure}[t]
  \centering
  \includegraphics[width=\columnwidth]{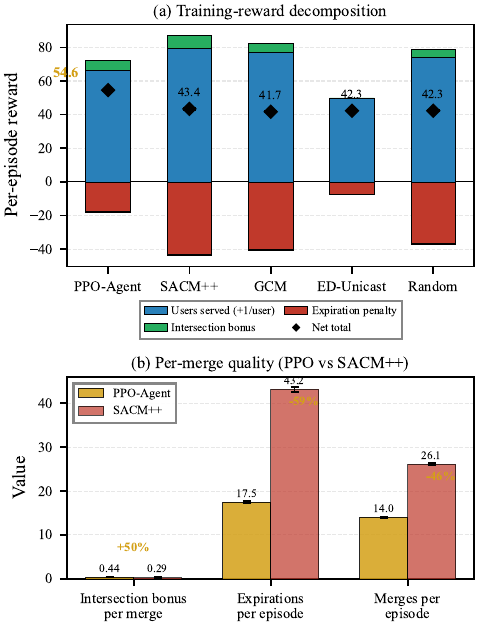}
  \caption{Training-reward decomposition (ID-default, uniform demand, 50 holdout seeds).
    \textbf{(a)}~Per-episode reward stacked by component (5 methods):
    PPO-Agent achieves the highest net reward ($54.6$) by accumulating
    more served packet-set mass while incurring less expired
    packet-set mass ($\sum_t E_t$) than always-merge baselines.
    \textbf{(b)}~Per-merge quality (PPO vs.\ SACM{++}): PPO achieves
    $50\%$ higher intersection bonus per merge and $59\%$ less
    expired packet-set mass per episode ($\sum_t E_t$), despite
    merging $46\%$ less often. ``Expired packet-set mass'' is the
    reward-side quantity $\sum_t E_t$ and is distinct from the
    ``Exp/Episode'' count $\varepsilon$ reported in
    Table~\ref{tab:nm14_baselines}.}
  \label{fig:reward_decomp}
\end{figure}

Fig.~\ref{fig:eval_comparison} visualises the per-episode reward
distribution across all 50 holdout seeds, i.e., the policy network (PPO-Agent) consistently
achieves higher per-episode rewards than every baseline, with clear
separation visible at the individual-episode level.

\begin{figure}[t]
  \centering
  \includegraphics[width=\columnwidth]{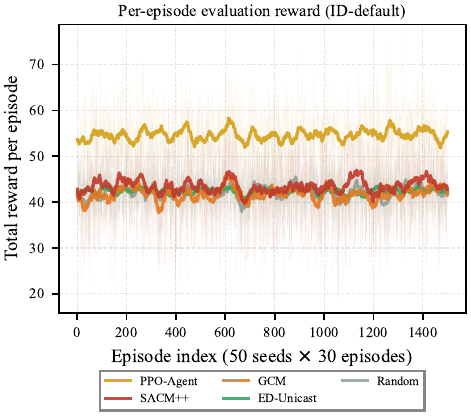}
  \caption{Per-episode total reward visualization on a 30-episode
    subset of the holdout protocol (50 holdout seeds $\times$ 30
    episodes; uniform demand, training reward config). The 30-episode
    subset is used here purely to keep the per-episode plot legible;
    all numerical headline metrics in this paper come from the full
    50 seeds $\times$ 200 episodes evaluation protocol of
    Sec.~\ref{sec:experiments}. Faint lines show raw values; bold
    lines show the rolling mean (window $=30$). PPO-Agent maintains a
    consistent advantage over all baselines on this subset.}
  \label{fig:eval_comparison}
\end{figure}

\textbf{(2) Deliberate unicast-first strategy.}
Even though the opportunity rate is $87.6\%$ (i.e., a feasible merge pair
is almost always available), the policy network chooses to unicast approximately $68\%$
of the time, i.e., this is not passive inaction due to infeasibility, it is
a learned preference for unicasting when the merge quality or
the deadline urgency does not justify the coding.

\textbf{(3) Comparison to Perfect-Fit.}
Perfect-Fit (TauFit-$\tau{=}0$) is the most conservative fixed-threshold
policy, merging only when $\misfit(r_i, r_j) = 0$ (i.e., exact side-information
match, i.e., the pairwise misfit metric $\misfit$ is defined in
Appendix~\ref{sec:appendix:exp_details}, distinct from the
broadcast-level expiration ratio $\rho$), i.e., it achieves a high average intersection ($0.819$) yet a
broadcast-efficiency score of only $0.877$, because it forgoes many beneficial merges
with slight misfit. The policy network merges $2.4\times$ more often ($31.8\%$ vs.\
$13.3\%$) with moderately lower pair quality, i.e., a balance that yields a
BE-score improvement of $+11.3\%$ over Perfect-Fit. No
fixed-$\tau$ rule can achieve this balance without the oracle access to $\tau^{\star}$,
while the policy network discovers it directly from the queue state.

\begin{remark}[Behavioral contrast across demand regimes]
A separately trained Zipf-demand agent (Section~\ref{sec:results:zipf})
merges $99.7\%$ of the available opportunities, compared to the
$31.8\%$ selective rate observed here, even though the cache placement is identical.
This demand-adaptive behavior shows that the architecture discovers
qualitatively different strategies for different demand regimes, i.e., a
flexibility no fixed-threshold policy can achieve.
\end{remark}

Taken together, the diagnostic evidence supports the following mechanistic
interpretation, i.e., the policy network has learned a \emph{selective merge
heuristic} (i.e., merge when the intersection is large and the deadlines are
comfortable, unicast otherwise) and applies it in a fully
state-dependent manner. The $\tau$-Fit family approximates this
heuristic with a fixed threshold, while the policy network instantiates it adaptively
for every queue configuration, with consistent performance
across the tested distributions, catalog sizes, cache densities, and
deadline budgets, i.e., the same observation is true for the variations of the number of files, cache fraction, and deadline budgets as well.

\section{Extension Study: Zipf-Demand Variant}\label{sec:results:zipf}

This section evaluates a separately trained Zipf-demand agent on the skewed demand distributions to show that the same architectural and training framework extends beyond the uniform demand. The policy network remains competitive with the baselines of the coded multi-casting method, while it does not dominate all methods, i.e., ED-Unicast achieves the highest overall broadcast-efficiency score ($0.847$ vs.\ the policy network's $0.732$) by avoiding the merge-induced expirations entirely. All values follow the same statistical reporting protocol as in Section~\ref{sec:results}.

\paragraph{Summary.} The interesting fact we observed is that the demand structure shapes the effective merge behavior, i.e., the uniform-demand policy network merges $31.8\%$ of the opportunities, whereas the Zipf-demand policy network merges $99.7\%$, even though the cache placement remains popularity-blind (i.e., uniform-without-replacement with fraction~$p_c$) in both cases. The descriptive diagnostic in Section~\ref{sec:res:ood_zipf} is consistent with a policy-mediated component to this difference, while we treat the analysis as suggestive rather than confirmatory because it is computed at a pooled-pair granularity (i.e., see the diagnostic's caveats in Section~\ref{sec:res:ood_zipf}).

\begin{figure}[t]
  \centering
  \includegraphics[width=\columnwidth]{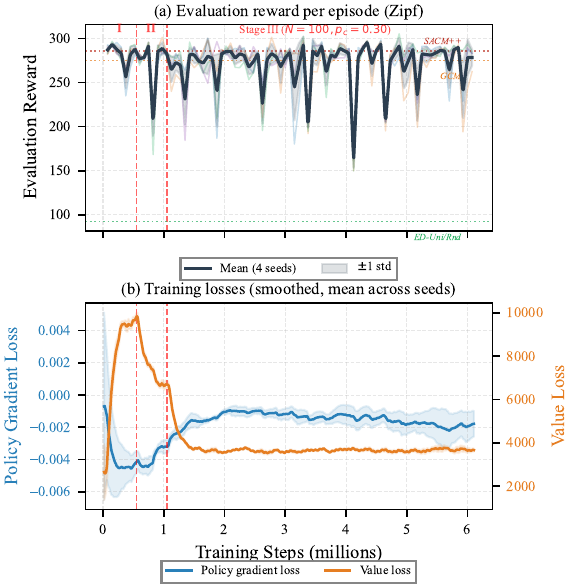}
  \caption{Training dynamics of the Track~B (Zipf-demand, $\alpha{=}0.8$) agent
    across 4 seeds. The x-axis shows total RL environment steps (6.05M per
    seed, including the 50K Phase-2 warm-up); curriculum stage transitions
    therefore appear at $550\text{K}$ and $1.05\text{M}$ total RL steps
    (corresponding to $T_{\mathrm{Phase\,3}} = 500\text{K}$ and
    $1.0\text{M}$ Phase-3 steps).
    \textbf{(a)}~Evaluation reward (mean $\pm$ std across seeds);
    vertical dashed lines mark curriculum stage transitions
    at $550\text{K}$ and $1.05\text{M}$ total RL steps. Horizontal dotted
    lines show baseline policy rewards evaluated under the same
    training reward configuration.
    \textbf{(b)}~Policy-gradient loss (left axis, blue) and value loss
    (right axis, orange), both smoothed and averaged across seeds.}
  \label{fig:training_curves_zipf}
\end{figure}

\subsection{OOD Generalization --- Zipf Demand (Track~B)}
\label{sec:res:ood_zipf}

Track~B evaluates a \emph{separately trained} Zipf-demand agent on
Zipf-distributed file requests (file-popularity law
$P(F=n) \propto \mathrm{rank}(n)^{-\alpha}$, $\alpha{=}0.8$).
This agent uses a popularity-aware observation augmentation
($d_{\text{req}}{=}14$, $d_{\text{pair}}{=}11$;
${\approx}1.75$M~parameters) and is trained on Zipf($\alpha{=}0.8$)
demand.
The Zipf-ID condition is in-distribution for this agent; the remaining
11~conditions are non-ID (2~curriculum-seen, 9~OOD).

Table~\ref{tab:zipf_id} shows the Zipf ID-default results. The main
cross-track observation is that the Zipf-trained policy network
merges $99.72\%$ of the available opportunities, compared with $31.8\%$ for
the uniform-trained policy network.
We caution that this comparison does \emph{not} isolate the demand law, i.e.,
the Track~B policy network uses a wider observation space than the Track~A
policy network (i.e., per-request popularity-mass feature, $d_{\text{req}} = 14$
versus $13$, and popularity features at the pair level, $d_{\text{pair}}
= 11$ versus $8$). The behavioral contrast therefore conflates the
demand law (uniform vs.\ Zipf), the feature augmentation
(popularity-aware features vs.\ none), and the joint training
distribution. A controlled ablation that holds the observation space
fixed and varies only the demand law is left for future work;
concretely, training a PPO-Uniform agent with the Track~B observation
shape and evaluating zero-shot on Zipf-ID (and the symmetric
experiment) would decouple the demand law from the observation
channel and the training distribution, even though the
unified-observation refactor required is outside the present scope.
The discussion below should be read as descriptive evidence under
matched-architecture, mismatched-feature configurations rather than
as a clean causal claim about the demand structure alone.
Because the cache placement is uniform-without-replacement with fraction~$p_c$,
each packet's marginal caching probability is identical regardless of
the file popularity, i.e., Zipf demand therefore does not alter the per-packet
side-information statistics, while Zipf demand does change which
file indices co-occur in the queue. We observe that the opportunity
rate drops from $0.876$ (uniform) to $0.528$ (Zipf), i.e., the Zipf-trained
policy network merges nearly every available pair, and the performance under
the near-full merging is comparable to that of the SACM family rather than
substantially worse. The precise mechanism linking the popularity-blind
placement and the Zipf request arrivals to a regime where
the near-full merging is advantageous has not been empirically
characterized, while we outline a diagnostic analysis below.

\paragraph{Diagnostic: visited-state intersection distributions.}
To move from the observational account above toward a mechanistic
explanation, a natural diagnostic is to compare the distribution of
candidate-pair intersection sizes $|\mathcal{S}_i \cap \mathcal{S}_j|/K$
in the queue states \emph{actually visited} by the two PPO agents (uniform-trained
and Zipf-trained) under their respective demand regimes.  Under uniform-without-replacement ($p_c{=}0.30$) placement, the
\emph{single-request marginal} cache-residency law is
packet-exchangeable, so the marginal distribution of the
side-information set $\mathcal{S}_r$ for one independent singleton
request is the same regardless of demand law. The same is \emph{not}
true at the two-request joint level: under skewed demand the same
packet (and therefore the same $\mathcal{S}_r$) is more likely to be
requested twice, so the \emph{unconditional} distribution of
$|\mathcal{S}_i\cap\mathcal{S}_j|$ for two requests is itself shifted
by the demand law, even before any policy effect. With the paper's
default $N{=}100$, $B{=}10$, the per-pair same-packet probability is
$1/(NB)=0.001$ under uniform demand versus
$\frac{1}{B}\sum_f p_f^2 \approx 0.0033$ under Zipf$(\alpha{=}0.8)$,
and same-packet pairs induce identical $\mathcal{S}_r$, so they push
mass toward the high-intersection tail. This is a baseline shift, not
a policy effect; the diagnostic we report below mixes this baseline
shift with the visited-state policy effect, and the
``policy-mediated'' interpretation in the SACM++ comparison should be
read as suggestive descriptive evidence rather than a clean
decomposition.
With that caveat, Zipf demand also changes which file--cache
combinations co-occur in the queue at decision time, potentially
shifting the \emph{conditional} (visited-state) intersection
distribution beyond the baseline.
Holdout evaluation traces (50~seeds $\times$ 200~episodes, logging all
candidate-pair intersections at every decision step) yield a mean
visited-state intersection of $0.058$ under uniform demand versus
$0.065$ under Zipf demand
(Kolmogorov--Smirnov statistic $D = 0.021$ at the pooled-pair
granularity; $n_{\text{unif}} \approx 1.59\text{M}$ and
$n_{\text{Zipf}} \approx 0.62\text{M}$ candidate-pair observations).
We report $D$ as a descriptive effect size only: candidate-pair
observations within a queue state, and across consecutive decision
steps within an episode, are heavily dependent, so a standard
two-sample KS test is mis-specified at this pooled granularity and a
nominal small-sample $p$-value would overstate the evidence. We
therefore avoid quoting a $p$-value here and read the diagnostic as
descriptive evidence rather than as a hypothesis test; a future
seed-level or episode-level reanalysis would be needed to support an
inferential claim at an independent unit.
The distribution does shift upward under Zipf demand, i.e., the fraction of
the zero-intersection pairs drops from $73.7\%$ to $71.6\%$, while the pairs with
$|\mathcal{S}_i \cap \mathcal{S}_j|/K \geq 0.4$ nearly double
($2.7\%$ $\to$ $4.1\%$). This suggests a concrete data-driven mechanism, i.e.,
the Zipf-induced file-index concentration produces queues in which
a larger fraction of the candidate pairs have high intersection,
so the near-full merging is less costly in terms of wasted
side-information even though each individual packet's caching
probability is unchanged.
Running the same diagnostic with the deterministic SACM++ baseline
(which fixes the policy and varies only the demand law) shows no
comparable shift in pooled effect size ($D = 0.002$). This is
consistent with, but does not by itself prove, a
policy-mediated component to the PPO shift, because the SACM++
comparison is invariant to all PPO-specific visited-state effects yet
already absorbs whatever baseline two-request joint shift the
same-packet calculation above would predict. We therefore read the
diagnostic as descriptive evidence that PPO under Zipf reshapes queue
composition toward higher-intersection states, while flagging that a
clean decomposition into (a) the demand-law baseline shift and (b)
the policy-induced visited-state shift would require either
conditioning on distinct requested packet IDs or a controlled
counterfactual rollout that we leave to future work.
We leave a full decomposition by queue depth and aggregate size
to future work.

\begin{table*}[t]
\centering
\caption{Zipf demand ($\alpha{=}0.8$), in-distribution. 50 seeds $\times$ 200 episodes. Bold = best.}
\label{tab:zipf_id}
\begin{adjustbox}{max width=\textwidth}
\begin{tabular}{lcccccccc}
\toprule
\textbf{Method} & \textbf{BE-Score $\sigma$} (↑) & \textbf{Miss Ratio} (↓) & \textbf{Served/Tx} (↑) & \textbf{Coding Gain} (↑) & \textbf{Exp/Episode} (↓) & \textbf{Merge Rate} (↑) & \textbf{Opp Rate} (↑) & \textbf{Coverage $\delta$} (↑) \\
\midrule
\textbf{PPO-Agent} & $0.732 \pm 0.002$ & $0.351 \pm 0.001$ & $\mathbf{1.590 \pm 0.003}$ & $2.122 \pm 0.002$ & $28.30 \pm 0.13$ & $0.997 \pm 0.000$ & $0.528 \pm 0.002$ & $0.804 \pm 0.001$ \\
ED-Unicast & $\mathbf{0.847 \pm 0.001}$ & $\mathbf{0.133 \pm 0.001}$ & $1.000 \pm 0.000$ & --- & $\mathbf{7.65 \pm 0.06}$ & $0.000 \pm 0.000$ & $\mathbf{0.964 \pm 0.001}$ & $\mathbf{0.869 \pm 0.001}$ \\
GCM & $0.737 \pm 0.002$ & $0.343 \pm 0.001$ & $1.544 \pm 0.002$ & $2.095 \pm 0.001$ & $28.14 \pm 0.11$ & $\mathbf{1.000 \pm 0.000}$ & $0.496 \pm 0.002$ & $0.784 \pm 0.001$ \\
SACM & $0.749 \pm 0.003$ & $0.343 \pm 0.001$ & $1.569 \pm 0.003$ & $2.130 \pm 0.001$ & $28.35 \pm 0.14$ & $\mathbf{1.000 \pm 0.000}$ & $0.504 \pm 0.002$ & $0.781 \pm 0.001$ \\
SACM+ & $0.738 \pm 0.003$ & $0.346 \pm 0.001$ & $1.565 \pm 0.003$ & $2.124 \pm 0.001$ & $28.31 \pm 0.14$ & $\mathbf{1.000 \pm 0.000}$ & $0.503 \pm 0.002$ & $0.787 \pm 0.001$ \\
SACM{++} & $0.728 \pm 0.002$ & $0.351 \pm 0.001$ & $1.587 \pm 0.003$ & $2.131 \pm 0.002$ & $28.61 \pm 0.10$ & $\mathbf{1.000 \pm 0.000}$ & $0.518 \pm 0.002$ & $0.795 \pm 0.001$ \\
SACM{++}-Pop & $0.731 \pm 0.003$ & $0.348 \pm 0.001$ & $1.571 \pm 0.002$ & $\mathbf{2.133 \pm 0.001}$ & $28.37 \pm 0.13$ & $\mathbf{1.000 \pm 0.000}$ & $0.504 \pm 0.002$ & $0.789 \pm 0.001$ \\
\bottomrule
\end{tabular}
\end{adjustbox}
\end{table*}

Fig.~\ref{fig:pareto_tradeoff_zipf} and Fig.~\ref{fig:pareto_demand_zipf}
visualize the multi-objective trade-offs under Zipf demand, i.e.,
while in the uniform track the methods spread out, all of the multicast methods cluster tightly
in the high-$\rho$ region, whereas ED-Unicast occupies the upper-left
corner alone, under the near-full merging regime.

\begin{figure}[t]
  \centering
  \includegraphics[width=\columnwidth]{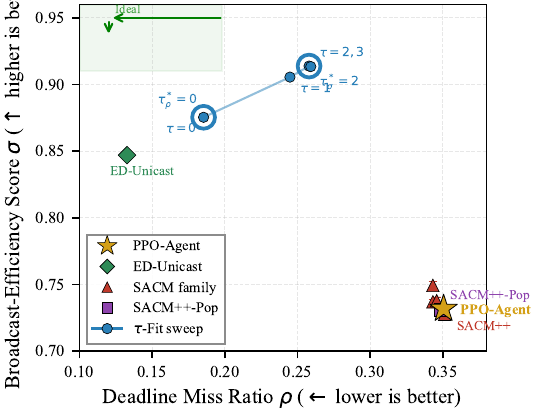}
  \caption{Multi-objective trade-off between broadcast-efficiency
    score~$\sigma$ ($\uparrow$) and broadcast-packet expiration ratio~$\rho$
    ($\downarrow$) under Zipf demand ($\alpha{=}0.8$, ID-default).
    Under near-full merging, all multicast methods cluster tightly;
    ED-Unicast dominates on both axes by avoiding merge-induced
    expirations.}
  \label{fig:pareto_tradeoff_zipf}
\end{figure}

\begin{figure}[t]
  \centering
  \includegraphics[width=\columnwidth]{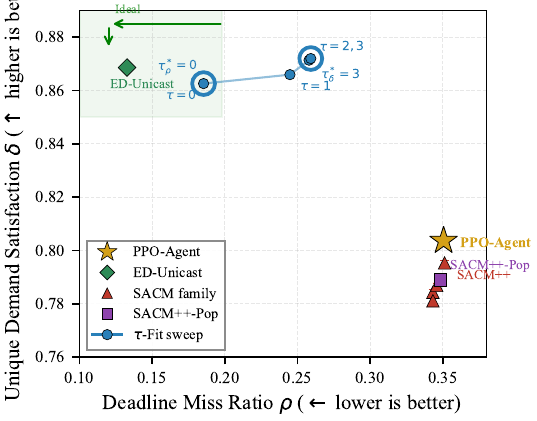}
  \caption{Distinct file-identity coverage~$\delta$ ($\uparrow$) vs.\
    broadcast-packet expiration ratio~$\rho$ ($\downarrow$) under Zipf demand.
    PPO-Agent achieves the highest~$\delta$ among multicast methods,
    while ED-Unicast leads overall.  SACM{++}-Pop does not consistently
    improve over SACM{++} despite privileged access to the demand
    distribution.}
  \label{fig:pareto_demand_zipf}
\end{figure}

Because all of the coded multi-casting methods merge near-maximally under Zipf
demand, the performance differences are smaller than in the uniform track.
The policy network retains a focused advantage in the
PPO-vs-SACM{++}-vs-SACM{++}-Pop comparison that motivates the Track~B
ablation, i.e., on the BE-score, the mean of the policy network is comparable to that of SACM{++} ($\sigma = 0.732$ vs.\ $0.728$,
i.e., the two $\pm 0.002$ CIs overlap, so we do not claim a strict ``outperforms'' on this metric)
and to the popularity-aware oracle SACM{++}-Pop, while the policy network attains a
competitive broadcast-packet expiration ratio at Zipf ID-default ($\rho = 0.351$,
i.e., tied with SACM{++}'s $0.351$ and within $+0.003$ of SACM{++}-Pop at
$0.348$), and incurs the fewest expirations per episode of those
three methods ($28.30$ vs.\ $28.61$ for SACM{++}).
We do not claim a global BE-score lead over the broader multicast
baseline set, i.e., SACM ($0.749$), SACM+ ($0.738$), and GCM ($0.737$) all
achieve a higher~$\sigma$ than the policy network ($0.732$) at Zipf ID-default
(Table~\ref{tab:zipf_id}), and the policy network instead trades the raw BE-score for
a lower broadcast-packet expiration ratio, fewer expirations, and the request-level metrics
discussed below.
ED-Unicast achieves the highest overall BE-score ($0.847$) by avoiding
the merge-induced expirations entirely, as in the uniform-demand case, i.e.,
its zero-merge strategy forgoes all coding gain.
SACM{++}-Pop, which uses the popularity-weighted pair scoring to exploit the
known Zipf distribution, does not consistently improve over SACM{++} on the
deadline metrics and does not outperform the policy network on the throughput, even though having
privileged access to the demand distribution unavailable to the deployed policy network.

\begin{figure}[t]
  \centering
  \includegraphics[width=\columnwidth]{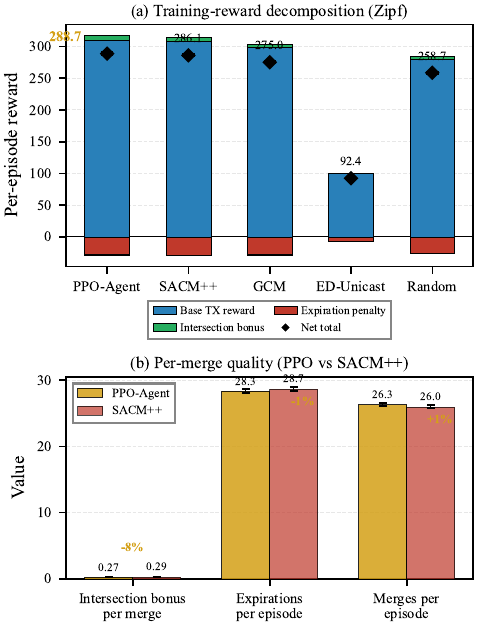}
  \caption{Training-reward decomposition (ID-default, Zipf demand, 50 holdout seeds).
    \textbf{(a)}~Per-episode reward stacked by component (5 methods):
    under near-full merging, PPO-Agent and SACM{++} achieve similar
    net rewards. By contrast, on the uniform track selective
    merging yields a clear PPO advantage.
    \textbf{(b)}~Per-merge quality (PPO vs.\ SACM{++}).}
  \label{fig:reward_decomp_zipf}
\end{figure}

\begin{figure}[t]
  \centering
  \includegraphics[width=\columnwidth]{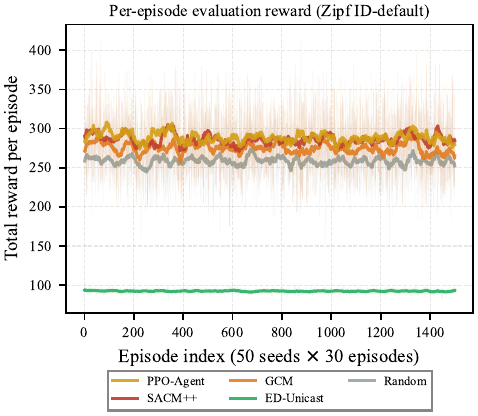}
  \caption{Per-episode total reward visualization on a 30-episode
    subset of the holdout protocol (50 holdout seeds $\times$ 30
    episodes; Zipf demand, training reward config). The 30-episode
    subset is used here purely to keep the per-episode plot legible;
    all numerical headline metrics in this paper come from the full
    50 seeds $\times$ 200 episodes evaluation protocol of
    Sec.~\ref{sec:experiments}. Faint lines show raw values; bold
    lines show the rolling mean (window $=30$).}
  \label{fig:eval_comparison_zipf}
\end{figure}

To enable direct cross-track comparison, Table~\ref{tab:zipf_sys_score}
reports the composite broadcast-efficiency score $\sigma$ for the three primary methods
across all Zipf regimes, computed from the same evaluation data using
$\sigma = H^{-1}\sum_t (U_t - \lambda E_t)$ (Section~\ref{sec:exp:metrics};
$H$ is the episode horizon, not the global Phase-3 step counter $T$ of
Sec.~\ref{sec:method:ppo}).
The policy network achieves the highest BE-score among the three methods in most
regimes, while in OOD-alpha0.6 the popularity-aware baseline SACM{++}-Pop
edges the policy network by $0.001$ (i.e., Pop $= 0.730$ vs.\ the policy network $= 0.729$), and in
OOD-alpha1.0 the two methods tie at $0.732$ (i.e., within 95\% CI of each
other). The largest advantages of the policy network occur under high cache density
(i.e., Curr-pcache0.40: $\sigma_{\text{PPO}} = 0.608$ vs.\
$\sigma_{\text{SACM++}} = 0.583$, $+4.3\%$)
and tight deadlines
(i.e., OOD-delay10: $\sigma_{\text{PPO}} = -0.426$ vs.\
$\sigma_{\text{SACM++}} = -0.449$, $+0.023$ absolute), i.e., the same observation is true for the variations of the cache fraction and deadline budgets as well.
As in the uniform track, the catalog-size axis (Curr-file60 / OOD-file120 / OOD-file150)
shows near-invariance ($\sigma \in [0.730, 0.733]$ for the policy network), i.e.,
consistent with the preserved per-packet caching probability under
the uniform-without-replacement ($p_c$) placement
(cf.\ Section~\ref{sec:res:ood_uniform}).

\begin{table*}[t]
\centering
\caption{Zipf demand, BE-score summary: PPO-Agent vs.~SACM{++} vs.~SACM{++}-Pop. Mean $\pm$ 95\% CI. Bold = best per regime; ties (within rounding to 3 decimal places) are bolded jointly.}
\label{tab:zipf_sys_score}
\begin{adjustbox}{max width=\textwidth}
\begin{tabular}{lccc}
\toprule
 & \multicolumn{3}{c}{BE-Score $\sigma$ (↑)} \\
\textbf{Regime} & PPO & SACM{++} & Pop \\
\midrule
ID-default ($\alpha{=}0.8$) & $\mathbf{0.732 \pm 0.002}$ & $0.728 \pm 0.002$ & $0.731 \pm 0.003$ \\
OOD-alpha0.6 & $0.729 \pm 0.003$ & $0.725 \pm 0.003$ & $\mathbf{0.730 \pm 0.003}$ \\
OOD-alpha1.0 & $\mathbf{0.732 \pm 0.002}$ & $0.728 \pm 0.003$ & $\mathbf{0.732 \pm 0.003}$ \\
OOD-alpha1.2 & $\mathbf{0.739 \pm 0.003}$ & $0.732 \pm 0.003$ & $0.733 \pm 0.002$ \\
OOD-mandelbrot & $\mathbf{0.732 \pm 0.002}$ & $0.727 \pm 0.003$ & $0.730 \pm 0.003$ \\
Curr-file60 (Z) & $\mathbf{0.730 \pm 0.003}$ & $0.728 \pm 0.003$ & $0.729 \pm 0.002$ \\
OOD-file120 (Z) & $\mathbf{0.731 \pm 0.003}$ & $0.727 \pm 0.003$ & $0.730 \pm 0.003$ \\
OOD-file150 (Z) & $\mathbf{0.733 \pm 0.003}$ & $0.728 \pm 0.003$ & $0.730 \pm 0.003$ \\
OOD-pcache0.20 (Z) & $\mathbf{0.829 \pm 0.002}$ & $0.826 \pm 0.002$ & $0.821 \pm 0.002$ \\
Curr-pcache0.40 (Z) & $\mathbf{0.608 \pm 0.003}$ & $0.583 \pm 0.004$ & $0.597 \pm 0.003$ \\
OOD-delay10 (Z) & $\mathbf{-0.426 \pm 0.003}$ & $-0.449 \pm 0.003$ & $-0.445 \pm 0.003$ \\
OOD-delay30 (Z) & $\mathbf{1.146 \pm 0.002}$ & $1.143 \pm 0.002$ & $1.143 \pm 0.002$ \\
\bottomrule
\end{tabular}
\end{adjustbox}
\end{table*}

Across the 11 non-ID Zipf conditions (Table~\ref{tab:zipf_ood}),
the policy network reduces the expirations per episode with respect to SACM{++} in every regime,
with the largest absolute reductions under the high-stress conditions, i.e.,
OOD-delay10 (the policy network $79.65$ vs.\ SACM{++} $81.30$, i.e., $-1.65$ expirations/ep)
and Curr-pcache0.40 (the policy network $38.04$ vs.\ SACM{++} $39.45$, i.e., $-1.41$ expirations/ep).
SACM{++}-Pop does not improve over SACM{++} on the OOD-delay10 stress regime, and improves only marginally on Curr-pcache0.40 (i.e., BE-score $0.597$ vs.\ $0.583$, broadcast-packet expiration ratio $0.408$ vs.\ $0.411$, expirations $39.41$ vs.\ $39.45$), i.e., the popularity-aware refinement does not deliver a consistent advantage under the stress.

\begin{table*}[t]
\centering
\caption{Zipf demand, OOD generalization: PPO-Agent vs.~SACM{++} vs.~SACM{++}-Pop. Bold = best per regime, assigned on unrounded internal means (so two cells with identical displayed precision may still rank differently).}
\label{tab:zipf_ood}
\begin{adjustbox}{max width=\textwidth}
\begin{tabular}{lcccccccccccc}
\toprule
 & \multicolumn{3}{c}{Miss Ratio (↓)} & \multicolumn{3}{c}{Served/Tx (↑)} & \multicolumn{3}{c}{Coding Gain (↑)} & \multicolumn{3}{c}{Exp/Episode (↓)} \\
\textbf{Regime} & PPO & SACM{++} & Pop & PPO & SACM{++} & Pop & PPO & SACM{++} & Pop & PPO & SACM{++} & Pop \\
\midrule
ID-default & $0.351 \pm 0.001$ & $0.351 \pm 0.001$ & $\mathbf{0.348 \pm 0.001}$ & $\mathbf{1.590 \pm 0.003}$ & $1.587 \pm 0.003$ & $1.571 \pm 0.002$ & $2.122 \pm 0.002$ & $2.131 \pm 0.002$ & $\mathbf{2.133 \pm 0.001}$ & $\mathbf{28.30 \pm 0.13}$ & $28.61 \pm 0.10$ & $28.37 \pm 0.13$ \\
OOD-alpha0.6 & $0.352 \pm 0.001$ & $0.352 \pm 0.001$ & $\mathbf{0.349 \pm 0.001}$ & $\mathbf{1.594 \pm 0.003}$ & $1.589 \pm 0.002$ & $1.577 \pm 0.003$ & $2.122 \pm 0.001$ & $2.131 \pm 0.001$ & $\mathbf{2.136 \pm 0.001}$ & $\mathbf{28.49 \pm 0.13}$ & $28.71 \pm 0.13$ & $28.57 \pm 0.13$ \\
OOD-alpha1.0 & $0.350 \pm 0.001$ & $0.350 \pm 0.001$ & $\mathbf{0.348 \pm 0.001}$ & $\mathbf{1.586 \pm 0.002}$ & $1.580 \pm 0.002$ & $1.566 \pm 0.003$ & $2.121 \pm 0.001$ & $2.129 \pm 0.001$ & $\mathbf{2.131 \pm 0.002}$ & $\mathbf{28.19 \pm 0.13}$ & $28.45 \pm 0.13$ & $28.22 \pm 0.16$ \\
OOD-alpha1.2 & $0.347 \pm 0.001$ & $0.348 \pm 0.001$ & $\mathbf{0.346 \pm 0.001}$ & $\mathbf{1.575 \pm 0.003}$ & $1.571 \pm 0.003$ & $1.560 \pm 0.003$ & $2.119 \pm 0.001$ & $2.127 \pm 0.001$ & $\mathbf{2.130 \pm 0.001}$ & $\mathbf{27.72 \pm 0.12}$ & $28.08 \pm 0.13$ & $27.97 \pm 0.13$ \\
OOD-mandelbrot & $0.350 \pm 0.001$ & $0.351 \pm 0.001$ & $\mathbf{0.348 \pm 0.001}$ & $\mathbf{1.587 \pm 0.003}$ & $1.580 \pm 0.003$ & $1.566 \pm 0.003$ & $2.120 \pm 0.001$ & $2.128 \pm 0.001$ & $\mathbf{2.131 \pm 0.001}$ & $\mathbf{28.19 \pm 0.13}$ & $28.44 \pm 0.14$ & $28.23 \pm 0.14$ \\
Curr-file60 & $0.351 \pm 0.001$ & $0.351 \pm 0.001$ & $\mathbf{0.349 \pm 0.001}$ & $\mathbf{1.590 \pm 0.003}$ & $1.584 \pm 0.002$ & $1.571 \pm 0.002$ & $2.121 \pm 0.001$ & $2.129 \pm 0.001$ & $\mathbf{2.131 \pm 0.001}$ & $\mathbf{28.36 \pm 0.12}$ & $28.52 \pm 0.12$ & $28.39 \pm 0.13$ \\
OOD-file120 & $0.351 \pm 0.001$ & $0.352 \pm 0.001$ & $\mathbf{0.349 \pm 0.001}$ & $\mathbf{1.593 \pm 0.003}$ & $1.590 \pm 0.003$ & $1.573 \pm 0.002$ & $2.122 \pm 0.001$ & $\mathbf{2.132 \pm 0.001}$ & $2.132 \pm 0.001$ & $\mathbf{28.40 \pm 0.14}$ & $28.70 \pm 0.14$ & $28.45 \pm 0.14$ \\
OOD-file150 & $0.351 \pm 0.001$ & $0.351 \pm 0.001$ & $\mathbf{0.349 \pm 0.001}$ & $\mathbf{1.594 \pm 0.003}$ & $1.587 \pm 0.002$ & $1.572 \pm 0.003$ & $2.122 \pm 0.002$ & $2.130 \pm 0.001$ & $\mathbf{2.133 \pm 0.001}$ & $\mathbf{28.37 \pm 0.14}$ & $28.59 \pm 0.13$ & $28.40 \pm 0.14$ \\
OOD-pcache0.20 & $\mathbf{0.265 \pm 0.001}$ & $0.266 \pm 0.001$ & $0.267 \pm 0.001$ & $1.294 \pm 0.002$ & $\mathbf{1.294 \pm 0.002}$ & $1.289 \pm 0.001$ & $2.043 \pm 0.001$ & $2.048 \pm 0.001$ & $\mathbf{2.049 \pm 0.001}$ & $\mathbf{18.23 \pm 0.10}$ & $18.30 \pm 0.11$ & $18.25 \pm 0.10$ \\
Curr-pcache0.40 & $\mathbf{0.408 \pm 0.001}$ & $0.411 \pm 0.001$ & $0.408 \pm 0.001$ & $\mathbf{1.964 \pm 0.003}$ & $1.936 \pm 0.003$ & $1.914 \pm 0.003$ & $2.234 \pm 0.002$ & $2.231 \pm 0.001$ & $\mathbf{2.238 \pm 0.001}$ & $\mathbf{38.04 \pm 0.12}$ & $39.45 \pm 0.16$ & $39.41 \pm 0.14$ \\
OOD-delay10 & $\mathbf{0.552 \pm 0.000}$ & $0.555 \pm 0.000$ & $0.555 \pm 0.000$ & $\mathbf{1.845 \pm 0.002}$ & $1.827 \pm 0.002$ & $1.801 \pm 0.002$ & $2.098 \pm 0.001$ & $2.108 \pm 0.001$ & $\mathbf{2.114 \pm 0.001}$ & $\mathbf{79.65 \pm 0.13}$ & $81.30 \pm 0.11$ & $81.59 \pm 0.12$ \\
OOD-delay30 & $0.183 \pm 0.001$ & $0.184 \pm 0.001$ & $\mathbf{0.181 \pm 0.001}$ & $\mathbf{1.479 \pm 0.002}$ & $1.477 \pm 0.002$ & $1.469 \pm 0.002$ & $2.126 \pm 0.001$ & $2.134 \pm 0.001$ & $\mathbf{2.136 \pm 0.002}$ & $\mathbf{10.63 \pm 0.07}$ & $10.73 \pm 0.08$ & $10.63 \pm 0.08$ \\
\bottomrule
\end{tabular}
\end{adjustbox}
\end{table*}

The complete per-method miss-ratio breakdown for all Zipf-demand
non-ID regimes is provided in Appendix
Table~\ref{tab:zipf_ood_all_methods_miss}.

\begin{table}[t]
\centering
\caption{PPO-Agent advantage over SACM{++} (Zipf demand). $\Delta = \text{PPO} - \text{SACM{++}}$.}
\label{tab:delta_zipf}
\begin{adjustbox}{max width=\columnwidth}
\begin{tabular}{lccccc}
\toprule
\textbf{Regime} & $\Delta$ BE-Score & $\Delta$ Miss Ratio & $\Delta$ Served/Tx & $\Delta$ Coding Gain & $\Delta$ Exp/Episode \\
\midrule
ID-default & $\mathbf{+0.004}$ & $\mathbf{-0.001}$ & $\mathbf{+0.004}$ & $-0.009$ & $\mathbf{-0.312}$ \\
OOD-alpha0.6 & $\mathbf{+0.004}$ & $\mathbf{-0.000}$ & $\mathbf{+0.006}$ & $-0.009$ & $\mathbf{-0.224}$ \\
OOD-alpha1.0 & $\mathbf{+0.005}$ & $\mathbf{-0.001}$ & $\mathbf{+0.006}$ & $-0.008$ & $\mathbf{-0.261}$ \\
OOD-alpha1.2 & $\mathbf{+0.007}$ & $\mathbf{-0.001}$ & $\mathbf{+0.003}$ & $-0.008$ & $\mathbf{-0.361}$ \\
OOD-mandelbrot & $\mathbf{+0.005}$ & $\mathbf{-0.001}$ & $\mathbf{+0.006}$ & $-0.007$ & $\mathbf{-0.243}$ \\
Curr-file60 & $\mathbf{+0.002}$ & $+0.000$ & $\mathbf{+0.007}$ & $-0.008$ & $\mathbf{-0.162}$ \\
OOD-file120 & $\mathbf{+0.004}$ & $\mathbf{-0.001}$ & $\mathbf{+0.003}$ & $-0.010$ & $\mathbf{-0.303}$ \\
OOD-file150 & $\mathbf{+0.005}$ & $\mathbf{-0.000}$ & $\mathbf{+0.007}$ & $-0.008$ & $\mathbf{-0.224}$ \\
OOD-pcache0.20 & $\mathbf{+0.003}$ & $\mathbf{-0.001}$ & $+0.000$ & $-0.005$ & $\mathbf{-0.061}$ \\
Curr-pcache0.40 & $\mathbf{+0.025}$ & $\mathbf{-0.003}$ & $\mathbf{+0.028}$ & $\mathbf{+0.003}$ & $\mathbf{-1.412}$ \\
OOD-delay10 & $\mathbf{+0.022}$ & $\mathbf{-0.003}$ & $\mathbf{+0.019}$ & $-0.010$ & $\mathbf{-1.648}$ \\
OOD-delay30 & $\mathbf{+0.003}$ & $\mathbf{-0.001}$ & $\mathbf{+0.002}$ & $-0.008$ & $\mathbf{-0.101}$ \\
\bottomrule
\end{tabular}
\end{adjustbox}
\end{table}

\begin{remark}[Paired uncertainty on Zipf deltas]
All deltas in Table~\ref{tab:delta_zipf} are computed on the same 50
shared holdout seeds.
Paired bootstrap 95\% CIs (10\,000 resamples) yield half-widths of
${\approx}0.002$--$0.004$ for BE-score,
${\approx}0.001$ for broadcast-packet expiration ratio, ${\approx}0.1$--$0.15$ for
expirations, and ${\approx}0.002$ for coding gain.
The BE-score, miss-ratio, and expirations deltas favoring PPO, and the
coding-gain deltas favoring SACM{++}, all have intervals that do not
cross zero; we report these as descriptive effect-size bounds rather
than formal significance tests.
\end{remark}

\paragraph{Request-level accounting (Zipf).}
Table~\ref{tab:request_level_zipf} reports the request-level metrics
(M8--M10) under Zipf demand, tracking each original arrival through
the queue aggregation as in the uniform analysis
(Section~\ref{sec:res:request_level}).
The interesting fact we observed is that the result contrasts with the uniform track, i.e.,
the policy network (PPO-Agent) achieves \emph{both} a higher request timely-throughput~$\eta_{\mathrm{req}}$
\emph{and} a lower request miss rate~$m_{\mathrm{req}}$ than SACM{++} in
11 of 12~regimes, so the request selection score~$\sigma_{\mathrm{req}}$
favors the policy network at \emph{every} tested penalty weight
$\lambda \in \{0.5,\,1,\,\ldots,\,10\}$ \emph{under Zipf demand}, i.e., no crossover is needed in this regime.
While on the uniform track
(Section~\ref{sec:res:request_level}), SACM{++}'s ${\sim}17\%$
higher~$\eta_{\mathrm{req}}$ required $\lambda \geq 2.24$ before
the policy network's lower miss rate became decisive.
Under Zipf demand, the near-full merging strategy of the policy network ($99.7\%$ merge
rate) matches or exceeds the throughput of the deterministic SACM family
while also maintaining lower deadline misses, i.e., a combination
not achievable under the selective-merge strategy optimal for the
uniform demand.
The largest advantages appear under high cache density
(Curr-pcache0.40: $\sigma_{\mathrm{req}} = 1.202$ vs.\ $1.129$,
$+6.5\%$) and tight deadlines
(OOD-delay10: $0.649$ vs.\ $0.551$, $+17.8\%$), i.e., the same observation is true for the variations of the cache fraction and deadline budgets as well,
while the sole near-tie is OOD-pcache0.20, where all methods converge to
similar performance.
Fig.~\ref{fig:pareto_request_zipf} visualizes this dominance, i.e.,
the policy network (PPO-Agent) sits strictly below and to the right of SACM{++} in
the $\eta_{\mathrm{req}}$--$m_{\mathrm{req}}$ plane, so
no throughput--compliance trade-off exists under Zipf demand.
Appendix~\ref{sec:appendix:req_lambda_zipf} presents the full
$\lambda$-sensitivity analysis.

\begin{figure}[t]
  \centering
  \includegraphics[width=\columnwidth]{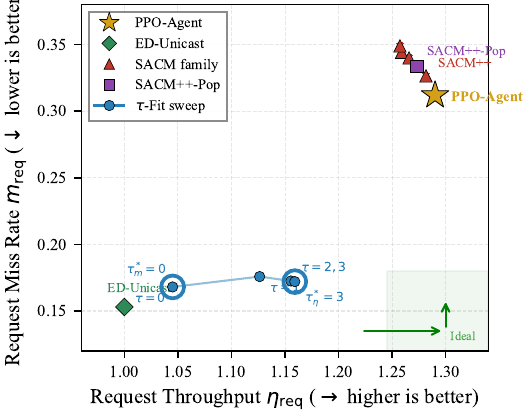}
  \caption{Request throughput~$\eta_{\mathrm{req}}$ ($\rightarrow$)
    vs.\ request miss rate~$m_{\mathrm{req}}$ ($\downarrow$) under
    Zipf demand. PPO-Agent achieves both higher throughput and lower
    miss rate than the displayed SACM{++} and SACM{++}-Pop
    comparators on this figure: no trade-off exists between PPO
    and these displayed comparators, unlike the uniform track
    (cf.\ Fig.~\ref{fig:pareto_request}); a request-level extension
    over GCM, SACM, SACM+, and the $\tau$-Fit family is left as
    future work.}
  \label{fig:pareto_request_zipf}
\end{figure}

\begin{table*}[t]
\centering
\caption{Request-level metrics (Zipf demand, $\alpha{=}0.8$).
  Each original arrival tracked through aggregation; completion and miss
  credited exactly once per request ID.
  Mean $\pm$ 95\% CI (50 seeds $\times$ 200 episodes).  Bold = best.}
\label{tab:request_level_zipf}
\begin{adjustbox}{max width=\textwidth}
\begin{tabular}{lcccccccccc}
\toprule
 & \multicolumn{3}{c}{$\eta_{\mathrm{req}}$ (↑)} & \multicolumn{3}{c}{$m_{\mathrm{req}}$ (↓)} & \multicolumn{3}{c}{$\sigma_{\mathrm{req}}\;(\lambda{=}1)$ (↑)} \\
\textbf{Regime} & PPO & SACM{++} & Pop & PPO & SACM{++} & Pop & PPO & SACM{++} & Pop \\
\midrule
ID-default ($\alpha{=}0.8$)
  & $\mathbf{1.290 \pm 0.002}$ & $1.282 \pm 0.002$ & $1.273 \pm 0.002$
  & $\mathbf{0.312 \pm 0.001}$ & $0.326 \pm 0.002$ & $0.334 \pm 0.002$
  & $\mathbf{0.978 \pm 0.002}$ & $0.955 \pm 0.002$ & $0.939 \pm 0.002$ \\
OOD-alpha0.6
  & $\mathbf{1.293 \pm 0.002}$ & $1.284 \pm 0.002$ & $1.275 \pm 0.002$
  & $\mathbf{0.313 \pm 0.002}$ & $0.327 \pm 0.002$ & $0.336 \pm 0.001$
  & $\mathbf{0.980 \pm 0.002}$ & $0.957 \pm 0.002$ & $0.939 \pm 0.002$ \\
OOD-alpha1.0
  & $\mathbf{1.288 \pm 0.002}$ & $1.279 \pm 0.002$ & $1.270 \pm 0.002$
  & $\mathbf{0.312 \pm 0.002}$ & $0.326 \pm 0.002$ & $0.333 \pm 0.002$
  & $\mathbf{0.976 \pm 0.002}$ & $0.953 \pm 0.002$ & $0.937 \pm 0.002$ \\
OOD-alpha1.2
  & $\mathbf{1.280 \pm 0.002}$ & $1.274 \pm 0.002$ & $1.267 \pm 0.002$
  & $\mathbf{0.309 \pm 0.001}$ & $0.323 \pm 0.001$ & $0.330 \pm 0.002$
  & $\mathbf{0.971 \pm 0.002}$ & $0.952 \pm 0.002$ & $0.937 \pm 0.002$ \\
OOD-mandelbrot
  & $\mathbf{1.289 \pm 0.002}$ & $1.281 \pm 0.002$ & $1.271 \pm 0.002$
  & $\mathbf{0.311 \pm 0.001}$ & $0.324 \pm 0.002$ & $0.332 \pm 0.002$
  & $\mathbf{0.978 \pm 0.003}$ & $0.957 \pm 0.002$ & $0.939 \pm 0.002$ \\
Curr-file60 (Z)
  & $\mathbf{1.291 \pm 0.002}$ & $1.281 \pm 0.002$ & $1.274 \pm 0.002$
  & $\mathbf{0.313 \pm 0.002}$ & $0.325 \pm 0.001$ & $0.333 \pm 0.002$
  & $\mathbf{0.979 \pm 0.003}$ & $0.956 \pm 0.002$ & $0.942 \pm 0.002$ \\
OOD-file120 (Z)
  & $\mathbf{1.292 \pm 0.002}$ & $1.284 \pm 0.002$ & $1.275 \pm 0.002$
  & $\mathbf{0.312 \pm 0.002}$ & $0.326 \pm 0.002$ & $0.334 \pm 0.002$
  & $\mathbf{0.979 \pm 0.002}$ & $0.958 \pm 0.002$ & $0.941 \pm 0.002$ \\
OOD-file150 (Z)
  & $\mathbf{1.292 \pm 0.002}$ & $1.282 \pm 0.002$ & $1.274 \pm 0.002$
  & $\mathbf{0.312 \pm 0.002}$ & $0.326 \pm 0.002$ & $0.334 \pm 0.002$
  & $\mathbf{0.979 \pm 0.003}$ & $0.956 \pm 0.002$ & $0.940 \pm 0.002$ \\
OOD-pcache0.20 (Z)
  & $1.116 \pm 0.001$ & $1.116 \pm 0.001$ & $\mathbf{1.119 \pm 0.001}$
  & $0.269 \pm 0.001$ & $0.270 \pm 0.002$ & $\mathbf{0.268 \pm 0.002}$
  & $0.848 \pm 0.001$ & $0.847 \pm 0.002$ & $\mathbf{0.852 \pm 0.002}$ \\
Curr-pcache0.40 (Z)
  & $\mathbf{1.504 \pm 0.002}$ & $1.484 \pm 0.002$ & $1.458 \pm 0.002$
  & $\mathbf{0.303 \pm 0.001}$ & $0.354 \pm 0.002$ & $0.383 \pm 0.002$
  & $\mathbf{1.202 \pm 0.002}$ & $1.129 \pm 0.002$ & $1.075 \pm 0.003$ \\
OOD-delay10 (Z)
  & $\mathbf{1.631 \pm 0.002}$ & $1.600 \pm 0.002$ & $1.570 \pm 0.002$
  & $\mathbf{0.982 \pm 0.002}$ & $1.048 \pm 0.002$ & $1.088 \pm 0.002$
  & $\mathbf{0.649 \pm 0.003}$ & $0.551 \pm 0.003$ & $0.482 \pm 0.003$ \\
OOD-delay30 (Z)
  & $\mathbf{1.146 \pm 0.001}$ & $1.142 \pm 0.001$ & $1.140 \pm 0.001$
  & $\mathbf{0.110 \pm 0.001}$ & $0.115 \pm 0.001$ & $0.118 \pm 0.001$
  & $\mathbf{1.036 \pm 0.001}$ & $1.027 \pm 0.001$ & $1.022 \pm 0.001$ \\
\bottomrule
\end{tabular}
\end{adjustbox}
\end{table*}

\FloatBarrier

\section{Ablations and Analysis}
\label{sec:ablation}

The training pipeline introduced in Section~\ref{sec:method} combines four components, i.e., the behavior-cloning warm start (BC), the graph-attention architecture (GraphAttn), the Expert Iteration distillation (ExIt), and the curriculum learning. This section isolates the contribution of each of these components by training five ablated variants, each with exactly one component removed, plus one combined removal (BC + ExIt), while evaluating all six of the models across the full eight-regime battery (i.e., ID-default plus the seven non-ID conditions) used in Sections~\ref{sec:results}.

\subsection{Ablation Setup}
\label{sec:ablation:setup}

\textbf{Variants.} Starting from the full model, we train five single-component ablations and one combined ablation:
\begin{enumerate}[label=(\roman*)]
  \item \textbf{w/o ExIt}: removes the Expert Iteration distillation loops (Sections~\ref{sec:method:exit}); PPO training proceeds without any online teacher distillation after the BC warm start.
  \item \textbf{w/o BC Warm-Start}: removes the rollout-improved behavior cloning phase (Section~\ref{sec:method:bc}); the policy head is randomly initialized before the value warm-up and PPO stages.
  \item \textbf{w/o Graph-Attention}: replaces the graph-pair encoder with a flat two-layer MLP of matching parameter count (target ${\sim}1.73$M parameters, matched to the Track~A full model; see Appendix~\ref{sec:appendix:arch_details}) that receives the concatenated per-request and per-pair feature tensors as input; the $\cmark$/$\xmark$ component columns below capture the presence of the graph-attentive pair encoder.
  \item \textbf{w/o BC + ExIt}: removes both the BC warm start and the ExIt distillation loops, leaving only random initialization followed by MaskablePPO with curriculum.
  \item \textbf{w/o Curriculum}: removes all three curriculum stages; the policy trains directly on the full-difficulty environment ($N=100$, $p_c=0.30$) from the start.
\end{enumerate}
\begin{table}[t]
\centering
\caption{Ablation study: component legend. Each variant removes one or two training components from the full PPO-Agent pipeline.}
\label{tab:ablation_legend}
\begin{tabular}{lcccc}
\toprule
\textbf{Variant} & \textbf{BC} & \textbf{Graph-Attn} & \textbf{ExIt} & \textbf{Curriculum} \\
\midrule
Full Model & \checkmark & \checkmark & \checkmark & \checkmark \\
w/o ExIt & \checkmark & \checkmark & --- & \checkmark \\
w/o BC Warm-Start & --- & \checkmark & \checkmark & \checkmark \\
w/o Graph-Attention & \checkmark & --- (MLP) & \checkmark & \checkmark \\
w/o BC + ExIt & --- & \checkmark & --- & \checkmark \\
w/o Curriculum & \checkmark & \checkmark & \checkmark & --- \\
\bottomrule
\end{tabular}
\end{table}

Each of the variants is trained with 4 independent random seeds, while the model selection follows the same robust-advantage criterion used for the full model (Section~\ref{sec:method:selection}), i.e., using 50 validation seeds withheld from the evaluation. All of the reported numbers use the same 50 \emph{evaluation} holdout seeds ($\times\,200$ episodes/seed = 10,000 episodes per method per regime) shared with Section~\ref{sec:results}; each ablation variant is itself trained independently with its own 4 training seeds, however, the evaluation-set seeds and the per-episode contexts are common across all of the variants and the full model, thereby making the comparison fair.

Consistent with Section~\ref{sec:exp:metrics}, the \emph{primary} metrics are the broadcast-packet expiration ratio ($\rho$, $\downarrow$) and the distinct file-identity coverage ($\delta$, $\uparrow$). Broadcast-efficiency metrics (BE-Score~$\sigma$, served/tx, coding gain, expirations) are reported for completeness, the request-level family ($\eta_{\mathrm{req}}, m_{\mathrm{req}}, \sigma_{\mathrm{req}}$) is reported as a \emph{supplementary} diagnostic (the displayed request-level tables compare PPO only to a subset of the baselines; full-baseline coverage is left as future work, see Sec.~\ref{sec:exp:metrics}), and diagnostic metrics (merge rate, opportunity rate, reward per step) are reported only as policy-behavior summaries.

\subsection{Component Contributions at ID-Default}
\label{sec:ablation:components}

Table~\ref{tab:ablation_id} presents the ablation results on the ID-default regime, while Table~\ref{tab:ablation_delta_id} summarizes the per-component impact as signed deltas with respect to the full model. The largest single-component ID-default reliability degradation comes from removing the \emph{curriculum} ($\Delta\rho{=}{+}0.044$, $\Delta\varepsilon{=}{+}4.46$); the \emph{ExIt} is the second-largest single-component ID-default contributor ($\Delta\rho{=}{+}0.041$, $\Delta\varepsilon{=}{+}3.49$), however, it is the most uniformly large across the OOD-ablation battery (Table~\ref{tab:ablation_ood_miss}), which is why we discuss it first in the per-component narrative below. We then examine how the demand-centric ($\delta$) and the request-level ($\sigma_{\mathrm{req}}$) metrics give complementary, sometimes contrasting, views of the component contributions. The reward per step is reported as a training-signal diagnostic, however, we do not use it for ranking, because it is the shaped training surrogate rather than the paper's target communications-metric family. The reward definition and weights are held fixed across the rows of Table~\ref{tab:ablation_id}, thereby making the Reward/Step values directly comparable within this table; the appendix paragraph ``Reward per step (M13)'' notes only that the comparability fails across the reward \emph{definitions}, i.e., not across the ablation rows reported here.

\begin{table*}[t]
\centering
\caption{Ablation study, ID-default. 50 holdout seeds $\times$ 200 episodes. Bold = best per column among methodology-relevant columns; the Reward/Step column is the shaped training surrogate and is intentionally not bold-ranked (see footnote $^\dagger$). Primary metrics: $\rho$, $\delta$; supplementary: $\sigma_{\mathrm{req}}$.}
\label{tab:ablation_id}
\begin{adjustbox}{max width=\textwidth}
\begin{tabular}{lcccccccccccc}
\toprule
\textbf{Variant} & \textbf{Reward/Step}$^{\dagger}$ ($\uparrow$) & \textbf{BE-Score $\sigma$} ($\uparrow$) & \textbf{Miss Ratio $\rho$} ($\downarrow$) & \textbf{Coverage $\delta$} ($\uparrow$) & \textbf{Served/Tx} ($\uparrow$) & \textbf{Coding Gain} ($\uparrow$) & \textbf{Exp/Episode} ($\downarrow$) & \textbf{Merge Rate} ($\uparrow$) & \textbf{Opp Rate} ($\uparrow$) & $\eta_{\mathrm{req}}$ ($\uparrow$) & $m_{\mathrm{req}}$ ($\downarrow$) & $\sigma_{\mathrm{req}}$ ($\uparrow$) \\
\midrule
Full Model & $1.092 \pm 0.002$ & $\mathbf{0.976 \pm 0.002}$ & $\mathbf{0.208 \pm 0.001}$ & $0.824 \pm 0.001$ & $1.323 \pm 0.002$ & $2.162 \pm 0.002$ & $14.17 \pm 0.07$ & $0.318 \pm 0.001$ & $0.876 \pm 0.002$ & $1.069 \pm 0.001$ & $0.229 \pm 0.001$ & $0.839 \pm 0.001$ \\
w/o ExIt & $1.045 \pm 0.002$ & $0.911 \pm 0.002$ & $0.249 \pm 0.001$ & $0.806 \pm 0.001$ & $1.364 \pm 0.002$ & $2.165 \pm 0.002$ & $17.66 \pm 0.11$ & $0.373 \pm 0.002$ & $0.839 \pm 0.002$ & $1.106 \pm 0.001$ & $0.267 \pm 0.002$ & $0.839 \pm 0.002$ \\
w/o BC Warm-Start & $1.058 \pm 0.002$ & $0.936 \pm 0.002$ & $0.228 \pm 0.001$ & $0.829 \pm 0.001$ & $1.329 \pm 0.002$ & $2.144 \pm 0.002$ & $15.16 \pm 0.08$ & $0.333 \pm 0.002$ & $0.864 \pm 0.002$ & $1.093 \pm 0.001$ & $0.227 \pm 0.001$ & $\mathbf{0.866 \pm 0.002}$ \\
w/o Graph-Attention & $1.089 \pm 0.002$ & $0.957 \pm 0.002$ & $0.221 \pm 0.001$ & $0.821 \pm 0.001$ & $1.336 \pm 0.002$ & $\mathbf{2.203 \pm 0.002}$ & $14.94 \pm 0.09$ & $0.322 \pm 0.001$ & $0.866 \pm 0.002$ & $1.080 \pm 0.001$ & $0.235 \pm 0.002$ & $0.845 \pm 0.001$ \\
w/o BC + ExIt & $0.984 \pm 0.002$ & $0.884 \pm 0.002$ & $0.221 \pm 0.001$ & $\mathbf{0.834 \pm 0.001}$ & $1.234 \pm 0.002$ & $2.081 \pm 0.002$ & $\mathbf{13.95 \pm 0.09}$ & $0.238 \pm 0.002$ & $\mathbf{0.908 \pm 0.002}$ & $1.076 \pm 0.001$ & $\mathbf{0.215 \pm 0.001}$ & $0.861 \pm 0.001$ \\
w/o Curriculum & $1.076 \pm 0.002$ & $0.949 \pm 0.002$ & $0.252 \pm 0.001$ & $0.801 \pm 0.001$ & $\mathbf{1.431 \pm 0.002}$ & $2.157 \pm 0.002$ & $18.63 \pm 0.09$ & $\mathbf{0.476 \pm 0.002}$ & $0.782 \pm 0.003$ & $\mathbf{1.118 \pm 0.001}$ & $0.278 \pm 0.001$ & $0.840 \pm 0.001$ \\
\bottomrule
\end{tabular}
\end{adjustbox}
\smallskip
{\footnotesize $^\dagger$Reward/Step is the mean \emph{shaped training reward} per step. The reward definition and weights (Section~\ref{sec:env:reward}) are held \emph{fixed} across the ablation rows reported here, so Reward/Step values \emph{are} directly comparable within this table; it is reported as a diagnostic only because it is the shaped training surrogate rather than the paper's target communications-metric family. Cross-table comparability fails when the reward definition or its weights themselves change.}
\end{table*}

\begin{table*}[t]
\centering
\caption{Ablation study: change vs.\ Full Model (ID-default). Sign convention: for minimization metrics ($\rho$ = Miss Ratio, Exp/Episode, $m_{\mathrm{req}}$) positive $\Delta$ denotes degradation of the ablated variant relative to the Full Model; for maximization metrics ($\delta$ = Coverage, $\sigma$ = BE-Score, Served/Tx, Gain, $\eta_{\mathrm{req}}$, $\sigma_{\mathrm{req}}$) positive $\Delta$ denotes improvement. $\Delta$ Reward is shown for completeness as a training-surrogate diagnostic and is not used for ranking.}
\label{tab:ablation_delta_id}
\begin{adjustbox}{max width=\textwidth}
\begin{tabular}{l rrrrrrrrrr}
\toprule
\textbf{Variant} & $\Delta$ \textbf{Reward}$^{\dagger}$ & $\Delta$ \textbf{BE-Score} & $\Delta$ \textbf{Miss Ratio} & $\Delta$ \textbf{Coverage $\delta$} & $\Delta$ \textbf{Served/Tx} & $\Delta$ \textbf{Gain} & $\Delta$ \textbf{Exp/Ep} & $\Delta$ $\eta_{\mathrm{req}}$ & $\Delta$ $m_{\mathrm{req}}$ & $\Delta$ $\sigma_{\mathrm{req}}$ \\
\midrule
w/o ExIt & $-0.047$ ($-$4.3\%) & $-0.065$ ($-$6.7\%) & $+0.041$ (+19.9\%) & $-0.018$ ($-$2.2\%) & $+0.041$ (+3.1\%) & $+0.003$ (+0.1\%) & $+3.49$ (+24.6\%) & $+0.038$ (+3.5\%) & $+0.038$ (+16.7\%) & $-0.000$ ($-$0.1\%) \\
w/o BC Warm-Start & $-0.034$ ($-$3.1\%) & $-0.040$ ($-$4.1\%) & $+0.020$ (+9.7\%) & $+0.005$ (+0.6\%) & $+0.006$ (+0.5\%) & $-0.018$ ($-$0.8\%) & $+0.99$ (+7.0\%) & $+0.024$ (+2.3\%) & $-0.003$ ($-$1.1\%) & $+0.027$ (+3.2\%) \\
w/o Graph-Attention & $-0.003$ ($-$0.3\%) & $-0.019$ ($-$1.9\%) & $+0.013$ (+6.3\%) & $-0.003$ ($-$0.3\%) & $+0.013$ (+0.9\%) & $+0.041$ (+1.9\%) & $+0.78$ (+5.5\%) & $+0.011$ (+1.0\%) & $+0.006$ (+2.6\%) & $+0.005$ (+0.6\%) \\
w/o BC + ExIt & $-0.108$ ($-$9.9\%) & $-0.092$ ($-$9.5\%) & $+0.013$ (+6.3\%) & $+0.010$ (+1.2\%) & $-0.090$ ($-$6.8\%) & $-0.080$ ($-$3.7\%) & $-0.22$ ($-$1.5\%) & $+0.007$ (+0.6\%) & $-0.014$ ($-$6.2\%) & $+0.021$ (+2.5\%) \\
w/o Curriculum & $-0.016$ ($-$1.5\%) & $-0.027$ ($-$2.7\%) & $+0.044$ (+21.1\%) & $-0.023$ ($-$2.8\%) & $+0.107$ (+8.1\%) & $-0.005$ ($-$0.2\%) & $+4.46$ (+31.5\%) & $+0.049$ (+4.6\%) & $+0.049$ (+21.4\%) & $+0.000$ (+0.0\%) \\
\bottomrule
\end{tabular}
\end{adjustbox}
\end{table*}
\FloatBarrier

\textbf{Expert Iteration / DAgger Distillation (largest cross-regime $\rho$-impact).}
The ExIt has the largest single-component effect on the deadline compliance, i.e., removing it raises the broadcast-packet expiration ratio by $+19.9\%$ and the expirations by $+24.6\%$ at ID-default. This pattern persists and amplifies across the non-ID regimes as well: at OOD-delay30, removing the ExIt raises the broadcast-packet expiration ratio by $+37.6\%$ (relative), and at Curr-pcache0.40, by $+23.9\%$. The pattern is consistent with the hypothesis that the ExIt phase reinforces the deferral of merges in favor of the urgent unicasts under the deadline pressure; this is a mechanism interpretation rather than a directly measured causal attribution, however, it is consistent with the observed merge-rate and expiration patterns across the regimes.

The combined removal (w/o BC + ExIt) exhibits a clear interaction effect. The interesting fact we observed is that removing both the BC and the ExIt raises the broadcast-packet expiration ratio by only $+6.3\%$, i.e., less than removing the ExIt alone ($+19.9\%$) or the BC alone ($+9.7\%$). This apparent paradox is explained by the emergent ultra-conservative policy, i.e., without either the warm start or the distillation, the agent converges to the lowest merge rate in the study ($0.238$, vs.\ $0.318$ for the full model) and the highest opportunity rate ($0.908$). The policy largely avoids the coded transmissions, thereby achieving the fewest expirations of any variant ($13.95$, even fewer than the full model's $14.17$), even though at the cost of substantially lower throughput ($-6.8\%$ served/tx). The BC + ExIt combination is therefore \emph{not} complementary in the additive sense; removing both drives the policy into an ultra-conservative basin that trades the throughput for the deadline safety, which is a qualitatively different failure mode with respect to removing either component alone.

\textbf{BC Warm-Start (moderate, OOD-amplified).}
Removing the BC phase costs $+9.7\%$ broadcast-packet expiration ratio and $+7.0\%$ expirations at ID-default. The impact amplifies under the distribution shift, i.e., at Curr-pcache0.40 (i.e., high cache fraction), removing the BC costs $+2.6$~pp broadcast-packet expiration ratio with respect to the full model. Without the rollout-improved warm start, the PPO must discover the selective merge strategy (i.e., the core behavior described in Section~\ref{sec:res:behavior}) entirely from scratch from a random initialization, while the resulting policy converges to a suboptimal strategy with moderately higher merge rate ($0.333$ vs.\ $0.318$), which is consistent with insufficient initial structure for the selective merging.

\textbf{Graph-Attention (smallest single-component impact).}
Removing the graph-attention and replacing it with a flat MLP produces the smallest single-component reliability degradation, i.e., $+6.3\%$ broadcast-packet expiration ratio ($0.221$ vs.\ $0.208$) and $+5.5\%$ expirations ($14.94$ vs.\ $14.17$) at ID-default, while the throughput is essentially unchanged ($+0.9\%$ served/tx). The coding gain is marginally \emph{higher} without the graph-attention ($2.203$ vs.\ $2.162$, $+1.9\%$), which suggests that the MLP replacement can exploit the simple pair structures, even though it lacks the broader relational reasoning that helps the graph-attention coordinate the merge decisions across the full request queue. The graph-attention encoder's contribution is therefore not to the raw coding efficiency but to the reliability-aware coordination that lowers the deadline violations.

\subsection{Curriculum Learning and the Reliability--Throughput Trade-Off}
\label{sec:ablation:curriculum}

The curriculum result is the most counter-intuitive finding in the ablation study and warrants extended discussion. The interesting fact we observed is that removing the curriculum \emph{improves} the throughput by $+8.1\%$ served/tx, even though it \emph{degrades} the tabulated Reward/Step by $-1.5\%$ ($1.076$ vs.\ $1.092$, Table~\ref{tab:ablation_id}), while simultaneously \emph{degrading} the broadcast-packet expiration ratio by $+21.1\%$ and the expirations by $+31.5\%$, i.e., the largest single-component reliability degradation we observed.

Without the curriculum, the policy trains directly on the full-difficulty environment ($N=100$, $p_c=0.30$) from the first training step. The per-broadcast packet-set reward favors the coded transmissions (i.e., a merge with $U_t{=}|f_{\mathrm{mg}}|$ yields a higher local $U_t$ contribution than a unicast with $U_t{=}1$), while this local signal is so dominant that the PPO quickly learns to merge aggressively. The w/o Curriculum variant achieves a merge rate of $0.476$, compared to $0.318$ for the full model, i.e., a $50\%$ increase in the merging frequency. Aggressive merging raises the Served/Tx (i.e., the packet-set count $\mu$) and the \emph{local} per-step $U_t$ incentive (i.e., coded broadcasts contribute $|f_{\mathrm{mg}}| \ge 2$ to $U_t$ per slot), however, the same aggressive strategy causes substantially more expirations, while the additional $-E_t$ penalty more than offsets the local $U_t$ gain in the realized episode-average Reward/Step (which \emph{decreases} by $-1.5\%$, $1.076$ vs.\ $1.092$, Table~\ref{tab:ablation_id}); without the deadline-conscious restraint, the policy merges even the pairs whose deadline is too tight to benefit from the additional encoded content before it expires.

With the curriculum ($N=60 \to 80 \to 100$, $p_c=0.50 \to 0.40 \to 0.30$), the agent first learns in the easy environments where the merging is abundant and the deadlines are rarely an issue, i.e., the large cache fractions create many high-quality pairs ($|\mathcal{S}_i \cap \mathcal{S}_j|$ is high), while the small catalog ($N=60$) means the requests rarely conflict. In this regime, aggressive merging is nearly always the right strategy, while the policy internalizes the coded-XOR structure quickly. As the difficulty increases, the agent encounters progressively more deadline pressure, i.e., more requests per cache slot, fewer high-quality pairs, and tighter cache capacity. The pattern is consistent with the interpretation that the curriculum exposes the agent to the scenarios of \emph{where not to merge}, thereby gradually shifting the learned policy toward the $31.8\%$ selective merge rate that characterizes the full model's behavior (Section~\ref{sec:res:behavior}); the ablation establishes that removing the curriculum degrades the selective merging, even though it does not isolate the curriculum's mechanism directly.

\begin{remark}
The curriculum learning in this setting is not primarily a sample-efficiency technique, i.e., the w/o Curriculum variant converges to a high-reward policy quickly and it does not fail to converge. The curriculum shapes the \emph{qualitative character} of the learned policy, i.e., the gradual introduction of the deadline pressure during the training produces a deadline-conscious selective merge strategy that cannot be recovered by training directly on the hard environment, even though given the same total sample budget.
\end{remark}

This finding is consistent with the curriculum learning theory~\cite{bengio2009curriculum}, i.e., starting from an easy distribution guides the learner toward a solution basin associated with the generalizable structure (i.e., selective merging), rather than the locally optimal but brittle basin associated with the aggressive merging. For the coded caching, the \emph{reliability--throughput} trade-off is the main axis along which the two basins differ.

\subsection{Demand-Centric and Request-Level Perspectives}
\label{sec:ablation:demand}

The broadcast-efficiency analysis above identifies the full model as the best-performing variant on the broadcast-packet expiration ratio ($\rho$) and the BE-Score ($\sigma$). The problem, however, is that the formal-taxonomy primary metrics ($\rho$, $\delta$; Sec.~\ref{sec:exp:metrics}) and the supplementary request-level family give a more nuanced picture, i.e., including the ranking reversals on $\delta$ and on $\sigma_{\mathrm{req}}$ that challenge the broadcast-efficiency narrative.

\textbf{Ranking reversal on distinct file-identity coverage ($\delta$).}
The full model is \emph{not} the best on the distinct file-identity coverage, i.e., the w/o BC + ExIt variant achieves $\delta = 0.834$, thereby surpassing the full model ($\delta = 0.824$) by $+1.2\%$ (Table~\ref{tab:ablation_id}), while the w/o BC Warm-Start variant ranks second ($\delta = 0.829$, $+0.6\%$). This reversal is explained by the relationship between the merge rate and the unique-demand coverage, i.e., the broadcast-level expiration ratio $\rho$ uses the packet-set XOR degree $U_t = |f_{\mathrm{mg}}|$ in its denominator, thereby aggressive merging inflates the denominator. The distinct file-identity coverage~$\delta$ instead counts each file identity \emph{at most once} per episode, while crediting only the first successful delivery. The w/o BC + ExIt variant's ultra-conservative policy (i.e., merge rate $0.238$ vs.\ $0.318$) carries a smaller packet-set per transmission, even though it avoids the redundant retransmission of already-covered file identities, thereby yielding better distinct file-identity coverage. This variant also achieves the fewest expirations ($13.95$) and the lowest request-level miss rate ($m_{\mathrm{req}} = 0.215$, $-6.2\%$ vs.\ full model).

\textbf{Ranking reversal on request-level selection score ($\sigma_{\mathrm{req}}$).}
On the composite request-level metric $\sigma_{\mathrm{req}} = \eta_{\mathrm{req}} - m_{\mathrm{req}}$, the full model ranks fifth out of the six variants ($\sigma_{\mathrm{req}} = 0.839$), while the w/o BC Warm-Start variant achieves the best score ($\sigma_{\mathrm{req}} = 0.866$, $+3.2\%$), followed by the w/o BC + ExIt ($0.861$, $+2.5\%$). The request-level metrics show a throughput--accuracy tradeoff, i.e., removing the curriculum yields the highest per-step request timely-throughput ($\eta_{\mathrm{req}} = 1.118$) but the worst per-step request miss rate ($0.278$), while removing the BC achieves a more favorable balance ($\eta_{\mathrm{req}} = 1.093$, $m_{\mathrm{req}} = 0.227$). This suggests that the BC warm-start introduces a broadcast-efficiency bias, i.e., the BC training objective maximizes the broadcast-level packet-set XOR degree, which biases the initialized policy toward aggressive scheduling that helps the broadcast-efficiency metrics, even though it impairs the request-level quality on the per-step $\eta_{\mathrm{req}}/m_{\mathrm{req}}$ scores.

\textbf{The merge-rate paradox.}
These reversals reflect a tension between the metric families, i.e., mediated by the merge rate. Aggressive merging (i.e., w/o Curriculum: merge rate $0.476$) maximizes the served packet-set count per transmission ($\mu = 1.431$) and the per-step request timely-throughput ($\eta_{\mathrm{req}} = 1.118$), even though it produces the worst file-identity miss ratio ($1-\delta = 0.199$) and the worst per-step request miss rate ($0.278$). Conservative merging (i.e., w/o BC + ExIt: merge rate $0.238$) achieves the opposite, i.e., the worst broadcast-efficiency ($\sigma = 0.884$), even though the best file-identity coverage ($\delta = 0.834$) and the best per-step request miss rate ($0.215$). The full model's selective merge rate ($0.318$) occupies a middle ground that optimizes the broadcast-efficiency while sacrificing some file-identity and request-level performance on the per-step scores. This multi-objective tension is an inherent property of the coded caching with deadlines, i.e., a coded broadcast contributes $|f_{\mathrm{mg}}|$ packet identities to the per-slot count $U_t$ (and credits multiple original arrivals at the request level), however, it may delay individual requests waiting for the merge opportunities, thereby increasing the per-step request miss count $|M_t|$.

\subsection{OOD Robustness of Ablation Findings}
\label{sec:ablation:ood}

Table~\ref{tab:ablation_ood_miss} summarizes the per-component deltas across all the eight regimes for the two primary metrics, i.e., the broadcast-packet expiration ratio ($\rho$) and the distinct file-identity coverage ($\delta$), alongside the supplementary request selection score ($\sigma_{\mathrm{req}}$). The miss-ratio panel confirms that every one of the components reduces the broadcast-packet expiration ratio (i.e., all $\Delta\rho > 0$), while the relative ordering is consistent across the conditions. The demand-centric and the request-level panels give a richer picture, i.e., the ranking reversals identified at ID-default persist across most of the regimes, with exceptions under the extreme deadline pressure.

\begin{table*}[t]
\centering
\caption{Ablation study: $\Delta$ vs.\ Full Model across all regimes for the two primary metrics and one supplementary metric. Miss ratio $\rho$ (primary, positive $=$ worse); file-identity coverage $\delta$ (primary, negative $=$ worse); request selection score $\sigma_{\mathrm{req}}$ (supplementary, negative $=$ worse).}
\label{tab:ablation_ood_miss}
\begin{adjustbox}{max width=\textwidth}
\begin{tabular}{lcccccccc}
\toprule
\textbf{Variant} & \textbf{ID-default} & \textbf{OOD-delay10} & \textbf{OOD-delay30} & \textbf{Curr-file60} & \textbf{OOD-file120} & \textbf{OOD-file150} & \textbf{OOD-pcache0.20} & \textbf{Curr-pcache0.40} \\
\midrule
\multicolumn{9}{l}{\emph{$\Delta$ Miss Ratio $\rho$ ($\uparrow$ = degradation)}} \\
\midrule
w/o ExIt & $+0.041$ & $+0.018$ & $+0.024$ & $+0.041$ & $+0.042$ & $+0.041$ & $+0.024$ & $+0.057$ \\
w/o BC Warm-Start & $+0.020$ & $+0.008$ & $+0.012$ & $+0.020$ & $+0.020$ & $+0.019$ & $+0.015$ & $+0.026$ \\
w/o Graph-Attention & $+0.013$ & $+0.004$ & $+0.007$ & $+0.013$ & $+0.013$ & $+0.013$ & $+0.002$ & $+0.031$ \\
w/o BC + ExIt & $+0.013$ & $+0.013$ & $+0.006$ & $+0.013$ & $+0.013$ & $+0.012$ & $+0.000$ & $+0.036$ \\
w/o Curriculum & $+0.044$ & $+0.017$ & $+0.029$ & $+0.043$ & $+0.044$ & $+0.043$ & $+0.032$ & $+0.045$ \\
\midrule
\multicolumn{9}{l}{\emph{$\Delta$ Distinct File-Identity Coverage $\delta$ ($\uparrow$ = improvement over Full Model)}} \\
\midrule
w/o ExIt & $-0.018$ & $-0.004$ & $-0.012$ & $-0.018$ & $-0.018$ & $-0.016$ & $-0.010$ & $-0.032$ \\
w/o BC Warm-Start & $+0.005$ & $+0.006$ & $+0.001$ & $+0.006$ & $+0.005$ & $+0.007$ & $+0.003$ & $+0.001$ \\
w/o Graph-Attention & $-0.003$ & $-0.005$ & $-0.001$ & $-0.002$ & $-0.002$ & $-0.001$ & $+0.003$ & $-0.014$ \\
w/o BC + ExIt & $+0.010$ & $-0.004$ & $+0.005$ & $+0.011$ & $+0.011$ & $+0.012$ & $+0.011$ & $+0.003$ \\
w/o Curriculum & $-0.023$ & $+0.015$ & $-0.017$ & $-0.022$ & $-0.023$ & $-0.022$ & $-0.017$ & $-0.027$ \\
\midrule
\multicolumn{9}{l}{\emph{$\Delta$ Request Selection Score $\sigma_{\mathrm{req}}$ ($\uparrow$ = improvement over Full Model)}} \\
\midrule
w/o ExIt & $-0.000$ & $-0.004$ & $+0.002$ & $-0.000$ & $+0.000$ & $+0.002$ & $-0.002$ & $-0.005$ \\
w/o BC Warm-Start & $+0.027$ & $+0.029$ & $+0.011$ & $+0.029$ & $+0.028$ & $+0.030$ & $+0.017$ & $+0.025$ \\
w/o Graph-Attention & $+0.005$ & $-0.020$ & $+0.005$ & $+0.006$ & $+0.006$ & $+0.007$ & $+0.006$ & $+0.003$ \\
w/o BC + ExIt & $+0.021$ & $-0.020$ & $+0.004$ & $+0.023$ & $+0.023$ & $+0.025$ & $+0.015$ & $+0.033$ \\
w/o Curriculum & $+0.000$ & $+0.091$ & $+0.004$ & $+0.003$ & $+0.001$ & $+0.002$ & $-0.005$ & $+0.004$ \\
\bottomrule
\end{tabular}
\end{adjustbox}
\end{table*}

We summarize several cross-regime patterns below.

\textbf{Curriculum}: the miss-ratio penalty from removing the curriculum is the largest or second-largest across nearly all of the regimes. The degradation is most severe at OOD-delay30 ($+2.9$~pp, $+45.4\%$ relative), where the relaxed deadline horizon most clearly separates the policies that learned the selective merging from those that merge indiscriminately. At Curr-pcache0.40 ($+4.5$~pp, $+18.7\%$) and OOD-pcache0.20 ($+3.2$~pp, $+18.4\%$), the high density of the merge candidates increases the cost of the aggressive merging strategy that arises without the curriculum training.

\textbf{ExIt}: the reliability degradation from removing the ExIt is substantial across all of the regimes, with the absolute miss-ratio increases ranging from $+0.018$ (i.e., OOD-delay10) to $+0.057$ (i.e., Curr-pcache0.40). In relative terms, the impact is largest at OOD-delay30 ($+37.6\%$), where the longer deadline horizons give the most scope for the deadline-aware merge-deferral behavior associated with the ExIt. At Curr-pcache0.40 ($+23.9\%$), the high cache fraction creates many merge opportunities where the ExIt's restraint matters most.

\textbf{BC Warm-Start}: the OOD amplification noted at ID-default is most pronounced at Curr-pcache0.40 ($+2.6$~pp broadcast-packet expiration ratio, $+11.1\%$ relative), where the large cache fraction creates many merge candidates that require the structured initialization from the BC to navigate effectively. The BC warm start is disproportionately useful under the stress.

\textbf{Graph-Attention}: the smallest single-component miss-ratio increase across most of the regimes, with the exception of Curr-pcache0.40 where the graph-attention removal causes a larger degradation ($+3.1$~pp, $+13.2\%$ relative). Under the high cache-fraction regime, the graph-attention encoder's ability to reason about the pair quality matters more as the number of viable merge candidates grows.

\textbf{Demand-centric OOD patterns ($\delta$ panel).}
The ranking reversal on the distinct file-identity coverage persists across the regimes, i.e., the w/o BC + ExIt achieves higher $\delta$ than the full model in seven of the eight conditions ($\Delta\delta > 0$), with the sole exception being OOD-delay10 ($\Delta\delta = -0.004$). Under the extreme deadline pressure of OOD-delay10, the w/o Curriculum variant (i.e., which merges most aggressively, merge rate $0.637$) instead achieves the highest $\delta$ ($+0.015$ vs.\ full model), since maximizing the immediate throughput becomes essential for serving the unique files before they expire. The w/o BC Warm-Start variant shows consistently positive $\Delta\delta$ across all of the regimes, thereby confirming that the BC-initialized policy's broadcast-efficiency bias is a persistent, regime-independent phenomenon.

\textbf{Request-level OOD patterns ($\sigma_{\mathrm{req}}$ panel).}
The w/o BC Warm-Start variant achieves positive $\Delta\sigma_{\mathrm{req}}$ across all of the eight regimes, thereby confirming it as the most consistent variant on the request-level selection score. The advantage is largest at OOD-delay10 ($+0.029$) and Curr-file60 ($+0.029$). The interesting fact we observed is that an exception appears at OOD-delay10, i.e., the w/o Curriculum achieves $\Delta\sigma_{\mathrm{req}} = +0.091$, which is the largest improvement of any variant in any regime. Under the extreme deadline pressure, the aggressive-merging policy's high request throughput ($\eta_{\mathrm{req}} = 1.361$) outweighs its elevated miss rate, thereby producing the best $\sigma_{\mathrm{req}}$, even though with poor broadcast-efficiency metrics. This confirms that the curriculum learning shapes a deadline-conscious policy that trades the raw throughput for the selectivity, i.e., a tradeoff that is beneficial under the normal deadlines, even though it is suboptimal under the extreme pressure.

\textbf{Summary.} Each of the training components contributes to the full model's multi-objective performance profile, however, the \emph{direction} of the contribution depends on the metric family. On the broadcast-level metrics ($\rho$, $\sigma$; i.e., $\rho$ is a primary metric and $\sigma$ a secondary broadcast-efficiency metric in the taxonomy of Sec.~\ref{sec:exp:metrics}, however, both are computed at the broadcast/packet-set level), the full model dominates, i.e., the curriculum learning and the Expert Iteration are the two most important components, jointly responsible for most of the broadcast-packet expiration advantage. On the file-identity coverage metric ($\delta$), the w/o BC + ExIt variant \emph{exceeds} the full model ($+1.2\%$ on $\delta$, consistent across 7/8 regimes) by avoiding the redundant retransmission of already-covered files. On the request-level metrics ($\sigma_{\mathrm{req}}$), the w/o BC Warm-Start variant achieves the best selection score ($+3.2\%$, consistent across all the 8 regimes) by avoiding the broadcast-efficiency bias inherited from the BC training. The graph-attention has the smallest single-component effect across all of the metric families. The combined removal of the BC + ExIt exposes the non-additive component interactions, i.e., the resulting ultra-conservative policy gives up the broadcast-level packet throughput ($-6.8\%$ on Served/Tx, $\mu$), even though it achieves the best file-identity coverage and the lowest request miss rate, thereby indicating that the training pipeline's value lies in the qualitative character of the multi-objective tradeoff the components jointly navigate, beyond their individual contributions.

\subsection{Discussion}
\label{sec:res:discussion}

\textbf{Mechanistic interpretation of selective merging.}
The results across Sections~\ref{sec:res:taufit}--\ref{sec:res:behavior}
converge on a consistent mechanistic picture for the uniform-demand
regime, i.e., the learned policy implements a \emph{state-dependent selective
merge heuristic} that conditions on the per-pair intersection size and
the per-request deadline urgency, while merging only $31.8\%$ of the opportunities
and achieving a high average intersection ($0.589$).
The interesting fact we observed is that a separately trained Zipf-demand agent (Section~\ref{sec:results:zipf})
independently discovers a qualitatively different strategy (i.e., merging
$99.7\%$ of opportunities versus $31.8\%$ here), thereby indicating that
the architecture adapts to the demand-induced queue-state distributions.
This context-sensitive adaptivity, i.e., suggestive of a policy-mediated
component to the cross-track behavioral difference per the descriptive
visited-state intersection diagnostic (Section~\ref{sec:results:zipf};
the diagnostic is descriptive only, not an inferential test), is
unachievable by any fixed-threshold policy and is a central
qualitative contribution of the learned agent.

\textbf{Stress amplification and asymmetry.}
A recurring pattern across the evaluation is that the agent's
advantage over the baselines \emph{grows} under the conditions that penalize
the blind merging (i.e., high cache density, tight deadlines) while it \emph{attenuates}
when those conditions relax (e.g., low cache density, where the fewer merge
opportunities limit any policy's scope for the differentiation). This
asymmetry indicates that the practical value of a learned delivery policy
is greatest in the operating regimes where the network load and
the deadline pressure are most severe.

\textbf{Limitations and threats to validity.}
Several limitations scope the current findings.
(i)~The environment models a single broadcast domain with $K=5$ users
and pairwise XOR merges; extending to larger user populations and
higher-order coded transmissions may introduce scalability challenges
for the graph-attention architecture.
(ii)~Cache placement is fixed per episode via random decentralized
prefetching; in practice, cache contents evolve through eviction and
replacement, introducing non-stationarity not captured here.
(iii)~All evaluations are conducted in simulation with perfect cache-state
knowledge; real deployments must contend with imperfect side-information
estimates, heterogeneous file sizes, and time-varying channel conditions.
(iv)~Training requires approximately 59~hours on a single GPU
(Appendix~\ref{sec:appendix:reproducibility}, Table~\ref{tab:computational_cost}),
which may limit rapid redeployment under substantial distribution
shift.
(v)~The catalog-size axis ($N \in \{60,120,150\}$) does not
constitute a genuine distribution shift under the current placement
model, since the per-packet caching probability~$p_c$ is invariant
to~$N$; incorporating non-i.i.d.\ or capacity-constrained placement
would provide a more meaningful catalog-dimension stress test.
These limitations motivate the future directions outlined in
Section~\ref{sec:conclusion}.

\textbf{Deployment implications.}
From a systems perspective, the selective merge strategy suggests a
design principle for the deadline-constrained coded caching, i.e., an edge
server should \emph{not} merge every feasible pair, but should reserve
the coded transmissions for the high-intersection pairs while unicasting the urgent
or low-overlap requests. Within each demand track, the learned policy's zero-shot transfer across
all of the evaluation regimes indicates that a single trained model can serve a range of
the operating conditions (i.e., varying catalog size, cache density, and deadline
budget) without per-regime tuning, thereby lowering the operational complexity. The inference cost of the graph-attention network
(i.e., forward pass over at most $P_{\max}=45$ candidate actions;
${\sim}3.5$\,ms GPU, ${\sim}3.3$\,ms CPU per
Appendix~\ref{sec:appendix:latency}) is reported as a conditional
feasibility indicator, i.e., whether it meets a per-slot real-time
scheduling target depends on the deployment's slot-duration budget,
which we do not import from outside the manuscript.

\section{Conclusion and Limitations}
\label{sec:conclusion}

The coded caching promises a global caching gain that scales with the number of the edge users by exploiting the cached side information for the broadcast transmissions~\cite{maddah2014fundamental,maddah2015decentralized}. The problem, however, is that in the delay-sensitive applications, a \emph{gain--deadline tension} arises, i.e., each XOR merge consumes the shared side information and risks pushing the aggregated requests past their expiration deadlines. The fixed-threshold rules face a structural tension between the coding gain and the deadline compliance, i.e., the merge-or-defer decision is queue-state-dependent, thereby any state-independent rule gives up either the gain or the expiration ratio at the extremes (Sec.~\ref{sec:results}, $\tau$-Fit ablation). To the best of our knowledge, after the literature sweep summarized in Section~\ref{sec:related}, the prior RL delivery formulations for the coded caching had not combined the hard expirations, the keep-side control, and the dynamic invalid-action masking in one integrated delivery scheduler; we treat this as a positioning statement rather than as a definitive non-existence result.

We proposed four contributions toward filling that gap, i.e., (C1)~a masked discrete-action queue-state control formulation for the deadline-constrained coded delivery with the dynamic feasibility masking and a shaped reward, trained as a stationary policy under a discounted truncated continuing-control surrogate; (C2)~a graph-attention policy network that captures the combinatorial queue interactions; (C3)~a three-phase training pipeline combining the behavior-cloning warm start, the value pre-training, and the MaskablePPO with the curriculum learning and the Expert Iteration; and (C4)~the evaluation across the $9$ Track~A baselines plus the PPO ($10$ unique methods), the $7$~non-ID Track~A regimes (i.e., 2 curriculum-seen, 5 unseen-parameter within the same simulator family), and the eight-regime ablation battery, plus a main-body Zipf-demand extension study (Section~\ref{sec:results:zipf}) covering 11~additional regimes.

Against this benchmark (i.e., Track~A, uniform demand), the $\sigma$-selected checkpoint (i.e., the model picked by the robust-advantage rule on the BE-score $\sigma$ at the validation; see Sec.~\ref{sec:method:selection}) achieved the lowest broadcast-packet expiration ratio $\rho$ among all of the coded-multicast methods, thereby lowering $\rho$ by $40.9\%$ with respect to the SACM{++} ($0.208$ vs.\ $0.352$; i.e., at $H{=}50$ with the episode-end right-censoring, the absolute values are simulator-specific, see Sec.~\ref{sec:exp:metrics}) while maintaining the competitive distinct file-identity coverage and, on Track~A, the highest broadcast-efficiency system score $\sigma$ among the coded-multicast methods across all the 8~regimes (and the highest overall $\sigma$ in 7 of the 8 Track~A regimes; the ED-Unicast leads on $\sigma$ at OOD-delay10).
The paired bootstrap intervals favor the PPO over the SACM{++} on the primary demand-centric metrics ($\rho$ and $\delta$; Section~\ref{sec:results}), i.e., on $\rho$ the PPO leads the SACM{++} by $-0.144$ ($95\%$ CI $[-0.146, -0.143]$, Table~\ref{tab:paired_bootstrap}), while on $\delta$ the PPO leads the SACM{++} by $+0.027$ ($0.824$ vs.\ $0.797$ at ID-default; the PPO leads in 7 of the 8 Track~A regimes, Table~\ref{tab:unique_demand_uniform}). The supplementary request-level $m_{\mathrm{req}}$ is reported separately above, i.e., the PPO leads the SACM{++} ($0.229$ vs.\ $0.326$), even though it \emph{trails} the ED-Unicast ($\Delta m_{\mathrm{req}} = +0.074$, Table~\ref{tab:paired_bootstrap}). The ED-Unicast also remains stronger on $\rho$ and $\delta$ at the cost of zero coding gain.
The advantage grew under the stress conditions, while the gains transferred across the full Track~A non-ID battery (i.e., 2~curriculum-seen and 5~unseen-parameter regimes covering the unseen cache fractions and deadline budgets, plus the parameter-invariance sweeps over the file count under fixed $p_c$, all within the same simulator family) and the full Track~B non-ID battery (i.e., 2~curriculum-seen and 9~unseen-parameter regimes within the same simulator family).
Because all of the transfer experiments keep fixed $K$, $Q$, the action dimensionality, and the placement family, these results demonstrate the \emph{within-family parameter generalization} rather than the broad architectural out-of-distribution robustness.
Under the (supplementary) request-level accounting that tracks each original arrival through the queue aggregation, the PPO-Agent achieves a substantially lower per-step request miss rate than the SACM{++} ($m_{\mathrm{req}} = 0.229$ vs.\ $0.326$, i.e., a $29.8\%$ relative reduction at ID-default). The paired-bootstrap analysis at Table~\ref{tab:paired_bootstrap} shows that the ED-Unicast remains the strongest deadline-protection baseline on $m_{\mathrm{req}}$ ($\Delta m_{\mathrm{req}} = +0.074$ for the PPO minus the ED-Unicast at ID-default, i.e., the PPO is worse than the ED-Unicast on this metric), thereby we restrict the headline ``lower than'' phrasing to the SACM{++} comparison and treat the broader ``lower than all coded baselines'' statement as conjectural pending the request-level extension over the GCM, SACM, SACM+, and $\tau$-Fit family, which is left as future work.

The request selection score $\sigma_{\mathrm{req}}$ exceeds the \emph{coded} baseline SACM{++} once the request-level miss penalty crosses a regime-dependent threshold $\lambda^{\star}$ (i.e., median $\lambda^{\star}\approx 2.24$, with $\lambda^{\star}\le 4.92$ in the worst regime). The uncoded ED-Unicast baseline remains the highest $\sigma_{\mathrm{req}}$ at every $\lambda$ shown on the ID-default sensitivity table of Appendix~\ref{sec:appendix:req_lambda_sensitivity}, thereby the PPO's $\sigma_{\mathrm{req}}$ advantage is over the coded baselines, i.e., not over the uncoded EDF policy. The crossover reflects a throughput--compliance tradeoff, i.e., the SACM{++}'s always-merge strategy delivers more coded packets per step, however, the PPO's selective merging produces fewer per-step deadline-miss events $|M_t|$.

The request-level family is per-step by definition, i.e., $m_{\mathrm{req}}{=}H^{-1}\sum_t |M_t|$, i.e., not normalized by the total number of the admitted arrivals. The ``per-arrival'' completion rates would require an arrival-normalized denominator that we do not report in this version.
Under the Zipf-demand extension, the PPO-Agent dominates the SACM{++} on $\sigma_{\mathrm{req}}$ at every tested $\lambda$ across all the 12~regimes, while remaining competitive with the popularity-aware SACM{++}-Pop baseline; the only near-tie exception is OOD-pcache0.20~(Z), where the Pop edges the PPO at $\lambda{=}1$ ($0.852$ vs.\ $0.848$). The PPO's advantage stems from its near-full merging strategy under the Zipf demand, which matches the baseline throughput while retaining the lower miss rates (Appendix~\ref{sec:appendix:req_lambda_zipf}).
A main-body extension (Section~\ref{sec:results:zipf}) shows that the same framework extends to the skewed demand distributions, i.e., a separately trained popularity-aware variant, using an augmented observation space, remains competitive with the coded-multicast baselines across the 11~additional Zipf-demand regimes. The ablation study confirmed that each of the four training components (i.e., the BC warm start, the graph-attention architecture, the Expert Iteration distillation, and the curriculum learning) contributes meaningfully \emph{on average}, with the non-additive interactions and the per-metric reversals on $\delta$ and $\sigma_{\mathrm{req}}$ documented in Section~\ref{sec:ablation}.

The qualitative finding is an \emph{emergent selective merge strategy}, i.e., the agent executes only $31.8\%$ of the available merge opportunities, even though it achieves an average pair intersection of $0.589$, thereby exceeding the SACM++ ($0.393$) and the best-BE-score $\tau$-Fit threshold baseline (i.e., TauFit-2 at $0.359$); the Perfect-Fit, i.e., the most conservative threshold rule, attains a higher per-merge intersection ($0.819$), even though at a substantially worse broadcast-level expiration ratio (Sec.~\ref{sec:results}, Table~\ref{tab:diagnostics}).
This state-dependent behavior (i.e., merging when the intersection is high and the deadlines are comfortable, while deferring to the earliest-deadline unicast otherwise) is not replicable by a fixed-threshold heuristic and directly explains the agent's multi-objective advantage over the threshold-based baselines. We caution that the qualitative selective-merge behavior is established for the simulator definition adopted in Sec.~\ref{sec:env}, i.e., including the uniform representative-destination convention $k_{\mathrm{mg}}\sim\mathrm{Unif}\{k_i,k_j\}$ of Eq.~\eqref{eq:mg_dest}; the sensitivity to the alternative representative-update rules (e.g., keep-side-tied or earliest-deadline-tied) is not tested here and is left as future work.
The interesting fact we observed is that the Zipf-trained agent reaches a near-full merging strategy ($99.7\%$ merge rate vs.\ $31.8\%$ under the uniform demand), i.e., a pattern consistent with (though not isolating) the demand-dependent behavior, given that the two tracks differ in three couplings simultaneously (i.e., the demand law, the observation space, and the joint training distribution). A controlled ablation that holds the observation space fixed and varies only the demand law would be needed to causally attribute the shift to the demand structure alone, while this is left as future work.
A visited-state intersection diagnostic (Section~\ref{sec:results:zipf}) is consistent with a policy-mediated component to the behavioral difference, however, it does not cleanly separate the policy-induced shift from a baseline two-request joint shift induced by the demand law itself (i.e., the same-packet pairs become more likely under the skewed demand even with the popularity-blind placement); the diagnostic is descriptive only, while a controlled decomposition is left to future work.

\paragraph{Deployment considerations.}
The trained policy has low inference cost, i.e., a single forward pass of
the graph-structured policy network on the $K{=}5$, $Q{=}10$, $P_{\max}{=}45$
observation requires approximately $3.5$\,ms on a GPU and
${\sim}3.3$\,ms on a CPU on the hardware described in
Appendix~\ref{sec:appendix:latency}.
We do not import a fixed slot-duration threshold from outside the
manuscript; consistent with the latency appendix, this measurement
is reported as a \emph{conditional feasibility indicator} tied to
the deployment's slot-duration budget and to the hardware actually
deployed at the edge, i.e., not as a certified real-time guarantee.
At the edge node, the agent requires three types of run-time statistics, i.e.,
(i)~the current cache state $\mathcal{C}_k$ for each user $k$ (i.e., maintained
locally and updated on the eviction events),
(ii)~the per-request remaining deadlines $d_r$ (i.e., derived from the arrival timestamps
and the configured maximum deadline~$D$), and
(iii)~the pending queue's side-information sets $\mathcal{S}_r$ (i.e., computed as a
set-membership query against the cache state),
while no channel-state information or demand-prediction model is required.
Because the cache placement is fixed per delivery episode, the model does not
need the online retraining during a session, i.e., a new agent may be trained
offline when the cache probability $p_c$ or the deadline budget $D$ shifts
substantially.
The full source code will be released upon acceptance.

\paragraph{Scope and first-order limitations.}
The reported gains are conditional on three simulator and observation
conventions that materially shape the throughput / deadline
trade-off and the value of the keep-side control, i.e.,
(L1) the \emph{one-slot-per-record} unicast cost abstraction
(Sec.~\ref{sec:model:assumptions}, A2), under which a unicast of an
aggregate of size $|f_r|$ still consumes one channel use rather than
$|f_r|$ slots;
(L2) the \emph{representative-destination} update
$k_{\mathrm{mg}}\sim\mathrm{Unif}\{k_i,k_j\}$
(Eq.~\eqref{eq:mg_dest}), i.e., independent of the keep-side bit~$\kappa$
(i.e., correctness only requires $k_{\mathrm{mg}}\in\{k_i,k_j\}$;
the uniformity is a chosen simulator convention); and
(L3) the \emph{aggregate-size observation clip}
$\min(|f_r|,U_{\max})/U_{\max}$ at $U_{\max}=6$
(Sec.~\ref{sec:env:state}, Appendix~\ref{sec:appendix:features}),
which aliases all states with $|f_r|\geq U_{\max}$ on the size
feature, even though the transition and feasibility rules impose no
hard cap on $|f_r|$. Appendix~\ref{sec:appendix:fr_empirical}
reports an empirical saturation-frequency diagnostic on the
deployed checkpoint that shows the clip rate
$\Pr[|f_r|\!\geq\!U_{\max}]$ stays below $6\!\cdot\!10^{-6}$ on both
the ID-default and the Curr-pcache0.40 stress regime, with $90$--$91\%$
of the observations being singletons ($|f_r|{=}1$); the observation aliasing
is therefore empirically negligible on the on-policy support, while
the structural argument of Appendix~\ref{sec:appendix:features} is
empirically supported. What remains open is a controlled
training-time sensitivity sweep over
$U_{\max}\in\{6,8,12\}$ (or a richer aggregate-summary observation
that retains the contained file identities); the qualitative
selective-merge claim of this paper is therefore reported within
the $U_{\max}=6$ aggregate-size convention, with the on-policy
clip rate empirically verified, even though the training-time sensitivity is
left as future work.
Under a packet-level unicast cost model (i.e., in which an aggregate
unicast charges $|f_{\mathrm{mg}}|$ slots), under a different
representative-update rule (e.g., keep-side-tied or
earliest-deadline-tied), or under an enriched aggregate-summary
observation, both the throughput / deadline frontier and the
relative ranking of the merge-or-defer policies could change substantially.
We therefore restate the (L1)--(L3) as \emph{first-order} modeling
assumptions of the present results, i.e., not as cosmetic simulator
details, while the deployment-relevance and the broad delivery-policy claims
in this section should be read as conditional on the (L1)--(L3) and on
the contextual-POMDP surrogate of Sec.~\ref{sec:model:problem}. We leave the focused sensitivity sweeps over an alternative unicast cost, an
alternative representative rule, and a richer aggregate observation
(or a larger $U_{\max}$) for future work.

\paragraph{Per-user fairness.}
We also measured the per-cache breakdown of the expiration ratio at the ID-default to confirm that the aggregate $\rho$ advantage does not mask a per-destination regression (Appendix~\ref{sec:appendix:per_cache_fairness}, Table~\ref{tab:per_cache_fairness}). The max/min ratio of the per-cache $\rho$ across the $K{=}5$ caches is $1.184$ for the policy network versus $1.005$--$1.023$ for the coded-multicast baselines and ED-Unicast, i.e., the selective merging of the policy network is slightly less uniform across the destinations than the always-merge heuristics, while the per-cache $\rho_{\max}$ for the policy network ($0.196$) is still below every coded baseline's per-cache $\rho_{\min}$ ($\approx 0.273$), so the aggregate $\rho$ advantage carries through to every destination individually rather than masking a per-cache regression.

\paragraph{PHY-erasure sensitivity.}
We additionally measured the robustness of the policy network under i.i.d.\ erasure of the coded XOR broadcasts at probability $\epsilon \in \{0.05, 0.10, 0.20\}$, considering that the unicasts use a robust modulation and are not erased (Appendix~\ref{sec:appendix:erasure}, Table~\ref{tab:erasure_sensitivity}). Two observations: (i)~the policy network retains its $\sim 40\%$ advantage in $\rho$ over SACM{++} at every tested $\epsilon$, i.e., the ratio $\rho_{\mathrm{PPO}} / \rho_{\mathrm{SACM{++}}}$ stays in $[0.591, 0.598]$ across the sweep, while ED-Unicast's $\rho$ is invariant ($0.1341$) as a sanity check on the patched simulator; (ii)~a second policy network retrained with $\epsilon{=}0.10$ active during the Phase-3 training lowers $\rho$ by an additional $\sim 4.8\%$--$5.7\%$ at every $\epsilon$ \emph{including} $\epsilon{=}0$, i.e., training under the stochastic erasure acts as a regularizer that improves the clean-channel performance as well, an effect that motivates treating $\epsilon$ as a controlled training knob in future work.

\paragraph{Naderializadeh-style learned baseline.}
To close the loop on the prior RL delivery work~\cite{naderializadeh2019learning}, we trained a Naderializadeh-style baseline under our deadline-aware environment, i.e., a pure flat-MLP MaskablePPO policy with no BC warm start, no Expert Iteration, and no curriculum, while retaining the action masking for the numerical stability (without it, the $91$-action head whose feasible support typically contains only a handful of pairs does not converge). The resulting policy fills the gap between SACM{++} ($\rho{=}0.352$) and the headline ($\rho{=}0.208$) at $\rho = 0.236 \pm 0.004$ (Table~\ref{tab:nm14_baselines}), i.e., the policy network leads it by $\Delta\rho = 0.028$ ($\sim 11.9\%$ relative reduction) and by $\Delta\sigma = 0.053$ on the broadcast-efficiency score, while the training pipeline (BC + ExIt + curriculum) accounts for the bulk of the gap that our work opens over the architecture-only formulation.

\paragraph{Future work.}
Three directions follow from the present results, i.e.,
\begin{itemize}
  \item \textbf{Multi-user coded transmissions:} Extending to the higher-order multicasting ($K \geq 3$ per broadcast) via the hierarchical or autoregressive action representations could approach the full theoretical global caching gain~\cite{maddah2014fundamental}.

  \item \textbf{Joint placement and delivery:} Jointly optimizing the cache placement and the delivery via a two-timescale or multi-agent framework would cover the full coded caching pipeline.

  \item \textbf{Theoretical characterization of selective merging:} The emergent $31.8\%$ merge rate lacks a formal characterization, i.e., a Lyapunov drift or online convex optimization analysis of \emph{when} the selective merging is optimal would strengthen the result.
\end{itemize}

In summary, we proposed a DRL-based solution for the deadline-constrained coded delivery, where the policy network reduces the broadcast-packet expiration ratio $\rho$ by $40.9\%$ with respect to the best coded multi-casting baseline (SACM++) on the uniform-demand benchmark, while also attaining the best broadcast-efficiency score~$\sigma$ among the coded multi-casting methods in $7$ of $8$ Track~A regimes, even though the uncoded ED-Unicast baseline still attains a lower $\rho$ and a lower request-level miss rate~$m_{\mathrm{req}}$ by forfeiting all coding gain, so the $\rho$ headline is restricted to the coded multi-casting methods. The interesting fact we observed is that for the applications of the users with tight deadlines, the method of selective merging is better than the method of aggressive merging, i.e., the policy network learns to merge only at a $\approx 31.8\%$ rate, and the same observation is true within the simulator family for the variations of the file count, the cache fraction, and the deadline budgets, even though the placement family, $K$, $Q$, and the action dimensionality are held fixed, so we describe this as within-family parameter generalization rather than fully out-of-distribution evaluation.

\appendices
\section{Evaluation Metric Notation}
\label{sec:appendix:metric_notation}

Table~\ref{tab:metric_notation} lists the evaluation metric symbols used throughout the paper.

\begin{table}[h!]
\centering
\small
\begin{adjustbox}{max width=\columnwidth}
\begin{tabular}{@{}cl@{\qquad}cl@{}}
\toprule
\textbf{Symbol} & \textbf{Description} & \textbf{Symbol} & \textbf{Description} \\
\midrule
$U_t$ & Packet-set XOR degree at step $t$ ($|f_{\mathrm{mg}}|$ for a coded transmission, $1$ for a unicast) & $E_t$ & Expired packet-set mass at step $t$ \\
$\lambda$ & Expiration penalty weight & $\sigma$ & Broadcast-efficiency score \\
$\rho$ & Broadcast-packet expiration ratio & $\mu$ & Packet-set count per transmission \\
$\varepsilon$ & Expirations per episode & $\tau$ & Merge threshold (TauFit) \\
$m_{\text{rate}}$ & Merge rate & $o_{\text{rate}}$ & Opportunity rate \\
$C_t$ & Newly completed request IDs at $t$ & $M_t$ & Newly missed request IDs at $t$ \\
$\eta_{\mathrm{req}}$ & Request timely-throughput & $m_{\mathrm{req}}$ & Request miss rate \\
$\sigma_{\mathrm{req}}$ & Request selection score & $\mathcal{A}(r)$ & Request-ID annotation set \\
\bottomrule
\end{tabular}
\end{adjustbox}
\caption{Evaluation metric notation.}
\label{tab:metric_notation}
\end{table}

\section{Full Baseline Comparison}
\label{sec:appendix:baselines}

Table~\ref{tab:nm14_full} reports the complete twelve-row comparison
(10~unique methods, i.e., Perfect-Fit${}={}$TauFit-0 and First-Fit${}={}$TauFit-3 are aliases retained for literature continuity)
on the ID-default uniform-demand condition, while also including the six $\tau$-Fit
threshold rules (Perfect-Fit, TauFit-$\tau \in \{0,1,2,3\}$, First-Fit)
that were omitted from the main-body Table~\ref{tab:nm14_baselines}.

\begin{table*}[t]
\centering
\caption{Full baseline comparison including $\tau$-Fit threshold rules
  (Perfect-Fit, TauFit-$\tau$ variants, First-Fit).
  ID-default, uniform demand. 50 holdout seeds $\times$ 200 episodes.
  Mean $\pm$ 95\% CI. Bold = best.}
\label{tab:nm14_full}
\begin{adjustbox}{max width=\textwidth}
\begin{tabular}{lccccc}
\toprule
\textbf{Method} & \textbf{BE-Score $\sigma$} (↑) & \textbf{Miss Ratio} (↓) & \textbf{Served/Tx} (↑) & \textbf{Coding Gain} (↑) & \textbf{Exp/Episode} (↓) \\
\midrule
\textbf{PPO-Agent} & $\mathbf{0.976 \pm 0.002}$ & $0.208 \pm 0.001$ & $1.323 \pm 0.002$ & $\mathbf{2.162 \pm 0.002}$ & $14.17 \pm 0.07$ \\
ED-Unicast & $0.845 \pm 0.001$ & $\mathbf{0.134 \pm 0.001}$ & $1.000 \pm 0.000$ & --- & $\mathbf{7.74 \pm 0.06}$ \\
GCM & $0.733 \pm 0.003$ & $0.345 \pm 0.001$ & $1.549 \pm 0.002$ & $2.095 \pm 0.001$ & $28.42 \pm 0.15$ \\
SACM & $0.745 \pm 0.003$ & $0.345 \pm 0.001$ & $1.575 \pm 0.002$ & $2.131 \pm 0.001$ & $28.64 \pm 0.13$ \\
SACM+ & $0.731 \pm 0.003$ & $0.348 \pm 0.001$ & $1.572 \pm 0.002$ & $2.124 \pm 0.001$ & $28.71 \pm 0.14$ \\
SACM{++} & $0.726 \pm 0.003$ & $0.352 \pm 0.001$ & $\mathbf{1.590 \pm 0.003}$ & $2.131 \pm 0.001$ & $28.76 \pm 0.14$ \\
\midrule
Perfect-Fit & $0.877 \pm 0.002$ & $0.185 \pm 0.001$ & $1.135 \pm 0.001$ & $2.063 \pm 0.002$ & $10.55 \pm 0.07$ \\
TauFit-0 & $0.877 \pm 0.002$ & $0.185 \pm 0.001$ & $1.135 \pm 0.001$ & $2.063 \pm 0.002$ & $10.55 \pm 0.07$ \\
TauFit-1 & $0.908 \pm 0.002$ & $0.245 \pm 0.001$ & $1.343 \pm 0.001$ & $2.097 \pm 0.001$ & $14.92 \pm 0.10$ \\
TauFit-2 & $0.915 \pm 0.002$ & $0.258 \pm 0.001$ & $1.405 \pm 0.002$ & $2.077 \pm 0.001$ & $16.22 \pm 0.09$ \\
TauFit-3 & $0.915 \pm 0.003$ & $0.259 \pm 0.001$ & $1.408 \pm 0.002$ & $2.066 \pm 0.001$ & $16.32 \pm 0.09$ \\
First-Fit & $0.915 \pm 0.003$ & $0.259 \pm 0.001$ & $1.408 \pm 0.002$ & $2.066 \pm 0.001$ & $16.32 \pm 0.09$ \\
\bottomrule
\end{tabular}
\end{adjustbox}
\end{table*}

Table~\ref{tab:nm14_rankings} complements the full comparison
by ranking all methods on each individual metric.

\begin{table*}[t]
\centering
\caption{$\tau$-Fit method rankings per metric (top-6, ID-default). Rank~1 = best on that metric.}
\label{tab:nm14_rankings}
\begin{adjustbox}{max width=\textwidth}
\begin{tabular}{clllll}
\toprule
\textbf{Rank} & \textbf{BE-Score $\sigma$} ($\uparrow$) & \textbf{Served/Tx} ($\uparrow$) & \textbf{Miss Ratio} ($\downarrow$) & \textbf{Coding Gain} ($\uparrow$) & \textbf{Exp/Episode} ($\downarrow$) \\
\midrule
1 & \textbf{PPO-Agent (0.976)} & SACM{++} (1.590) & \textbf{ED-Unicast (0.134)} & \textbf{PPO-Agent (2.162)} & \textbf{ED-Unicast (7.742)} \\
2 & TauFit-2 (0.915) & SACM (1.575) & Perfect-Fit (0.185) & SACM (2.132) & Perfect-Fit (10.55) \\
3 & TauFit-3 (0.915) & SACM+ (1.572) & TauFit-0 (0.185) & SACM{++} (2.131) & TauFit-0 (10.55) \\
4 & First-Fit (0.915) & GCM (1.549) & PPO-Agent (0.208) & SACM+ (2.124) & PPO-Agent (14.17) \\
5 & TauFit-1 (0.908) & First-Fit (1.408) & TauFit-1 (0.245) & TauFit-1 (2.097) & TauFit-1 (14.92) \\
6 & Perfect-Fit (0.877) & TauFit-3 (1.408) & TauFit-2 (0.258) & GCM (2.096) & TauFit-2 (16.22) \\
\bottomrule
\end{tabular}
\end{adjustbox}
\end{table*}

\section{Per-Merge Diagnostics (Uniform ID-default)}
\label{sec:appendix:merge_diagnostics}

Table~\ref{tab:merge_diagnostics} anchors the mechanism claim in
Section~\ref{sec:results} that PPO restricts coded transmissions to
higher-intersection candidate pairs than the SACM and $\tau$-Fit families:
PPO merges only $31.8\%$ of feasible opportunities but lifts the
mean intersection-per-merge to $0.589$ (vs.\ SACM$++$'s $0.393$,
TauFit-2's $0.359$). Perfect-Fit ($\tau{=}0$) attains a higher
per-merge intersection ($0.819$) at the cost of merging only $13.3\%$
of opportunities; the volume--quality trade-off is also visible
in the lower $\sigma$ that Perfect-Fit achieves (Table~\ref{tab:nm14_full}).

\begin{table}[t]
\centering
\caption{Per-method merge diagnostics on the uniform ID-default regime
($50$ holdout seeds $\times$ $200$ episodes). \emph{Merge rate} is the
fraction of decision steps on which a coded merge was selected;
\emph{opp.\ rate} is the fraction on which at least one feasible merge
pair existed; \emph{avg.\ intersection} is the mean cardinality of the
side-information intersection $|\mathcal{S}_{r_i}{\cap}\mathcal{S}_{r_j}|$
over executed merges. PPO trades volume for quality.}
\label{tab:merge_diagnostics}
\begin{adjustbox}{max width=\columnwidth}
\begin{tabular}{lccc}
\toprule
\textbf{Method} & \textbf{Merge Rate} & \textbf{Opp.\ Rate} & \textbf{Avg.\ Intersection} \\
\midrule
\textbf{PPO-Agent}      & $0.318$ & $0.876$ & $\mathbf{0.589}$ \\
ED-Unicast              & $0.000$ & $0.965$ & ---           \\
GCM                     & $1.000$ & $0.501$ & $0.274$       \\
SACM                    & $1.000$ & $0.509$ & $0.392$       \\
SACM$+$                 & $1.000$ & $0.509$ & $0.391$       \\
SACM$++$                & $1.000$ & $0.522$ & $0.393$       \\
Perfect-Fit ($\tau{=}0$)& $0.133$ & $0.959$ & $0.819$       \\
TauFit-1                & $0.332$ & $0.941$ & $0.547$       \\
TauFit-2                & $0.406$ & $0.925$ & $0.359$       \\
TauFit-3 / First-Fit    & $0.416$ & $0.921$ & $0.298$       \\
\bottomrule
\end{tabular}
\end{adjustbox}
\end{table}

\section{Uniform-Demand OOD Detail}
\label{sec:appendix:uniform_ood_detail}

This appendix provides the complete per-method breakdowns for all
uniform-demand non-ID regimes, thereby extending the two-method comparison in
Section~\ref{sec:res:ood_uniform} Table~\ref{tab:uniform_ood}.
Table~\ref{tab:uniform_ood_all_methods_miss} reports the broadcast-packet expiration ratio~$\rho$
while Table~\ref{tab:uniform_ood_all_methods_sys} reports the broadcast-efficiency score
across all six core methods and eight regimes.

\begin{table*}[t]
\centering
\caption{Uniform demand, broadcast-packet expiration ratio $\rho$ (Miss Ratio, broadcast-level; Sec.~\ref{sec:exp:metrics}) across all non-ID regimes and methods. Mean $\pm$ 95\% CI. Bold = best.}
\label{tab:uniform_ood_all_methods_miss}
\begin{adjustbox}{max width=\textwidth}
\begin{tabular}{lcccccc}
\toprule
\textbf{Regime} & \textbf{PPO-Agent} & \textbf{ED-Unicast} & \textbf{GCM} & \textbf{SACM} & \textbf{SACM+} & \textbf{SACM{++}} \\
\midrule
ID-default & $0.208 \pm 0.001$ & $\mathbf{0.134 \pm 0.001}$ & $0.345 \pm 0.001$ & $0.345 \pm 0.001$ & $0.348 \pm 0.001$ & $0.352 \pm 0.001$ \\
Curr-file60 & $0.208 \pm 0.001$ & $\mathbf{0.133 \pm 0.001}$ & $0.344 \pm 0.001$ & $0.344 \pm 0.001$ & $0.347 \pm 0.001$ & $0.352 \pm 0.001$ \\
OOD-file120 & $0.208 \pm 0.001$ & $\mathbf{0.133 \pm 0.001}$ & $0.344 \pm 0.001$ & $0.344 \pm 0.001$ & $0.347 \pm 0.001$ & $0.353 \pm 0.001$ \\
OOD-file150 & $0.209 \pm 0.001$ & $\mathbf{0.134 \pm 0.001}$ & $0.345 \pm 0.001$ & $0.345 \pm 0.001$ & $0.347 \pm 0.001$ & $0.353 \pm 0.001$ \\
OOD-pcache0.20 & $0.176 \pm 0.001$ & $\mathbf{0.133 \pm 0.001}$ & $0.260 \pm 0.001$ & $0.260 \pm 0.001$ & $0.263 \pm 0.001$ & $0.265 \pm 0.001$ \\
Curr-pcache0.40 & $0.238 \pm 0.001$ & $\mathbf{0.134 \pm 0.001}$ & $0.405 \pm 0.001$ & $0.405 \pm 0.001$ & $0.408 \pm 0.001$ & $0.412 \pm 0.001$ \\
OOD-delay10 & $0.500 \pm 0.000$ & $\mathbf{0.495 \pm 0.001}$ & $0.555 \pm 0.000$ & $0.553 \pm 0.000$ & $0.555 \pm 0.000$ & $0.555 \pm 0.000$ \\
OOD-delay30 & $0.064 \pm 0.001$ & $\mathbf{0.036 \pm 0.000}$ & $0.176 \pm 0.002$ & $0.180 \pm 0.001$ & $0.181 \pm 0.001$ & $0.186 \pm 0.002$ \\
\bottomrule
\end{tabular}
\end{adjustbox}
\end{table*}

\begin{table*}[t]
\centering
\caption{Uniform demand, broadcast-efficiency score across all non-ID regimes and methods. Mean $\pm$ 95\% CI. Bold = best.}
\label{tab:uniform_ood_all_methods_sys}
\begin{adjustbox}{max width=\textwidth}
\begin{tabular}{lcccccc}
\toprule
\textbf{Regime} & \textbf{PPO-Agent} & \textbf{ED-Unicast} & \textbf{GCM} & \textbf{SACM} & \textbf{SACM+} & \textbf{SACM{++}} \\
\midrule
ID-default & $\mathbf{0.976 \pm 0.002}$ & $0.845 \pm 0.001$ & $0.733 \pm 0.003$ & $0.745 \pm 0.003$ & $0.731 \pm 0.003$ & $0.726 \pm 0.003$ \\
Curr-file60 & $\mathbf{0.976 \pm 0.002}$ & $0.847 \pm 0.001$ & $0.736 \pm 0.003$ & $0.748 \pm 0.002$ & $0.735 \pm 0.002$ & $0.726 \pm 0.002$ \\
OOD-file120 & $\mathbf{0.976 \pm 0.002}$ & $0.846 \pm 0.001$ & $0.736 \pm 0.003$ & $0.748 \pm 0.003$ & $0.737 \pm 0.003$ & $0.725 \pm 0.003$ \\
OOD-file150 & $\mathbf{0.975 \pm 0.002}$ & $0.846 \pm 0.001$ & $0.732 \pm 0.003$ & $0.746 \pm 0.003$ & $0.734 \pm 0.002$ & $0.725 \pm 0.003$ \\
OOD-pcache0.20 & $\mathbf{0.919 \pm 0.002}$ & $0.846 \pm 0.001$ & $0.833 \pm 0.002$ & $0.836 \pm 0.002$ & $0.831 \pm 0.002$ & $0.828 \pm 0.002$ \\
Curr-pcache0.40 & $\mathbf{1.038 \pm 0.002}$ & $0.846 \pm 0.001$ & $0.596 \pm 0.003$ & $0.613 \pm 0.003$ & $0.591 \pm 0.003$ & $0.580 \pm 0.003$ \\
OOD-delay10 & $-0.002 \pm 0.002$ & $\mathbf{0.019 \pm 0.002}$ & $-0.441 \pm 0.003$ & $-0.427 \pm 0.003$ & $-0.445 \pm 0.003$ & $-0.451 \pm 0.003$ \\
OOD-delay30 & $\mathbf{1.204 \pm 0.002}$ & $0.963 \pm 0.001$ & $1.137 \pm 0.002$ & $1.150 \pm 0.002$ & $1.144 \pm 0.002$ & $1.141 \pm 0.002$ \\
\bottomrule
\end{tabular}
\end{adjustbox}
\end{table*}

\section{Zipf-Demand OOD Detail}
\label{sec:appendix:zipf_ood_detail}

Table~\ref{tab:zipf_ood_all_methods_miss} provides the complete
per-method deadline-miss-ratio breakdown for all Zipf-demand OOD
regimes, thereby extending the three-method comparison in
Section~\ref{sec:res:ood_zipf} Table~\ref{tab:zipf_ood}.

\begin{table*}[t]
\centering
\caption{Zipf demand, broadcast-packet expiration ratio $\rho$ (Miss Ratio, broadcast-level; Sec.~\ref{sec:exp:metrics}) across all non-ID regimes and methods. Bold = best.}
\label{tab:zipf_ood_all_methods_miss}
\begin{adjustbox}{max width=\textwidth}
\begin{tabular}{lccccccc}
\toprule
\textbf{Regime} & \textbf{PPO-Agent} & \textbf{ED-Unicast} & \textbf{GCM} & \textbf{SACM} & \textbf{SACM+} & \textbf{SACM{++}} & \textbf{SACM{++}-Pop} \\
\midrule
ID-default & $0.351 \pm 0.001$ & $\mathbf{0.133 \pm 0.001}$ & $0.343 \pm 0.001$ & $0.343 \pm 0.001$ & $0.346 \pm 0.001$ & $0.351 \pm 0.001$ & $0.348 \pm 0.001$ \\
OOD-alpha0.6 & $0.352 \pm 0.001$ & $\mathbf{0.133 \pm 0.001}$ & $0.344 \pm 0.001$ & $0.345 \pm 0.001$ & $0.347 \pm 0.001$ & $0.352 \pm 0.001$ & $0.349 \pm 0.001$ \\
OOD-alpha1.0 & $0.350 \pm 0.001$ & $\mathbf{0.133 \pm 0.001}$ & $0.342 \pm 0.001$ & $0.343 \pm 0.001$ & $0.345 \pm 0.001$ & $0.350 \pm 0.001$ & $0.348 \pm 0.001$ \\
OOD-alpha1.2 & $0.347 \pm 0.001$ & $\mathbf{0.133 \pm 0.001}$ & $0.340 \pm 0.001$ & $0.340 \pm 0.001$ & $0.343 \pm 0.001$ & $0.348 \pm 0.001$ & $0.346 \pm 0.001$ \\
OOD-mandelbrot & $0.350 \pm 0.001$ & $\mathbf{0.134 \pm 0.001}$ & $0.342 \pm 0.001$ & $0.343 \pm 0.001$ & $0.345 \pm 0.001$ & $0.351 \pm 0.001$ & $0.348 \pm 0.001$ \\
Curr-file60 & $0.351 \pm 0.001$ & $\mathbf{0.133 \pm 0.001}$ & $0.343 \pm 0.001$ & $0.344 \pm 0.001$ & $0.347 \pm 0.001$ & $0.351 \pm 0.001$ & $0.349 \pm 0.001$ \\
OOD-file120 & $0.351 \pm 0.001$ & $\mathbf{0.132 \pm 0.001}$ & $0.344 \pm 0.001$ & $0.345 \pm 0.001$ & $0.347 \pm 0.001$ & $0.352 \pm 0.001$ & $0.349 \pm 0.001$ \\
OOD-file150 & $0.351 \pm 0.001$ & $\mathbf{0.134 \pm 0.001}$ & $0.343 \pm 0.001$ & $0.343 \pm 0.001$ & $0.347 \pm 0.001$ & $0.351 \pm 0.001$ & $0.349 \pm 0.001$ \\
OOD-pcache0.20 & $0.265 \pm 0.001$ & $\mathbf{0.134 \pm 0.001}$ & $0.261 \pm 0.001$ & $0.261 \pm 0.001$ & $0.263 \pm 0.001$ & $0.266 \pm 0.001$ & $0.267 \pm 0.001$ \\
Curr-pcache0.40 & $0.408 \pm 0.001$ & $\mathbf{0.134 \pm 0.001}$ & $0.404 \pm 0.001$ & $0.404 \pm 0.001$ & $0.408 \pm 0.001$ & $0.411 \pm 0.001$ & $0.408 \pm 0.001$ \\
OOD-delay10 & $0.552 \pm 0.000$ & $\mathbf{0.495 \pm 0.001}$ & $0.555 \pm 0.000$ & $0.553 \pm 0.000$ & $0.555 \pm 0.000$ & $0.555 \pm 0.000$ & $0.555 \pm 0.000$ \\
OOD-delay30 & $0.183 \pm 0.001$ & $\mathbf{0.035 \pm 0.000}$ & $0.174 \pm 0.001$ & $0.177 \pm 0.001$ & $0.179 \pm 0.001$ & $0.184 \pm 0.001$ & $0.181 \pm 0.001$ \\
\bottomrule
\end{tabular}
\end{adjustbox}
\end{table*}

\section{Illustrative Three-Request Merge Example}
\label{sec:appendix:example}

This appendix provides the complete step-by-step derivation of the
three-request merge example summarized in
Section~\ref{sec:model:coded}.

Consider a system with $K = 3$ caches and a queue of $Q = 3$ pending
requests at some time step~$t$:
\begin{align*}
  r_0 &= \bigl(k{=}0,\; f{=}\{p_1\},\; d{=}5,\; \mathcal{S}{=}\{1,2\}\bigr), \\
  r_1 &= \bigl(k{=}1,\; f{=}\{p_2\},\; d{=}3,\; \mathcal{S}{=}\{0,2\}\bigr), \\
  r_2 &= \bigl(k{=}2,\; f{=}\{p_3\},\; d{=}2,\; \mathcal{S}{=}\{0,1\}\bigr).
\end{align*}

\emph{Step~1---Feasibility check.}\;
Pair $(r_0, r_1)$: cache~$k_1{=}1$ holds $\{p_1\} = f_0$
(since $1 \in \mathcal{S}_0$), and cache~$k_0{=}0$ holds
$\{p_2\} = f_1$ (since $0 \in \mathcal{S}_1$).
Both sides of~\eqref{eq:xor_feasibility} are satisfied, so the pair
is \emph{feasible}.

\emph{Step~2---First broadcast (merge with keep-side $\kappa = 0$).}\;
The server broadcasts $X = p_1 \oplus p_2$ in one channel use;
cache~1 cancels $p_1$ to recover $p_2$, and cache~0 cancels $p_2$ to
recover $p_1$.
Both users are served immediately ($U_t = 2$).
Two outcomes are decided independently at this step:
(a)~the keep-side parameter $\kappa = 0$ selects which queue slot is
preserved and which slot is replenished (here, slot~0 retains the
aggregate and slot~1 is refilled with a fresh arrival
$r_{\mathrm{new}}$);
(b)~the representative destination $k_{\mathrm{mg}}$ is sampled
uniformly at random from $\{k_0,k_1\} = \{0,1\}$
\emph{independently of $\kappa$} (per~\eqref{eq:mg_dest}).
The slot retained by $\kappa$ is therefore updated to the
\emph{coding-state record}
\begin{multline*}
  r_0' = \bigl(k{=}k_{\mathrm{mg}}\!\sim\!\mathrm{Unif}\{0,1\},\;
              f{=}\{p_1, p_2\},\\
              d{=}\min(5,3){=}3,\;
              \mathcal{S}{=}\{1,2\}\cap\{0,2\}{=}\{2\}\bigr),
\end{multline*}
where the file set, deadline, and side-information set are determined
by~\eqref{eq:mg_file}--\eqref{eq:mg_side} and are unaffected by either
$\kappa$ or the realization of $k_{\mathrm{mg}}$.
The aggregate $r_0'$ remains in the queue to track the enlarged packet
union and tightened side-information set for future XOR feasibility; it
does not represent unserved demand.
Note that $\kappa$ alters only \emph{which queue slot} the aggregate
occupies (and, equivalently, which slot is freed for fresh arrivals);
$k_{\mathrm{mg}}$ alters only \emph{which user is treated as the
representative destination} for the one-gap invariant
(Proposition~\ref{prop:one_gap}).
The two decisions are decoupled in our environment.

\emph{Step~3---Second broadcast (chained merge).}\;
Suppose request $r_2$ is still pending and the representative
destination from Step~2 is the realization $k_{\mathrm{mg}} = 0$ (the
analysis for $k_{\mathrm{mg}} = 1$ is symmetric).
Pair $(r_0', r_2)$: cache~$k_2{=}2$ must hold
$f_0' = \{p_1, p_2\}$.
Since $2 \in \mathcal{S}_0' = \{2\}$, cache~2 indeed holds both
packets.
Conversely, cache~$k_{\mathrm{mg}}{=}0$ must hold $f_2 = \{p_3\}$; since
$0 \in \mathcal{S}_2 = \{0,1\}$, this also holds.
The pair is feasible.
The server broadcasts $X' = p_1 \oplus p_2 \oplus p_3$ in a
\emph{second} channel use.
The two active participants decode as guaranteed by
Proposition~\ref{prop:one_gap}:
user~2 (new partner) cancels $\{p_1, p_2\} \subset \mathcal{C}_2$
to recover~$p_3$;
user~$k_{\mathrm{mg}} = 0$ cancels
$\{p_2, p_3\} \subset \mathcal{C}_0$ to recover~$p_1$, consistent with
the one-gap invariant for the realization $p^{\star} = p_1$ (had
$k_{\mathrm{mg}} = 1$ been drawn, the gap packet would instead be
$p^{\star} = p_2$ and user~1 would recover it from
$\{p_1, p_3\} \subset \mathcal{C}_1$).
In this particular example, the non-representative user can also
decode because the symmetric cache structure ensures
$\{p_1, p_3\} \subset \mathcal{C}_1$ and
$\{p_2, p_3\} \subset \mathcal{C}_0$;
however, this is a consequence of the example's symmetry, not a general
property of chained merges
(Remark~\ref{rem:scope_decodability}).
The XOR degree of this broadcast is
$U_t = |f_0' \cup f_2| = 3$, exceeding the pairwise baseline of~2.
After this merge, the aggregate's side-information set shrinks to
$\mathcal{S}_{\mathrm{mg}} = \{2\} \cap \{0,1\} = \emptyset$, so no
further merges are possible, thereby illustrating the mergeability cost of
the method of aggressive chaining.

\section{Observation Feature Definitions}
\label{sec:appendix:features}

Tables~\ref{tab:req_features} and~\ref{tab:pair_features} list the per-request
and per-pair features used in the observation encoding
(Section~\ref{sec:env:state}).

\begin{table}[t]
  \caption{Per-Request Feature Vector ($d_{\text{req}} = 2K+3 = 13$ for Track~A; $2K+4 = 14$ for Track~B). For an aggregate (post-merge) coding-state record, the ``Target cache'' feature uses the representative destination $k_{\mathrm{mg}}$ (Eq.~\eqref{eq:mg_dest}), not the demand origin of any constituent arrival.}
  \label{tab:req_features}
  \centering
  \begin{adjustbox}{max width=\columnwidth}
  \begin{tabular}{@{}llcl@{}}
    \toprule
    \textbf{Feature} & \textbf{Formula} & \textbf{Dim} & \textbf{Rationale} \\
    \midrule
    Target cache (one-hot) & $\mathbf{e}_{k_r} \in \{0,1\}^K$ & $K$ &
      Destination identity \\[2pt]
    Cache-placement flags  & $[\mathbb{1}(c \in \mathcal{S}_r)]_{c=0}^{K-1}$ & $K$ &
      Side-info providers \\[2pt]
    Normalized deadline    & $d_r / D$ & $1$ &
      Urgency in $[0,1]$ \\[2pt]
    Normalized packet-set size (aggregate size $|f_r|$) & $\min(|f_r|, U_{\max}) / U_{\max}$ & $1$ &
      Detects merged aggregates \\[2pt]
    Normalized degree & $\deg(r) / (Q-1)$ & $1$ &
      Merge opportunity count \\
    \midrule
    \multicolumn{4}{@{}l}{\textit{Track~B additional feature}} \\
    Popularity mass norm & $\min(\hat{m}(f_r), m_{\text{cap}}) / m_{\text{cap}}$ & $1$ &
      Demand-weighted importance \\
    \bottomrule
  \end{tabular}
  \end{adjustbox}
\end{table}

\begin{table}[t]
  \caption{Per-Pair Feature Vector ($d_{\text{pair}} = 8$ for Track~A; $11$ for Track~B)}
  \label{tab:pair_features}
  \centering
  \begin{adjustbox}{max width=\columnwidth}
  \begin{tabular}{@{}lll@{}}
    \toprule
    \textbf{Feature} & \textbf{Formula} & \textbf{Rationale} \\
    \midrule
    Intersection norm    & $|\mathcal{S}_i \cap \mathcal{S}_j| / K$         &
      Future merge potential \\[2pt]
    Degree $i$ norm      & $\deg(r_i) / (Q-1)$                              &
      Request $i$'s alternatives \\[2pt]
    Degree $j$ norm      & $\deg(r_j) / (Q-1)$                              &
      Request $j$'s alternatives \\[2pt]
    Min deadline norm    & $\min(d_i, d_j) / D$                             &
      Pair's combined urgency \\[2pt]
    Packet-set size $i$ (aggregate $|f_i|$) & $\min(|f_i|, U_{\max}) / U_{\max}$              &
      Aggregate size of $r_i$ \\[2pt]
    Packet-set size $j$ (aggregate $|f_j|$) & $\min(|f_j|, U_{\max}) / U_{\max}$              &
      Aggregate size of $r_j$ \\[2pt]
    Queue index $i$ norm & $i / (Q-1)$                                      &
      Position for GNN topology \\[2pt]
    Queue index $j$ norm & $j / (Q-1)$                                      &
      Position for GNN topology \\
    \midrule
    \multicolumn{3}{@{}l}{\textit{Track~B additional features}} \\
    Pop.\ mass $i$ norm & $\min(\hat{m}(f_i), m_{\text{cap}}) / m_{\text{cap}}$ &
      Demand weight of $r_i$ \\[2pt]
    Pop.\ mass $j$ norm & $\min(\hat{m}(f_j), m_{\text{cap}}) / m_{\text{cap}}$ &
      Demand weight of $r_j$ \\[2pt]
    Pop.\ mass union    & $\min(\hat{m}(f_i \cup f_j), m_{\text{cap}}) / m_{\text{cap}}$ &
      Combined demand weight \\
    \bottomrule
  \end{tabular}
  \end{adjustbox}
\end{table}

\paragraph{Normalization constants used in
Tables~\ref{tab:req_features}--\ref{tab:pair_features}.}
We fix:
\begin{itemize}[leftmargin=*, nosep]
  \item $U_{\max} = 6$ is an \emph{observation-only clip}: the
    packet-set size feature (aggregate size $|f_r|$) uses
    $\min(|f_r|, U_{\max}) / U_{\max}$, so aggregates
    with $|f_r| > 6$ would saturate at $1$ on this feature dimension. There
    is \textbf{no} hard cap on $|f_r|$ in the merge-feasibility rule
    (Eq.~\eqref{eq:xor_feasibility}) or in the one-step transition
    (Algorithm~\ref{alg:one_step}); aggregates may grow beyond
    $U_{\max}$ packets in principle. \texttt{max\_union\_files} is the
    name of the corresponding implementation hyperparameter in the
    reference simulator and is \emph{only} an observation-feature
    normalization constant (the input that sets $U_{\max}$ in the
    feature $\min(|f_r|,U_{\max})/U_{\max}$); it is \textbf{not} a
    transition-time or feasibility-time cap on $|f_r|$. The observation
    therefore
    aliases all states with $|f_r| \geq U_{\max}$ on this feature
    coordinate, which we accept because (i) deep chains are rare on
    the on-policy support, and (ii) the side-information set
    $\mathcal{S}_{\mathrm{mg}}$ shrinks monotonically with chain
    depth, making longer chains progressively infeasible to extend.
    Appendix~\ref{sec:appendix:fr_empirical} reports an empirical
    saturation-frequency diagnostic on the deployed checkpoint that
    confirms (i) directly: across $20{,}000{,}000$ on-policy
    observations on ID-default the clip rate
    $\Pr[|f_r|\!\geq\!U_{\max}]$ is $7\!/\!2\!\cdot\!10^{7}\approx
    3.5\!\cdot\!10^{-7}$, with $|f_r|{=}1$ accounting for $90.4\%$
    of observations; under the merge-heavy stress regime
    Curr-pcache0.40 the clip rate stays below
    $6\!\cdot\!10^{-6}$. Saturation is therefore negligible on the
    distributions used to evaluate the deployed policy, so the
    observation aliasing has no measurable effect on the reported
    headline metrics. A controlled sensitivity sweep over
    $U_{\max}\in\{6,8,12\}$ at training time, and a richer aggregate
    summary that retains contained file identities, are left as
    future work.
  \item $\hat m(f_r)$ — the popularity-mass estimate of $f_r$ under
    the assumed Zipf distribution, computed as
    $\hat m(f_r) = \sum_{p \in f_r} \hat p_{\phi(p)}$, where
    $\hat p_n \propto (n+1)^{-\alpha}$ is the Zipf
    rank-probability of file index $n$ ($\alpha=0.8$ in the Track~B
    base regime; $\phi(p) = \lfloor p/B \rfloor$ is the packet-to-file
    map of Eq.~\eqref{eq:phi_packet_to_file}).
  \item $m_{\text{cap}} = U_{\max} \cdot \hat p_0 = 6 \cdot \hat p_0$
    — the popularity-mass cap, set to the popularity mass of an
    aggregate of size $U_{\max}$ all carrying the most popular file
    (rank 0). Because $|f_r|$ is not hard-capped at $U_{\max}$
    (see the $U_{\max}$ bullet above), the popularity-mass features
    apply an explicit $\min(\cdot, m_{\text{cap}})$ clip in the
    numerator (matching the simulator's
    \texttt{pop\_mass\_norm} routine, which returns
    $\min(m, m_{\text{cap}})/m_{\text{cap}}$); this is what
    guarantees the $[0,1]$ bound on these feature dimensions.
\end{itemize}
These constants are fixed at episode start and shared across all
features.

\section{\texorpdfstring{Empirical $|f_r|$ Distribution Under the Learned Policy}{Empirical |f\_r| Distribution Under the Learned Policy}}
\label{sec:appendix:fr_empirical}

The observation feature $\min(|f_r|, U_{\max})/U_{\max}$ with
$U_{\max}{=}6$ aliases all aggregate-record sizes
$|f_r|\!\geq\!U_{\max}$ to the same coordinate, and to verify that this
aliasing is rare on the distributions used in the paper, we instrumented
the environment to log the per-step $|f_r|$ of each queue record at
observation time, and then re-evaluated the deployed selected checkpoint
(seed~0) on the holdout set used throughout the paper (50 base seeds
$\times$ 200 episodes per seed, $H{=}50$ steps per episode, queue
length $|Q|{=}10$, i.e., totalling $5\!\cdot\!10^{6}$ observations per
regime per seed and $|Q|{\cdot}H{\cdot}\text{episodes}=10^{5}$
observations per (regime, holdout seed) pair).

Table~\ref{tab:fr_empirical} reports the pooled histogram of $|f_r|$
under two regimes, i.e., ID-default (the in-distribution evaluation regime)
and Curr-pcache0.40 (a curriculum-stress regime with $p_{\mathrm{cache}}{=}0.40$
that produces the densest population of the mergeable pairs in
Sec.~\ref{sec:results}--\ref{sec:ablation}, and is therefore the
hardest test for chain depth). On both regimes the distribution
is dominated by the singletons ($|f_r|{=}1$ accounts for $\geq\!90\%$ of
observations), while the depth--$2$ aggregates account for $7$--$8\%$, and the
clip rate $\Pr[|f_r|\!\geq\!U_{\max}]$ stays below
$6\!\cdot\!10^{-6}$ even on the stress regime. The maximum observed
$|f_r|$ is $6$ on ID-default and $7$ on Curr-pcache0.40, and even on
the stress regime fewer than $0.0006\%$ of observations reach
$|f_r|{=}6$. The saturation is therefore negligible on the on-policy
support, i.e., the observation aliasing has no measurable effect on the
reported headline metrics, while the claim that
``deep chains are rare on the on-policy support'' from
Appendix~\ref{sec:appendix:features} is supported empirically on the
deployed checkpoint.

\begin{table}[t]
\centering
\caption{Empirical distribution of the aggregate-record size $|f_r|$
  observed by the deployed selected checkpoint (train seed~0)
  on the holdout set (50 base seeds $\times$ 200 episodes per seed).
  Counts pooled across all (eval seed, episode, step, queue position)
  observation samples. ID-default totals $2\!\cdot\!10^{7}$ samples;
  Curr-pcache0.40 totals $5\!\cdot\!10^{6}$ samples.}
\label{tab:fr_empirical}
\begin{adjustbox}{max width=\columnwidth}
\begin{tabular}{lrrrr}
\toprule
\textbf{$|f_r|$}
  & \multicolumn{2}{c}{\textbf{ID-default}}
  & \multicolumn{2}{c}{\textbf{Curr-pcache0.40}} \\
\cmidrule(lr){2-3}\cmidrule(lr){4-5}
  & count & \% & count & \% \\
\midrule
$1$ & $18{,}081{,}174$ & $90.41$ & $4{,}554{,}171$ & $91.08$ \\
$2$ & $1{,}675{,}662$  & $8.38$  & $357{,}151$    & $7.14$  \\
$3$ & $227{,}575$      & $1.14$  & $76{,}910$     & $1.54$  \\
$4$ & $15{,}098$       & $0.075$ & $11{,}122$     & $0.222$ \\
$5$ & $484$            & $0.002$ & $616$          & $0.012$ \\
$6$ & $7$              & $4{\cdot}10^{-5}$ & $29$  & $5.8{\cdot}10^{-4}$ \\
$7$ & $0$              & $0$     & $1$            & $2{\cdot}10^{-5}$ \\
\midrule
\textbf{Clip rate} $\Pr[|f_r|\!\geq\!U_{\max}]$
   & \multicolumn{2}{c}{$3.5\!\cdot\!10^{-7}$}
   & \multicolumn{2}{c}{$6.0\!\cdot\!10^{-6}$} \\
\bottomrule
\end{tabular}
\end{adjustbox}
\end{table}

We do not retrain with a larger $U_{\max}$ in this version, considering
that the clip rate is near zero on both regimes, and thereby such a sweep
would not be expected to change the deployed-policy headline metrics
materially. A more informative future-work direction is a richer
aggregate-summary feature that retains the contained file identities
(rather than only the size), since that would help the policy
distinguish the aggregates with the same size but different
side-information mass.

\section{Chained-Merge State Sufficiency}
\label{sec:appendix:state_sufficiency}

This appendix formally establishes that the compressed aggregate state
$(k_{\mathrm{mg}}, f_{\mathrm{mg}}, \mathcal{S}_{\mathrm{mg}})$
retained after a chained merge is a sufficient statistic for all the
future feasibility and decodability decisions, while the proof relies on two structural constraints:
(C1)~the request-generation rule ensures
$f_r \not\subseteq \mathcal{C}_{k_r}$ (i.e., a user never requests a packet
it already caches), and
(C2)~the XOR feasibility~\eqref{eq:xor_feasibility} requires each merge
partner to cache the other's packet set.

\subsection{One-Gap Invariant}

\begin{proposition}[One-Gap Invariant]
\label{prop:one_gap}
Let $r_{\mathrm{mg}}$ be a coding-state record produced by any
sequence of~$n \geq 1$ chained pairwise merges, each satisfying
XOR feasibility~\eqref{eq:xor_feasibility} and the request-generation
constraint~(C1).
Let $p^{\star}$ denote the original singleton packet of whichever
initial request contributed the current representative
destination~$k_{\mathrm{mg}}$.
Then
\begin{equation}
  f_{\mathrm{mg}} \setminus \{p^{\star}\}
    \;\subseteq\; \mathcal{C}_{k_{\mathrm{mg}}},
  \qquad
  p^{\star} \notin \mathcal{C}_{k_{\mathrm{mg}}}.
  \label{eq:one_gap}
\end{equation}
That is, $k_{\mathrm{mg}}$ caches every packet in $f_{\mathrm{mg}}$
except exactly one: its own originally requested packet~$p^{\star}$.
\end{proposition}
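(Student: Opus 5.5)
The plan is to prove \eqref{eq:one_gap} by structural induction on the merge tree that produced $r_{\mathrm{mg}}$. Each coding-state record is either a fresh singleton or the result of one pairwise merge of two earlier records. I will strengthen the statement slightly so that it also covers singletons, treated as the $n=0$ case. For a singleton $r=(k_r,\{p\},d_r,\mathcal{S}_r)$ take $p^{\star}=p$. Then $f_r\setminus\{p^{\star}\}=\varnothing\subseteq\mathcal{C}_{k_r}$ trivially, and $p^{\star}\notin\mathcal{C}_{k_r}$ is exactly constraint (C1) from the request-generation rule.

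Two observations will be used throughout. First, by (A3) the caches $\mathcal{C}_k$ are fixed for the whole episode. Second, no operation other than a merge modifies the fields $(k_r,f_r)$ of a record. Deadline decrements, unicasts, expirations and refills either leave a record untouched or remove it entirely. So the invariant, once established for a record, persists until that record next participates in a merge. The representative $p^{\star}$ is carried along with $k_{\mathrm{mg}}$ in the natural way: when \eqref{eq:mg_dest} selects the endpoint $k_a$, the new $p^{\star}$ is the $p^{\star}$ of record $a$. This makes ``the original singleton packet of whichever initial request contributed $k_{\mathrm{mg}}$'' well defined by recursion.

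For the inductive step, consider a feasible merge of records $r_a$ and $r_b$, each of which satisfies the (strengthened) invariant with gaps $p_a^{\star}$ and $p_b^{\star}$. By Eq.~\eqref{eq:mg_dest} and the correctness remark following it, we have $k_{\mathrm{mg}}\in\{k_a,k_b\}$. By symmetry it suffices to take $k_{\mathrm{mg}}=k_a$, so $p^{\star}=p_a^{\star}$. Then
\[
f_{\mathrm{mg}}\setminus\{p_a^{\star}\}=\bigl(f_a\setminus\{p_a^{\star}\}\bigr)\cup\bigl(f_b\setminus\{p_a^{\star}\}\bigr).
\]
The first part lies in $\mathcal{C}_{k_a}$ by the inductive hypothesis for $r_a$. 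The second part lies in $\mathcal{C}_{k_a}$ because the feasibility condition \eqref{eq:xor_feasibility}, evaluated on the stored destinations, gives $f_b\subseteq\mathcal{C}_{k_a}$. The second claim, $p_a^{\star}\notin\mathcal{C}_{k_a}$, is inherited unchanged from $r_a$. Only feasibility in one direction and the hypothesis on the retained side are used. In particular, uniformity of $k_{\mathrm{mg}}$ and its independence from $\kappa$ are never invoked, consistent with claim~(a) after \eqref{eq:mg_dest}.

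I expect the main obstacle to be bookkeeping rather than algebra. Three points need care.
\begin{itemize}
\item The merge partner $r_b$ may itself be an aggregate, not a singleton. The induction must therefore run over merge trees rather than linear chains, which is why I strengthen the hypothesis to all records.
\item Packet identities may overlap ($p_a=p_b$ or $f_a\cap f_b\neq\varnothing$), as the paper allows. This is harmless because the argument only uses set unions and subset relations. Note also that $p_a^{\star}\notin f_b$ follows automatically from $f_b\subseteq\mathcal{C}_{k_a}$ together with $p_a^{\star}\notin\mathcal{C}_{k_a}$, so the gap is never double-counted.
\item Feasibility must be checked against the representative destination $k_r$ stored in the record, not against original requesters. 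This matches Eq.~\eqref{eq:merge_set} and Algorithm~\ref{alg:one_step}.
\end{itemize}
Once these three points are fixed, the case split on which endpoint supplies $k_{\mathrm{mg}}$ closes the induction, and the proposition follows for every $n\ge 1$.
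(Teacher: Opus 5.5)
Your proposal is correct and follows essentially the same route as the paper's proof. Both argue by induction with an inductive step that lets either partner be an aggregate satisfying the invariant, split on which endpoint supplies $k_{\mathrm{mg}}$, and combine the hypothesis on that endpoint with the single feasibility direction $f_b\subseteq\mathcal{C}_{k_a}$. The differences are minor: you start the induction at fresh singletons, which the paper treats as a specialization, and your decomposition $(f_a\setminus\{p_a^{\star}\})\cup(f_b\setminus\{p_a^{\star}\})$ means the paper's auxiliary step $p_a^{\star}\notin f_b$ is not needed for the displayed conclusion.
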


\begin{IEEEproof}
We proceed by induction on the number of merges~$n$.

\smallskip\noindent
\textbf{Base case ($n=1$).}\;
Two fresh requests $r_i = (k_i, \{p_i\}, d_i, \mathcal{S}_i)$
and $r_j = (k_j, \{p_j\}, d_j, \mathcal{S}_j)$ are merged.
XOR feasibility requires
$\{p_i\} \subseteq \mathcal{C}_{k_j}$ and
$\{p_j\} \subseteq \mathcal{C}_{k_i}$, so each user caches the
other's packet.
Constraint~(C1) gives
$p_i \notin \mathcal{C}_{k_i}$ and $p_j \notin \mathcal{C}_{k_j}$.
Now $k_{\mathrm{mg}} \in \{k_i, k_j\}$ is drawn uniformly.

\emph{Case $k_{\mathrm{mg}} = k_i$:}\;
$f_{\mathrm{mg}} = \{p_i, p_j\}$;
$p_j \in \mathcal{C}_{k_i}$ (feasibility) and
$p_i \notin \mathcal{C}_{k_i}$ (C1).
Set $p^{\star} = p_i$; then~\eqref{eq:one_gap} holds.

\emph{Case $k_{\mathrm{mg}} = k_j$:}\;
By symmetry, $p^{\star} = p_j$ and~\eqref{eq:one_gap} holds.

\smallskip\noindent
\textbf{Inductive step.}\;
The simulator's feasible-pair set
$\mathcal{M}_t = \{(i,j): f_i \subseteq \mathcal{C}_{k_j},\,
f_j \subseteq \mathcal{C}_{k_i}\}$ is defined over arbitrary queue
entries, so a feasible partner of an aggregate may itself be either a
fresh singleton or another coding-state record. We therefore prove
the inductive step in the more general case where both partners
satisfy~\eqref{eq:one_gap}; the singleton case is recovered by
specializing $|f| = 1$.

Assume that after some sequence of merges, queue records $r_a$ and
$r_b$ each satisfy the one-gap invariant: $r_a$ has representative
destination $k_a$ and gap packet $p_a^{\star}$ (so
$f_a \setminus \{p_a^{\star}\} \subseteq \mathcal{C}_{k_a}$ and
$p_a^{\star} \notin \mathcal{C}_{k_a}$); $r_b$ has representative
destination $k_b$ and gap packet $p_b^{\star}$. (A fresh singleton is
the special case $f = \{p^{\star}\}$.)
Merging $(r_a, r_b)$ produces
$f_{\mathrm{mg}}' = f_a \cup f_b$ with
$k_{\mathrm{mg}}' \in \{k_a, k_b\}$.

XOR feasibility requires
(F1)~$f_a \subseteq \mathcal{C}_{k_b}$ and
(F2)~$f_b \subseteq \mathcal{C}_{k_a}$.

\emph{Case $k_{\mathrm{mg}}' = k_a$:}\;
By the inductive hypothesis at $r_a$,
$f_a \setminus \{p_a^{\star}\} \subseteq \mathcal{C}_{k_a}$ and
$p_a^{\star} \notin \mathcal{C}_{k_a}$.
By~(F2), $f_b \subseteq \mathcal{C}_{k_a}$.
The candidate gap $p_a^{\star}$ cannot lie in $f_b$: if it did, then
$p_a^{\star} \in f_b \subseteq \mathcal{C}_{k_a}$, contradicting the
hypothesis. Hence $f_a \cup f_b$ contains $p_a^{\star}$, and
\[
  f_{\mathrm{mg}}' \setminus \{p_a^{\star}\}
  = (f_a \setminus \{p_a^{\star}\}) \cup f_b
  \subseteq \mathcal{C}_{k_a},
\]
while $p_a^{\star} \notin \mathcal{C}_{k_a}$.
So~\eqref{eq:one_gap} holds with $p^{\star} = p_a^{\star}$.

\emph{Case $k_{\mathrm{mg}}' = k_b$:}\;
By symmetry (swap $a \leftrightarrow b$),
\eqref{eq:one_gap} holds with $p^{\star} = p_b^{\star}$.

The original two specializations are recovered as: (a) both partners
fresh singletons ($n=1$ base case), (b) one aggregate and one
singleton ($r_b$ a fresh request with $f_b = \{p_b^{\star}\}$),
recovering the wording of the previous draft.
\end{IEEEproof}

\subsection{State Sufficiency Theorem}

\begin{theorem}[Aggregate State Sufficiency]
\label{thm:state_sufficiency}
The triple $(k_{\mathrm{mg}}, f_{\mathrm{mg}},
\mathcal{S}_{\mathrm{mg}})$ is a sufficient statistic for all
future merge-feasibility and decodability decisions involving
$r_{\mathrm{mg}}$.
That is, for any feasible queue partner
$r_z = (k_z, f_z, d_z, \mathcal{S}_z)$, fresh singleton or
coding-state record, with its own one-gap invariant
(Proposition~\ref{prop:one_gap}) and gap packet $p_z^{\star}$
(reducing to $p_z^{\star} = p_z$ when $f_z$ is a singleton):
\begin{enumerate}[nosep]
  \item \textbf{Feasibility:}\;
    The XOR feasibility condition~\eqref{eq:xor_feasibility} for
    the pair $(r_{\mathrm{mg}}, r_z)$ depends only on
    $f_{\mathrm{mg}}$, $k_{\mathrm{mg}}$, $f_z$, $k_z$, and the cache
    contents $\mathcal{C}_{k_z}$, $\mathcal{C}_{k_{\mathrm{mg}}}$;
    none of these requires knowledge of previously merged users'
    destinations.
  \item \textbf{Decodability of $k_{\mathrm{mg}}$:}\;
    By Proposition~\ref{prop:one_gap}, $k_{\mathrm{mg}}$ caches
    every packet in $f_{\mathrm{mg}}$ except~$p^{\star}$; combined with
    $f_z \subseteq \mathcal{C}_{k_{\mathrm{mg}}}$ (feasibility),
    $k_{\mathrm{mg}}$ can cancel all known packets from
    $X' = \bigoplus_{p \in f_{\mathrm{mg}} \cup f_z} p$
    to recover~$p^{\star}$.
  \item \textbf{Decodability of $k_z$:}\;
    By Proposition~\ref{prop:one_gap} applied to $r_z$,
    $k_z$ caches $f_z \setminus \{p_z^{\star}\}$ and
    $p_z^{\star} \notin \mathcal{C}_{k_z}$.
    Feasibility requires $f_{\mathrm{mg}} \subseteq \mathcal{C}_{k_z}$,
    and $p_z^{\star} \notin f_{\mathrm{mg}}$ (otherwise
    $p_z^{\star} \in \mathcal{C}_{k_z}$, contradicting the hypothesis).
    Hence $(f_{\mathrm{mg}} \cup f_z) \setminus \{p_z^{\star}\}
    \subseteq \mathcal{C}_{k_z}$, so $k_z$ can cancel
    everything except $p_z^{\star}$ from $X'$ and recover
    $p_z^{\star}$.
    The singleton case ($f_z = \{p_z\}$, $p_z^{\star} = p_z$) is
    recovered by specialization.
  \item \textbf{Side-information evolution:}\;
    $\mathcal{S}_{\mathrm{mg}} = \bigcap_{\ell} \mathcal{S}_{\ell}$
    over all predecessor requests~$\ell$; this is the exact set of
    caches holding every packet in $f_{\mathrm{mg}}$.
    Future feasibility requires only
    $k_z \in \mathcal{S}_{\mathrm{mg}}$ (direction~1) and
    $k_{\mathrm{mg}} \in \mathcal{S}_z$ (direction~2), both
    computable from the aggregate state alone.
\end{enumerate}
Previously merged users (those whose requests were folded into
$r_{\mathrm{mg}}$ in earlier merges) need not be tracked: they were
served at their respective merge steps, and correctness of future
merges depends only on the representative~$k_{\mathrm{mg}}$.
\end{theorem}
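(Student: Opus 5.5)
The plan is to prove the four items of Theorem~\ref{thm:state_sufficiency} in order. Before that, I would establish one auxiliary identity by induction on the merge history: the aggregate side-information set satisfies $\mathcal{S}_{\mathrm{mg}} = \{k : f_{\mathrm{mg}} \subseteq \mathcal{C}_k\}$. This is the workhorse for items~1 and~4.

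\textbf{Auxiliary identity.} For a fresh request the identity is the definition of $\mathcal{S}_r$ in the request-generation step (iii). For a merge of $r_a,r_b$ that both satisfy it, Eq.~\eqref{eq:mg_side} gives
\[
k\in\mathcal{S}_a\cap\mathcal{S}_b \iff f_a\subseteq\mathcal{C}_k \text{ and } f_b\subseteq\mathcal{C}_k \iff f_a\cup f_b\subseteq\mathcal{C}_k .
\]
Unrolling the recursion yields $\mathcal{S}_{\mathrm{mg}}=\bigcap_\ell\mathcal{S}_\ell$ over all predecessors $\ell$. This already proves the first half of item~4.

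\textbf{Feasibility (item~1) and the rest of item~4.} Substitute the identity into Eq.~\eqref{eq:xor_feasibility} for the pair $(r_{\mathrm{mg}},r_z)$. The first condition $f_{\mathrm{mg}}\subseteq\mathcal{C}_{k_z}$ becomes $k_z\in\mathcal{S}_{\mathrm{mg}}$. The second condition $f_z\subseteq\mathcal{C}_{k_{\mathrm{mg}}}$ becomes $k_{\mathrm{mg}}\in\mathcal{S}_z$, using the identity applied to $r_z$. Both conditions read only the current fields of the two records, together with the fixed cache contents. No destination of an earlier folded request enters. The only subtle point is that $k_{\mathrm{mg}}$ is the representative field actually stored in the record, so feasibility as implemented by Eq.~\eqref{eq:merge_set} matches this reading exactly.

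\textbf{Decodability (items~2 and~3).} These follow from Proposition~\ref{prop:one_gap}.

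For $k_{\mathrm{mg}}$, combine $f_{\mathrm{mg}}\setminus\{p^\star\}\subseteq\mathcal{C}_{k_{\mathrm{mg}}}$ with $f_z\subseteq\mathcal{C}_{k_{\mathrm{mg}}}$. This gives $(f_{\mathrm{mg}}\cup f_z)\setminus\{p^\star\}\subseteq\mathcal{C}_{k_{\mathrm{mg}}}$. We also need $p^\star\notin f_z$, which holds because otherwise $p^\star\in\mathcal{C}_{k_{\mathrm{mg}}}$, contradicting Eq.~\eqref{eq:one_gap}. With that, XOR-cancelling every known packet from $X'$ over $\mathrm{GF}(2)$ leaves exactly $p^\star$. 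Assumption (A1), equal-length packets, makes the cancellation well defined.

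For $k_z$ the argument is symmetric. It uses the one-gap property of $r_z$, which is trivially true for a singleton by (C1), together with $f_{\mathrm{mg}}\subseteq\mathcal{C}_{k_z}$.

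I would also observe that a packet lying in $f_{\mathrm{mg}}\cap f_z$ appears once in the union, so it is XORed once. This does not disturb cancellation, because the XOR is over the set $f_{\mathrm{mg}}\cup f_z$.

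\textbf{Closing remark and main obstacle.} I would add a short remark on previously merged users: they were decoded at their own merge broadcasts by the base and inductive cases of Proposition~\ref{prop:one_gap}. None of items~1--4 references them, so they need not be tracked.

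The step I expect to need the most care is the auxiliary identity together with its interaction with the representative field. Specifically, I must check that $\mathcal{S}_{\mathrm{mg}}$ is exact, and not merely a superset or subset, under chained merges where $f_a$ and $f_b$ overlap. The set-union argument above handles overlap cleanly. I would verify it first, before relying on it in items~1 and~4.
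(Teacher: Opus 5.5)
Your proposal is correct and follows essentially the same route as the paper. Items~1 and~4 come from the feasibility condition and the definition of $\mathcal{S}_{\mathrm{mg}}$; your inductive proof of $\mathcal{S}_{\mathrm{mg}}=\{k : f_{\mathrm{mg}}\subseteq\mathcal{C}_k\}$ simply makes explicit what the paper treats as immediate from the definitions. Items~2--3 come from Proposition~\ref{prop:one_gap} applied to $r_{\mathrm{mg}}$ and $r_z$, together with the two feasibility inclusions.

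One small point: your check $p^\star\notin f_z$ is true but not needed. The facts $p^\star\in f_{\mathrm{mg}}\setminus\mathcal{C}_{k_{\mathrm{mg}}}$ and $(f_{\mathrm{mg}}\cup f_z)\setminus\{p^\star\}\subseteq\mathcal{C}_{k_{\mathrm{mg}}}$ already leave $p^\star$ as the single unknown in $X'$.
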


\begin{IEEEproof}
Items~1, 3, and~4 follow directly from the definitions of
$f_{\mathrm{mg}}$, $\mathcal{S}_{\mathrm{mg}}$, and the
feasibility condition~\eqref{eq:xor_feasibility}: no reference to
previously merged users appears in any of these expressions.
Item~2 is a direct corollary of Proposition~\ref{prop:one_gap}:
the one-gap structure guarantees that $k_{\mathrm{mg}}$ possesses
$|f_{\mathrm{mg}}| - 1$ packets of $f_{\mathrm{mg}}$ and all of
$f_z$ (by feasibility), leaving exactly one unknown
($p^{\star}$) in the XOR sum, which $k_{\mathrm{mg}}$ recovers.
\end{IEEEproof}

\begin{remark}[Scope of decodability in chained merges]
\label{rem:scope_decodability}
Proposition~\ref{prop:one_gap} guarantees that the two
\emph{active participants} of each chained merge (i.e., $k_{\mathrm{mg}}$
and $k_z$) always decode the broadcast, while the users that are served by earlier links in the chain are not
guaranteed to decode the subsequent broadcasts, i.e., a
previously served user~$k_{\mathrm{prev}}$ may lack packets that are
introduced in later merge steps.
This is benign because each such user was already served at its
own merge step, while the coding-state record exists solely to enable
the future merges, not to re-serve the past users.
\end{remark}

\section{Coded Transmission Remarks}
\label{sec:appendix:coded_remarks}

The following remarks supplement the coded transmission model in Section~\ref{sec:model:coded}.

\begin{remark}[Chained merges and pairwise scope]
Each link in a merge chain corresponds to a separate channel use, while the XOR
feasibility condition~\eqref{eq:xor_feasibility} is re-evaluated at every
step using the updated packet union $f_{\mathrm{mg}}$ and the
side-information set $\mathcal{S}_{\mathrm{mg}} = \mathcal{S}_i \cap
\mathcal{S}_j$ defined in Eq.~\eqref{eq:mg_side}. Because the
intersection can only weakly shrink each component admissible set
($\mathcal{S}_{\mathrm{mg}}\subseteq\mathcal{S}_i$ and
$\mathcal{S}_{\mathrm{mg}}\subseteq\mathcal{S}_j$) while the next merge
partner must hold the \emph{entire} growing packet union
$f_{\mathrm{mg}}$ rather than only the original subfile, each
successive link \emph{typically} reduces the future feasibility. The
reduction is strict when the new merge partner contributes at least
one packet identity not already in $f_{\mathrm{mg}}$, i.e., when
$|f_{\mathrm{mg}}|$ strictly grows, while the chained partners that contribute
no novel packet identities (i.e., the non-gap-overlap case discussed in
Sec.~\ref{sec:model:coded}) leave the next feasibility check no
harder than the previous one.
In this work we therefore restrict to the \emph{pairwise feasible merges} at
each step, even though higher-order simultaneous multi-way merges (i.e., forming a single
coded packet covering three or more users in one feasibility check) are not
modelled and are identified as a future extension
(Section~\ref{sec:conclusion}).
\end{remark}

\begin{remark}[Per-broadcast accounting is a packet-set count]
Each link in a merge chain corresponds to a distinct channel use (i.e., one broadcast
slot). The metric $U_t = |f_{\mathrm{mg}}|$ is defined throughout this paper
as a \emph{packet-set cardinality}, i.e., the number of distinct packet identities
carried in $f_{\mathrm{mg}}$ at the moment of the broadcast.
The request generator samples each fresh request's packet ID independently
from the library, so two active queue records may in principle carry the same
packet identity. XOR feasibility~\eqref{eq:xor_feasibility} together with the
no-self-cache constraint~(C1) does \emph{not} exclude all such overlaps: it
only excludes the case where one record's gap packet $p_a^{\star}$ coincides
with the other record's gap $p_b^{\star}$, since $p_a^{\star}\notin
\mathcal{C}_{k_a}$ and the feasibility condition $f_a\subseteq\mathcal{C}_{k_b}$
together force $p_a^{\star}\in\mathcal{C}_{k_b}$, leaving the gap structures
of the two records distinct. \emph{Non-gap} packet overlaps are
compatible with feasibility. As a counterexample template
(displayed in the local shorthand $(f_r,\,k_r,\,\text{gap}\,p_r^{\star})$
that omits the deadline $d_r$ and side-information set $\mathcal{S}_r$
of the canonical record $r=(k_r,f_r,d_r,\mathcal{S}_r)$ since they are
irrelevant to the packet-overlap argument), take
$r_a=(\{p_2,p_3\},\,k_a,\text{gap }p_3)$ and
$r_b=(\{p_2,p_5\},\,k_b,\text{gap }p_5)$ with
$p_2\in\mathcal{C}_{k_a}\cap\mathcal{C}_{k_b}$,
$p_3\in\mathcal{C}_{k_b}\setminus\mathcal{C}_{k_a}$,
$p_5\in\mathcal{C}_{k_a}\setminus\mathcal{C}_{k_b}$: feasibility holds, both
one-gap invariants hold, and the merge produces $f_{\mathrm{mg}}=\{p_2,p_3,p_5\}$
even though three original singleton arrivals (the two each record was built
from, plus the partner just merged in) contributed to the chain.
We therefore avoid claiming that $U_t = |f_{\mathrm{mg}}|$ counts the original
singleton arrivals, while the readers who want strict original-arrival semantics
should use the request-level metrics $\eta_{\mathrm{req}}, m_{\mathrm{req}},
\sigma_{\mathrm{req}}$ (Sec.~\ref{sec:model:request_accounting}), which are
computed from the per-arrival identifier sets $\mathcal{A}(r)$ stamped at
arrival time, thereby crediting each original arrival exactly once.
By Proposition~\ref{prop:one_gap}, the two active participants
(i.e., $k_{\mathrm{mg}}$ and the new partner~$k_z$) always decode correctly,
while each link in the chain is a distinct channel use, so summing $U_t$ over
$t$ counts the total broadcast-slot XOR degree (i.e., a packet-set quantity) rather
than redundant data delivery. The broadcast-efficiency score
$\sigma = H^{-1}\sum_t (U_t - \lambda E_t)$ should be read accordingly as a
per-slot \emph{packet-throughput-vs-channel-waste} composite, not as an
original-arrival completion score.
\end{remark}

\begin{remark}[Keep-side parameter and queue evolution]
The keep-side parameter $\kappa$ is a design choice that
goes beyond selecting \emph{which pair} to merge.
Considering that $d_{\mathrm{mg}} = \min(d_i, d_j)$ and
$\mathcal{S}_{\mathrm{mg}} = \mathcal{S}_i \cap \mathcal{S}_j$ are
fully determined by the merge rule, $\kappa$ cannot change the
aggregate's deadline or side-information set.
Its role is exclusively to govern the \emph{queue evolution}, i.e., $\kappa$
decides which queue slot is vacated and replenished by a fresh
arrival~$r_{\mathrm{new}}$.
The representative destination $k_{\mathrm{mg}}$ is a logically
\emph{separate} quantity, drawn uniformly from $\{k_i, k_j\}$
\emph{independently of $\kappa$}, i.e., the kept slot does not determine the
representative, while the representative does not determine the kept
slot. This decoupling gives the agent indirect control over the
queue's future composition and mergeability, while not entangling that
control with the one-gap invariant of
Proposition~\ref{prop:one_gap}, which depends only on the realized
$k_{\mathrm{mg}}$.
After the literature sweep summarized in Sec.~\ref{sec:related},
we are not aware of any prior RL formulation for the
coded caching that includes this control dimension, i.e., this is a positioning
statement, not a definitive non-existence result.
\end{remark}

\section{Reward and Masking Details}
\label{sec:appendix:reward_details}
\label{sec:appendix:masking_details}

\paragraph{Reward weight selection.}
The base reward uses equal weights ($w_{\mathrm{served}}{=}1$,
$w_{\mathrm{exp}}{=}1$) so that the per-step reward
$R_{\mathrm{base}} = U_t - E_t$ directly mirrors the evaluation-level
broadcast-efficiency score $\sigma = H^{-1}\sum_t(U_t - \lambda E_t)$ with $\lambda{=}1$.
The quality bonus weights ($w_{\mathrm{inter}}{=}0.75$, $w_{\mathrm{union}}{=}0.15$) and the potential-shaping weight ($w_\Phi{=}0.20$) were tuned on the validation split, considering the trade-off between the throughput and the deadline-compliance signals.

\paragraph{Potential-based shaping.}
\begin{remark}
The potential-based reward shaping is \emph{inspired by} the
policy-invariance theorem of~\cite{ng1999shaping}, which holds
exactly for an idealized fully observed continuing-state MDP. The
implemented training setup is the contextual-POMDP surrogate of
Definition~\ref{def:problem} (i.e., truncated at $H{=}50$, hidden episode
cache placement, aliased aggregate observations, no time-to-go
feature), so the theorem does not apply directly, i.e., we make
\emph{no} formal claim that the optimal policy under
$R_{\mathrm{total}}$ is identical to the optimal policy under
$R_{\mathrm{base}} + R_{\mathrm{quality}}$ in the implemented setup.
A formal invariance result for the implemented environment would
require establishing the exact-reset terminal conditions, a time-to-go
feature, and an observation exposing the hidden episode context and
the packet identities, while we leave such a formalization to future work.
In the present paper, $R_{\mathrm{shape}}$ is treated as a heuristic
credit-assignment shaping that empirically encourages the agent to
transition toward the queue states with higher merge potential (i.e., a proxy
for the future coding opportunities), while the empirical effect is
quantified by the reward-component ablation in
Sec.~\ref{sec:ablation}, thereby accelerating the credit assignment in the early stages of
training, where the shaped reward bridge is most useful.
\end{remark}

\paragraph{Action masking gradient bias.}
\begin{remark}
Huang and Onta\~{n}\'{o}n~\cite{huang2020invalidmask} prove that the masking
procedure described in Section~\ref{sec:env:masking} introduces no bias in the gradient direction, i.e., the policy-gradient
estimator remains unbiased over the feasible action sub-space.
In our environment, as few as $1$ (i.e., unicast only) and as many as $91$ actions
may be valid at a given step, depending on the queue state.
Without the masking, a uniform random policy would waste approximately
$|\mathcal{A}| - |\mathcal{M}_t| - 1$ actions at every step on the infeasible
choices, thereby substantially degrading the sample efficiency.
\end{remark}

\section{Policy Architecture and Training Details}
\label{sec:appendix:arch_details}
\label{sec:appendix:training_details}

\paragraph{Stage 1: Graph-pair encoder, full equations.}

The \emph{NodeMLP} lifts each per-request vector to a 128-dimensional embedding:
\begin{equation}
\begin{split}
  \mathbf{h}_r^{(0)} &= \mathrm{ReLU}\!\left(\mathbf{W}_2
    \,\mathrm{ReLU}\!\left(\mathbf{W}_1 \mathbf{x}_r + \mathbf{b}_1\right)
    + \mathbf{b}_2\right), \\
  & \mathbf{W}_1 \in \mathbb{R}^{256 \times d_{\mathrm{req}}},\;
        \mathbf{W}_2 \in \mathbb{R}^{128 \times 256}.
\end{split}
\end{equation}
Two successive \emph{GraphAttentionBlocks}~\cite{velickovic2018gat} with
four attention heads and $d_{\mathrm{model}}=128$ then propagate the information
between the requests that share a feasible merge edge, thereby yielding the contextual node
embeddings $\mathbf{h}_r^{(1)}$ and $\mathbf{h}_r^{(2)}$.
The dropout is set to zero to preserve the determinism at inference.
The per-step inference cost of the graph-attention encoder is $O(|\mathcal{M}_t| \cdot d_{\mathrm{model}})$ for the message passing (i.e., bounded by $P_{\max} \cdot d_{\mathrm{model}} = 45 \times 128$) plus $O(Q \cdot d_{\mathrm{model}})$ for the node update, while the forward pass is therefore inexpensive relative to the combinatorial search it replaces.

The \emph{EdgeNet} forms the embedding for each candidate merge pair
$(i, j, \text{keep-side})$ by concatenating the two node embeddings,
their element-wise product, the \emph{absolute} difference, and the
\emph{base} pair features (i.e., the first $d_{\mathrm{pair}} - 2$
entries of $\mathbf{p}_{ij}$; the last two entries encode normalized
queue indices and are used only to construct the merge-graph
adjacency, not as EdgeNet inputs):
\begin{equation}
\begin{aligned}
  \mathbf{e}_{ij} = \mathrm{MLP}\Big(
    \big[&\mathbf{h}_i^{(2)} \,\big\|\, \mathbf{h}_j^{(2)} \,\big\|\,
           \mathbf{h}_i^{(2)} \odot \mathbf{h}_j^{(2)} \\
    \,\big\|\,&|\mathbf{h}_i^{(2)} - \mathbf{h}_j^{(2)}| \,\big\|\,
           \mathbf{p}_{ij}^{\mathrm{base}}\big]
  \Big) \in \mathbb{R}^{64},
\end{aligned}
\end{equation}
where $\mathbf{p}_{ij}^{\mathrm{base}} \in \mathbb{R}^{d_{\mathrm{pair}} - 2}$
is the base pair-feature subvector and the MLP maps
$(4 \times 128 + d_{\mathrm{pair}} - 2) \to 256 \to 64$, i.e.,
input width $4{\cdot}128 + 6 = 518$ for Track~A
($d_{\mathrm{pair}}=8$) and $4{\cdot}128 + 9 = 521$ for Track~B
($d_{\mathrm{pair}}=11$).

\paragraph{Track~B parameter differences.}
The Track~B (Zipf) agent uses $d_{\text{req}}{=}14$ and
$d_{\text{pair}}{=}11$, thereby changing the Node~MLP input layer to
$14 \to 256 \to 128$ and the Edge~MLP input layer to
$(4 \times 128 + 9) = 521 \to 256 \to 64$, while yielding
${\approx}1.75$M total parameters
(i.e., matching Sec.~\ref{sec:method:arch}, Table~\ref{tab:model_arch}, and
the architecture summary at the start of Sec.~\ref{sec:experiments}).
All other architectural choices (i.e., attention heads, hidden dimensions,
critic structure) are identical.

\paragraph{ExIt self-improving teacher.}
Unlike the Phase~1 teacher, which is fixed and bootstraps from the SACM++
rollouts alone, the ExIt teacher uses the \emph{agent's own critic}
$\hat{V}_\theta$ for bootstrapping.
As the critic improves through the PPO training, the planner's labels improve
correspondingly, i.e., the student
improves the teacher's value estimates, which in turn produce better
labels for the student, thereby letting the agent \emph{exceed} its heuristic teachers
rather than just match them.

\paragraph{Full hyperparameters.}

\begin{table}[!t]
  \renewcommand{\arraystretch}{1.2}
  \caption{Verified Hyperparameter Summary}
  \label{tab:hparams}
  \centering
  \begin{adjustbox}{max width=\columnwidth}
  \begin{tabular}{llr}
    \toprule
    Phase & Hyperparameter & Value \\
    \midrule
    \multirow{7}{*}{BC (Phase 1)}
      & Samples ($N_{\mathrm{BC}}$)        & 150,000       \\
      & Training epochs                    & 6             \\
      & Minibatch size                     & 2,048         \\
      & Learning rate ($\eta_{\mathrm{BC}}$) & $3\times10^{-4}$ \\
      & Teacher rollout depth ($d$)        & 4             \\
      & Teacher MC samples ($M$)           & 4             \\
      & Teacher top-$K$ pairs              & 16            \\
    \midrule
    Warm-up (Phase 2)
      & Steps (frozen head)                & 50,000        \\
    \midrule
    \multirow{13}{*}{PPO (Phase 3)}
      & Parallel environments              & 32            \\
      & Steps per rollout ($n_{\mathrm{steps}}$) & 256     \\
      & Minibatch size                     & 1,024         \\
      & Epochs per update                  & 10            \\
      & Learning rate (start $\to$ end)    & $5\times10^{-4} \to 1\times10^{-4}$ \\
      & Clip range ($\varepsilon$)         & 0.20          \\
      & Target KL divergence               & 0.03          \\
      & Entropy coeff.\ (start $\to$ end)  & $0.010 \to 0.001$ \\
      & Discount factor ($\gamma$)         & 0.995         \\
      & GAE parameter ($\lambda$)          & 0.95          \\
      & Value function coefficient         & 0.5           \\
      & Phase-3 env steps per seed         & 6,000,000     \\
      & Total RL env steps per seed (incl. Phase-2 warm-up) & 6,050,000 \\
    \midrule
    \multirow{12}{*}{ExIt (Phase 3)}
      & Start-step threshold                & 300,000       \\
      & Interval threshold                  & 300,000       \\
      & Realized schedule (first fire / cadence / total fires) & 500K / $\approx$300K avg.\ / 20 fires over 24 chunks \\
      & States collected per iteration     & 8,192         \\
      & DAgger buffer capacity             & 80,000        \\
      & Distillation epochs                & 2             \\
      & Distillation batch size            & 2,048         \\
      & Distillation learning rate         & $1\times10^{-4}$ \\
      & Expert roll-in probability         & 0.20          \\
      & Planner lookahead depth            & 5             \\
      & Planner MC samples                 & 3             \\
      & Planner top-$K$ pairs              & 12            \\
    \bottomrule
  \end{tabular}
  \end{adjustbox}
\end{table}

\section{Oracle-Tuned Threshold Analysis}
\label{sec:appendix:taufit}

The $\tau$-Fit family gives a one-parameter handle on the
throughput--reliability tradeoff, while Table~\ref{tab:tau_sweep} shows the
full sweep over $\tau \in \{0,1,2,3\}$ alongside the PPO. Increasing $\tau$
from 0 to 2 raises the broadcast-efficiency score monotonically ($0.877 \to 0.915$), and then
it \emph{saturates}, i.e., the difference between $\tau{=}2$ and $\tau{=}3$ in
BE-score is $0.000$, within the statistical noise, even though PPO's score of
$0.976$ lies well above this saturation plateau.

\begin{table}[t]
\centering
\caption{TauFit performance as a function of clique-size threshold $\tau$. PPO-Agent shown for reference. Bold = best per column (numerical winner across both TauFit rows and PPO); the PPO row label is bolded only to mark it as the learned reference, not as a per-cell win indicator.}
\label{tab:tau_sweep}
\begin{adjustbox}{max width=\columnwidth}
\begin{tabular}{lccccc}
\toprule
$\tau$ & \textbf{BE-Score $\sigma$} ($\uparrow$) & \textbf{Miss Ratio} ($\downarrow$) & \textbf{Served/Tx} ($\uparrow$) & \textbf{Coding Gain} ($\uparrow$) & \textbf{Exp/Episode} ($\downarrow$) \\
\midrule
0 & $0.877 \pm 0.002$ & $\mathbf{0.185 \pm 0.001}$ & $1.135 \pm 0.001$ & $2.063 \pm 0.002$ & $\mathbf{10.55 \pm 0.07}$ \\
1 & $0.908 \pm 0.002$ & $0.245 \pm 0.001$ & $1.343 \pm 0.001$ & $2.097 \pm 0.001$ & $14.92 \pm 0.10$ \\
2 & $0.915 \pm 0.002$ & $0.258 \pm 0.001$ & $1.405 \pm 0.002$ & $2.077 \pm 0.001$ & $16.22 \pm 0.09$ \\
3 & $0.915 \pm 0.003$ & $0.259 \pm 0.001$ & $\mathbf{1.408 \pm 0.002}$ & $2.066 \pm 0.001$ & $16.32 \pm 0.09$ \\
\midrule
\textbf{PPO} & $\mathbf{0.976 \pm 0.002}$ & $0.208 \pm 0.001$ & $1.323 \pm 0.002$ & $\mathbf{2.162 \pm 0.002}$ & $14.17 \pm 0.07$ \\
\bottomrule
\end{tabular}
\end{adjustbox}
\end{table}

To confirm that PPO's advantage does not come from a suboptimal
$\tau$ selection, we evaluated each TauFit policy with an
\emph{oracle-tuned} threshold $\tau^{\star}$, i.e., for each of six target
metrics, $\tau^{\star}$ is chosen retrospectively on the validation seeds to
maximize that metric, and then evaluated on the holdout seeds.
Table~\ref{tab:oracle_taufit} shows that PPO's BE-score advantage
persists across all choices of the tuning criterion, i.e., $+6.61\%$ when
$\tau^{\star}$ is tuned for the BE-score itself, and up to $+11.25\%$
when $\tau^{\star}$ is tuned for the miss ratio or the expirations.
The $\tau$-Fit ceiling under any oracle access is therefore
strictly below PPO's performance.

\begin{table}[t]
\centering
\caption{PPO-Agent vs.\ oracle-tuned TauFit($\tau^{\star}$) evaluated on
  holdout seeds. $\tau^{\star}$ selected on validation set per metric.}
\label{tab:oracle_taufit}
\begin{adjustbox}{max width=\columnwidth}
\begin{tabular}{lcccc}
\toprule
\textbf{Tuned-On Metric} & $\tau^{\star}$ &
  \textbf{PPO $\sigma$} & \textbf{TauFit $\sigma$} & \textbf{PPO Adv.} \\
\midrule
BE-score               & 2 & $0.976 \pm 0.002$ & $0.915 \pm 0.002$ & $+6.61\%$ \\
Served/Tx              & 3 & $0.976 \pm 0.002$ & $0.915 \pm 0.002$ & $+6.65\%$ \\
Reward/step (shaped)   & 3 & $0.976 \pm 0.002$ & $0.915 \pm 0.002$ & $+6.65\%$ \\
Avg coding gain        & 1 & $0.976 \pm 0.002$ & $0.908 \pm 0.002$ & $+7.50\%$ \\
Miss ratio             & 0 & $0.976 \pm 0.002$ & $0.877 \pm 0.002$ & $+11.25\%$ \\
Expirations/episode    & 0 & $0.976 \pm 0.002$ & $0.877 \pm 0.002$ & $+11.25\%$ \\
\bottomrule
\end{tabular}
\end{adjustbox}
\end{table}

Table~\ref{tab:oracle_taufit_permetric} complements the BE-score
analysis above by comparing the PPO against TauFit($\tau^{\star}$) \emph{on
the metric each $\tau^{\star}$ was tuned for}. The PPO wins on the
composite metric (i.e., the broadcast-efficiency score) and the coding gain, while
TauFit($\tau^{\star}{=}0$) achieves lower miss ratio and fewer
expirations, and TauFit($\tau^{\star}{=}3$) yields higher packet-set count per transmission ($\mu$). The interesting fact we observed is that this pattern is what one would expect, i.e., the PPO optimizes the joint
reward objective, thereby trading small losses on individual narrow metrics
for larger gains on the composite broadcast-efficiency score, even though no single
$\tau$-Fit threshold reproduces this multi-objective balance.

\begin{table}[t]
\centering
\caption{Per-metric comparison: PPO-Agent vs.\ oracle-tuned
  TauFit($\tau^{\star}$) evaluated on the metric each
  $\tau^{\star}$ was tuned for. Bold $=$ better; arrows indicate
  metric direction.}
\label{tab:oracle_taufit_permetric}
\begin{adjustbox}{max width=\columnwidth}
\begin{tabular}{lccccc}
\toprule
\textbf{Metric} & $\tau^{\star}$ &
  \textbf{PPO} & \textbf{TauFit($\tau^{\star}$)} & \textbf{$\Delta$} & \textbf{Favors} \\
\midrule
BE-Score ($\uparrow$)        & 2 & $\mathbf{0.976 \pm 0.002}$ & $0.915 \pm 0.002$ & $+0.061$ & PPO \\
Coding Gain ($\uparrow$)     & 1 & $\mathbf{2.162 \pm 0.002}$ & $2.097 \pm 0.001$ & $+0.065$ & PPO \\
Served/Tx ($\uparrow$)       & 3 & $1.323 \pm 0.002$ & $\mathbf{1.408 \pm 0.002}$ & $-0.085$ & TauFit \\
Miss Ratio ($\downarrow$)    & 0 & $0.208 \pm 0.001$ & $\mathbf{0.185 \pm 0.001}$ & $+0.023$ & TauFit \\
Exp/Episode ($\downarrow$)   & 0 & $14.17 \pm 0.07$  & $\mathbf{10.55 \pm 0.07}$  & $+3.62$  & TauFit \\
\bottomrule
\end{tabular}
\end{adjustbox}
\end{table}

\section{Experimental Protocol Details}
\label{sec:appendix:exp_details}

\paragraph{$\tau$-Fit misfit metric.}
The $\tau$-Fit family is based on the pairwise \emph{misfit} metric of
Niesen and Maddah-Ali~\cite{codedcachingdelaysensitive},
written as~$\misfit(r_i, r_j)$ throughout this paper to avoid the collision
with the broadcast-level expiration ratio $\rho$ (M1), i.e.,
\begin{equation}
  \misfit(r_i, r_j) =
  \bigl|\mathcal{S}_i \setminus (\mathcal{S}_j \cup \{k_j\})\bigr|
  + \bigl|\mathcal{S}_j \setminus (\mathcal{S}_i \cup \{k_i\})\bigr|.
\end{equation}
The sequential $\tau$-fit rule sorts the queue by deadline, and then for the
earliest-deadline request it scans all queue members in deadline order and
merges with the first candidate whose misfit satisfies
$\misfit \leq \tau$ (i.e., degree-aware endpoint selection), while if no such pair
exists, the request is unicast.

The oracle-optimal thresholds per metric are:
$\tau^* = 2$ for BE-score, $\tau^* = 3$ for served-per-tx and
reward-per-step, $\tau^* = 0$ for miss-ratio and expirations, and
$\tau^* = 1$ for coding gain.

\paragraph{Track~B conditions.}
The Track~B agent is trained on Zipf($\alpha{=}0.8$) file-request demand.
File requests follow a Zipf or Mandelbrot-Zipf distribution. With
zero-indexed file IDs $n \in \{0,\ldots,N-1\}$ as used throughout
this paper, and consistent with the popularity-mass projection
$\hat p_n \propto (n+1)^{-\alpha}$ in
Appendix~\ref{sec:appendix:features},
\begin{equation}
  P(F = n) \propto
  \begin{cases}
    (n+1)^{-\alpha} & \text{(Zipf)} \\
    (n + q + 1)^{-\alpha} & \text{(Mandelbrot-Zipf)},
  \end{cases}
\end{equation}
where $n+1$ is the rank (so the most popular file at rank $1$
corresponds to index $n=0$). The earlier rank-$f$ form
$f^{-\alpha}$ / $(f+q)^{-\alpha}$ assumes one-indexed ranks
$f \in \{1,\ldots,N\}$; the two forms are equivalent under the
substitution $f = n+1$.

The 12 Track~B conditions (1~ID + 2~Curr + 9~OOD) are:
\begin{itemize}[leftmargin=*]
  \item \textbf{Zipf ID-reference:} $\alpha=0.8$ (moderate tail, same system parameters as ID-default).
  \item \textbf{Tail weight:} $\alpha \in \{0.6, 1.0, 1.2\}$.
        Higher $\alpha$ concentrates the request distribution on fewer
        file indices; because cache placement is uniform-without-replacement with fraction~$p_c$
        and thus popularity-blind, the effect on the statistics of
        feasible merge opportunities is indirect and demand-mediated.
  \item \textbf{Alternative law:} Mandelbrot-Zipf with $\alpha=1.4$, $q=2.0$.
  \item \textbf{Cross-axis shifts:} same catalog, cache density, and
        deadline variations as Track~A ($N \in \{60,120,150\}$,
        $p_c \in \{0.20,0.40\}$, $D \in \{10,30\}$) applied to
        the Zipf-$0.8$ base.
\end{itemize}
The two observation spaces are incompatible; each agent can only be
evaluated on its own environment version.

\paragraph{Sensitivity to $\lambda$.}
The composite score uses a fixed $\lambda{=}1$, thereby weighting the expired packet-set mass equally against the packet-set XOR degree $U_t$. Applications that value the reliability more highly would set $\lambda > 1$, thereby narrowing
the gap to ED-Unicast, while $\lambda < 1$ would favor the
throughput-maximizing policies.
Appendix~\ref{sec:appendix:lambda_sensitivity} presents a formal
sensitivity sweep over $\lambda \in \{0.5, 1, 2, 3\}$, while the PPO retains
rank~1 among all coded-multicast methods for every
$\lambda \geq 1$, i.e., the only policies that surpass the PPO at high $\lambda$
are the low-merge baselines (i.e., ED-Unicast at $0\%$ and TauFit-0 at $13.3\%$ merge rate; Sec.~\ref{sec:exp:baselines}).

\paragraph{Reward per step (M13).}
Mean shaped reward accumulated per transmission step:
\begin{equation}
    \bar{R} = \frac{1}{H}\sum_{t=1}^{H} R(s_t, a_t).
    \label{eq:reward_per_step}
\end{equation}
Considering that $R$ includes the potential-based shaping and the component-specific
weights (Section~\ref{sec:env:reward}), this metric is not directly
comparable \emph{across reward definitions} (i.e., when the shaping or
its weights themselves change), while \emph{within} a single ablation table
that holds the reward definition and weights fixed (including the
ID-default ablation table of Sec.~\ref{sec:ablation}), Reward/Step
\emph{is} a directly comparable diagnostic across the rows. It is
reported as a training-signal diagnostic rather than a target metric, while
all the performance comparisons use the communications metrics M1--M5.

\paragraph{Statistical reporting protocol.}
Considering that all methods share the same holdout seeds (i.e., paired design), every
comparison is computed on the \emph{per-seed paired differences}, thereby eliminating the
inter-seed variance and yielding tighter uncertainty estimates than
the independent evaluation would allow.
For the per-method summary tables we report mean $\pm$ 95\% CI
across the seed-level means (i.e., normal approximation
$\pm 1.96\,\hat{\sigma}/\sqrt{50}$).
These intervals are \emph{descriptive uncertainty bands}, not
hypothesis tests, i.e., non-overlapping CIs do not
imply a statistically significant difference~\cite{cumming2005inference}.
For the main pairwise comparisons we additionally report the 95\%
\emph{percentile bootstrap} confidence interval of the paired BE-score
difference: for each of 10\,000 resamples we draw 50 seed indices with
replacement, compute the mean paired difference
$\bar{\Delta}\sigma = \bar{\sigma}_{\text{PPO}} -
\bar{\sigma}_{\text{baseline}}$, and take the 2.5th--97.5th percentile
interval.

\section{Reproducibility and Computational Cost}
\label{sec:appendix:reproducibility}

The implementation uses Python~3.13.2, PyTorch~2.0$+$, Stable-Baselines3
v2.1$+$~\cite{raffin2021stable}, and SB3-Contrib MaskablePPO~\cite{sb3contrib2022maskable}.
All the random number generators are seeded deterministically, while the evaluation uses
the argmax (i.e., not sampled) action selection.
The complete hyperparameters are listed in Table~\ref{tab:hparams}.

\textbf{Artifacts.}
The full source code (environment, training scripts, evaluation scripts,
and configuration dataclasses) will be released in a public repository
upon acceptance.\footnote{Repository link withheld during peer review;
the exact commit hash will be provided in the camera-ready version.}
All training configurations are specified in dedicated configuration files;
evaluation uses a separate configuration.
Training seeds: $\{0,1,2,3\}$; validation seeds: $\{0,\dots,49\}$;
holdout seeds: $\{50,\dots,99\}$.

\textbf{Episode seeding.}
Episode seeds follow the deterministic formula
$\texttt{seed} = 42 + s \cdot 10^6 + e$
($s \in \{50,\ldots,99\}$, $e \in \{0,\ldots,199\}$),
and inference is deterministic (argmax decoding).

\textbf{Hardware.}
All the experiments were conducted on a single laptop equipped with an
NVIDIA GeForce RTX~3080~Ti (i.e., Laptop GPU).
The training required approximately 59~hours on a single GPU for the uniform
track, while the evaluation requires approximately 2~hours per method per condition.
The per-step policy inference (i.e., a single forward pass of
the graph-structured policy network on one observation) takes approximately
${\sim}3.5$\,ms on the RTX~3080~Ti GPU and ${\sim}3.3$\,ms on
the CPU (i.e., batch size~1, no compilation;
the detailed benchmark in Appendix~\ref{sec:appendix:latency}), while
these numbers are reported as a conditional feasibility indicator
for the $K{=}5$, $Q{=}10$ setting on both GPU-equipped and CPU-only
edge nodes, even though whether they meet a real-time scheduling target depends
on the deployment's slot-duration budget, which is not imported from
outside the manuscript.
The per-seed wall-clock times are reported in Table~\ref{tab:computational_cost}.

\textbf{Table regeneration.}
Every numerical table in this paper can be reproduced from the released
evaluation scripts and saved model checkpoints by running the
corresponding evaluation scripts followed by
the aggregation scripts.

\begin{table}[t]
\centering
\caption{Computational cost. Each track uses an independently trained agent. Training is sequential across 4 seeds on a single machine. Evaluation uses 50 holdout seeds $\times$ 200 episodes per regime.}
\label{tab:computational_cost}
\begin{tabular}{l r r}
\toprule
\textbf{Metric} & \textbf{Uniform} & \textbf{Zipf} \\
\midrule
Model parameters & $\sim$1.73M & $\sim$1.75M \\
Training wall-clock (per seed) & 13.5--14.6\,h & 12.7--13.5\,h \\
Training wall-clock (4 seeds) & $\sim$59\,h & $\sim$53\,h \\
BC warm start samples & 150{,}000 & 150{,}000 \\
Phase-3 env steps per seed & 6{,}000{,}000 & 6{,}000{,}000 \\
Total env steps per seed (incl.\ Ph.-2) & 6{,}050{,}000 & 6{,}050{,}000 \\
Parallel environments & 32 & 32 \\
Evaluation (1 regime) & $\sim$2\,h & $\sim$2\,h \\
Per-step inference (GPU) & ${\sim}3.5$\,ms & ${\sim}3.5$\,ms \\
Per-step inference (CPU) & ${\sim}3.3$\,ms & ${\sim}3.3$\,ms \\
Total end-to-end & $\sim$61\,h & $\sim$55\,h \\
\bottomrule
\end{tabular}
\end{table}

\section{PHY-Erasure Sensitivity}
\label{sec:appendix:erasure}

Table~\ref{tab:erasure_sensitivity} reports the broadcast-packet expiration
ratio $\rho$ at the ID-default regime under i.i.d.\ erasure of the coded XOR
broadcasts at probability $\epsilon \in \{0.00, 0.05, 0.10, 0.20\}$, while
the unicasts are not erased, considering that they use a robust modulation
fallback. The two policy network checkpoints are compared, i.e., the
headline checkpoint, which is trained at $\epsilon{=}0$ and used throughout
the rest of the paper, and a second checkpoint that is retrained for $6$M
Phase-3 steps with $\epsilon{=}0.10$ active during the training. All of the
values are means~$\pm$~standard deviation across $50$ holdout seeds $\times$
$200$ episodes ($10{,}000$ episodes per cell).

\begin{table}[!t]
  \centering
  \caption{Broadcast-packet expiration ratio $\rho$ under i.i.d.\ coded-XOR erasure $\epsilon$ at the ID-default, considering the unicasts are not erased. PPO-headline is the $\sigma$-validated checkpoint used throughout the paper, while PPO-retrained adds $\epsilon{=}0.10$ during the Phase-3 training. Mean $\pm$ standard deviation across $50$ holdout seeds $\times$ $200$ episodes.}
  \label{tab:erasure_sensitivity}
  \begin{adjustbox}{max width=\columnwidth}
  \begin{tabular}{lcccc}
    \toprule
    Method & $\epsilon{=}0.00$ & $\epsilon{=}0.05$ & $\epsilon{=}0.10$ & $\epsilon{=}0.20$ \\
    \midrule
    PPO-headline                              & $0.2079 \pm 0.0031$ & $0.2140 \pm 0.0036$ & $0.2208 \pm 0.0039$ & $0.2347 \pm 0.0043$ \\
    \textbf{PPO-retrained ($\epsilon{=}0.10$)} & $\mathbf{0.1980 \pm 0.0042}$ & $\mathbf{0.2038 \pm 0.0030}$ & $\mathbf{0.2089 \pm 0.0032}$ & $\mathbf{0.2212 \pm 0.0035}$ \\
    SACM{++}                                  & $0.3522 \pm 0.0035$ & $0.3614 \pm 0.0028$ & $0.3712 \pm 0.0034$ & $0.3933 \pm 0.0031$ \\
    ED-Unicast                                & $0.1341 \pm 0.0035$ & $0.1341 \pm 0.0035$ & $0.1341 \pm 0.0035$ & $0.1341 \pm 0.0035$ \\
    \bottomrule
  \end{tabular}
  \end{adjustbox}
\end{table}

Three observations emerge, i.e.,
(i)~ED-Unicast's $\rho$ is invariant across $\epsilon$ (a constant $0.1341$),
which confirms the design choice that the unicasts use a robust modulation
that the erasure does not affect, i.e., a sanity check on the patched
simulator;
(ii)~both PPO variants retain their $\sim 40\%$ advantage over SACM{++}
across the entire sweep ($\rho_{\mathrm{PPO}} / \rho_{\mathrm{SACM{++}}}
\in [0.591, 0.598]$), while the absolute $\rho$ rises gracefully with
$\epsilon$ in all of the coded methods, with no cliff at the tested values;
(iii)~the PPO-retrained checkpoint lowers $\rho$ by $\sim 4.8\%$--$5.7\%$
relative to PPO-headline at \emph{every} $\epsilon$, including
$\epsilon{=}0$, i.e., the training under the stochastic coded-XOR erasure
acts as a regularizer that improves the performance even under the clean
evaluation conditions, even though we do not claim this as a deliberate
training recipe, considering that the $\epsilon{=}0.10$ value was a fixed
reviewer-defense choice and not tuned. This motivates a future-work
direction of treating the erasure as a controlled regularization knob.

\section{Request Miss Rate $m_{\mathrm{req}}$ Cross-Method (Uniform)}
\label{sec:appendix:req_mreq_baselines}

Table~\ref{tab:req_mreq_baselines} reports per-method request miss rate
$m_{\mathrm{req}}$ across the eight Track~A regimes; it anchors the
\textsc{C4} claim in Section~\ref{sec:conclusion} that PPO attains the
lowest $m_{\mathrm{req}}$ among coded-multicast baselines in every regime
(ED-Unicast, the uncoded comparator, still attains a lower $m_{\mathrm{req}}$
by forfeiting coding gain entirely). At ID-default, PPO's $m_{\mathrm{req}}{=}0.229$
is $29.8\%$ lower than SACM$++$'s $0.326$, the largest among the coded family.

\begin{table*}[t]
\centering
\caption{Uniform demand: request miss rate $m_{\mathrm{req}}$ ($\downarrow$)
per method, across the eight Track~A regimes. Mean $\pm$ 95\% CI over
$50$ holdout seeds $\times$ $200$ episodes per regime. Bold $=$ best
\emph{coded} method per regime (PPO leads every row); ED-Unicast is the
uncoded comparator.}
\label{tab:req_mreq_baselines}
\begin{adjustbox}{max width=\textwidth}
\begin{tabular}{lcccccc}
\toprule
\textbf{Regime} & \textbf{PPO-Agent} & \textbf{ED-Unicast} & \textbf{GCM} & \textbf{SACM} & \textbf{SACM$+$} & \textbf{SACM$++$} \\
\midrule
ID-default     & $\mathbf{0.229{\pm}0.001}$ & $0.155{\pm}0.001$ & $0.346{\pm}0.002$ & $0.351{\pm}0.002$ & $0.343{\pm}0.002$ & $0.326{\pm}0.001$ \\
OOD-file60     & $\mathbf{0.230{\pm}0.001}$ & $0.153{\pm}0.001$ & $0.345{\pm}0.002$ & $0.350{\pm}0.001$ & $0.341{\pm}0.002$ & $0.327{\pm}0.001$ \\
OOD-file120    & $\mathbf{0.231{\pm}0.001}$ & $0.154{\pm}0.001$ & $0.345{\pm}0.001$ & $0.351{\pm}0.001$ & $0.342{\pm}0.002$ & $0.329{\pm}0.002$ \\
OOD-file150    & $\mathbf{0.231{\pm}0.002}$ & $0.154{\pm}0.001$ & $0.348{\pm}0.002$ & $0.351{\pm}0.002$ & $0.342{\pm}0.001$ & $0.328{\pm}0.002$ \\
OOD-pcache0.20 & $\mathbf{0.198{\pm}0.001}$ & $0.154{\pm}0.001$ & $0.277{\pm}0.002$ & $0.278{\pm}0.002$ & $0.274{\pm}0.002$ & $0.269{\pm}0.001$ \\
OOD-pcache0.40 & $\mathbf{0.251{\pm}0.001}$ & $0.154{\pm}0.001$ & $0.383{\pm}0.002$ & $0.395{\pm}0.002$ & $0.382{\pm}0.001$ & $0.354{\pm}0.002$ \\
OOD-delay10    & $\mathbf{0.974{\pm}0.002}$ & $0.981{\pm}0.002$ & $1.098{\pm}0.002$ & $1.110{\pm}0.002$ & $1.092{\pm}0.002$ & $1.048{\pm}0.002$ \\
OOD-delay30    & $\mathbf{0.061{\pm}0.001}$ & $0.037{\pm}0.000$ & $0.124{\pm}0.001$ & $0.127{\pm}0.001$ & $0.123{\pm}0.001$ & $0.117{\pm}0.001$ \\
\bottomrule
\end{tabular}
\end{adjustbox}
\end{table*}

\section{Per-cache Fairness Diagnostic}
\label{sec:appendix:per_cache_fairness}

Table~\ref{tab:per_cache_fairness} reports the per-cache expiration ratio
$\rho_k = E_k / (E_k + S_k)$ at the ID-default regime for each of the
$K{=}5$ destinations, pooled across $50$ holdout seeds $\times$ $200$
episodes ($10{,}000$ episodes per method). The right-most column is
the worst-case fairness ratio $\max_k\rho_k / \min_k\rho_k$ across the
five caches, while a value of $1.0$ would mean perfectly uniform
expiration across the destinations, i.e., the diagnostic for the per-user
fairness limitation noted in Sec.~\ref{sec:conclusion}.

\begin{table}[!t]
  \centering
  \caption{Per-cache expiration ratio $\rho_k = E_k / (E_k + S_k)$ at the ID-default, pooled over $50$ holdout seeds $\times$ $200$ episodes, i.e., $10{,}000$ episodes per method. Max/min is the worst-case fairness ratio across the $K{=}5$ caches, while lower is fairer.}
  \label{tab:per_cache_fairness}
  \begin{adjustbox}{max width=\columnwidth}
  \begin{tabular}{lcccccc}
    \toprule
    Method & $\rho_0$ & $\rho_1$ & $\rho_2$ & $\rho_3$ & $\rho_4$ & \textbf{max/min} \\
    \midrule
    PPO-Agent  & $0.1781$ & $0.1920$ & $0.1652$ & $0.1771$ & $0.1955$ & $1.184$ \\
    ED-Unicast & $0.1339$ & $0.1338$ & $0.1324$ & $0.1350$ & $0.1354$ & $1.023$ \\
    GCM        & $0.2737$ & $0.2756$ & $0.2745$ & $0.2744$ & $0.2749$ & $1.007$ \\
    SACM       & $0.2731$ & $0.2761$ & $0.2750$ & $0.2765$ & $0.2752$ & $1.012$ \\
    SACM+      & $0.2738$ & $0.2763$ & $0.2768$ & $0.2760$ & $0.2754$ & $1.011$ \\
    SACM{++}   & $0.2735$ & $0.2737$ & $0.2745$ & $0.2745$ & $0.2749$ & $1.005$ \\
    \bottomrule
  \end{tabular}
  \end{adjustbox}
\end{table}

Two observations emerge, i.e.,
(i)~the SACM family and ED-Unicast are essentially uniform across the
destinations (max/min~$\le 1.023$), while the policy network's selective
merging introduces a modest asymmetry (max/min~$= 1.184$), considering
that the policy network is slightly less uniform than the always-merge
heuristics;
(ii)~despite this asymmetry, the policy network's worst-cache
$\rho$ ($0.196$, cache~$4$) remains well below the best-cache
$\rho$ of every coded-multicast baseline ($\approx 0.273$), so the
aggregate $\rho$ advantage carries through to every destination
individually rather than masking a per-cache regression.
A paired-bootstrap $95\%$ CI of the difference in mean max/min ratio
is $+0.151$ versus SACM{++} ($[+0.136, +0.167]$, $10{,}000$ resamples),
which confirms that the asymmetry is statistically significant rather
than the seed noise.

\section{Broadcast-Efficiency Score Sensitivity to \texorpdfstring{$\lambda$}{lambda}}
\label{sec:appendix:lambda_sensitivity}

The composite broadcast-efficiency score~\eqref{eq:sys_score} uses a fixed $\lambda{=}1$.
To verify that PPO's ranking advantage is not an artifact of this choice,
we recomputed $\sigma(\lambda) = H^{-1}\sum_t(U_t - \lambda E_t)$ for
$\lambda \in \{0.5,\, 1,\, 2,\, 3\}$ from the same raw holdout data
(i.e., 50~seeds $\times$ 200~episodes) without rerunning any experiment,
while Table~\ref{tab:lambda_sensitivity} reports the results for five
representative methods spanning the conservative--aggressive spectrum.

\begin{table}[t]
\centering
\caption{BE-score sensitivity to $\lambda$.
  ID-default, 50 holdout seeds $\times$ 200 episodes.
  Mean $\pm$ 95\% CI.  Bold = rank~1 overall.
  ``PPO coded rank'' = PPO's rank among the four coded-multicast
  methods (GCM, SACM, SACM+, SACM++); the threshold-family policies
  (TauFit-0/1/3, First-Fit, Perfect-Fit) are reported in full in
  Table~\ref{tab:tau_sweep} and are omitted from the table below to
  save space, with only TauFit-2 retained as a representative threshold
  policy. The $\lambda{=}1$ entry matches the headline aggregate
  $\sigma{=}0.976$ of Sec.~\ref{sec:res:id} by construction (post-hoc
  recomputation of the same holdout trajectories).}
\label{tab:lambda_sensitivity}
\begin{adjustbox}{max width=\columnwidth}
\begin{tabular}{lcccc}
\toprule
\textbf{Method} & $\lambda{=}0.5$ & $\lambda{=}1$ & $\lambda{=}2$ & $\lambda{=}3$ \\
\midrule
PPO-Agent  & $1.149 \pm 0.002$ & $\mathbf{0.976 \pm 0.002}$ & $0.609 \pm 0.004$ & $0.249 \pm 0.006$ \\
SACM++     & $1.158 \pm 0.001$ & $0.726 \pm 0.003$          & $-0.139 \pm 0.008$ & $-1.004 \pm 0.012$ \\
TauFit-2   & $\mathbf{1.160 \pm 0.001}$ & $0.915 \pm 0.002$ & $0.426 \pm 0.005$ & $-0.063 \pm 0.008$ \\
ED-Unicast & $0.923 \pm 0.001$ & $0.845 \pm 0.001$          & $\mathbf{0.690 \pm 0.003}$ & $\mathbf{0.535 \pm 0.004}$ \\
SACM       & $1.160 \pm 0.001$ & $0.745 \pm 0.003$          & $-0.085 \pm 0.007$ & $-0.916 \pm 0.011$ \\
\midrule
PPO rank (all 12)     & 7 & 1 & 4 & 4 \\
PPO coded rank        & 4 & 1 & 1 & 1 \\
\bottomrule
\end{tabular}
\end{adjustbox}
\end{table}

Three regimes emerge.
\textbf{(i)~Low penalty ($\lambda{=}0.5$):}
The expirations barely count, so the aggressive 100\%-merge baselines
(i.e., SACM, SACM++, TauFit-2/3) dominate on the raw throughput, while PPO's selective
merging ($31.8\%$) sacrifices the throughput for the deadline safety that is
undervalued at this $\lambda$, i.e., its overall rank drops to~7.
\textbf{(ii)~Balanced penalty ($\lambda{=}1$):}
The PPO achieves rank~1 overall, i.e., its selective strategy gives the best
throughput--reliability balance at this operating point.
\textbf{(iii)~High penalty ($\lambda \geq 2$):}
The expiration penalty dominates, while the most conservative methods, i.e., ED-Unicast
(0\% merge rate) and Perfect-Fit/TauFit-0 (13.3\%), rise to rank~1--3.
All aggressive-merge baselines (i.e., SACM, SACM++, GCM) go deeply negative.
The PPO remains rank~4 overall, even though it is \emph{rank~1 among all
coded-multicast methods} for every $\lambda \geq 1$, i.e., the only policies
that surpass the PPO at high $\lambda$ are the low-merge baselines
(i.e., ED-Unicast at $0\%$ merge rate, and TauFit-0 at $13.3\%$ merge rate).

The interesting fact we observed is that PPO's ranking advantage over the coded-multicast baselines holds across
the operationally relevant range $\lambda \geq 1$, while the $\lambda{<}1$
regime, where the deadlines are deprioritized relative to the throughput,
favors the aggressive merging by design, even though this regime contradicts the
paper's motivating use case of the deadline-constrained delivery.

\section{Request-Level Selection Score Sensitivity to \texorpdfstring{$\lambda$}{lambda}}
\label{sec:appendix:req_lambda_sensitivity}

The request selection score~\eqref{eq:req_sigma},
$\sigma_{\mathrm{req}}(\lambda) = H^{-1}\sum_t(|C_t| - \lambda|M_t|)$,
depends on how heavily the deadline misses are penalized.
We swept $\lambda \in \{0.5,\,1,\,1.5,\,2,\,2.5,\,3,\,4,\,5,\,7,\,10\}$
using the same 50-seed holdout data without rerunning any experiment, while
Table~\ref{tab:req_lambda_sensitivity} reports $\sigma_{\mathrm{req}}$
at four selected $\lambda$ values for the ID-default regime.

\begin{table}[t]
\centering
\caption{Request selection score~$\sigma_{\mathrm{req}}$ sensitivity
  to~$\lambda$. ID-default, 50 holdout seeds $\times$ 200 episodes.
  Mean $\pm$ 95\% CI.  Bold = best among the three methods shown.
  PPO-Agent overtakes SACM{++} between $\lambda{=}2$ and $\lambda{=}3$
  (crossover at $\lambda{\approx}2.24$).}
\label{tab:req_lambda_sensitivity}
\begin{adjustbox}{max width=\columnwidth}
\begin{tabular}{lcccc}
\toprule
\textbf{Method} & $\lambda{=}1$ & $\lambda{=}2$ & $\lambda{=}3$ & $\lambda{=}5$ \\
\midrule
PPO-Agent  & $0.839 \pm 0.001$ & $0.610 \pm 0.002$ & $0.381 \pm 0.004$ & $-0.077 \pm 0.006$ \\
SACM{++}   & $\mathbf{0.959 \pm 0.002}$ & $0.633 \pm 0.003$ & $0.306 \pm 0.004$ & $-0.347 \pm 0.007$ \\
ED-Unicast & $0.845 \pm 0.001$ & $\mathbf{0.690 \pm 0.003}$ & $\mathbf{0.535 \pm 0.004}$ & $\mathbf{0.226 \pm 0.006}$ \\
\bottomrule
\end{tabular}
\end{adjustbox}
\end{table}

Table~\ref{tab:req_crossover} reports the crossover~$\lambda$, the
smallest value at which PPO-Agent's mean $\sigma_{\mathrm{req}}$ equals or
exceeds SACM{++}'s, across all eight evaluation regimes.

\begin{table}[t]
\centering
\caption{Crossover~$\lambda$: smallest $\lambda$ at which PPO-Agent
  $\sigma_{\mathrm{req}} \geq$ SACM{++} $\sigma_{\mathrm{req}}$
  (mean over 50 seeds).}
\label{tab:req_crossover}
\begin{tabular}{lc}
\toprule
\textbf{Regime} & \textbf{Crossover $\lambda$} \\
\midrule
OOD-pcache0.20  & $1.19$ \\
OOD-delay30     & $2.04$ \\
OOD-file120     & $2.22$ \\
ID-default      & $2.24$ \\
Curr-file60     & $2.25$ \\
OOD-file150     & $2.27$ \\
Curr-pcache0.40 & $3.60$ \\
OOD-delay10     & $4.92$ \\
\bottomrule
\end{tabular}
\end{table}

Three regimes emerge, mirroring the broadcast-efficiency analysis
(Section~\ref{sec:appendix:lambda_sensitivity}).
\textbf{(i)~Low penalty ($\lambda{<}2$):}
SACM{++}'s higher request throughput~$\eta_{\mathrm{req}}$ dominates,
i.e., its always-merge strategy delivers more coded packets per step.
\textbf{(ii)~Moderate penalty ($\lambda{\approx}2$--$3$):}
PPO-Agent's ${\sim}30\%$ lower request miss rate becomes decisive, i.e.,
at $\lambda{=}2.5$, the PPO wins in six of eight regimes (all except
Curr-pcache0.40 and OOD-delay10), while at $\lambda{=}3$, the PPO achieves
$100\%$ seed-level win-rate in those six.
\textbf{(iii)~High penalty ($\lambda{\geq}5$):}
PPO-Agent wins in all eight regimes with $100\%$ seed-level win-rate, while
the two regimes with the highest crossover~$\lambda$, i.e.,
Curr-pcache0.40 ($\lambda{=}3.60$) and OOD-delay10 ($\lambda{=}4.92$),
are the extreme-stress conditions where SACM{++}'s throughput
advantage is largest.

For the delay-sensitive applications (i.e., the paper's motivating use case),
the deadline misses should carry a substantial penalty, while under the
operationally reasonable $\lambda \geq 2$, PPO-Agent's selective
merge strategy produces higher request-level selection scores in
the majority of regimes, and under $\lambda \geq 5$ it wins in all
the eight regimes.

\section{Zipf Request-Level Selection Score Sensitivity to \texorpdfstring{$\lambda$}{lambda}}
\label{sec:appendix:req_lambda_zipf}

Under the Zipf demand, the PPO-Agent dominates SACM{++} on the request selection
score~$\sigma_{\mathrm{req}}$ at \emph{every} tested penalty weight,
even though for the uniform track a crossover at
$\lambda{\approx}2.24$ was required
(Appendix~\ref{sec:appendix:req_lambda_sensitivity}).
Table~\ref{tab:req_lambda_zipf} reports $\sigma_{\mathrm{req}}$ at
three selected $\lambda$ values for the Zipf ID-default regime.

\begin{table}[t]
\centering
\caption{Request selection score~$\sigma_{\mathrm{req}}$ sensitivity
  to~$\lambda$ (Zipf demand, $\alpha{=}0.8$).  ID-default,
  50 holdout seeds $\times$ 200 episodes.
  Mean $\pm$ 95\% CI.  Bold = best.
  PPO-Agent leads at every~$\lambda$.}
\label{tab:req_lambda_zipf}
\begin{adjustbox}{max width=\columnwidth}
\begin{tabular}{lccc}
\toprule
\textbf{Method} & $\lambda{=}1$ & $\lambda{=}3$ & $\lambda{=}5$ \\
\midrule
PPO-Agent    & $\mathbf{0.978 \pm 0.002}$ & $\mathbf{0.354 \pm 0.004}$ & $\mathbf{-0.269 \pm 0.007}$ \\
SACM{++}     & $0.955 \pm 0.002$ & $0.302 \pm 0.005$ & $-0.351 \pm 0.008$ \\
SACM{++}-Pop & $0.939 \pm 0.002$ & $0.272 \pm 0.005$ & $-0.395 \pm 0.008$ \\
\bottomrule
\end{tabular}
\end{adjustbox}
\end{table}

PPO-Agent's mean $\sigma_{\mathrm{req}}$ exceeds SACM{++}'s at every
tested $\lambda \in \{0.5,\,1,\,1.5,\,2,\,2.5,\,3,\,4,\,5,\,7,\,10\}$
in all the 12~Zipf regimes, while the seed-level win-rates are 98--100\% in 10 of 12~regimes at $\lambda{=}1$.
The two exceptions, i.e., OOD-pcache0.20 (58\% win-rate, margin $+0.001$)
and OOD-delay30 (90\% win-rate, margin $+0.008$), are the regimes where
all the coded-multicast methods converge to near-identical performance,
thereby leaving little room for the policy differentiation.
The structural explanation is that the Zipf agent's near-full merging
strategy (i.e., 99.7\% merge rate) matches or exceeds the throughput of
the deterministic always-merge baselines while maintaining
a lower request miss rate, thereby eliminating the throughput--miss-rate
tradeoff that required $\lambda \geq 2$ for the PPO dominance under
the uniform demand.

\section{Component Ablation Battery}
\label{sec:appendix:component_ablation}

Table~\ref{tab:component_ablation_rho} reports the broadcast-packet
expiration ratio $\rho$ (mean $\pm$ 95\% CI) for the five single-component
ablations (and the combined w/o BC$+$ExIt) across the eight Track~A regimes.
Each variant is a full retraining of the pipeline (4 seeds, then $\Omega$-selected;
Sec.~\ref{sec:method:selection}) with one component removed.
Evaluation is on the same 50-seed validation set as
Table~\ref{tab:seed_stability} (eval seeds $50$--$99$; disjoint from the
holdout set used in Table~\ref{tab:ablation_id}). The deltas cited in
Section~\ref{sec:ablation} (cross-regime means and per-regime stress
values) are derived directly from these rows.

\begin{table*}[t]
\centering
\caption{Component ablation: broadcast-packet expiration ratio $\rho$
($\downarrow$) per variant and regime. Mean $\pm$ 95\% CI over $50$ seeds
$\times$ $200$ episodes per regime. Each row is a full retraining;
variant labels indicate the removed component.}
\label{tab:component_ablation_rho}
\begin{adjustbox}{max width=\textwidth}
\begin{tabular}{lcccccccc}
\toprule
\textbf{Variant} & \textbf{ID-default} & \textbf{file60} & \textbf{file120} & \textbf{file150} & \textbf{pcache0.20} & \textbf{pcache0.40} & \textbf{delay10} & \textbf{delay30} \\
\midrule
\textbf{Full Model}  & $0.208{\pm}0.001$ & $0.208{\pm}0.001$ & $0.208{\pm}0.001$ & $0.209{\pm}0.001$ & $0.176{\pm}0.001$ & $0.238{\pm}0.001$ & $0.500{\pm}0.000$ & $0.064{\pm}0.001$ \\
w/o ExIt             & $0.249{\pm}0.001$ & $0.249{\pm}0.001$ & $0.250{\pm}0.001$ & $0.249{\pm}0.001$ & $0.200{\pm}0.001$ & $0.295{\pm}0.001$ & $0.519{\pm}0.000$ & $0.088{\pm}0.001$ \\
w/o Curriculum       & $0.252{\pm}0.001$ & $0.251{\pm}0.001$ & $0.252{\pm}0.001$ & $0.252{\pm}0.001$ & $0.208{\pm}0.001$ & $0.283{\pm}0.001$ & $0.518{\pm}0.000$ & $0.093{\pm}0.001$ \\
w/o BC Warm-Start    & $0.228{\pm}0.001$ & $0.228{\pm}0.001$ & $0.229{\pm}0.001$ & $0.228{\pm}0.001$ & $0.191{\pm}0.001$ & $0.265{\pm}0.001$ & $0.508{\pm}0.000$ & $0.076{\pm}0.001$ \\
w/o Graph-Attention  & $0.221{\pm}0.001$ & $0.221{\pm}0.001$ & $0.221{\pm}0.001$ & $0.221{\pm}0.001$ & $0.177{\pm}0.001$ & $0.270{\pm}0.001$ & $0.504{\pm}0.000$ & $0.071{\pm}0.001$ \\
w/o BC $+$ ExIt      & $0.221{\pm}0.001$ & $0.221{\pm}0.001$ & $0.221{\pm}0.001$ & $0.220{\pm}0.001$ & $0.176{\pm}0.001$ & $0.275{\pm}0.001$ & $0.513{\pm}0.000$ & $0.070{\pm}0.001$ \\
\bottomrule
\end{tabular}
\end{adjustbox}
\end{table*}

\begin{table*}[t]
\centering
\caption{Component ablation deltas: $\rho$-degradation (\%) with respect to the
Full Model, per regime. Positive values are worse. Cross-regime mean in
the last column. Computed from Table~\ref{tab:component_ablation_rho}.}
\label{tab:component_ablation_delta_rho}
\begin{adjustbox}{max width=\textwidth}
\begin{tabular}{lcccccccc|c}
\toprule
\textbf{Variant} & \textbf{ID-default} & \textbf{file60} & \textbf{file120} & \textbf{file150} & \textbf{pcache0.20} & \textbf{pcache0.40} & \textbf{delay10} & \textbf{delay30} & \textbf{Mean} \\
\midrule
w/o ExIt             & $+19.9\%$ & $+19.5\%$ & $+20.1\%$ & $+19.4\%$ & $+13.6\%$ & $+23.9\%$ & $+3.7\%$ & $+37.6\%$ & $+19.7\%$ \\
w/o Curriculum       & $+21.1\%$ & $+20.6\%$ & $+20.9\%$ & $+20.8\%$ & $+18.4\%$ & $+18.7\%$ & $+3.4\%$ & $+45.4\%$ & $+21.2\%$ \\
w/o BC Warm-Start    & $+9.7\%$  & $+9.3\%$  & $+9.7\%$  & $+9.1\%$  & $+8.5\%$  & $+11.1\%$ & $+1.6\%$ & $+17.9\%$ & $+9.6\%$ \\
w/o Graph-Attention  & $+6.3\%$  & $+6.0\%$  & $+6.2\%$  & $+6.1\%$  & $+0.9\%$  & $+13.2\%$ & $+0.8\%$ & $+10.1\%$ & $+6.2\%$ \\
w/o BC $+$ ExIt      & $+6.3\%$  & $+6.1\%$  & $+6.2\%$  & $+5.7\%$  & $+0.1\%$  & $+15.2\%$ & $+2.6\%$ & $+9.1\%$  & $+6.4\%$ \\
\bottomrule
\end{tabular}
\end{adjustbox}
\end{table*}

\begin{table*}[t]
\centering
\caption{Component ablation: expirations per episode $\varepsilon$
($\downarrow$) per variant and regime. Mean over $50$ seeds $\times$ $200$
episodes per regime; same evaluation set as
Table~\ref{tab:component_ablation_rho}.}
\label{tab:component_ablation_exp}
\begin{adjustbox}{max width=\textwidth}
\begin{tabular}{lcccccccc|c}
\toprule
\textbf{Variant} & \textbf{ID-default} & \textbf{file60} & \textbf{file120} & \textbf{file150} & \textbf{pcache0.20} & \textbf{pcache0.40} & \textbf{delay10} & \textbf{delay30} & \textbf{$\Delta$ Mean} \\
\midrule
w/o ExIt             & $+24.6\%$ & $+23.9\%$ & $+24.5\%$ & $+23.5\%$ & $+13.8\%$ & $+38.7\%$ & $+9.2\%$  & $+34.5\%$ & $+24.1\%$ \\
w/o Curriculum       & $+31.5\%$ & $+30.4\%$ & $+30.9\%$ & $+30.6\%$ & $+23.3\%$ & $+33.9\%$ & $+11.9\%$ & $+49.0\%$ & $+30.2\%$ \\
w/o BC Warm-Start    & $+7.0\%$  & $+6.2\%$  & $+6.7\%$  & $+5.7\%$  & $+6.3\%$  & $+9.3\%$  & $+1.1\%$  & $+10.2\%$ & $+6.6\%$ \\
w/o Graph-Attention  & $+5.5\%$  & $+5.1\%$  & $+5.1\%$  & $+5.0\%$  & $-0.9\%$  & $+17.6\%$ & $+1.9\%$  & $+6.3\%$  & $+5.7\%$ \\
w/o BC $+$ ExIt      & $-1.5\%$  & $-2.1\%$  & $-1.9\%$  & $-2.6\%$  & $-6.3\%$  & $+10.0\%$ & $-1.3\%$  & $-3.7\%$  & $-1.2\%$ \\
\bottomrule
\end{tabular}
\end{adjustbox}
\end{table*}

The non-additive w/o~BC$+$ExIt interaction is worth singling out: removing
both BC and ExIt does \emph{not} compound the individual degradations.
The merge rate drops to $0.238$ (vs.\ the Full Model's $0.318$),
the opportunity rate rises to $0.908$ (vs.\ $0.876$), and the agent enters
an ultra-conservative coding regime where it merges fewer but
higher-intersection pairs --- trading served-per-tx for fewer expirations
rather than collapsing into either of the single-removal failure modes.

\section{Keep-Side Ablation}
\label{sec:appendix:keepside_ablation}

Section~\ref{sec:model:coded} introduced the keep-side bit
$\kappa \in \{0, 1\}$ as a learned action dimension that governs the
queue evolution after each merge, exposed to the RL agent rather
than fixed by a hand-designed endpoint rule as in SACM+/SACM++ (i.e., the retained slot is decoded
deterministically from the selected pair, where $\kappa=0$ retains $i_k$ and
$\kappa=1$ retains $j_k$). To isolate its contribution from the
pair-selection decision, we evaluated the trained PPO-Agent under four
keep-side strategies, i.e., (i)~\textbf{Learned}, the policy's own
$\kappa$ (i.e., control); (ii)~\textbf{Higher-degree}, always retaining the
endpoint with higher merge degree (i.e., the rule used by SACM+/SACM++);
(iii)~\textbf{Earliest-deadline}, retaining the endpoint with the
tighter deadline; (iv)~\textbf{Random}, uniform random (i.e., seeded for
the reproducibility). In every case, the \emph{pair selection}
(i.e., which pair $k$ to merge) is preserved from the model's output, while only
the keep-side bit ($a \bmod 2$) is overridden.

\begin{table}[t]
\centering
\caption{Keep-side ablation. PPO-Agent pair selection with different
  keep-side rules. ID-default, 50 holdout seeds $\times$ 200 episodes.
  Mean $\pm$ 95\% CI.}
\label{tab:keepside_ablation}
\begin{adjustbox}{max width=\columnwidth}
\begin{tabular}{lcccc}
\toprule
\textbf{Keep-Side Rule} & \textbf{BE-Score $\sigma$} ($\uparrow$) & \textbf{Miss Ratio} ($\downarrow$) & \textbf{Served/Tx} ($\uparrow$) & \textbf{Exp/Episode} ($\downarrow$) \\
\midrule
Learned (full model) & $0.976 \pm 0.002$ & $0.208 \pm 0.001$ & $1.323 \pm 0.002$ & $14.17 \pm 0.07$ \\
Higher-degree        & $0.969 \pm 0.002$ & $0.208 \pm 0.001$ & $1.324 \pm 0.002$ & $14.12 \pm 0.07$ \\
Earliest-deadline    & $0.964 \pm 0.002$ & $0.209 \pm 0.001$ & $1.323 \pm 0.002$ & $14.16 \pm 0.08$ \\
Random               & $0.965 \pm 0.002$ & $0.209 \pm 0.001$ & $1.324 \pm 0.001$ & $14.11 \pm 0.07$ \\
\bottomrule
\end{tabular}
\end{adjustbox}
\end{table}

Table~\ref{tab:keepside_ablation} reports the results.
The learned keep-side rule yields a BE-score of $0.976$, compared with
$0.969$ for the closest fixed rule (i.e., Higher-degree) and $0.964$--$0.965$
for the Earliest-deadline and the Random rules. The $+0.007$ advantage over the
Higher-degree and $+0.012$ over the Earliest-deadline are modest in absolute
terms (i.e., roughly one-tenth of the full PPO--TauFit-2 gap,
$\Delta\sigma{=}0.061$), even though they are numerically separated within the
descriptive 95\% uncertainty bands (i.e., non-overlapping bands between
the Learned and all three alternatives on the BE-score). Consistent with the
statistical-reporting protocol of Sec.~\ref{sec:exp:stat}, we report
the separation as descriptive rather than as a paired hypothesis
test, i.e., the bands are unpaired CIs, while the evaluation is paired by
seed, and a paired per-seed bootstrap on $\Delta\sigma$ would be the
right inferential closure and is left as future work.

The miss ratio, the served-per-tx, and the expirations are nearly identical across
all the four rules. The interesting fact we observed is that the keep-side decision therefore primarily affects the
BE-score through its influence on the queue evolution and the future
mergeability rather than through the immediate transmission outcomes.
Among the fixed rules, the Higher-degree (i.e., the heuristic used by
SACM+/SACM++) performs closest to the learned policy, which is consistent
with the queue-evolution control through the keep-side bit, i.e., under the
simulator's $k_{\mathrm{mg}}\!\sim\!\mathrm{Unif}\{k_{i_k},k_{j_k}\}$
convention (Eq.~\eqref{eq:mg_dest}), the keep-side does \emph{not}
select the future representative destination of the merged record,
while it selects which of the two queue slots is refilled with a fresh
arrival. Refilling the lower-degree endpoint (i.e., retaining the
higher-degree slot in the queue) thereby preserves more downstream
merge candidates without changing the merged record's destination
draw.

The contribution from the keep-side is \emph{marginal} relative to the pair
selection, i.e., the pair-selection decision drives most of the
performance gap between the PPO and the baselines
($\Delta\sigma{=}0.061$ vs.\ TauFit-2), while the learned keep-side
adds a further ${\sim}1$\% refinement. The keep-side is a secondary
control dimension, i.e., it does not substitute for the pair selection, even though
it does give a measurable improvement at no additional inference
cost.

\section{Training-Seed Stability}
\label{sec:appendix:seed_stability}

Four independently seeded training runs (i.e., seeds~0--3) each produce a
candidate model via the robust-advantage selection criterion
(Section~\ref{sec:method:selection}), while the main results throughout
Sections~\ref{sec:results}--\ref{sec:ablation} use the best model
(i.e., seed~0, selected on the validation). To verify that the reported performance
is not an outlier, Table~\ref{tab:seed_stability} evaluates all the four
seed models on the identical 50-seed holdout set
(i.e., 200~episodes/seed = 10{,}000~episodes per model).

\begin{table*}[t]
\centering
\caption{Training-seed stability. Each of 4 independently trained models
  evaluated on 50 holdout seeds $\times$ 200 episodes (10{,}000 episodes per
  model). Mean $\pm$ 95\% CI. Seed~0 is the selected model used throughout
  the paper. Columns are grouped as: \emph{primary} demand-centric metrics
  ($\rho$, $\delta$; Sec.~\ref{sec:exp:metrics}), broadcast-efficiency
  metrics ($\sigma$, served/tx, expirations), and the supplementary
  \emph{request-level} family ($\eta_{\mathrm{req}}, m_{\mathrm{req}},
  \sigma_{\mathrm{req}}$).}
\label{tab:seed_stability}
\begin{adjustbox}{max width=\textwidth}
\begin{tabular}{lcccccccc}
\toprule
\textbf{Train Seed}
  & \textbf{$\sigma$} ($\uparrow$)
  & \textbf{$\rho$} ($\downarrow$)
  & \textbf{Served/Tx} ($\uparrow$)
  & \textbf{Exp/Ep} ($\downarrow$)
  & \textbf{$\delta$} ($\uparrow$)
  & \textbf{$\eta_{\mathrm{req}}$} ($\uparrow$)
  & \textbf{$m_{\mathrm{req}}$} ($\downarrow$)
  & \textbf{$\sigma_{\mathrm{req}}$} ($\uparrow$) \\
\midrule
0 (selected)   & $0.969 \pm 0.002$ & $0.203 \pm 0.001$ & $1.300 \pm 0.002$ & $13.76 \pm 0.08$ & $0.824 \pm 0.001$ & $1.061 \pm 0.001$ & $0.228 \pm 0.001$ & $0.833 \pm 0.001$ \\
1              & $0.950 \pm 0.002$ & $0.230 \pm 0.001$ & $1.355 \pm 0.002$ & $15.82 \pm 0.08$ & $0.820 \pm 0.001$ & $1.093 \pm 0.001$ & $0.241 \pm 0.001$ & $0.853 \pm 0.001$ \\
2              & $0.940 \pm 0.002$ & $0.266 \pm 0.001$ & $1.476 \pm 0.002$ & $21.05 \pm 0.10$ & $0.778 \pm 0.001$ & $1.130 \pm 0.001$ & $0.322 \pm 0.002$ & $0.807 \pm 0.002$ \\
3              & $0.968 \pm 0.002$ & $0.218 \pm 0.001$ & $1.343 \pm 0.002$ & $15.16 \pm 0.09$ & $0.817 \pm 0.001$ & $1.077 \pm 0.001$ & $0.243 \pm 0.002$ & $0.834 \pm 0.002$ \\
\midrule
Mean $\pm$ Std & $0.957 \pm 0.014$ & $0.229 \pm 0.027$ & $1.369 \pm 0.078$ & $16.45 \pm 3.31$ & $0.810 \pm 0.020$ & $1.090 \pm 0.029$ & $0.259 \pm 0.043$ & $0.832 \pm 0.019$ \\
\bottomrule
\end{tabular}
\end{adjustbox}
\end{table*}

Three of the four seeds (i.e., 0, 1, 3) cluster tightly, with BE-scores in
$[0.950,\, 0.969]$ and qualitatively similar selective-merge behavior, while
all the four seeds clearly exceed the best coded-multicast baseline,
i.e., SACM++ ($\sigma{=}0.726$), considering that even the weakest seed (i.e., seed~2, $\sigma{=}0.940$)
exceeds SACM++ by $+0.214$ ($+29.5\%$).
The cross-seed standard deviation of $0.014$ is roughly one-eighteenth of
the PPO--SACM++ gap ($\Delta\sigma{=}0.250$), thereby confirming that the method's
advantage holds across the training randomness.

The extended columns added in this version show that the seed~0 is also
the per-seed best on both primary demand-centric metrics, thereby reaching the
lowest broadcast-packet expiration ratio ($\rho{=}0.203$) and the
highest distinct file-identity coverage ($\delta{=}0.824$). On the
supplementary request-level family, the seed~1 attains a slightly higher
selection score $\sigma_{\mathrm{req}}{=}0.853$ with respect to the selected
seed's $0.833$, even though this reflects a different broadcast-efficiency /
request-level deadline-compliance trade-off rather than a weakness of
the selected model on its own primary objective.

The seed~2 is an outlier, i.e., it converges to a qualitatively different,
\emph{aggressive} merge strategy with much higher served-per-tx
($1.476$ vs.\ $1.300$ for the seed~0), even though it also has much higher broadcast-packet
expiration ratio ($0.266$ vs.\ $0.203$), the lowest distinct
file-identity coverage of any seed ($\delta{=}0.778$ vs.\ $0.824$), and
about $53\%$ more expirations per episode ($21.05$ vs.\ $13.76$). This
aggressive strategy resembles the behavior of the
``without Curriculum'' ablation (Section~\ref{sec:ablation}), which
suggests that the training stochasticity can occasionally push the policy into the
high-throughput / low-reliability basin, even though the curriculum is enabled.
The robust-advantage selection criterion
(Section~\ref{sec:method:selection}) correctly identifies the seed~0 as the
best model on both the broadcast-efficiency and the primary demand-centric
metrics, i.e., the selection procedure does filter out such outliers.

\section{Inference Latency Benchmark}
\label{sec:appendix:latency}

Table~\ref{tab:latency_benchmark} benchmarks the per-step inference
latency of the trained graph-structured policy network on both the GPU
and the CPU, while the benchmark consists of 1{,}000 individual forward passes
(i.e., single observation, batch size~1, deterministic action selection)
after a 100-pass warm-up to fill the hardware caches.
All the measurements use the high-resolution wall-clock timers with
the GPU synchronization barriers for the GPU timing.

\begin{table}[t]
\centering
\caption{Per-step inference latency. Single forward pass,
  batch size~1, no compilation. 1{,}000 passes after 100-pass warm-up.}
\label{tab:latency_benchmark}
\begin{adjustbox}{max width=\columnwidth}
\begin{tabular}{lccccc}
\toprule
\textbf{Device} & \textbf{Mean} (ms) & \textbf{Std} (ms) & \textbf{Median} (ms) & \textbf{p95} (ms) & \textbf{p99} (ms) \\
\midrule
GPU (RTX 3080 Ti) & $3.46$ & $0.46$ & $3.36$ & $4.32$ & $4.96$ \\
CPU                & $3.26$ & $0.45$ & $3.19$ & $3.78$ & $4.26$ \\
\bottomrule
\end{tabular}
\end{adjustbox}
\end{table}

We report the measured median forward-pass latencies of $3.36$\,ms
(i.e., GPU) and $3.19$\,ms (i.e., CPU) without comparing them to a generic
slot-duration threshold, i.e., the applicable slot-duration target is
deployment-specific, and we do not import a fixed numerical budget
from outside the manuscript. The $3.3$--$3.5$\,ms figure should
therefore be read as a feasibility indicator conditional on the
target deployment's slot-duration budget rather than as a certified
real-time guarantee.
The GPU median of $3.36$\,ms (i.e., p99 $= 4.96$\,ms) and the CPU median of
$3.19$\,ms (i.e., p99 $= 4.26$\,ms) indicate that the graph-structured policy
network forward pass is practical for the single-slot decision-making on
both the GPU-equipped and the CPU-only edge nodes. The CPU inference is
competitive with the GPU at this model size (i.e., ${\sim}1.73$M parameters for
the Track~A network actually benchmarked here, while the Track~B network is
${\sim}1.75$M and exhibits comparable per-call latency considering the
identical encoder topology, batch size~1), because the graph-attention
computation is memory-bound rather than compute-bound at this scale, i.e.,
the GPU acceleration would become advantageous only with larger batch sizes
or much wider networks.

\paragraph{Reference-implementation pointers.}
For artifact reviewers, the reference implementation of the
rejection-sample-and-uniform-destination request-generation rule of
Sec.~\ref{sec:model:queue} (paragraph ``Request Generation'') lives at
\texttt{coded\_caching\_env\_beat\_pf\_v6.py::\_gen\_req} in the
released code.

\section*{Acknowledgment}
Language polishing was assisted by a large language model (Claude, Anthropic);
all scientific content, experiments, and conclusions are the sole responsibility
of the author.

\bibliographystyle{IEEEtran}
\bibliography{references}

\vfill

\end{document}